\renewcommand\onecolumngrid{
\do@columngrid{one}{\@ne}%
\def\set@footnotewidth{\onecolumngrid}
\def\footnoterule{\kern-6pt\hrule width 1.5in\kern6pt}%
}
\renewcommand\twocolumngrid{
        \def\footnoterule{
        \dimen@\skip\footins\divide\dimen@\thr@@
        \kern-\dimen@\hrule width.5in\kern\dimen@}
        \do@columngrid{mlt}{\tw@}
}%
\newtheorem{theorem}{Theorem}
\newtheorem{lemma}{Lemma}
\newtheorem{proposition}{Proposition}
\newtheorem{assumption}{Assumption}
\DeclarePairedDelimiterX{\set}[1]{\{}{\}}{\setargs{#1}}
\NewDocumentCommand{\setargs}{>{\SplitArgument{1}{;}}m}
{\setargsaux#1}
\NewDocumentCommand{\setargsaux}{mm}
{\IfNoValueTF{#2}{#1} {#1\nonscript\:\delimsize\vert\allowbreak\nonscript\:\mathopen{}#2}}%
\newcommand{\diff}{\mathrm{d}}
\newcommand{\im}{\mathrm{i}}
\newcommand{\e}{\mathrm{e}}
\newcommand{\bs}{\boldsymbol}
\newcommand{\calA}{\mathcal{A}}
\newcommand{\calD}{\mathcal{D}}
\newcommand{\calH}{\mathcal{H}}
\newcommand{\calO}{\mathcal{O}}
\newcommand{\calL}{\mathcal{L}}
\newcommand{\calU}{\mathcal{U}}
\newcommand{\calW}{\mathcal{W}}
\newcommand{\ma}{\mathrm{anc}}
\newcommand{\mE}{\mathrm{E}}
\newcommand{\mL}{\mathrm{L}}
\newcommand{\mR}{\mathrm{R}}
\newcommand{\mix}{\mathrm{mix}}
\renewcommand{\Re}{\mathrm{Re}}
\renewcommand{\Tr}{\mathrm{Tr}}
\renewcommand{\log}{\mathrm{log}}
\DeclareMathOperator*{\E}{\mathbb{E}}
\newcommand{\bE}{\mathbb{E}}
\newcommand{\bZ}{\mathbb{Z}}
\newcommand{\bohrH}{\ensuremath{B_H}}
\DeclareMathOperator{\Spec}{spec}
\DeclareMathOperator{\Var}{Var}
\DeclareMathOperator{\argmin}{argmin}
\newcommand{\lindblad}{\mathcal{L}}
\newcommand*{\expv}[1]{\expval*{#1}}
\newcommand{\SteadyState}{\rho_\infty}
\begin{document}

\title{Lindblad engineering for quantum Gibbs state preparation under the eigenstate thermalization hypothesis}

\author{Eric~Brunner}
\orcid{0000-0001-7631-6528}
\email{eric.brunner@quantinuum.com}
\affiliation{Quantinuum, Partnership House, Carlisle Place, London SW1P 1BX, United Kingdom}

\author{Luuk~Coopmans}
\orcid{0000-0001-6501-5420}
\affiliation{Quantinuum, Partnership House, Carlisle Place, London SW1P 1BX, United Kingdom}

\author{Gabriel~Matos}
\orcid{0000-0002-3373-0128}
\affiliation{Quantinuum, Partnership House, Carlisle Place, London SW1P 1BX, United Kingdom}
\affiliation{Quantinuum, 17 Beaumont St., Oxford OX1 2NA, United Kingdom}

\author{Matthias~Rosenkranz}
\orcid{0000-0002-1605-9141}
\email{matthias.rosenkranz@quantinuum.com}
\affiliation{Quantinuum, Partnership House, Carlisle Place, London SW1P 1BX, United Kingdom}

\author{Frederic~Sauvage}
\orcid{0000-0003-3363-5929}
\affiliation{Quantinuum, Partnership House, Carlisle Place, London SW1P 1BX, United Kingdom}

\author{Yuta~Kikuchi}
\orcid{0000-0002-1802-5260}
\affiliation{Quantinuum K.K., Otemachi Financial City Grand Cube 3F, 1-9-2 Otemachi, Chiyoda-ku, Tokyo, Japan}
\affiliation{Interdisciplinary Theoretical and Mathematical Sciences Program (iTHEMS), RIKEN, Wako, Saitama 351-0198, Japan}

\begin{abstract}
Building upon recent progress in Lindblad engineering for quantum Gibbs state preparation algorithms, we propose a simplified protocol that is shown to be efficient under the eigenstate thermalization hypothesis (ETH). The ETH reduces circuit overheads of the Lindblad simulation algorithm and ensures a fast convergence toward the target Gibbs state. Moreover, we show that the realized Lindblad dynamics exhibits an inherent resilience against stochastic noise, opening up the path to a first demonstration on quantum computers. We complement our claims with numerical studies of the algorithm's convergence in various regimes of the mixed-field Ising model. In line with our predictions, we observe a mixing time scaling polynomially with system size when the ETH is satisfied. In addition, we assess the impact of algorithmic and hardware-induced errors on the algorithm's performance by carrying out quantum circuit simulations of our Lindblad simulation protocol with a local depolarizing noise model. This work bridges the gap between recent theoretical advances in dissipative Gibbs state preparation algorithms and their eventual quantum hardware implementation.
\end{abstract}

\maketitle

\section{Introduction}

The simulation of quantum systems is poised to be one of the most promising applications of quantum computing. This task typically requires the accurate preparation of relevant initial quantum states.
In particular, Gibbs states at an inverse temperature $\beta$ and for a quantum (non-commuting) Hamiltonian $H$,
\begin{align}
\label{eq:gibbs_state}
    \sigma_\beta = \frac{\e^{-\beta H}}{\Tr[\e^{-\beta H}]},
\end{align}
play an essential role in understanding the thermal properties of quantum many-body systems. Furthermore, Gibbs states serve as key resources in a variety of quantum algorithms for optimization problems, machine learning tasks, and more~(see \cite{Dalzell2023, abbasChallengesOpportunitiesQuantum2024} and references therein).
Such demands have led to significant research efforts to develop quantum algorithms for Gibbs state preparation~\cite{Terhal2000, Poulin2009, Temme2009, Bilgin2010, Yung2012, Kastoryano2016, Brandao2016, Chowdhury2016, Motta2019, Gilyen2019, lu_algorithms_2021, Coopmans2022, Holmes2022, Zhang2023,schuckert_probing_2023,ghanem_robust_2023,hemery_measuring_2024}.

Inspired by the successful application of Markov chain Monte Carlo algorithms to a wide range of practical computational problems, including classical Gibbs sampling~\cite{Metropolis1953, Hastings1970}, the authors in~\cite{Temme2009, Yung2012, Moussa2019, Jiang2024} investigated their quantum analogue for sampling from Gibbs states of quantum Hamiltonians.
Inspired by thermalization phenomena occurring in nature~\cite{Lloyd1996}, there has also been a broad interest in utilizing dissipative quantum dynamics, described under the Markov assumption by the Lindblad equation~\cite{Lindblad1975}, to prepare quantum thermal states on a quantum computer~\cite{Terhal2000, Kliesch2011, Childs2016, Cleve2019, Wocjan2021, Shtanko2021, Rall2023, Chen2023thermal, Chen2023efficient, Ding2023, Ding2024, Ding2024open, Gilyen2024, Chen2024random}.
The underlying mechanism is the existence of an (often unique) steady state, which can be made close to a target state through a specific design of the underlying Lindbladian~\cite{Verstraete2009, Kraus2008, Harrington2022}.
Recently, the authors of~\cite{Chen2023thermal, Chen2023efficient} have resolved several outstanding obstacles in the engineering of such specific Lindbladians, which are both efficiently simulatable and have the desired Gibbs state as their unique steady state.

From a complexity theoretical perspective, low-temperature Gibbs states are hard to prepare even with a quantum computer in the worst case~\cite{Bravyi2021}. Still, it is anticipated that quantum algorithms simulating Lindblad dynamics can efficiently prepare certain Gibbs states of interest~\cite{Rouze2024,Bergamaschi2024,Rajakumar2024} that would otherwise be hard to sample from.
The efficiency of these algorithms is assessed in terms of the resources required to implement the Lindblad evolution as a quantum circuit and the mixing time of the dynamics. The latter characterizes the time for any initial state to converge close to the steady state 
of the Lindbladian.

Despite remarkable advances in the design and simulation of adequate Lindbladians, a gap remains between theoretical results in idealized scenarios and their practical application. Such practical considerations include the noise resilience of the proposed algorithms or numerical studies of their convergence.
We address this gap by proposing a version of the Gibbs state preparation algorithms~\cite{Chen2023thermal, Chen2023efficient, Ding2023, Ding2024, Chen2024random} with reduced quantum resource requirements, potentially facilitating a near-term hardware demonstration.
We provide a comprehensive numerical investigation of convergence characteristics of the Lindblad dynamics in various settings, establishing the crucial roles played by both the jump operators and the dynamical properties of the system Hamiltonian.
In addition, we analyze algorithmic errors and the noise susceptibility of our proposal---both analytically and based on simulations of the corresponding quantum circuits.

In Sec.~\ref{sec:preliminaries} we recall necessary background about Lindbladians, their mixing time and steady states, and give a detailed summary of our contributions and their relation to prior works.
In Sec.~\ref{sec:ETH} and \ref{sec:single_ancilla_protocol} we introduce our variant of the dissipative quantum Gibbs state preparation algorithm.
Crucial to our algorithm is the compliance of the system Hamiltonian $H$ and the selected jump operators with the eigenstate thermalization hypothesis (ETH)~\cite{Srednicki1994, Srednicki1995, Srednicki1999, DAlessio2015}.
We show in Sec.~\ref{sec:ETH} that the ETH guarantees fast and accurate convergence of the Lindblad evolution towards the desired Gibbs state.
The circuit implementation of our algorithm, together with a detailed analysis of the algorithmic errors incurred, is given in Sec.~\ref{sec:single_ancilla_protocol}.
To complement our theoretical investigations, in Sec.~\ref{sec:numerics} we perform extensive numerical simulations of the noiseless Lindblad dynamics for a 1D mixed-field Ising model. In
Sec.~\ref{sec:noise}, we analyze the impact of stochastic noise on the convergence accuracy of our protocol, neglecting algorithmic errors such as discretization and Trotter errors. Based on this, in Sec.~\ref{sec:noisy_simulation}, we numerically study the influence of local depolarizing noise by simulating quantum circuits of the full protocol.
In this section, we also study the effect of our algorithm's main parameters on the convergence accuracy and analyze the trade-off between algorithmic and noise-induced errors.

\section{Preliminaries}
\label{sec:preliminaries}

We briefly summarize the concepts of Lindblad dynamics, steady states, detailed balance, and mixing time. These form the basis for the discussions in the subsequent sections.
The Lindblad equation, 
\begin{align}
\label{eq:lindblad_eq}
    \frac{\diff\rho}{\diff t}(t)
    =
    \calL[\rho(t)],
\end{align}
describes the dissipative dynamics of an open quantum system,
which is formally solved as $\rho(t) = \e^{t\calL}[\rho(0)]$ for an initial state $\rho(0)$.
The Lindbladian $\calL$ is the generator of the dynamics and can be written as follows:
\begin{align}
\label{eq:lindbladian_general}
    \calL[\rho] 
    &= 
    - \im[G,\rho]
    + \sum_{a\in\bm{A}}\gamma_a \Big(L^a\rho L^{a\dag} -\frac{1}{2}\{L^{a\dag}L^a,\rho\}\Big).
\end{align}
The first term in Eq.~\eqref{eq:lindbladian_general}, $-\im[G,\rho]$, captures the coherent (unitary) part of the dynamics. $G$ is Hermitian and in the present work we set $G=H$, where $H$ is the system Hamiltonian of the Gibbs state $\sigma_\beta$ in Eq.~\eqref{eq:gibbs_state}.
The remaining terms are responsible for the dissipation along the evolution: the transition term $\sum_{a \in \bm{A}} \gamma_a L^a\rho L^{a\dag}$ and the decay term $-\frac{1}{2}\sum_{a \in \bm{A}}\gamma_a \{L^{a\dag} L^a,\rho\}$. The set of indices $\bm{A}$ selects suitable Lindblad operators $L^a$ that drive the dissipation, and $\gamma_a$ are transition weights.

Quantum detailed balance---a generalization of the classical notion of detailed balance of Markov chains---controls the steady state of the Lindblad dynamics.
A steady state $\sigma$ of $\calL$
is defined by $\e^{t\calL}[\sigma] = \sigma$ for all $t\geq 0$, or, equivalently, $\calL[\sigma] = 0$.
Throughout this work, we assume that the Lindblad dynamics has a full-rank and unique steady state, denoted as $\SteadyState$.
The latter can be ensured by choosing the coherent term and Lindblad operators such that only multiples of the identity commute with them~\cite[Theorem 3]{spohnIrreversibleThermodynamicsQuantum1978}.

We adopt the Kubo-Martin-Schwinger (KMS) inner product to define quantum detailed balance.\footnote{Several other notions exist in the literature~\cite{Alicki1976,Kossakowski1977, Carlen2017}.}
For a full-rank density matrix $\sigma$, the KMS inner product is defined as the weighted scalar product~\cite{Kossakowski1977, Fagnola2007, Fagnola2010, Temme2010},
\begin{align}
\label{eq:DB_inner_product_general}
    \expval{X, Y}_{\sigma} \coloneq \Tr[ X^\dagger \sigma^{1/2} Y \sigma^{1/2} ],
\end{align}
for bounded operators $X, Y$.
We denote by $\calL^\dagger$ the adjoint of $\calL$ with respect to the Hilbert Schmidt inner product, i.e. $\expval*{X, \calL^\dagger[Y]}_\mathrm{HS} = \expval{\calL[X], Y}_\mathrm{HS}$ with $\expval{X, Y}_\mathrm{HS} \coloneq \Tr[ X^\dagger Y ]$.
Then, we say that the Lindbladian $\calL$ obeys the $\sigma$-detailed balance ($\sigma$-DB) condition if $\calL^\dag$ is self-adjoint with respect to the KMS inner product~\eqref{eq:DB_inner_product_general},
\begin{align}
    \label{eq:KMS}
        \expv{ X, \calL^\dag[Y]}_{\sigma}
        =
        \expv{ \calL^\dag[X], Y}_{\sigma}
\end{align}
for all bounded operators $X, Y$.
For a Lindbladian $\calL$ that obeys the $\sigma$-DB condition, we have for all $X$
\begin{align}
\label{eq:self_adjoint_and_steady_state}
\begin{split}
    \langle \calL[\sigma], X\rangle_{\rm HS}
    &= \expv{I, \calL^\dag[X]}_{\sigma}
    = \expv{\calL^\dag[I], X}_{\sigma}
    = 0,
\end{split}
\end{align}
where $I$ is the identity operator.
We used the $\sigma$-DB condition~\eqref{eq:KMS} in the second equality and $\calL^\dag[I]=0$ in the last equality.
When a Lindbladian obeys the $\sigma$-DB condition, the corresponding Lindblad dynamics converges to the steady state $\sigma$ in the infinite time limit~\cite[Proposition 7.5]{wolfQuantumChannelsOperations2012}. 

To prepare the quantum Gibbs state $\sigma_\beta$, Eq.~\eqref{eq:gibbs_state}, we wish to engineer a Lindbladian $\mathcal{L}$ that satisfies the $\sigma_{\beta}$-DB condition. When this is only approximately satisfied, the steady state $\SteadyState$ of $\mathcal{L}$ inevitably deviates from the target $\sigma_\beta$. In such case, it is important to control the \textit{convergence accuracy} $\left\|\SteadyState - \sigma_{\beta} \right\|_1$ of the steady state compared to the Gibbs state, as quantified here via the trace distance.
Furthermore, to use this construction algorithmically, the speed of convergence and the efficient implementability of the Lindblad dynamics are crucial (see Sec.~\ref{sec:ETH}).
We quantify the convergence speed via the mixing time $t_\mathrm{mix}$, i.e. the time it takes to approach the steady state under the Lindblad dynamics from any initial state $\rho$  in trace distance:
\begin{align}
\label{eq:mixing_time}
    t_\mathrm{mix}(\epsilon) 
    \coloneq 
    \inf \left\{t\geq 0 \mid \forall \rho:  \left\|\e^{\calL t}[\rho] - \SteadyState \right\|_1 \le \epsilon \right\}.    
\end{align}

\subsection*{Contribution and relation to prior works}
\label{sec:literature_overview}

Our protocol builds upon recent advances in Lindblad engineering for quantum Gibbs state preparation,
with the aim to reduce circuit complexity and facilitate a potential near-term hardware demonstration.
We follow~\cite{Chen2023thermal}, who propose an efficient quantum algorithm for Gibbs state preparation via simulation of the Lindblad equation~\eqref{eq:lindblad_eq} with a carefully designed (approximately $\sigma_\beta$-DB) $\calL$.
Key to their algorithm is the use of Lindblad operators in the form of filtered operator Fourier transforms~\eqref{eq:L_operators}, which allow to control the trade-off between convergence accuracy and required resources.
In subsequent work, \cite{Chen2023efficient} propose to use $G=G_{\rm CKG}$, Eq.~\eqref{eq:coherent_G_CKG}, as the coherent part of the Lindbladian to satisfy the $\sigma_\beta$-DB condition exactly.
Despite being efficiently implementable, its overheads likely remain prohibitive on near-term quantum hardware.

In our protocol, we simplify the construction of~\cite{Chen2023thermal} by using a discrete set of Lindblad operators (similar to~\cite{Ding2024}), instead of the continuous set used in \cite{Chen2023efficient} (see App.~\ref{app:relation_chen}).
In addition, we avoid the implementation of $G_{\rm CKG}$ by resorting to the ETH (assuming the underlying system Hamiltonian behaves quantum chaotic), which ensures $\sigma_\beta$-DB on average.
The evolution under the Lindbladian~\eqref{eq:lindbladian_general} is simplified by employing the single-ancilla protocol of~\cite{Ding2024open,Pocrnic2023} (similarly used in~\cite{Ding2023} for tasks of ground state preparations), combined with the random selection of a single Lindblad operator~\eqref{eq:L_operators} at each time step.
Such a randomized method is also used in~\cite{Chen2024random} with jump operators constructed from $n$-qubit unitary 2-designs.
In comparison, assuming the ETH allows us to use simple, local Pauli products as jump operators, which can be implemented with shallower circuits than 2-designs in practice.
Finally, \cite{ramkumarMixingTimeQuantum2024} proves a bound on the mixing time similar to ours for random sparse Hamiltonians and unitary 1-design jump operators. While their choice of jump operators is similar to ours, they use the quantum Gibbs sampling algorithm of~\cite{Chen2023efficient} without the simplifications we propose here.
We also point out that the ETH was employed to efficiently prepare the quantum Gibbs state in~\cite{Shtanko2021}, where, instead of evolving the system under dissipative dynamics, they simulate the unitary dynamics of the system and ancilla and trace the latter out only at the end.

We verify the correctness of our protocol and prove that under the ETH the Lindbladian $\calL$, Eq.~\eqref{eq:lindbladian_general}, is $\sigma_\beta$-DB (on average), and derive an upper bound on the convergence accuracy $\left\|\SteadyState - \sigma_{\beta} \right\|_1$ for a concrete realization of $\calL$.
Moreover, we show that, in this case, the mixing time of the Lindblad dynamics is polynomially bounded in the number of qubits $n$.
For this, we tailor the spectral gap analysis of \cite{Chen2021ETH} to our case.
Note that very recently and employing a different technique, \cite{rouzeOptimalQuantumAlgorithm2024} derived a $\log(n)$ mixing time bound for non-commuting local lattice Hamiltonians above a constant critical temperature, improving over bounds based on the spectral gap of the Lindbladian.

Beyond those theoretical results, our main contributions are a comprehensive numerical study of mixing time and convergence accuracy of the Lindblad dynamics, and a thorough analysis of the influence of algorithmic errors and noise on our protocol.

Prior numerical works mainly focus on the spectral gap of the Lindbladian \cite{znidaric_relaxation_2015,Chen2021ETH,smid_polynomial_2025}.
Convergence towards the target Gibbs state via long-time simulations is studied in~\cite{hagan_thermodynamic_2025} for harmonic oscillators and hydrogen chains.
In \cite{Li2024}, considering a similar protocol as ours for the purpose of ground state preparation, the Hartree-Fock method is used to analyze the convergence of energy expectation values.
In general, simulation of dissipative system dynamics is a well-developed field~\cite{plenio_quantum-jump_1998,znidaric_dephasing-induced_2010,finazzi_corner-space_2015,gravina_adaptive_2024,weimer_simulation_2021,sander_large-scale_2025,zhan2025rapid}.
However, the numerical methods developed there often exploit structure of the Lindblad operators, such as sparsity, low-rank approximations or tensor network decompositions, which we do not expect to apply, in general, for Lindblad operators in form of the operator Fourier transforms we consider here
(see App.~\ref{sec:app:improved_classical_simulation_techniques} for a brief discussion of recently proposed tensor network approaches).

In our numerics we focus on the mixed-field Ising model.
We identify quantum chaotic (where ETH and $\sigma_\beta$-DB holds) and non-chaotic regimes of the model based on an eigenstate delocalization analysis~\cite{kolovsky_quantum_2004,atas_multifractality_2012,pausch_chaos_2021,brunner_many-body_2023}.
To efficiently explore mixing time and convergence accuracy across the different dynamical regimes, we develop an adaptive numerical solver that determines for each setting an optimal temporal discretization.
In the chaotic regime and using local jump operators (ETH holds), we numerically observe polynomial (in some cases linear) mixing time scaling.
For non-local jump operators (ETH does not hold), convergence to the Gibbs state is significantly slower.
Interestingly, in the non-chaotic regimes of our model we observe vastly different convergence properties.

To our knowledge, the interplay between algorithmic and noise-induced errors on Gibbs state preparation with the recent Lindblad simulation algorithms has not been studied in detail before. In order to assess noise-induced errors we bound the distance to the target Gibbs state under the influence of a stochastic mixture of unitary noise channels.
In certain regimes the effect of noise can be significantly smaller than expected: noise-induced errors at earlier times are damped through the contractive nature of the Lindblad dynamics showcasing inherent robustness of the protocol to typical sources of hardware noise on near-term quantum computers. Similar effects have been observed, e.g., in~\cite{Raghunandan2020, Polla2021,Mi2023,Cubitt2023,Granet2024,Kashyap2024}. To assess errors from key algorithmic parameters (e.g. Trotter step size) we perform full circuit simulations. We also analyze the trade-offs between algorithmic and noise-induced errors, whereby smaller algorithmic errors tend to increase circuit depth which, in turn, increases noise-induced errors.
Quantifying such trade-offs is crucial for identifying optimal algorithmic parameter configurations for given hardware specifications.

\section{Lindblad engineering with ETH}
\label{sec:ETH}

In this section we introduce our protocol in more detail and discuss basic aspects of the ETH. The ETH will allow us to derive analytical bounds on the accuracy and the speed of convergence towards the target Gibbs state for our Lindbladian~\eqref{eq:lindbladian_general}, which is only approximately detailed balanced.

\subsection{Lindblad operators}

We consider Lindblad operators $\set*{L^a}_{a\in\bs{A}}$ given in the form of a filtered operator Fourier transform (OFT)~\cite{Ding2024},
\begin{align}
\label{eq:L_operators}
    L^a 
    = 
    \int_{-\infty}^{\infty}\diff t\,g(t)A^a(t)
    =
    \sum_{\nu \in \bohrH} \eta_\nu A^a_\nu,
\end{align}
with $A^a(t)\coloneq \e^{\im Ht}A^a\e^{-\im Ht}$, $\eta_\nu\coloneq\int_{-\infty}^{\infty}\diff t\,\e^{\im\nu t}g(t)$ and $A^a_\nu\coloneq\sum_{E_i-E_j=\nu}\Pi_iA^a\Pi_j$, where $\Pi_i$ is the projector onto the eigenstate of $H$ corresponding to the energy $E_i$.  We denote with $\bohrH \coloneq \set*{E_i-E_j\mid E_i,E_j\in \Spec[H]}$ the set of Bohr frequencies.
The set of jump operators $\{A^a\}_{a\in\bs{A}}$ can be chosen arbitrarily as long as it satisfies $\{\sqrt{\gamma_a}A^a\}_{a\in\bs{A}} = \{\sqrt{\gamma_a}A^{a\dag}\}_{a\in\bs{A}}$.
In our case, we will always consider Hermitian jump operators, such that this condition is fulfilled by construction.
In addition, we assume the jump operators to be local, in the sense that they act non-trivially only on a small part of the system. This is a requirements of the ETH, discussed in more detail in Sec.~\ref{sec:convergence_under_ETH}.
The concrete jump operator model used in our numerics is described in Sec.~\ref{sec:numerics:time_evolution}.
Furthermore, to guarantee uniqueness of the steady state we require that only multiples of the identity commute with the jump operators and Hamiltonian~\cite[Theorem 3]{spohnIrreversibleThermodynamicsQuantum1978}.
We choose a filter function
\begin{equation}
\label{eq:g_gaussian}
    g(t) 
    = 
    \left(\frac{\Delta_\mE^2}{2\pi^3}\right)^{1/4}\e^{-\Delta_\mE^2 t^2+\im\beta\Delta_\mE^2 t/2},
\end{equation}
which satisfies the normalization condition $\int_{-\infty}^{\infty}\diff t|g(t)|^2=1$. This choice selects energy transitions in Eq.~\eqref{eq:L_operators} that decrease the energy by roughly $\beta\Delta_\mE^2/2$.
For our analytical and numerical studies, we choose $\Delta_\mE=\sqrt{2}/\beta$, as discussed in App.~\ref{app:ETH_scales}. This choice ensures that the energy transitions induced by the jump operators are well within the energy window defined by the Fourier transform of $g$.
Noting $\eta_\nu=\e^{-\beta\nu/2}\eta_{-\nu}$, Eq.~\eqref{eq:eta_lindblad_filter}, and using $\{\sqrt{\gamma_a}A^a\}_{a\in{\bf A}}=\{\sqrt{\gamma_a}A^{a\dag}\}_{a\in{\bf A}}$, one can readily show that $\sum_{a}\gamma_a\sigma_\beta^{-1/2}L^{a}\sigma_\beta^{1/2} [\cdot]\sigma_\beta^{1/2}L^{a\dag}\sigma_\beta^{-1/2} = \sum_{a}\gamma_aL^{a\dag}[\cdot]L^{a}$.
It then follows that the transition term in Eq.~\eqref{eq:lindbladian_general} satisfies the $\sigma_\beta$-DB condition~\eqref{eq:KMS} as
\begin{align}
\label{eq:DB_transition}
\begin{split}
    &\sum_{a\in\bm{A}}\gamma_a \langle X, L^{a\dag} Y L^{a}\rangle_{\sigma_\beta}
    \\
    &=
    \sum_{a\in\bm{A}}\gamma_a \langle \sigma_\beta^{-1/2}L^{a}\sigma_\beta^{1/2} X\sigma_\beta^{1/2}L^{a\dag}\sigma_\beta^{-1/2}, Y\rangle_{\sigma_\beta}
    \\
    &=
    \sum_{a\in\bm{A}}\gamma_a \langle L^{a\dag} XL^{a}, Y\rangle_{\sigma_\beta}.
\end{split}
\end{align}
It remains to assess under which conditions the decay term in Eq.~\eqref{eq:lindbladian_general} obeys the $\sigma_\beta$-DB condition.

\subsection{Convergence under the ETH}
\label{sec:convergence_under_ETH}

The ETH~\cite{Srednicki1999,DAlessio2015} states that, for a given Hamiltonian $H$ with eigenbasis $\{\ket{E_i}\}$, the matrix elements of a local observable $A$ are expressed as
\begin{align}
\label{eq:ETH}
    \bra{E_i}A\ket{E_j} 
    =
    \calA(E_{i})\delta_{ij} + \frac{f(E_{ij},\nu_{ij})}{\sqrt{D(E_{ij})}}R_{ij},
\end{align}
with $E_{ij} \coloneq (E_i+E_j)/2$ and $\nu_{ij} \coloneq
E_i-E_j$. 
$\calA(E)$ and $f(E,\nu)$ are smooth functions of $E$ and $\nu$.
The density of states $D(E)$ is defined by $D(E)=\sum_{E_i\in{\rm spec}[H]}\tilde{\delta}(E-E_i)$, where $\tilde{\delta}(E-E_i)$ is a smeared delta function so that $D(E)$ becomes a smooth function of $E$.
$R$ is a Hermitian matrix with entries $R_{ij}$ whose real and imaginary
parts are independent random variables and which satisfy $\bE_R[R_{ij}]=0$ and $\bE_R[|R_{ij}|^2]=1$, where $\bE_R$ denotes the average over $R$. We assume that the diagonal vanishes, $R_{ii}=0$, for all $i$. See App.~\ref{app:ETH} for details.

In the following, we assume that for given jump operators the Lindbladian $\calL$, Eq.~\eqref{eq:lindbladian_general}, is well-approximated by a random realization of $\calL$ with jump operators modeled according to Eq.~\eqref{eq:ETH}.
The strategy is to show that the ETH-averaged Lindbladian $\bE_R\calL$ respects the $\sigma_\beta$-DB condition and that the distance between the average $\bE_R\calL$ and a single realization of $\calL$ is bounded by a sufficiently small quantity with high probability. Via this bound we obtain upper bounds on the mixing time and the convergence accuracy for our Lindbladian $\calL$.

\paragraph{Detailed balance ---}
The decay term in Eq.~\eqref{eq:lindbladian_general} obeys the $\sigma_\beta$-DB condition under the ETH average.
To see this, we convert the decay term to,
\begin{align}
\label{eq:DB_decay1}
    &\sum_{a\in\bm{A}}\gamma_a \langle X, \{L^{a\dag}L^a, Y\}\rangle_{\sigma_\beta}
    \nonumber\\
    &=
    \sum_{a\in\bm{A}}\gamma_a \langle \sigma_\beta^{-1/2}L^{a\dag}L^a\sigma_\beta^{1/2}X
    +X\sigma_\beta^{1/2}L^{a\dag}L^a\sigma_\beta^{-1/2}, Y\rangle_{\sigma_\beta}
\end{align}
for all bounded operators $X$, $Y$.
According to the ETH, the matrix elements of the jump operators $A^a$ are given by Eq.~\eqref{eq:ETH}, which leads to,
\begin{align}
\label{eq:DB_decay2}
\begin{split}
    &\bE_R[\sigma_\beta^{-1/2}L^{a\dag}L^a\sigma_\beta^{1/2}]
    \\
    &=
    \sum_{\nu,\nu'}
    \eta_\nu\eta_{\nu'}\e^{\beta\frac{\nu'-\nu}{2}}\bE_R[(A^a_{\nu})^{\dag} A^a_{\nu'}]
    \overset{\text{ETH}}{=}
    \bE_R[L^{a\dag}L^a],
\end{split}
\end{align}
where the last equality follows from $\bE_R[(A^a_{\nu})^{\dag} A^a_{\nu'}]\propto \delta_{\nu,\nu'}$.
Combining Eqs.~\eqref{eq:DB_decay1} and~\eqref{eq:DB_decay2}, we find 
\begin{align}
\label{eq:DB_decay}
\begin{split}
    &\bE_R\Big[
        \sum_{a\in\bm{A}}\gamma_a \langle X, \{L^{a\dag}L^a, Y\}\rangle_{\sigma_\beta}
    \Big]
    \\
    &=
    \bE_R\Big[
        \sum_{a\in\bm{A}}\gamma_a \langle\{L^{a\dag}L^a, X\}, Y\rangle_{\sigma_\beta}
    \Big].
\end{split}
\end{align}
From Eqs.~\eqref{eq:DB_transition} and~\eqref{eq:DB_decay} we conclude that the Gibbs state $\sigma_\beta$ is the steady state of the averaged dissipative Lindbladian, i.e. neglecting the coherent term $-\im [G, \rho]$ in Eq.~\eqref{eq:lindbladian_general}---which in our case is generated by the system Hamiltonian, $G = H$.
Note that $H$ commutes with $\sigma_\beta$, which implies that the coherent term $-\im [H, \rho]$ does not affect the steady state.
Thus, we find that the Gibbs state $\sigma_\beta$ is the steady state of the averaged Lindbladian $\bE_R\calL$,
\begin{align}
\label{eq:fix_average_D}
    \bE_R\calL[\sigma_\beta] = 0.
\end{align}

\paragraph{Mixing time ---}

For a Lindbladian $\calL$ with steady state $\SteadyState$ and gap $\Delta_\calL$, the mixing time is bounded via (see e.g. \cite[Eq.~(104)]{Kastoryano2013})
\begin{align}
\label{eq:mixing_time_and_spectral_gap}
t_\mathrm{mix}(\epsilon)
\le 
\frac{1}{\Delta_\calL}\log\bigg(\frac{2\|\SteadyState^{-1/2}\|_\infty}{\epsilon}\bigg).
\end{align}
The spectral norm $\|\SteadyState^{-1/2}\|_\infty$ typically scales exponentially in the number of qubits $n$.
Thus, we obtain a polynomial dependence of the mixing time on the number of qubits $n$ if the spectral gap of $\calL$ is lower-bounded by $1/\mathsf{poly}(n)$.

In App.~\ref{app:ssec:gap_EL}, we show that $\bE_R\calL$ effectively reduces to a classical Markov chain on the spectrum of $H$~\cite{Chen2021ETH}.
This allows us to employ the well-developed framework of Markov chain conductance to derive an inverse-polynomial lower bound
$\Delta_{\bE_R\calL} \ge \Omega(\beta^3/n)$.
According to Eq.~\eqref{eq:mixing_time_and_spectral_gap}, this implies a bound on the mixing time of ${\bE_R\cal L}$ polynomial in $n$ and $\beta$. To obtain a similar bound for $\calL$, we show in App.~\ref{app:ssec:concentration_ETH_average} that the distance between the Lindbladians $\calL$ and $\bE_R\calL$ is bounded by $\mathcal{O}(\beta/\sqrt{|\bm{A}|})$.

As proven in App.~\ref{app:ssec:spectral_gap_and_mixing_time},
it follows that 
\begin{align}
\label{eq:mixing_time_bound}
    t_\mathrm{mix}(\epsilon) 
    \le
    \calO \left(
    n\beta^2(\beta\|H\|_\infty + \log(1/\epsilon))
    \right),
\end{align}
with high probability,  provided a sufficiently large number of jump operators, $|\bm{A}|=\Omega(n^2\beta^8)$.
See Thm.~\ref{thm:convergence} for a precise statement of the result and the assumptions.
Typically considered Hamiltonians have polynomially bounded norm $\|H\|_\infty$, for example $\|H\|_\infty = \calO(n^k)$ for $k$-local Hamiltonians, or $\|H\|_\infty = \calO(n)$ for geometrically local Hamiltonians.

\paragraph{Convergence accuracy of the steady state ---}
We again invoke the result on the distance between $\calL$ and $\bE_R\calL$.
As proven in App.~\ref{app:ssec:spectral_gap_and_mixing_time},
see Thm.~\ref{thm:convergence},
the trace distance between the Gibbs state $\sigma_\beta$ and the steady state of $\calL$, $\SteadyState$, is bounded by
\begin{align}
\label{eq:upper_bound_trace_distance}
    \|\SteadyState - \sigma_\beta\|_{1}
    \le
    \calO \left( \epsilon + \frac{n\beta^2}{\sqrt{|\bm{A}|}} \left( 
    \beta \|H\|_\infty + \log(1/\epsilon)
    \right) \right),
\end{align}
with high probability.
Thus, a sufficiently large number of jump operators $|\bm{A}| = \Omega\left( n^2\beta^6\|H\|_\infty^2 / \epsilon \right)$ leads to a steady state that is $\epsilon$-close to the Gibbs state, i.e., $\|\SteadyState - \sigma_\beta\|_{1} \le \epsilon$.
In the next section we will see that $|\bm{A}|$ does not enter the complexity scaling of our quantum protocol for the Lindblad simulation.
Therefore, in principle, we can choose $|\bm{A}|$ as large as allowed by the considered jump operator model.

\section{Randomized single-ancilla protocol}
\label{sec:single_ancilla_protocol}

So far we have studied the convergence properties of the Lindblad dynamics. In this section we describe how this evolution can be implemented as a quantum circuit.
In Sec.~\ref{sec:noisy_simulation} we perform full circuit simulations and investigate the algorithmic errors of our protocol.
We adopt a single-ancilla protocol as in~\cite{Ding2023, Ding2024open} and combine it with a randomized scheme~\cite{Campbell2019,Chen2024random}
to simulate the dynamics under the Lindbladian \eqref{eq:lindbladian_general} with $G=H$ and $\gamma_a=\gamma p_a$.
The parameter $\gamma\ge0$ controls the strength of the dissipation and ensures that the probabilities $p_a$ satisfy $0\le p_a\le1$ together with $\sum_ap_a=1$.
Instead of applying the full Lindbladian~\eqref{eq:lindbladian_general}, at each evolution time step we apply a single Lindblad operator $L^a$ sampled with probability $p_a$. We further factorize the coherent and the dissipative parts of the time evolution.
Hence, starting from an initial state $\rho(0)$, the state prepared after $M$ evolution steps of length $\delta t$ each, is given by
\begin{align}
\label{eq:lindblad_random}
    \left(\prod_{i=1}^{M}
    \e^{\delta t\gamma\calD^{a_i}}
    \circ \calU_{\delta t}\right)[\rho(0)],
\end{align}
where we have defined
\begin{align}
\label{eq:random_dissipator}
    &\calD^{a}[\rho]
    \coloneq
    L^a\rho L^{a\dag} - \frac{1}{2}\{L^{a\dag} L^a, \rho\},
    \\
\label{eq:random_dissipator_coh}
    &\calU_{\delta t}[\rho] 
    \coloneq
    \e^{-\im \delta t H}\rho\e^{\im \delta  t H},
\end{align}
and $\{a_1,\dots,a_M\}$ is the set of labels of the randomly sampled Lindblad operators.
The total evolution time is $t = M \delta t$.
By taking the average over the random sampling of Lindblad operators we find,
\begin{align}
\label{eq:random_average}
\begin{split}
    \sum_a p_a\e^{\delta t\gamma\calD^{a}} \circ \calU_{\delta t}[\rho]
    &= \rho + \delta t\calL[\rho] + \calO(\delta t^2)
    \\
    &= \e^{\delta t\calL}[\rho]  + \calO(\delta t^2).
\end{split}
\end{align}
It has further been shown that
the individual trajectory defined by Eq.~\eqref{eq:lindblad_random} approximates the target dynamics with an error measured in the trace distance inversely proportional to $M$ (Theorem~7 in \cite{Chen2024random})\footnote{Results of Ref.~\cite{Chen2024random} are given in terms of a weighted $l^2$-metric that upper bounds the trace distance, see Lemma~5 of Ref.~\cite{Temme2010}.}.
Hence,
for large $M$, a single trajectory describes the average evolution in Eq.~\eqref{eq:random_average} sufficiently well.

To implement each dissipative Lindblad evolution $\e^{\delta t\gamma\calD^a}[\rho]$, we note that, for the dilation~\cite{Cleve2019}
\begin{align}
\label{eq:dilation}
    K^a
    \coloneq
    \ket{1}\bra{0}_\ma\otimes L^a + \ket{0}\bra{1}_\ma\otimes L^{a\dag},
\end{align} 
the following identity holds,
\begin{align}
\label{eq:evolve_dilation}
\begin{split}
    &\Tr_\mathrm{anc}[\e^{-\im \sqrt{\delta t \gamma} K^a}(\ket{0}\bra{0}_\ma\otimes \rho)\e^{\im \sqrt{\delta t \gamma} K^a}]
    \\
    &=
    \e^{\delta t\gamma\calD^a}[\rho] + \calO((\delta t\gamma)^2).
\end{split}
\end{align}
Implementation of the left-hand side of the identity only requires introducing a single ancilla qubit and Hamiltonian simulation.

To retain the algorithmic error of $\calO(\delta t^2)$ in Eq.~\eqref{eq:random_average}, we apply the second-order product formula to implement the evolution under the unitary $\e^{-\im \sqrt{\delta t \gamma} K^a}$~\cite{Ding2023}.
To do so, we first discretize the OFT~\eqref{eq:L_operators} over a restricted domain $[-T,T]$. Taking discretized time steps $\Delta t\coloneq T/S$, we get
\begin{align}
\label{eq:L_operator_discretized}
    \bar{L}^a
    \coloneq
    \sum_{s=-S}^{S} \Delta t_s g(s\Delta t) A^a(s\Delta t),
\end{align} 
with $\Delta t_s\coloneq\Delta t$ for $-S+1\le s\le S-1$ and $\Delta t_s\coloneq\Delta t/2$ for $s=\pm S$.
We call $\Delta t$ the \textit{OFT discretization step}.
Accordingly, the dilation $K^a$, Eq.~\eqref{eq:dilation}, is discretized as
$\bar{K}^a\coloneq \ket{1}\bra{0}_\ma\otimes \bar{L}^a + \ket{0}\bra{1}_\ma\otimes \bar{L}^{a\dag}$.

\begin{figure}
\begin{quantikz}[row sep =0.2cm, column sep=0.4cm]
    & \wireoverride{n}
    & \lstick{$\ket{0}_\ma$} \wireoverride{n} \gategroup[5, steps=3, style={inner xsep=4.3mm,dashed}]{Repeat for $i=1,\dots, M$}
    & \gate[5]{V^{a_i}(\delta t)}
    & \meter{}
    \\
    \lstick[4]{$\rho(0)$}
    &
    & \gate[4]{\e^{-\im H \delta t}}
    &
    &
    &
    & \rstick[4]{$\approx \rho(M \delta t)$}
    \\
    &
    &
    &
    &
    &
    &
    \\
    & \vdots \wireoverride{n}
    & \wireoverride{n}
    & \wireoverride{n}
    & \vdots \wireoverride{n}
    & \wireoverride{n}
    & \wireoverride{n}
    \\
    &
    &
    &
    &
    &
    &
\end{quantikz}
\caption{\label{fig:single_ancilla_circuit}
    The quantum circuit for simulating Lindblad evolution~\eqref{eq:lindblad_random}. The ancilla qubit on the top is traced out by discarding the measurement results.
}
\end{figure}
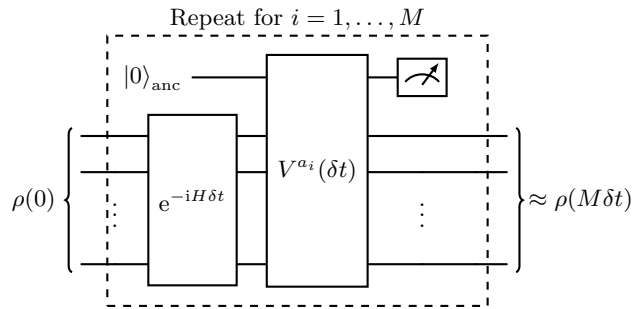

Now, one can implement $\e^{-\im\sqrt{\delta t \gamma}\bar{K}^a}$ by applying the second-order product formula, which we denote by $V^{a}(\delta t)$, such that
\begin{align}
\label{eq:circuit_V_box}
    V^{a}(\delta t)
    =
    \e^{-\im\sqrt{\delta t \gamma}\bar{K}^{a}} 
    +\calO(\delta t^2\gamma^2).
\end{align}
The same ancilla qubit can be reused after resetting it to $\ket{0}$.
The resulting quantum circuit implementing the Lindblad dynamics is sketched in Fig.~\ref{fig:single_ancilla_circuit}.

As apparent from the previous discussion, several approximations are required for the circuit implementation.
In App.~\ref{app:single_gibbs}, we provide a comprehensive study of the resulting errors. 
When accounting for all these, the preparation of the steady state of $\mathcal{L}$ incurs the following algorithmic error:
\begin{align}
\label{eq:error_sources}
\begin{split}
    &t_{\rm mix}\times\calO\left(
        \delta t
        + \frac{\Delta t}{\sqrt{\beta}}\e^{-2(T/\beta)^2}
    \right.
    \\
    &\quad\left.   
        + \sqrt{\beta}|B_H| \e^{-\frac{1}{8}\big(2\pi\frac{\beta}{\Delta t} - 2\beta\|H\|_\infty-1\big)^2}
        + \frac{T\Delta t^2}{\delta t}
    \right).
\end{split}
\end{align}
The precise statement is given in Prop.~\ref{prop:algorithmic_error} in App.~\ref{app:single_gibbs}.
The contribution of the individual error sources are discussed further in App.~\ref{app:alg_errors}, and analyzed in detail in Sec.~\ref{sec:noisy_simulation}, by means of simulations of the corresponding quantum circuits.
For the algorithmic  error~\eqref{eq:error_sources} to be less than~$\epsilon$, the algorithm uses Hamiltonian simulation (i.e. evolution under the system Hamiltonian $H$) for a time scaling as
\begin{align}
\label{eq:runtime_main}
    \Theta\left(\frac{\beta t_\mix^2}{\epsilon}\sqrt{\log\frac{\beta t_{\rm mix}}{\epsilon}}\right).
\end{align}
See Thm.~\ref{thm:qa_complexity} for the precise statement.
Combined with Eq.~\eqref{eq:mixing_time_bound} and also accounting for the convergence accuracy of Eq.~\eqref{eq:upper_bound_trace_distance}, this shows that under the ETH, preparation of the Gibbs state, with $\epsilon$-error, can be implemented efficiently as a circuit with a single additional ancilla.
We note that the currently known best Lindblad simulation algorithm achieves the runtime~\cite{Cleve2019,Chen2023efficient}
\begin{align}
\label{eq:optimal_scaling}
    \tilde{\calO}\left(\beta t_\mix\,{\rm polylog}(1/\epsilon)\right),
\end{align}
where $\tilde{\calO}$ hides the polylogarithmic dependence on $\beta$ and $t_{\rm mix}$. However, this requires an additional overhead circuit for a complex block encoding of the Lindbladian.
The protocol presented in this work is based on the Trotterization and dilation (weak measurement), which
avoid such overhead circuit at the cost of
sacrificing the near-optimal scaling in Eq.~\eqref{eq:optimal_scaling}.

\section{Noiseless numerical study}
\label{sec:numerics}

This section studies the dynamical properties of our randomized Lindbladian protocol. 
To this end, we numerically estimate mixing time, spectral gap and convergence accuracy for the Lindbladian~\eqref{eq:lindbladian_general}, taking into account the random application of a single jump operator at each time step, as in the quantum algorithm in Sec.~\ref{sec:single_ancilla_protocol}.
Here we ignore the Trotterization~\eqref{eq:lindblad_random}, OFT discretization~\eqref{eq:L_operator_discretized}, as well as the influence of noise as they will be the focus of Secs.~\ref{sec:noise}--\ref{sec:noisy_simulation}.

We consider the 1D mixed-field Ising model with open boundary conditions
\begin{equation}
\label{eq:ising_model}
H=-J\sum_{i=0}^{n-2}Z_iZ_{i+1} - h \sum_{i=0}^{n-1}X_i - m\sum_{i=0}^{n-1}Z_i.
\end{equation}
The parameters $h$ and $m$ control the strength of the transverse and longitudinal fields, respectively.
We set the inverse temperature to $\beta = (2J)^{-1}$
throughout our numerical studies.

We analyze various settings defined via: (i) the number of jump operators, (ii) the locality of the jump operators, and (iii) the degree to which the system satisfies the ETH, parametrized by $h$ and $m$
(keeping $J$ fixed), as detailed in Secs.~\ref{sec:numerics:time_evolution}--\ref{sec:numerics:convergence_accuracy}.
We observe a polynomial scaling of the mixing time with $n$ when the ETH holds, confirming our analytical bound in Eq.~\eqref{eq:mixing_time_bound}. 
Moreover, we observe a polynomially decreasing distance between the steady state of the Lindblad dynamics and the target Gibbs state with increasing number of jump operators $|\bm{A}|$ as indicated by the bound in Eq.~\eqref{eq:upper_bound_trace_distance}.
In settings where the ETH does not hold, we observe vastly different convergence properties.

\paragraph*{ETH and quantum chaos ---}
\begin{figure}[t]
    \centering
    \includegraphics[width=0.48\textwidth]{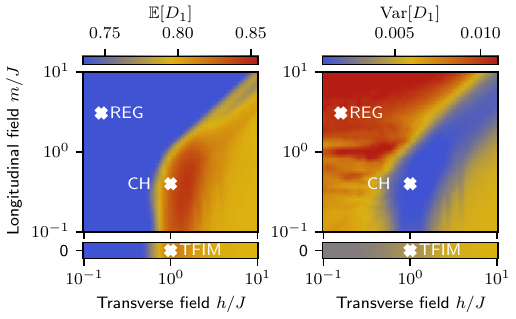}
    \caption{
    Identification of chaotic (point \texttt{CH}) and non-chaotic regimes (\texttt{TFIM}, \texttt{REG}) of the mixed-field Ising model~\eqref{eq:ising_model} as a function of transverse and longitudinal fields via eigenstate delocalization. ETH is expected to hold in the chaotic regime.
    Mean $\mathbb{E}[D_1]$ (left) and variance $\text{Var}[D_1]$ (right) of the fractal dimension $D_1$ are evaluated in the $n$-qubit Z-basis for $n=8$.
    Large mean $\mathbb{E}[D_1]$ combined with small $\text{Var}[D_1]$ signal quantum chaos.
    Other parameter points and eigenstate delocalization in other bases are discussed in App.~\ref{app:numerics_spectral_statistics} (cf. Tab.~\ref{tab:hamiltonian_parameters}).
    }
    \label{fig:spectral_statistics_n_qubits8}
\end{figure}
The ETH is expected to hold with high accuracy in quantum chaotic systems \cite{Srednicki1999}.
For our model~\eqref{eq:ising_model}, we identify sets of parameter points $(h,m)$ with distinct quantum chaotic properties by analyzing the delocalization properties of the energy eigenstates, as used e.g. in \cite{pausch_chaos_2021,brunner_many-body_2023}.
This approach directly assesses the underlying properties of ETH, while simultaneously being independent of a specific choice of observables (jump operators).
Quantum chaos is signaled by a large average $\bE[D_1]$ over the fractal dimension $D_1$, in combination with a drop in variance $\Var[D_1]$.
This behavior signals a strong and uniform delocalization of energy eigenstate.
Definitions and more details are given in App.~\ref{app:numerics_spectral_statistics}.
Figure~\ref{fig:spectral_statistics_n_qubits8} shows $\mathbb{E}[D_1]$ (left) and $\text{Var}[D_1]$ (right) in the parameter range $m/J \in \{0\} \cup [10^{-1}, 10^1]$, $h/J \in [10^{-1}, 10^1]$.
The parameter region $0.7 \lesssim h/J \lesssim 2$, $0.2 \lesssim m/J \lesssim 0.9$ shows the clearest signature of quantum chaos (see also Fig.~\ref{fig:app_spectral_statistics_n_qubits8} for an extended analysis).
Within this {\it chaotic lake}, the ETH ansatz is expected to be valid with with high accuracy.
In turn, the limits $m\rightarrow \infty$ or $h \rightarrow \infty$ are dominated by either the transverse or longitudinal field, whereas in the limit $h,m \rightarrow 0$ the interaction term dominates the dynamics.
In all three cases, eigenstates become more localized and we expect that the ETH is not applicable.
We call these limits the \textit{regular limits} of our model.
Another important case is the limit $m \rightarrow 0$ of the transverse field Ising model, which is an integrable (i.e., non-chaotic) model.
Based on the phase diagram in Fig.~\ref{fig:spectral_statistics_n_qubits8} we select three relevant parameter points, shown as white crosses, for our numerical studies in this section---\texttt{TFIM} (the transverse field Ising model),
\texttt{REG} (dominated by the longitudinal field), and the point \texttt{CH} deeply within the chaotic lake.
Additional parameter points, exploring the remaining regular limits and the chaos transition, are considered in App.~\ref{app:numerics}.
In Tab.~\ref{tab:hamiltonian_parameters} we summarize all considered parameter points.
Furthermore, in App.~\ref{sec:app_mean_level_spacing_ratio} we compare eigenstate delocalization properties to the distribution of level spacing ratios, another frequently used signature of quantum chaos~\cite{oganesyan_localization_2007,atas_distribution_2013}.

\subsection{Simulation of the Lindblad equation}
\label{sec:numerics:time_evolution}

\begin{figure*}
    \centering
    \includegraphics[width=\textwidth]{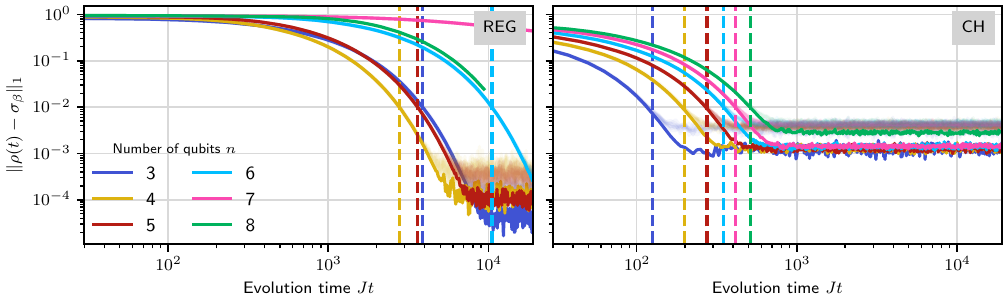}
    \caption{
    Trace distance between the target Gibbs state $\sigma_\beta$ and the state $\rho(t)$ evolved under Lindblad dynamics of the mixed-field Ising model~\eqref{eq:ising_model} with single random jump operators at each time step for $n=3,\dots, 8$ qubits
    and $\beta=(2J)^{-1}$.
    We consider a set of size $|\bm{A}| = 20$ of random $2$-local Pauli jump operators and the Hamiltonian parameter points \texttt{REG} (left, dominated by the longitudinal field) and \texttt{CH} (right, quantum chaotic regime), cf. Fig.~\ref{fig:spectral_statistics_n_qubits8}.
    Thin lines are individual trajectories of the randomized Lindblad simulation protocol (details in App.~\ref{app:numerics_details}), thick lines are trace distances between the Gibbs state and the time-evolved state averaged over trajectories.
    Vertical lines indicate the mixing time estimate in Eq.~\eqref{eq:mixing_time_estimate}.
    The results confirm much faster mixing in the quantum chaotic regime (\texttt{CH}) than in the non-chaotic regime (\texttt{REG}). On the other hand, the system in parameter regime \texttt{REG} converges to a state closer to the Gibbs state as explained in App.~\ref{app:numerics_mixing_time}.
    }
    \label{fig:time_evolution_random_k_body_Pauli_k_order2}
\end{figure*}

In this subsection we study numerically the mixing time and the convergence accuracy $\left\|\SteadyState - \sigma_{\beta} \right\|_1$ of the Lindblad dynamics~\eqref{eq:lindblad_eq}.
The Lindblad operators~\eqref{eq:L_operators} are constructed from a set of random $k$-local Pauli jump operators $A^a$ with $a \in \bm{A}$.
The random jump operator model is detailed in App.~\ref{sec:app:adaptive_rk_scheme}.
For our studies, we vary both $k$ and the number $|\bm{A}|$ of considered jump operators.
For computational efficiency, we ues a randomized simulation strategy. We compute individual trajectories of the dynamics by sampling at each time step a single jump operator $A^a$, $a\in \bm{A}$, with probability $p_a = 1/|\bm{A}|$, and evolve the current state under this single jump operator. This is similar to our quantum protocol in Sec.~\ref{sec:single_ancilla_protocol}.
We use the fourth-order Runge-Kutta method, instead of the Trotterization~\eqref{eq:lindblad_random} used in the quantum algorithm, and evaluate the integral in Eq.~\eqref{eq:L_operators} exactly, instead of using the discretization~\eqref{eq:L_operator_discretized}.
For each parameter point, we simulate the Lindblad evolution up to a maximal number of time steps $N_\mathrm{steps}^\mathrm{max} = 3 \cdot 10^5$. The step size is determined by an adaptive scheme for each setting and system size individually.
Details on the numerical scheme are provided in App.~\ref{app:numerics_details}, with step sizes reported in Tab.~\ref{tab:hamiltonian_parameters}.
The results shown in the main text are computed for a fixed number of random trajectories. A detailed analysis of the influence of the number of trajectories on the dynamics, and a comparison to the Monte Carlo wave function method, is given in App.~\ref{sec:app:adaptive_rk_scheme:comparison_WVMC}.
We highlight that the numerical simulations performed are exact, thus limiting the system sizes that are probed.
In App.~\ref{sec:app:improved_classical_simulation_techniques} we briefly discuss the potential and challenges of recently proposed tensor network simulation techniques for similar systems.

Figure~\ref{fig:time_evolution_random_k_body_Pauli_k_order2} exemplifies the evolution of the trace distance $\Vert \rho(t) - \sigma_\beta\Vert_1$ for the two parameter points \texttt{REG} and \texttt{CH} (cf. Tab.~\ref{tab:hamiltonian_parameters}) and increasing number of qubits $n=3, \dots, 8$.
As initial state we fix the maximally mixed state, $\rho(0) = I / 2^n$ with $I$ the $2^n\times 2^n$ identity, and we fix for each $n$ a set of random $(k=2)$-local jump operators of size $|\bm{A}| = 20$.
A single trajectory of the dynamics for a given $n$ is generated by sampling a random jump operator from this set in each time step.
The trace distances for several of these trajectories are plotted as thin lines (varying $n$ is indicated by color).
The time-evolved state is computed as an average of the trajectory states at each time point (see App.~\ref{app:numerics_details} for details).
The trace distances of this state to the Gibbs state are shown as bold lines.
In all cases, the averaged time-evolved state is significantly closer to the Gibbs state than the individual trajectories.

Because of the randomization, the Lindblad dynamics does not exactly converge to $\SteadyState$ but plateaus at a value between $10^{-4}$ and $10^{-2}$.
Moreover, the exact steady state of Eq.~\eqref{eq:lindblad_eq} only approximates the Gibbs state $\sigma_\beta$.
However, as we will see in Fig.~\ref{fig:distance_steady_gibbs_random_k_body_Pauli_k_order2}, the convergence accuracy $\Vert \SteadyState - \sigma_\beta \Vert_1$ is well below the observed plateau in Fig.~\ref{fig:time_evolution_random_k_body_Pauli_k_order2} in all cases.
Hence, with our limited number of trajectories in Fig.~\ref{fig:time_evolution_random_k_body_Pauli_k_order2}, we cannot dynamically resolve the difference between the two states.
Further note that for the point \texttt{REG} and $n=8$, the time evolution already stops at $Jt \approx 10^{4}$. This is because this setting requires a much smaller step size for the simulation, cf. Tab.~\ref{tab:hamiltonian_parameters}, such that we only reach a time of $Jt \approx 10^{4}$ within the maximal number of steps $N_\mathrm{steps}^\mathrm{max}$.

\subsection{Mixing time estimate and spectral gap}
\label{sec:numerics:mixing_time}

We set $\epsilon=10^{-2}$ and estimate the mixing time~\eqref{eq:mixing_time} by
\begin{equation}
\label{eq:mixing_time_estimate}
\hat{t}_\mathrm{mix} \coloneq \inf \lbrace t > 0: \Vert \rho(t) - \sigma_\beta \Vert_1 < 10^{-2} \rbrace \,,
\end{equation}
starting from the maximally mixed state, $\rho(0) = I/2^n$, instead of maximizing over all initial states.
In a slight abuse of notation, $\rho(t)$ denotes the evolved state under the numerical scheme, and not the state under the exact dynamics~\eqref{eq:lindblad_eq}.
The mixing time estimate $\hat{t}_\mathrm{mix}$ is highlighted in Fig.~\ref{fig:time_evolution_random_k_body_Pauli_k_order2} by vertical dashed lines.
We generally observe that the early time dynamics ($t \leq \hat{t}_\mathrm{mix}$) coincides for all trajectories (thin lines).
Hence, a very small number of trajectories (even only a single one) suffices to accurately compute the estimate $\hat{t}_\mathrm{mix}$.
As expected, $\hat{t}_\mathrm{mix}$ increases with $n$.
The quantum chaotic parameter point \texttt{CH} exhibits a much smaller mixing time than the regular limit \texttt{REG}.
Interestingly, the trace distance $\Vert \rho(t) - \sigma_\beta\Vert_1$ converges to a smaller value for \texttt{REG} than for \texttt{CH}, which we will discuss in more detail in Fig.~\ref{fig:distance_steady_gibbs_random_k_body_Pauli_k_order2}.

\begin{figure*}
    \centering
    \includegraphics[width=\textwidth]{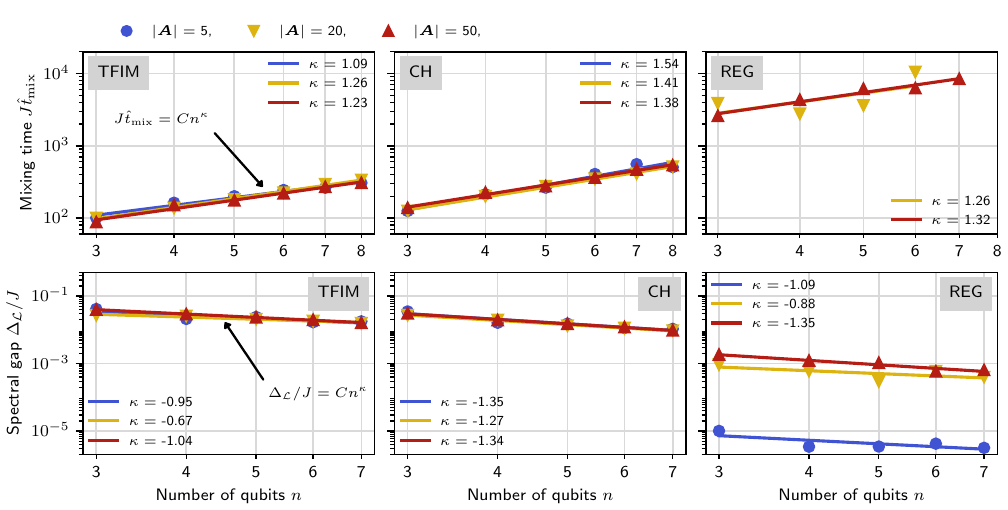}
    \caption{
    Scaling of mixing time (top) and spectral gap of Lindbladian $\calL$ (bottom) with $n$.
    We consider sets of random $2$-local Pauli jump operators of size $|\bm{A}| = 5, 20, 50$ and the mixed-field Ising model~\eqref{eq:ising_model} at parameter points \texttt{TFIM} (left, transverse field Ising model), \texttt{CH} (middle, quantum chaotic regime) and \texttt{REG} (right, dominated by the longitudinal field), cf. Fig.~\ref{fig:spectral_statistics_n_qubits8}.
    We use $\beta=(2J)^{-1}$.
    The mixing time and spectral gap data fit well a polynomial $\propto n^\kappa$ (lines). The transverse-field (\texttt{TFIM}) and chaotic (\texttt{CH}) regimes show much faster mixing and correspondingly larger spectral gap than the longitudinal-field-dominated regime (\texttt{REG}).
    The number of jump operators $|\bm{A}|$ has a negligible effect except for \texttt{REG}, $|\bm{A}|=5$.
    In this case the dynamics does not converge within the maximal simulation time and the spectral gap is significantly smaller than in other cases.
    }
    \label{fig:mixing_time_spectral_gap_random_k_body_Pauli_k_order2}
\end{figure*}
From the bound in Eq.~\eqref{eq:mixing_time_bound} we expect the mixing time to scale as $\calO(\mathsf{poly}(n))$.
Here we study numerically the scaling of $\hat{t}_\mathrm{mix}$ and the spectral gap $\Delta_\calL$ with the number of qubits $n$ for different numbers of $(k=2)$-local jump operators, $|\bm{A}| = 5,20,50$, at the parameter points \texttt{TFIM}, \texttt{CH} and \texttt{REG} (cf. Fig.~\ref{fig:spectral_statistics_n_qubits8}).
As before, the initial state is maximally mixed.
Other parameter points, other values of $k$, as well as simulations with initial state $\ket{0}^{\otimes n}$ are discussed in App.~\ref{app:numerics_mixing_time}.

Figure~\ref{fig:mixing_time_spectral_gap_random_k_body_Pauli_k_order2} shows the mixing time $\hat{t}_\mathrm{mix}$ (top row) and the spectral gap $\Delta_{\calL}$ (bottom row) as a function of $n$ for different $|\bm{A}|$.
We compute $\Delta_{\calL}$ from the full Lindbladian~\eqref{eq:lindbladian_general}, containing all Lindblad operators $L^a, a \in \bm{A}$ (details in App.~\ref{app:numerics_details_Lindbladian_gap}).
For each $|\bm{A}|$ and each parameter point the legend shows the leading exponent $\kappa$ of the polynomial dependence of $\hat{t}_\mathrm{mix}, \Delta_\calL $ on $n$, obtained by fitting the data to $\sim n^\kappa$.
In almost all cases, this assumption is justified based on our numerical findings.
We note that the number of jump operators $|\bm{A}|$ does not have a significant effect on $\hat{t}_\mathrm{mix}$ and $\Delta_\calL$.

We observe small mixing times for the chaotic point \texttt{CH} with a polynomial scaling exponent of approximately $\kappa \approx 1.4$.
Other chaotic parameter points show a similarly fast convergence (see App.~\ref{app:numerics_mixing_time}).
In contrast, the regular limit \texttt{REG} shows over an order of magnitude larger mixing times with slightly smaller scaling exponent $\kappa \approx 1.3$.
Note, however, that in this case and for $n=8$ or $|\bm{A}| = 5$,
the time-evolved state does not converge to a distance $\Vert \rho(t) - \sigma_\beta \Vert_1$ below $\epsilon = 10^{-2}$ within the maximum time horizon considered.
Even larger mixing time estimates are obtained in the limit of large transverse field, $h \gg m, J$, as discussed in App.~\ref{app:numerics_details}.
Interestingly, the integrable point \texttt{TFIM}, corresponding to the transverse-field Ising model, shows the fastest convergence with exponent $\kappa \approx 1.2$.
Moreover, for \texttt{REG} we observe a non-monotonic dependency of mixing time on $n$ (although showing an overall increasing trend with $n$). The mechanism behind this is a complex interplay of Bohr frequencies of the model, the filter functions used in Eq.~\eqref{eq:L_operators}, and the energy distribution of the initial state.
In App.~\ref{app:bohr_frequencies} we discuss this effect in more detail and show that, as a consequence, the scaling for the regular point \texttt{REG} is strongly susceptible to the initial state, while the chaotic point \texttt{CH} is relatively robust against the choice of initial state.

The situation is mirrored for the spectral gap $\Delta_\calL$. As expected, we obtain large gaps with a decay exponent between $-2\leq \kappa \leq -1$ for the chaotic parameter point \texttt{CH}, and comparably smaller gaps for \texttt{REG}.
Interestingly, in this case and for $|\bm{A}| =5$, the spectral gap drops roughly two orders of magnitude in comparison $|\bm{A}| = 20, 50$.
This is in line with the above observed extremely long convergence time in this case.
The integrable limit \texttt{TFIM} exhibits the largest spectral gap, accompanied by the smallest (in modulus) decay exponents between $-1 \leq \kappa \leq -0.5$.

In App.~\ref{app:non-local_jump_operators} we further explore the influence of initial state and locality $k$ of the jump operators
(see App.~\ref{sec:app:adaptive_rk_scheme} for a detailed description of the jump operator model we use).
We generally observe an increase of mixing time for larger $k$.
Moreover, the zero state $\ket{0}^{\otimes n}$ typically converges faster to the steady state than the maximally mixed state.
In App.~\ref{app:coherent_term} we study the influence of the coherent term $-\im[H, \rho]$ in the Lindbladian~\eqref{eq:lindbladian_general}. In the derivations of our analytical results in Eqs.~\eqref{eq:mixing_time_bound} and~\eqref{eq:upper_bound_trace_distance}, it turns out the coherent term does not play a role. Nevertheless, we find numerically that the coherent term significantly improves the convergence towards the steady state of $\calL$.

\subsection{Distance between steady state and Gibbs state}
\label{sec:numerics:convergence_accuracy}

\begin{figure*}
    \centering
    \includegraphics[width=\textwidth]{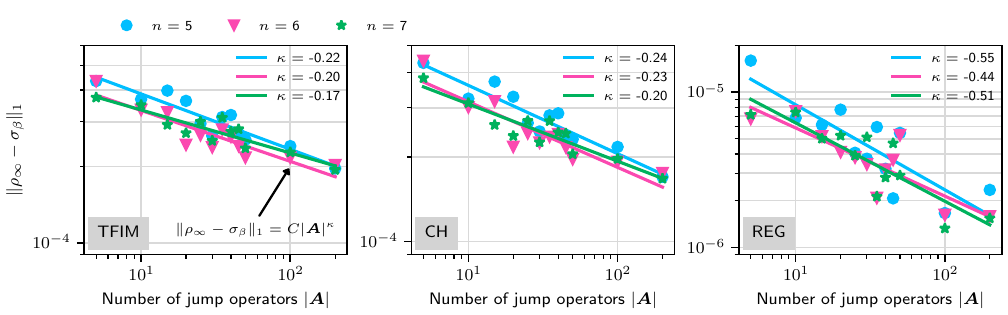}
    \caption{
    Scaling of trace distance between the exact steady state of the Lindblad dynamics $\SteadyState$ and target Gibbs state $\sigma_\beta$ with the number of jump operators $|\bm{A}|$.
    We consider random $2$-local Pauli jump operators, $n = 5,6,7$ qubits,
    $\beta=(2J)^{-1}$,
    and the mixed-field Ising model~\eqref{eq:ising_model} at parameter points \texttt{TFIM} (left, transverse field Ising model), \texttt{CH} (middle, quantum chaotic regime) and \texttt{REG} (right, dominated by the longitudinal field), cf. Fig.~\ref{fig:spectral_statistics_n_qubits8}.
    Polynomial fits $\propto |\bm{A}|^\kappa$ (lines) confirm a decrease with exponent $\kappa \approx -0.2$ for \texttt{TFIM} and \texttt{CH} slightly smaller in modulus than expected from the bound~\eqref{eq:upper_bound_trace_distance} ($\kappa=-1/2$).
    The integrable point \texttt{REG} shows a faster decrease ($\kappa\approx -1/2$) and a significantly smaller distance between the steady state and Gibbs state.
    Appendix~\ref{app:numerics_mixing_time} explains this observation and presents further results.
    }
    \label{fig:distance_steady_gibbs_random_k_body_Pauli_k_order2}
\end{figure*}

According to the bound~\eqref{eq:upper_bound_trace_distance}, the trace distance between the steady state and Gibbs state, $\Vert \SteadyState - \sigma_\beta \Vert_1 \leq \calO(\mathsf{poly}(n)/\sqrt{|\bm{A}|})$, depends inversely on the number of jump operators $|\bm{A}|$ when $|\bm{A}|$ is not sufficiently large.

Here we study numerically the scaling of $\Vert \SteadyState - \sigma_\beta \Vert_1$ with $|\bm{A}| \in [5, 200]$ for $n=5,6,7$ at the parameter points \texttt{TFIM}, \texttt{CH} and \texttt{REG}.
The results are shown in Fig.~\ref{fig:distance_steady_gibbs_random_k_body_Pauli_k_order2}.
As expected, the distance between both states decreases in all cases for increasing $|\bm{A}|$.
For the points \texttt{TFIM} and \texttt{CH} we observe a clear linear decrease on the double-logarithmic scale. The leading order exponent $\kappa$ is obtained from a fit of the data to $\sim |\bm{A}|^\kappa$ (solid lines) and given in the legend.
The data shows a smaller (in modulus) exponent, around $\kappa\approx -0.2$, compared to the scaling $\kappa = -1/2$ suggested by the analytical upper bound~\eqref{eq:upper_bound_trace_distance}.
Note that the dynamics is approximated with a randomized protocol with fixed number of trajectories, as described in App.~\ref{sec:app:adaptive_rk_scheme}.
A detailed analysis of the decay of convergence accuracy with increasing number of trajectories is provided in App.~\ref{sec:app:adaptive_rk_scheme:comparison_WVMC}.

The data for the parameter point \texttt{REG} shows strong variation. However, we still discern a clear decrease with $|\bm{A}|$. In this case, the leading order exponent is approximately $\kappa\approx -1/2$.
Overall, we observe a much smaller trace distance between steady state and Gibbs state in the regular limit \texttt{REG} ($m/J\gg 1$, $h/J\ll 1$), compared to the other two points.
This is in line with Fig.~\ref{fig:time_evolution_random_k_body_Pauli_k_order2}, where we also observed a convergence to a lower value of $\Vert \rho(t) - \sigma_\beta \Vert_1$ for \texttt{REG}.
The same effect is present for the complementary regular limit $h/J\gg 1$, $m/J\ll 1$ (cf. Fig.~\ref{fig:app_distance_steady_gibbs_random_k_body_Pauli_k_order2}, App.~\ref{app:numerics_mixing_time}). The likely cause is that, effectively, only one energy transition contributes to the Lindblad operator which leads to an effective Lindbladian with the Gibbs state as its steady state with a high accuracy. For details, see
App.~\ref{app:bohr_frequencies}.

Interestingly, we do not observe an increase of $\Vert \SteadyState - \sigma_\beta \Vert_1$ for an increasing number of qubits $n$, suggesting that this empirical distance has milder dependence on $n$ than our analytical upper bound~\eqref{eq:upper_bound_trace_distance}.
This is a promising observation for the practical implementability of our approach since it potentially allows us to choose a number of jump operators $|\bm{A}|$ independently of $n$ and still retain $\Vert \SteadyState - \sigma_\beta \Vert_1 = \calO(\epsilon)$, as discussed below Eq.~\eqref{eq:upper_bound_trace_distance}.

\section{Resilience against hardware noise}
\label{sec:noise}

In addition to errors incurred by the approximations necessary for circuit implementation, any hardware implementation will inevitably encounter imperfections due to noise. 
As much as for the algorithmic errors in Sec.~\ref{sec:single_ancilla_protocol}, it is crucial to quantify and understand these errors, which is the purpose of this section.

To facilitate the exposition we focus on the effect of noise, considering a stochastic noise model (i.e. a stochastic mixture of unitary channels including the identity), and ignore other algorithmic errors.
The noise model contains typical noise channels such as local and global depolarization and arbitrary Pauli noise channels
representative of noise obtained after application of randomized compiling~\cite{wallman_noise_2016} techniques, that are commonly used to mitigate coherent errors.
Under this assumption, we derive two independent bounds on the noise-induced Gibbs state preparation error.
The first one is restricted to stochastic noise, Eq.~\eqref{eq:main_global_depo}, and explicitly takes into account the damping of early-time errors through the contractive nature of the Lindblad channel.
The second bound is adapted from \cite{Chen2023thermal} and is applicable to more generic noise channels, as discussed in more depth in App.~\ref{app:noise_comp}.
In certain regimes, the former can result in a significantly tighter bound.
To make this study concrete, we will evaluate deviations incurred by noise resorting to protocol characteristics extracted from the numerical studies presented in Sec.~\ref{sec:numerics} and extrapolated to larger system sizes.

\subsection{Setup}\label{sec:noise_setup}
A noisy realization of the protocol consists of noiseless short-time Lindblad evolution interleaved with the noise channel.
In the noiseless case, the dynamics corresponding to $M$ steps of evolution, each of duration $\delta t$, is given by
$\Gamma_M \coloneq (\e^{\delta t\lindblad})^M$. In contrast, we model the noisy dynamics as $\widetilde{\Gamma}_M \coloneq (\Lambda_{\lambda} \circ \e^{\delta t\lindblad})^M$, with noise channel
\begin{align}
\label{eq:main_global_depo}
    \Lambda_{\lambda}[X] \coloneq (1-\lambda) X + \sum_{l} \lambda_l U_l X U_l^{\dagger}.
\end{align}
Each $U_l$ is unitary, $\lambda_l \in [0, 1]$, and $\lambda = \sum_l \lambda_l \in [0,1]$ is the overall error probability.

Let us define $\rho_M\coloneq\Gamma_M[\rho(0)]$ and $\tilde{\rho}_M\coloneq \widetilde{\Gamma}_M[\rho(0)]$, the states obtained after $M$ steps of noiseless and noisy evolution, respectively, both starting from the same initial state $\rho(0)$.
Under the noiseless dynamics, $\rho_M$ converges to the target Gibbs state $\sigma_\beta$ up to a convergence accuracy~\eqref{eq:upper_bound_trace_distance}, which is neglected for the purpose of this section.
In trace distance this convergence is captured by
\begin{align}
\label{eq:main_bound_td}
    \|\rho_M - \sigma_\beta \|_1 \leq B \e^{-\alpha M} ,
\end{align}
We note that for this bound to hold for an arbitrary input state $\rho(0)$, $B$ can be exponentially large in the system size.
Below we will fit $B$ and $\alpha$ to numerical data obtained from our simulations and, when limiting ourselves to a single input state, show that $B$ can be significantly smaller in practice.

\subsection{Bounds on the convergence accuracy}\label{sec:noise_bounds}
As detailed in App.~\ref{app:noise_conv}--\ref{app:gen_noise_comp}, through Eq.~\eqref{eq:main_bound_td} and the use of triangle inequalities, we can upper bound the distance $\|\tilde{\rho}_M  - \sigma_\beta \|_1$
between the state prepared by $M$ steps of noisy dynamics and the Gibbs state by 
\begin{align}
\label{eq:main_dk}
\begin{split}
    &\widetilde{B}_M 
    \coloneq B\left[u_0^M \left(1- \frac{\lambda}{1-u_0}\right) + \frac{\lambda}{1-u_0} \right]
    \\
    &\text{with   } u_0\coloneq (1-\lambda)\e^{- \alpha} \in [0,1).
\end{split}
\end{align}
Given that $0\le u_0<1$, $\widetilde{B}_M$ decreases monotonically with $M$ towards the asymptotic bound $\widetilde{B}_{\infty}$, limiting the distance between the noisy steady state $\tilde\rho_\infty$ and the Gibbs state,
\begin{align}
\label{eq:main_asympt_distance}
\norm{\tilde\rho_\infty  - \sigma_\beta}_1 \leq \widetilde{B}_{\infty} \coloneq B \frac{\lambda}{1-u_0}.
\end{align}
The precise statement is given in Thm.~\ref{thm:noise_bound}.
This bound depends both on the error probability $\lambda$ and the convergence rate $\alpha$. 
As expected, the smaller the error per step of evolution, the closer $\tilde{\rho}_\infty$ is to the Gibbs state.
Similarly, the bound decreases as the convergence rate increases, and simply becomes $B\lambda$ for $\alpha \rightarrow \infty$.
Note that for given $B, \alpha$ and sufficiently small $\lambda$ (i.e. $B \lambda / (1- u_0) \leq \norm{ \rho_\mathrm{noise} - \sigma_\beta }_1/2$) the noisy steady state $\tilde{\rho}_\infty$ is closer to the Gibbs state than to the steady state
$\rho_{\rm noise}$
of the noise channel.
This is in contrast to unitary evolution, which always converges to the fixed point of the noise channel for any finite level of noise $\lambda$.

In App.~\ref{app:noise_comp}, we derive a second bound on the noise-induced Gibbs state preparation error, Eq.~\eqref{eq:app_bounds_generic}, which holds for more generic noise models.
This bound is based on the operator distance between the Lindblad evolution quantum channels of the ideal and noisy dynamics.
For the derivation we adapt \cite{Chen2023thermal} (Lemma II.1), which is stated for continuous time, to our discrete setting and the considered noise channel~\eqref{eq:main_global_depo}.

\begin{figure*}
\centering
\includegraphics[width=1\textwidth]{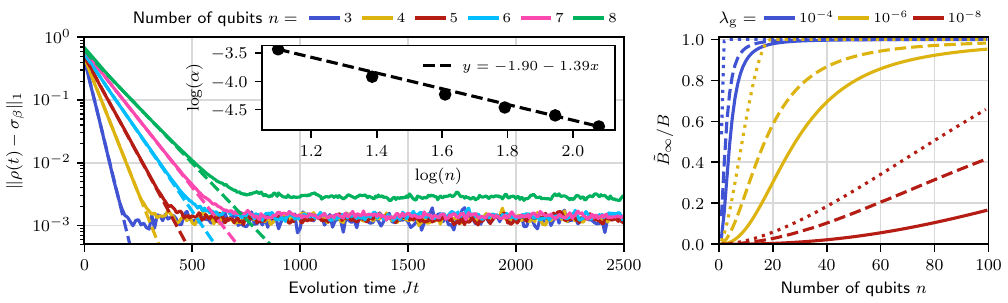}
\caption{
Noise study for the mixed-field Ising model~\eqref{eq:ising_model}.
(Left panel) Fitting of the convergence dynamics in the chaotic regime (\texttt{CH} data from Fig.~\ref{fig:time_evolution_random_k_body_Pauli_k_order2}). We report the trace distance between the state prepared by noiseless evolution and the target Gibbs state, for different system sizes ($n=3$ to $8$ qubits, colors in legend)
and set $\beta=(2J)^{-1}$ as before.
A fit of the form of the right hand side of Eq.~\eqref{eq:main_bound_td} (dashed lines) matches closely the convergence dynamics. (Inset) the convergence rates $\alpha$ are found to scale as $\mathcal{O}(n^{-1.39})$. (Right panel) Extrapolating the convergence rates $\alpha$ and prefactors $B$ to larger system sizes (up to $n=100$), and assuming a number of $2$-qubit gates $N_{\rm g}(n)=50n$ per evolution step ($J\delta t=1$), we can evaluate the bounds~\eqref{eq:main_asympt_distance} on the distance between the steady state of the noisy dynamics and the Gibbs state. 
These are normalized
by $B$ and plotted (solid lines) for different values of the error probabilities $\lambda_{\rm g}$ per noisy gate.
For comparison, we show bounds based on the operator distance between noisy and noiseless Lindblad channels, Eq.~\eqref{eq:app_bounds_generic}, with the same strength (dotted lines), and for corresponding unitary circuits (dashed lines). These are further detailed in the main text and appendices (App.~\ref{app:noise_comp} and~\ref{app:noise_comp_unit}).
}
\label{fig:noise_study}
\end{figure*}

\subsection{Results}

In order to quantify noise-induced errors and, ultimately, to assess the viability of our dissipative Gibbs state preparation protocol, one wishes to evaluate the bound~\eqref{eq:main_asympt_distance} at varied system sizes, noise levels, and algorithmic parameters (for instance, the discretization of the Lindblad operators~\eqref{eq:L_operator_discretized} that affects the gate count of the circuit and thus the strength of the noise per step of evolution).
To illustrate applications of these bounds, we resort to values of $B$ and $\alpha$, extrapolated to large $n$ from the numerical data discussed in Sec.~\ref{sec:numerics}.
We use the data from the chaotic setting \texttt{CH} in Fig.~\ref{fig:time_evolution_random_k_body_Pauli_k_order2}, for $n=3, \dots, 8$, and starting from a fixed initial state $\rho(0)=I/2^n$.
In Fig.~\ref{fig:noise_study} (left panel) we fit the right-hand side of Eq.~\eqref{eq:main_bound_td} (dashed lines) to the numerical data (solid lines).
From these, a dependency on $n$ is extracted. Results are displayed for $\alpha$ in the inset of Fig.~\ref{fig:noise_study}~(left panel), showing that a geometric fit closely matches the data, and for $B$ in Fig.~\ref{fig:noise_app_fit}~(right panel) of App.~\ref{app:noise_study}.
Using these fitted $\alpha, B$ in our bound~\eqref{eq:main_asympt_distance} is justified only if we additionally assume that Eq.~\eqref{eq:main_bound_td} holds with these parameters for all $\rho(0)$ that can be generated from $I/2^n$ by application of $\Gamma$. For a more detailed discussion we refer to App.~\ref{app:gen_noise_comp}.

For the noise channel~\eqref{eq:main_global_depo} the error probability $\lambda$ for one step of evolution can be related to the number $N_{\rm g}$ of noisy operations (e.g., $2$-qubit gates) required for the circuit implementation of $\e^{\delta t\lindblad}$ and the error $\lambda_{\rm g}$ per operation: $1-\lambda = (1-\lambda_{\rm g})^{N_{\rm g}}$. We fix the error per noisy gate $\lambda_{\rm g}$ and assume a number of gates  $N_{\rm g}=50n$ per unit of time ($J\delta t = 1$).
This linear dependency is taken to reflect that the number of terms in the Hamiltonian~\eqref{eq:ising_model} grows linearly with the system size. In turn, the number of gates required to perform unitary evolution (see Eq.~\eqref{eq:W_evolve} in Appendix~\ref{app:single_gibbs}) would also increase linearly.
The coefficient $50$ is taken as a representative example---in concrete implementations it will depend on the circuit details and their compilations.
Other choices, grounded in algorithmic and system details or scalings of $\alpha$ and $B$, can be made. Further full circuit simulations under a local depolarizing $2$-qubit gate noise model are presented in Sec.~\ref{sec:local_depolaizing_noise_simulations},
where we report actual gate counts resulting from compilation.

The bound~\eqref{eq:main_asympt_distance} is reported in units of $B$ in Fig.~\ref{fig:noise_study}~(right panel, solid lines), and evaluated for different values of the $2$-qubit gate error, $\lambda_{\rm g}=10^{-4},10^{-6}$ and $ 10^{-8}$. Such values span error rates representative of the transition from quantum platforms with high-fidelity $2$-qubit operations on physical qubits expected in the near term, to platforms with a limited number of quantum-error-corrected qubits in the medium term.
As can be seen, for a noise strength $\lambda_{\rm g}=10^{-4}$ (blue line), except for the smaller system sizes,
the prepared steady state
quickly becomes indistinguishable from the maximally mixed state, which corresponds to a value $\widetilde{B}_{\infty}/B=1$ for the chosen initial state $\rho(0) = I/2^n$.
As the $2$--qubit gate errors decrease to $\lambda_{\rm g}=10^{-6}$, the range of sizes for which the prepared state remains sufficiently close to the target Gibbs states increases to a few tens of qubits. 
Finally, for $\lambda_{\rm g}=10^{-8}$ it would become possible to prepare Gibbs states of up to $n=100$ qubits with relatively low error found to be $\|\tilde{\rho}_\infty - \sigma_\beta\|_1 \approx 0.2$.

We stress that the previous evaluation of the bounds relies on extrapolation of the convergence rates from fixed input state and relatively small system sizes, up to $n=8$, to much larger values of $n$.
Still, through bounds on the spectral gap derived in App.~\ref{app:ssec:gap_EL} and~\ref{app:ssec:spectral_gap_and_mixing_time}, and corresponding assumptions, we expect the convergence rate to scale geometrically as $\calO(n^{-c})$, as used here, albeit with potentially different values of the exponent $c$.

To put these results into perspective, we include our second bound, Eq.~\eqref{eq:app_bounds_generic}, which in principle can be adapted to more generic noise models than the stochastic model given in Eq.~\eqref{eq:main_global_depo}.
The second bound is distinct from Eq.~\eqref{eq:main_asympt_distance} and is tighter for large enough values of $B$. However, for relatively small values of $B$, in regimes where noise effects are comparatively small, it can significantly overestimate the impact of noise.
This is observed in our numerics.
As can be seen in Fig.~\ref{fig:noise_study}~(right panel), the bounds for generic noise (dotted lines) are substantially looser than the ones from Eq.~\eqref{eq:main_asympt_distance}:  $3$ to $70$ times larger for the regime of small errors $\tilde{B}_{\infty} \leq0.2$.
Notably, the latter accounts for the fact that errors (especially the ones occurring at early times in the dynamics) are tempered by the subsequent steps of evolutions that always tend towards the steady state of $\calL$.
This feature is not captured by the more generic bounds.

Finally, we also incorporate deviations that would be entailed for a unitary Gibbs state preparation protocol with comparable circuit complexity (dashed lines and detailed further in App.~\ref{app:noise_comp_unit}).
As can be seen, these are also significantly larger than the bounds of Eq.~\eqref{eq:main_asympt_distance} obtained for Lindblad evolution and stochastic noise: $2$ to $50$ times larger for the regime of small errors $\tilde{B}_{\infty} \leq0.2$.
While such a study has its limitations---some amount of non-unitarity would be required when preparing Gibbs states---it exemplifies
the enhanced resilience of the dissipative protocol compared to the unitary case, and adds to the body of work identifying this effect~\cite{Raghunandan2020,Polla2021,Mi2023,Cubitt2023,Granet2024}.
Conceptually, the existence of a non-trivial steady state, distinct from the steady state of the noise channel, ensures that the long-time evolution does not accumulate errors and still retains information about the Gibbs state~$\sigma_\beta$.

Overall, the bounds in Eqs.~\eqref{eq:main_asympt_distance} and~\eqref{eq:app_bounds_generic} allow us to assess the viability of the here considered dissipative Gibbs state preparation protocols on near-term quantum hardware.
We saw that errors induced by noise can be significantly less detrimental than expected, especially when the dominant contribution from the noise is stochastic. 
Going forward, combining such noise estimations with the algorithmic errors analysis of Sec.~\ref{sec:single_ancilla_protocol}
will be key in determining the optimal algorithmic parameters for our protocol.
Finally, we note, that the noise considered here was adversarial, in that our Lindbladian was not designed to account for it.  Engineering Lindbladians for Gibbs state preparation taking into account pre-characterized noise (even if approximately) may open up the path to even more resilient protocols.

\section{Quantum circuit simulation}
\label{sec:noisy_simulation}

In Sec.~\ref{sec:numerics} we investigated the dynamical properties of our Lindbladian~\eqref{eq:lindbladian_general}, neglecting realistic error sourced incurred by the circuit implementation.
In this section we focus on circuit simulations of our protocol (Sec.~\ref{sec:single_ancilla_protocol}). These allow us to study the influence of the main algorithmic hyperparameters and analyze the algorithm's performance under realistic noise models.
In Sec.~\ref{sec:alg_errors} we compare our analytical bounds derived in Sec.~\ref{sec:single_ancilla_protocol} to the error obtained from noiseless circuit simulation.
Furthermore, in Sec.~\ref{sec:local_depolaizing_noise_simulations},
we support our theoretical treatment of hardware noise (Sec.~\ref{sec:noise}) by noisy circuit simulations, assuming a local depolarizing two-qubit-gate noise model.
The circuit simulations are performed with \texttt{qujax}~\cite{qujax}, a Python circuit simulation package based on JAX~\cite{jax2018github}, which allows efficient parallelization and batching of circuit runs for different hyperparameter configurations.
As in the previous section, we focus on the parameter point \texttt{CH} of the Ising model~\eqref{eq:ising_model}.
For the jump operators $A^a$ in the discretized Lindblad operators~\eqref{eq:L_operator_discretized} we use the same model of $k$-local Pauli operators as in Sec.~\ref{sec:numerics} (see also App.~\ref{sec:app:adaptive_rk_scheme} for a precise definition).
The compiled circuits used in this section are accessible via \url{https://github.com/CQCL/lindblad_engineering_circuits}.
There, we also provide a circuit visualization of a single step of trotterized Lindblad evolution for $n=2$.

\subsection{Algorithmic errors}
\label{sec:alg_errors}

\begin{figure*}[t]
    \centering
    \includegraphics[width=\textwidth]{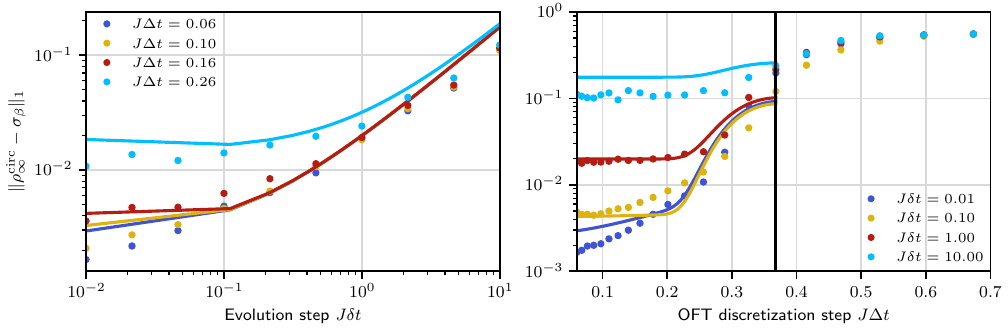}
    \caption{
    Algorithmic errors of the randomized single-ancilla Lindblad simulation protocol for the mixed-field Ising model~\eqref{eq:ising_model} with $n=5$ qubits
    and $\beta=(2J)^{-1}$.
    Trace distance between the steady state $\rho_\infty^{\rm circ}$ simulated by the noiseless circuit (simulation time $Jt=500$) and the target Gibbs state $\sigma_\beta$ as a function of Trotter evolution step $\delta t$ (left panel) and OFT discretization step $\Delta t$ (right panel). The fit (solid lines) obtained with the ansatz in Eq.~\eqref{eq:algorithmic_errors_fit} captures the algorithmic error data (dots) reasonably well.
    The black vertical line in the right panel indicates the maximum $\Delta t$ for the fit ansatz to be valid. Fitted parameters and further details are in App.~\ref{app:circ_sim_alg_errors}.
    }
    \label{fig:algorithmic_error_scaling_OFT_discretization_and_trotter_step}
\end{figure*}
We explore the interplay of different sources of algorithmic errors and quantitatively investigate how each source impacts the accuracy of our Gibbs state preparation protocol
presented in Sec.~\ref{sec:single_ancilla_protocol}.

\paragraph*{Error sources ---}
We recall the upper bound on the Lindblad simulation error~\eqref{eq:error_sources},
\begin{align}
\label{eq:error_sources2}
\begin{split}
    &t_{\rm mix}\times\calO\left(
        \delta t
        + \frac{T\Delta t^2}{\delta t}
        + \frac{\Delta t}{\sqrt{\beta}}\e^{-2(T/\beta)^2}
    \right.
    \\
    &\quad\left.   
        + \sqrt{\beta}|B_H| \e^{-\frac{1}{8}\big(2\pi\frac{\beta}{\Delta t} - 2\beta\|H\|_\infty-1\big)^2}
    \right),
\end{split}
\end{align}
with evolution step $\delta t$ and OFT discretization step $\Delta t$ appearing in Eqs.~\eqref{eq:lindblad_random} and~\eqref{eq:L_operator_discretized},
and the number of Bohr frequencies $|B_H|$ for a Hamiltonian $H$.
See also Prop.~\ref{prop:algorithmic_error} for a detailed discussion.
The first term is due to Trotterization, random sampling of Lindblad operators~\eqref{eq:lindblad_random}, and dilation~\eqref{eq:evolve_dilation}. The approximation of $\e^{-\im\sqrt{\delta t\gamma}K^a}$ in Eq.~\eqref{eq:evolve_dilation} induces the second, third and fourth terms, each of which arises as follows
(see Prop.~\ref{prop:algorithmic_error}
for more details):
Approximating the Lindblad operators~\eqref{eq:L_operators} via Eq.~\eqref{eq:L_operator_discretized} by restricting the integration domain to $[-T, T]$, and discretizing it into $2S$ time steps of size $\Delta t = T/S$ induces a truncation  (third term) and discretization error (fourth term).
The implementation of the operator $A^a(s\Delta t)$ appearing in Eq.~\eqref{eq:L_operator_discretized} requires coherent evolution under $\e^{-\im Hs\Delta t}$, that is approximated by a second-order product formula, and thus, induces the second term in Eq.~\eqref{eq:error_sources2}.

\paragraph*{Numerical setup ---}
The Gibbs state preparation error of our circuit is quantified by the distance $\|\rho_\infty^{\rm circ} - \sigma_\beta\|_1$ between the approximate steady state $\rho_\infty^{\rm circ}$ of the noiseless quantum circuit, Eq.~\eqref{eq:lindblad_random}, after Trotterization and OFT discretization and the Gibbs state $\sigma_\beta$.
For the following noiseless circuit simulations we use the mixed-field Ising Hamiltonian~\eqref{eq:ising_model} with $n = 5$ qubits and
choose $\beta=(2J)^{-1}$ as before.
We set
the cutoff time of the discretized OFT~\eqref{eq:L_operator_discretized} to $JT = 1.6$. 
This value ensures that the bulk of the Gaussian filter $g(t)$, Eq.~\eqref{eq:g_gaussian}, is captured (i.e. $\big(\int_{-\infty}^\infty \diff t\, g(t) - \int_{-T}^T \diff t\, g(t)\big)/\int_{-\infty}^\infty \diff t\, g(t) \approx 10^{-7}$).
Furthermore, we fix the maximum simulation time to $Jt = 500$, which is sufficiently large for the dynamics to converge to its steady state $\rho_\infty^{\rm circ}$ for the chosen Hamiltonian parameter point (cf. Fig.~\ref{fig:time_evolution_random_k_body_Pauli_k_order2}).
Thus, the number $M$ of evolution steps, Eq.~\eqref{eq:lindblad_random}, is given by $M = 500 / J\delta t$ for fixed step size $\delta t$.
Moreover, we choose $|\bm{A}| = 10$ random jump operators and average all simulations over $10$ repetitions.

To investigate the applicability of our theoretical bound~\eqref{eq:error_sources2}, we
compute the distances $\|\rho_\infty^{\rm circ} - \sigma_\beta\|_1$ on a two-dimensional discretized grid of $(\delta t, \Delta t)$-values in the range $10^{-2} \leq J\delta t \leq 10^1$ and $0.06 \leq J\Delta t \leq 1$.
We fit this data with the function
\begin{align}
\label{eq:algorithmic_errors_fit}
\begin{split}
    &f_{\alpha_1, \alpha_2, \alpha_3, \alpha_4} (\delta t, \Delta t) = 
    \\
    & 
    \alpha_1 + \alpha_2 \delta t + \alpha_3 T\frac{\Delta t^2}{\delta t} + \alpha_4 \sqrt{\beta} |B_H|\e^{-\frac{1}{8}\big(2\pi\frac{\beta}{\Delta t}-2\beta\|H\| - 1 \big)^2},
\end{split}
\end{align}
which captures the error scaling of Eq.~\eqref{eq:error_sources2} for fixed evolution time.
Details are given in App.~\ref{app:circ_sim_alg_errors_fit}.
Since we use a large integration window with $JT = 1.6$, the truncation error term $\propto\e^{-2(T/\beta)^2}$ in Eq.~\eqref{eq:error_sources2} is negligible relative to the other error sources, and we drop its contribution in Eq.~\eqref{eq:algorithmic_errors_fit}.
The parameter $\alpha_1$ has been introduced to account for the convergence accuracy, Eq.~\eqref{eq:upper_bound_trace_distance}, of the ideal Lindblad evolution.

\paragraph*{Results ---}
In Fig.~\ref{fig:algorithmic_error_scaling_OFT_discretization_and_trotter_step} (left panel) we report the distances $\|\rho_\infty^{\rm circ} - \sigma_\beta\|_1$ (indicated as dots) as a function of the evolution step size $\delta t$ for several values of the OFT discretization step size $\Delta t$ (colors in the legend).
We observe that the fitted function $f$ [Eq.~\eqref{eq:algorithmic_errors_fit}, solid lines] captures the approximately polynomial error increase for large $\delta t$ ($J\delta t \gtrsim 1$), although with larger slope than the data.
For small $\delta t$ ($J\delta t \lesssim 0.1$) there are two distinct behaviors. If $\Delta t$ is sufficiently large, the third term of Eq.~\eqref{eq:algorithmic_errors_fit} (proportional to $\alpha_3$) is large and its $1/\delta t$ dependence counteracts the $\delta t$ dependence of the second term.
In this regime, the error can increase with decreasing evolution step size $\delta t$, which is visible
in the fit corresponding to $J\Delta t = 0.26$ (left end of the light blue line).
The circuit simulations data (dots) partially reflects this behavior: for $J\Delta t \geq 0.16$ the error decreases only mildly with decreasing $\delta t$.
The contribution from the third term of Eq.~\eqref{eq:algorithmic_errors_fit} can be suppressed for sufficiently small $J\Delta t \lesssim 0.1$, which results in an overall error decay with $\delta t$ according to the second term.
Note that the observed error (dots) decreases faster than predicted by the fit in this regime, which is expected since the function we fit is an upper bound.

In Fig.~\ref{fig:algorithmic_error_scaling_OFT_discretization_and_trotter_step} (right panel) we show the error dependence on the OFT discretization step $\Delta t$ for four values of the evolution step $\delta t$ (given in the legend).
Several characteristic regimes can be identified.
For small $J\delta t = 0.01$ (dark blue dots) and $J\Delta t \lesssim 0.25$, the third term of Eq.~\eqref{eq:algorithmic_errors_fit} controls the weak decrease of $\|\rho_\infty^{\rm circ} - \sigma_\beta \|_1$ as $\Delta t$ decreases.
For larger OFT discretization step, the fourth term of Eq.~\eqref{eq:algorithmic_errors_fit} dominates, leading to a rapid increase of the error for $0.25 \lesssim J\Delta t \lesssim 0.4$.
A qualitatively different behavior is observed for larger $J\delta t \geq 1$. In this case, the second term of Eq.~\eqref{eq:algorithmic_errors_fit} exceeds the third term (dependent on $\Delta t$), and the error plateaus and becomes independent of $\Delta t$ for $J\Delta t \lesssim 0.25$.
A more detailed discussion of the individual error contributions is given in App.~\ref{app:circ_sim_alg_errors_decomposition}.

Overall, our circuit simulations show that our analytical bounds capture the different algorithmic error sources reasonably well and allow us to understand the individual contributions.
In the noiseless case, we can control the algorithmic errors by reducing evolution step $\delta t$ and OFT discretization step $\Delta t$, at the expense of increased circuit depth scaling polynomially in the inverse error, cf. Eq.~\eqref{eq:runtime_main}.
However, when executing on noisy quantum computers the increased circuit depth generally will lead to larger errors induced by noise.
These two trends, namely the reduction of algorithmic errors and the increase of noise-induced errors, counteract each other and need to be balanced for an optimal overall accuracy of the prepared state. We study this interplay in the next subsection.

\subsection{Local depolarizing noise}
\label{sec:local_depolaizing_noise_simulations}

To complete our error analysis of the randomized single-ancilla Gibbs state preparation protocol, we perform circuit simulations with local depolarizing noise.
To reflect the realistic implementation on hardware, the circuits are first compiled to the native gate set of the Quantinuum's H1 architecture using the TKET compiler~\cite{Sivarajah_2021} with optimization level $2$.
Furthermore, after each 2-qubit gate in the compiled circuit, we apply a noise channel
\begin{align}
\label{eq:main_local_depo}
    \Lambda_{\rm 2Q}[X] \coloneq (1-\lambda_{\rm g}) X + \lambda_{\rm g} \Tr_{\rm 2Q}[X] \otimes \frac{I_{\rm 2Q}}{4},     
\end{align}
controlled by the error probability $\lambda_{\rm g} \in [0,1]$ and where $\Tr_{\rm 2Q}$ indicates tracing out the 2-qubit Hilbert space the gate acts on, while $I_{\rm 2Q}$ is the identity operator on such space.
While here we consider a local depolarization noise channel~\eqref{eq:main_local_depo}, we note that the bounds of Eq.~\eqref{eq:main_dk} would evaluate to the same value for other stochastic noise channels with the same probability of no-error $1-\lambda_g$.
We choose an OFT discretization step size $J\Delta t = 0.2$. As in Section~\ref{sec:alg_errors}, all simulations are performed for the mixed-field Ising model~\eqref{eq:ising_model} with $n=5$ qubits,
$\beta=(2J)^{-1}$,
cutoff time $JT=1.6$, $|\bm{A}| = 10$, maximum simulation time $Jt = 500$, and we average all simulations over $10$ repetitions.

Results of the circuit simulations with noise are displayed in Fig.~\ref{fig:trace_vs_noise}. 
Denoting as $\tilde\rho_\infty^\text{circ}$ the output of the noisy quantum circuit simulation, we evaluate the trace distance $\|\tilde\rho_\infty^\text{circ} - \sigma_{\beta}\|_1$ to the Gibbs state as a function of the noise parameter $\lambda_{\rm g} \in [10^{-6}, 10^{-4}]$. This distance quantifies the overall error, including both algorithmic and noise contributions, in the preparation of the target Gibbs state. Results (indicated as circles) for different values of the evolution steps $J\delta t = 1, 3$ and $5$ are reported (colors in legend). As can be seen in the figure, for noise strengths $\lambda_{\rm g} < 10^{-5}$, lower values of $\delta t$ systematically yield smaller trace distances. As the noise increases, however, larger evolution steps result in smaller distances. This highlights trade-offs between algorithmic and noise errors: while the algorithmic error scales with the size of the evolution steps, the effect of the noise decreases with it, such that in certain noise regimes adopting larger evolution steps becomes beneficial. 

To validate the noise analysis of Sec.~\ref{sec:noise}, we evaluate the bounds $\tilde{B}_\infty(N_{\rm g})$ derived in Eq.~\eqref{eq:main_asympt_distance} for stochastic noise. 
These bounds are computed using an error rate $\lambda=1-(1-\lambda_{\rm g})^{N_{\rm g}}$ for $N_{\rm g}$ the number of 2-qubit gates per evolution step. For $J\delta t=\{1, 3, 5\}$, the gate counts per step of the compiled circuits are $N_{\rm g}=\{308, 484, 644\}$, respectively.
To be comparable to the trace distance, the bounds $\tilde{B}_\infty(N_{\rm g})$ are shifted by the algorithmic error $d_0$ that is obtained in the noiseless scenario (i.e. by setting the noise strength $\lambda_{\rm g}=0$) for each of the values of $\delta t$ probed. 
As seen in Fig.~\ref{fig:trace_vs_noise}, $\tilde{B}_\infty(N_{\rm g}) + d_0$ (dotted lines) always upper bounds the overall errors.
Furthermore, the trade-off between algorithmic and noise errors, whereby larger evolution steps can incur smaller overall errors, is also captured by these bounds, albeit at slightly shifted values of $\lambda_{\rm g}$.

Finally, we perform a fit of the errors obtained with the bounds $\tilde{B}_\infty(N_{\rm eff})$ from Eq.~\eqref{eq:main_asympt_distance} for an \emph{effective} 2-qubit gate count $N_{\rm eff}$, rather than the true number of $2$-qubit gates $N_{\rm g}$. 
For $J\delta t=\{1, 3, 5\}$, we obtain fitted $N_{\rm eff}=\{135, 193, 247\}$ respectively. The resulting bounds, again shifted by the algorithmic errors $d_0$, are reported as solid lines. This shows that, for the noise model simulated, the functional dependence of the bounds $\tilde{B}_\infty$ captures the actual errors remarkably well.  

Overall, these simulations allow us to validate the noise analysis of Sec.~\ref{sec:noise}. 
We show that, while overestimating the actual noise, the bound $\tilde{B}_\infty$ already captures important trade-offs between algorithmic and noise-induced errors. Considering these trade-offs will be crucial when aiming to run specific circuits on quantum hardware. 
Furthermore, we see that the bound captures the actual errors remarkably well using an effective number of gates smaller than the real number of gates. 
Understanding this reduction in relation to the prior works on noise-resilience of quantum simulation~\cite{Childs2001, Roland2005, Trivedi2022, Kechedzhi2023, Kashyap2024, Schiffer2024, Granet2024dilution, Chertkov2024} is left to future work.

\begin{figure}[t]
    \centering
    \includegraphics[width=0.48\textwidth]{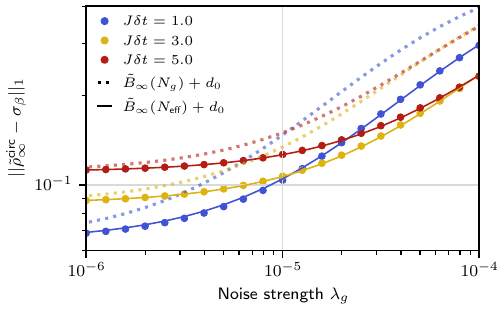}
    \caption{
    Circuit simulations with depolarizing noise applied to the 2-qubit gates for $n=5$ qubits
    and $\beta=(2J)^{-1}$.
    The distance $\|\tilde{\rho}^{\rm circ}_\infty- \sigma_\beta\|_1$ between the target Gibbs state and the output of the noisy circuits (circles) is reported as a function of the noise strength $\lambda_{\rm g}$ in Eq.~\eqref{eq:main_local_depo} for different values of the evolution step $\delta t$ (colors in legend). 
    Further details regarding the parameters used for the circuit construction are provided in the main text.
    These distances account for both algorithmic and noise errors in the preparation of the Gibbs state.
    In addition, we report the bounds $\tilde{B}_\infty$ on the noise errors from Eq.~\eqref{eq:main_asympt_distance}, shifted by the algorithmic error $d_0$, which is evaluated for $\lambda_g = 0$.
    We report these shifted bounds both for the actual number of gates $N_{\rm g}$ used in the compiled circuits (dashed lines) and for an \emph{effective} number of gates $N_{\rm eff}$ (solid lines), which is obtained by fitting the bound $\tilde{B}_\infty + d_0$ to the data. Both $N_g$ and the fitted values $N_{\rm eff}$ are provided in the main text for all the $\delta t$ studied.
    }
    \label{fig:trace_vs_noise}
\end{figure}

\section{Conclusion}

Our contribution bridges the gap between the predominantly theory-driven literature on dissipative quantum Gibbs state preparation algorithms, and the engineering of concrete Lindbladians and protocols that exhibit good convergence in practice, with circuits implementable on foreseeable quantum hardware.
To this end, we combine several recent developments in the literature \cite{Chen2023thermal, Chen2023efficient, Ding2023, Ding2024, Chen2024random, Chen2021ETH} and propose a variant of this class of algorithms with reduced implementation cost (Secs.~\ref{sec:ETH} and~\ref{sec:single_ancilla_protocol}).
Through a numerical analysis (Sec.~\ref{sec:numerics}), we establish the crucial influence of the dynamical properties of the underlying system Hamiltonian and the Lindblad operators on mixing time and convergence accuracy.
In line with our theoretical analysis, we observe a weak, almost linear, polynomial system-size scaling of the mixing time for systems obeying the ETH.
In contrast, the Lindblad dynamics exhibits vastly different convergence characteristics in the non-chaotic limits of our model.

In realistic scenarios, the overall Gibbs state preparation error is controlled by an interplay of algorithmic and hardware-induced errors---for example smaller evolution steps in Eq.~\eqref{eq:lindblad_random} reduce the Trotterization error but, at the same time, increase the gate count of the full circuit. 
To obtain a good understanding of those error sources,
we investigate their impact both on a theoretical level and based on circuit simulations (Secs.~\ref{sec:single_ancilla_protocol}, \ref{sec:noise} and~\ref{sec:noisy_simulation}).
We numerically demonstrate that our theoretical error bounds provide a good description of the actual errors in noiseless circuit simulations.
Further, we show that the Lindblad dynamics can exhibit an inherent resilience against incoherent stochastic noise, due to the presence of a nontrivial steady state of the noisy dissipative dynamics.
Unlike in the unitary case, noise-induced errors at earlier times can be damped significantly through the dissipative character of the dynamics, which is a promising observation for the successful demonstration of this class of algorithms on near-term hardware.
We support our theoretical noise analysis with circuit simulations considering a local depolarizing noise model.

The successful demonstration of dissipative quantum Gibbs state preparation algorithms on foreseeable hardware requires a deep understanding of many contributing factors---dynamical and algorithmic design properties (Hamiltonian, Lindblad operators, initial state, algorithmic parameters), and of the expected hardware errors.
This work lays an important foundation to achieve this goal. Future work needs to address the design of suitable initial states implementable with shallow circuits, and a detailed account of the trade-offs between algorithmic and noise-induced errors for realistic hardware-specific noise models.
To further mitigate the influence of noise, one may design the Lindbladian to take into account and counteract pre-characterized stochastic noise channels.
Furthermore, the influence of coherent errors, and potential routes to convert those into stochastic noise similar to randomized compiling~\cite{wallman_noise_2016,hashim_randomized_2021}, need to be explored.
Finally, to assess the performance of the algorithm, efficient ways to certify the prepared Gibbs state, e.g. based on recent Hamiltonian learning results~\cite{bakshi_learning_2023,Rajakumar2024}, are required.

\begin{acknowledgments}

We thank Daniel Stilck Fran\c{c}a, Tomoya Hayata, and Maria Tudorovskaya for insightful discussions. We thank Marcello Benedetti and Henrik Dreyer for their feedback on this manuscript.

\end{acknowledgments}

\onecolumngrid
\clearpage
\appendix

\section{Protocol details and relation to the CKG algorithm}
\label{app:chen_protocol}

After briefly reviewing the quantum Gibbs sampling algorithm proposed in~\cite{Chen2023thermal,Chen2023efficient}, we clarify its relation to our protocol.

\subsection{CKG Quantum Gibbs sampling algorithm}

The CKG algorithm~\cite{Chen2023efficient} uses the Lindbladian
\begin{align}
\label{eq:lindbladian_CKG}
\begin{split}
    \calL_\mathrm{CKG}[\rho] 
    &= 
    - \im[G,\rho]
    + \sum_{a\in\bm{A}}\int_{-\infty}^{\infty}\diff\omega\, \gamma(\omega) \Big(L^a(\omega)\rho L^{a}(\omega)^\dag -\frac{1}{2}\{L^{a}(\omega)^\dag L^a(\omega),\rho\}\Big)
    \\
    &= 
    - \im[G,\rho]
    + \sum_{a\in\bm{A}}\sum_{\nu_1,\nu_2} \alpha_{\nu_1,\nu_2}\Big(A^a_{\nu_1}\rho (A^{a}_{\nu_2})^\dag -\frac{1}{2}\{(A^a_{\nu_1})^\dag A^{a}_{\nu_2},\rho\}\Big),
    \\
    L^a(\omega) 
    &=
    \sum_{\nu\in \bohrH}\hat{g}_{\rm CKG}(\nu-\omega) A_\nu,
\end{split}
\end{align}
where $\alpha_{\nu_1,\nu_2}\coloneq\int\diff\omega \gamma(\omega)\hat{g}_{\rm CKG}(\nu_1-\omega)\hat{g}_{\rm CKG}(\nu_2-\omega)$ and $\bohrH$ is the set of Bohr frequencies of the Hamiltonian $H$ of the target Gibbs state $\sigma_\beta$.
With the choice of transition weight $\gamma(\omega)= \e^{-\frac{(\omega+\omega_\gamma)^2}{2\Delta_\gamma^2}}$ and frequency-domain filter function $\hat{g}_{\rm CKG}(\omega) = \frac{1}{(2\pi\Delta_\mE^2)^{1/4}}\e^{-\frac{\omega^2}{4\Delta_\mE^2}}$ for parametrization satisfying $\beta=\frac{2\omega_\gamma}{\Delta_\mE^2 + \Delta_\gamma^2}$, one can readily confirm the identity,
\begin{align}
\label{eq:DB_alpha}
    \alpha_{\nu_1,\nu_2}
    =
    \e^{-\beta(\nu_1+\nu_2)/2}\alpha_{-\nu_1,-\nu_2}.
\end{align}
This identity ensures that the transition term $\sum_{a\in\bm{A}}\sum_{\nu_1,\nu_2} \alpha_{\nu_1,\nu_2}A^a_{\nu_1}[\cdot] (A^{a}_{\nu_2})^\dag$ obeys the $\sigma_\beta$-DB condition~\eqref{eq:KMS}, while the decay term $-\frac{1}{2}\sum_{a\in\bm{A}}\sum_{\nu_1,\nu_2}\{L^{a}(\omega)^\dag L^a(\omega),\cdot\}$ does not.
To enforce the $\sigma_\beta$-DB condition for the full Lindbladian $\calL_\mathrm{CKG}[\cdot]$, the authors in~\cite{Chen2023efficient} designed the coherent term as
\begin{align}
    G 
    =
    G_{\rm CKG}
    \coloneq
    \frac{\im}{2}\sum_{a\in\bm{A}}\sum_{\nu_1,\nu_2}\alpha_{\nu_1,\nu_2} \tanh\Big(\frac{\beta(\nu_1-\nu_2)}{4}\Big) (A^a_{\nu_1})^\dag A^{a}_{\nu_2},
\end{align}
which, combined with the decay term, obeys the $\sigma_\beta$-DB condition, and thus, so does the Lindbladian $\calL_\mathrm{CKG}$,
\begin{align}
    \langle X, \calL_\mathrm{CKG}^\dag[Y]\rangle_{\sigma_\beta}
    =
    \langle \calL_\mathrm{CKG}^\dag[X], Y\rangle_{\sigma_\beta}
\end{align}
for any bounded operators $X, Y$.

\subsection{Relation to the present protocol}
\label{app:relation_chen}

In the present work, we choose a $\delta$-function for the transition weight
\begin{align}
\label{eq:delta_lindblad_filter}
    \gamma(\omega)
    =
    \delta(\omega + \omega_\gamma)
    \quad \text{with}\quad 
    \omega_\gamma 
    = 
    \beta\Delta_\mE^2/2.
\end{align}
With such a choice, the coefficients $\alpha_{\nu_1,\nu_2}$ factorize as
\begin{align}
\label{eq:eta_lindblad_filter}
    \alpha_{\nu_1,\nu_2} 
    = 
    \eta_{\nu_1}\eta_{\nu_2}
    \quad\text{with}\quad
    \eta_{\nu}
    \coloneq
    \hat{g}_{\rm CKG}(\nu+\omega_\gamma)
    =
    \frac{1}{(2\pi\Delta_\mE^2)^{1/4}} \e^{-\frac{(\nu+\beta\Delta_\mE^2/2)^2}{4\Delta_\mE^2}},
\end{align}
and satisfy the identity~\eqref{eq:DB_alpha}.
Furthermore, the Lindbladian~\eqref{eq:lindbladian_CKG} is reduced to
\begin{align}
\label{eq:lindbladian_CKG_factorize}
    \calL_\mathrm{CKG}[\rho] 
    &= 
    - \im[G,\rho]
    + \gamma\sum_{a\in\bm{A}}\Big(L^a\rho L^{a\dag} -\frac{1}{2}\{L^{a\dag} L^a,\rho\}\Big),
    \\
    L^a
    &= 
    \int_{-\infty}^{\infty}\diff t\,g(t)A^a(t)
    =
    \sum_{\nu\in \bohrH}\eta_\nu A^a_\nu,
    \\
    \label{eq:coherent_G_CKG}
    G
    &=
    G_{\rm CKG} 
    =
    \frac{\im}{2}\sum_{a\in\bm{A}}\sum_{\nu_1,\nu_2 \in \bohrH}\eta_{\nu_1}\eta_{\nu_2} \tanh\left(\frac{\beta(\nu_1-\nu_2)}{4}\right) (A^a_{\nu_1})^\dag A^{a}_{\nu_2},
\end{align}
with $A^a(t)\coloneq \e^{\im Ht}A^a\e^{-\im Ht}$, $\eta_\nu\coloneq\int_{-\infty}^{\infty}\diff t\,\e^{\im\nu t}g(t)$, $A^a_\nu\coloneq\sum_{E_i-E_j=\nu}\Pi_iA^a\Pi_j$ and $g(t)$ as defined in Eq.~\eqref{eq:g_gaussian}.
Equation~\eqref{eq:lindbladian_CKG_factorize} is identical to our Lindbladian~\eqref{eq:lindbladian_general} except for the coherent term involving $G_{\rm CKG}$ and the fact that transition rates $\gamma_a$ in Eq.~\eqref{eq:lindbladian_general} formally depend on $a$ because subsequently we write $\gamma_a=\gamma p_a$ and sample individual Lindblad operators according to the discrete distribution $p_a$.

Next, we show that the coherent part~\eqref{eq:coherent_G_CKG} vanishes under the ETH average, i.e. $\bE_R[G_{\rm CKG}]=0$.
To this end, we assume that the jump operators $A^a$ satisfy the ETH~\eqref{eq:ETH},
\begin{align}
\label{eq:app_ETH}
    \bra{E_i}A^a\ket{E_j} 
    =
    \calA^a(E_{i})\delta_{ij} + \frac{f^a(E_{ij},\nu_{ij})}{\sqrt{D(E_{ij})}}R^a_{ij},
\end{align} 
where the random variables $R^a_{ij}$ are independent and satisfy $\bE_R[R^a_{ij}]=0$ together with $\bE_R[|R^a_{ij}|^2]=1$, and $D(E)$ is the density of states. Hence, the ETH average of a product of jump operators is given by
\begin{equation}\label{eq:deriv_vanishing_G_ETH}
    \begin{split} 
        \mathbb{E}_R[(A^a_{\nu_2})^\dagger A_{\nu_1}^a]
        &= 
        \sum_{\substack{E_k - E_i = \nu_2 \\ E_k - E_j = \nu_1}} \delta_{ki}\delta_{kj} \calA^a(E_{i}) \calA^a(E_{k}) \ket{E_i}\bra{E_j}
        + 
        \frac{f^a_{ki} f^a_{kj}}{\sqrt{D(E_{ki})D(E_{kj})}} \mathbb{E}_R [R^{a\ast}_{ki} R^a_{kj}] \ket{E_i}\bra{E_j}
        \\
        &= 
        \sum_{\substack{E_k - E_i = \nu_2 \\ E_k - E_j = \nu_1}} \ket{E_i}\bra{E_i}\Big(
        \delta_{\nu_1,0}\delta_{\nu_2,0} (\calA^a(E_i))^2
        + 
        \delta_{\nu_1, \nu_2} \frac{|f^a_{ki}|^2}{D(E_{ki})}
        \Big).
    \end{split}
\end{equation}
In the first line, we use the fact that the cross terms involving $\calA^a$ and $R^a$ are first-order in the random matrix elements $R^a_{ij}$ and, hence, vanish in the ETH average.
For the second line note that the $\delta_{ki}\delta_{kj}$ in the first term enforces $E_k = E_i = E_j$, such that $\nu_1 = \nu_2 = 0$.
Moreover, $\mathbb{E}_\mathrm{R} [R^{a\ast}_{ki} R^a_{kj}]=\delta_{ij}$, which enforces $E_i = E_j$ and, thus, $\nu_1 = \nu_2$.
Importantly, both of these contributions are zero whenever $\nu_1 \neq \nu_2$.
From this, and the fact that $\tanh(0)=0$, it directly follows that the coherent term~\eqref{eq:coherent_G_CKG} vanishes under the ETH:
\begin{equation}
    \begin{split}
        \mathbb{E}_R[G]
        \propto \frac{\im}{2}\sum_{\nu\in \bohrH}\tanh\left(\frac{\beta\nu}{4}\right)
        \sum_{\nu_1 - \nu_2 = \nu} \eta_{\nu_1} \eta_{\nu_2} \delta_{\nu_1, \nu_2} 
        = 
        0.
    \end{split}
\end{equation}

The above argument confirms that the average dissipative Lindbladian $\bE_R \calD$, with
\begin{align}
\label{eq:lindbladian_dissipative}
    \calD[\rho] 
    &= 
    \sum_{a\in\bm{A}}\gamma_a \Big(L^a\rho L^{a\dag} -\frac{1}{2}\{L^{a\dag}L^a,\rho\}\Big).
\end{align}
[cf. Eq.~\eqref{eq:lindbladian_general}], is $\sigma_\beta$-DB without introducing the coherent term $-\im[G_{\rm CKG}, \cdot]$, as we saw this in Sec.~\ref{sec:convergence_under_ETH}.
Since the Gibbs state $\sigma_\beta$ commutes with $H$, a coherent term $-\im[G, \cdot]$ generated by the system Hamiltonian $G=H$ does not change the steady state, and $\bE_R\calL =  - \im [H, \cdot] + \bE_R\calD$ remains $\sigma_\beta$-DB under the ETH average under a slightly generalized definition of detailed balance.
To see this, note that, in general, $(\calL^\dagger)^{\dagger_\mathrm{KMS}} = \sigma_\beta^{-1/2} \calL [\sigma_\beta^{1/2} \, \cdot \, \sigma_\beta^{1/2}] \sigma_\beta^{-1/2}$ is the adjoint of $\calL^\dagger$ (the Hilbert-Schmidt adjoint of $\calL$) with respect to the KMS inner product $\expval{\cdot, \cdot}_{\sigma_\beta}$.
In our case both generators are not equal, as required by our KMS detailed balance condition~\eqref{eq:KMS}.
Their difference (under the ETH average) is given by $\bE_R\calL^\dagger - (\bE_R\calL^\dagger)^{\dagger_\mathrm{KMS}} = 2\im [H, \cdot]$, as can be readily confirmed. This is a special case of the more general version of quantum detailed balance introduced in \cite{Fagnola2007} (Def.~27).
In fact, it is easy to see [similar as in Eq.~\eqref{eq:self_adjoint_and_steady_state}], that under this condition
\begin{align}
\begin{split}
    \langle \bE_R\calL[\sigma_\beta], X\rangle_{\rm HS}
    &= \expv{I, \bE_R\calL^\dag[X]}_{\sigma_\beta}
    = \expv{(\bE_R\calL^\dag)^{\dag_\mathrm{KMS}} [I], X}_{\sigma_\beta}
    = \expv{\bE_R\calL^\dag[I] - 2\im [H, I], X}_{\sigma_\beta}
    = 0 
\end{split}
\end{align}
for all $X$, which confirms that $\sigma_\beta$ is the steady state of $\bE_R\calL$.

Thus, by replacing $G_\mathrm{CKG}$ in Eq.~\eqref{eq:lindbladian_CKG_factorize} with the system Hamiltonian $H$, we end up with our Lindbladian~\eqref{eq:lindbladian_general}. 
For our protocol, we set $\Delta_\mE=\sqrt{2}/\beta$ (see App.~\ref{app:ETH_scales} where we elaborate on this choice of parameter), which results in the filter function [cf. Eq.~\eqref{eq:g_gaussian}]
\begin{align}
\label{eq:Gibbs_filter_appA}
    g(t)
    =
    \frac{1}{\pi^{3/4}\beta^{1/2}}
    \e^{\im t/\beta}\e^{-2t^2/\beta^2}.
\end{align}
With this, our Lindblad operators take the form [cf. Eq.~\eqref{eq:L_operators}]
\begin{align}
\label{eq:lindblad_OFT_gaussian_appA}
    L^a 
    = 
    \int_{-\infty}^{\infty}\diff t\,g(t)A^a(t)
    = 
    \left(\frac{\beta^2}{4\pi}\right)^{1/4}
    \sum_{\nu\in \bohrH}\e^{-\frac{(\beta\nu+1)^2}{8}}
    A^a_\nu = \sum \eta_\nu A^a_\nu,
\end{align}
with the frequency-domain filter function 
\begin{align}
\label{eq:eta_nu}
    \eta_\nu
    =
    \left(\frac{\beta^2}{4\pi}\right)^{1/4}
    \e^{-\frac{(\beta\nu+1)^2}{8}}.
\end{align}

\section{Details of single-ancilla Gibbs state preparation protocol}
\label{app:single_gibbs}

We provide details on the implementation of the steps of Lindblad evolution that are core to our Gibbs-state preparation protocol by closely following~\cite{Ding2023}, where the authors employed a single-ancilla protocol to prepare ground states.
A thorough error analysis of each approximation necessary for a quantum circuit implementation is provided (Sec.~\ref{app:discretization} to Sec.~\ref{app:integral_error}).
A summary of these contributions is reported in Sec.~\ref{app:alg_errors} together with the overall circuit complexity,
which we state here as a main result:
\begin{theorem}\label{thm:qa_complexity}
    Let $\calL$ be the Lindbladian given by Eq.~\eqref{eq:lindbladian_app} with the Lindblad operators~\eqref{eq:lindblad_OFT_gaussian}. For simulation time $t$ and inverse temperature $\beta$, there exists a quantum algorithm that simulates a Lindblad dynamics $\e^{\calL t}[\cdot]$ with an error $\epsilon$ in trace distance, whose cost in terms of the Hamiltonian simulation time is given by $\Theta\big(\frac{\beta t^2}{\epsilon}\sqrt{\log\frac{\beta t}{\epsilon}}\big)$.
\end{theorem}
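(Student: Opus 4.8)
The plan is to treat the statement as a constrained optimization: given the decomposition of the total simulation error into its four algorithmic sources, I would choose the free discretization parameters---the evolution step $\delta t$, the OFT cutoff $T$, and the OFT step $\Delta t$---so as to force the trace-distance error below $\epsilon$ while minimizing the total Hamiltonian simulation time. First I would invoke Prop.~\ref{prop:algorithmic_error}, whose bound for a generic simulation time $t$ (replacing the $\tmix$ prefactor of Eq.~\eqref{eq:error_sources} by $t$) reads
\begin{equation*}
\|\,\rho^{\rm circ}(t) - \e^{\calL t}[\rho(0)]\,\|_1 \le t\,\calO\!\left(\delta t + \tfrac{\Delta t}{\sqrt{\beta}}\e^{-2(T/\beta)^2} + \sqrt{\beta}\,|\bohrH|\,\e^{-\frac{1}{8}(2\pi\beta/\Delta t - 2\beta\|H\|_\infty - 1)^2} + \tfrac{T\Delta t^2}{\delta t}\right).
\end{equation*}
The first term aggregates the $\calO(\delta t^2)$-per-step errors from Trotterizing the coherent/dissipative split \eqref{eq:lindblad_random}, the dilation identity \eqref{eq:evolve_dilation}, and the product formula \eqref{eq:circuit_V_box}, summed over $M = t/\delta t$ steps; the other three come from truncating and discretizing the OFT \eqref{eq:L_operator_discretized}.

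Next I would account for the Hamiltonian simulation cost. Each of the $M = t/\delta t$ steps applies the coherent evolution $\e^{-\im H\delta t}$ together with one dissipative block $V^a(\delta t)$ implementing $\e^{-\im\sqrt{\delta t\gamma}\bar{K}^a}$. The latter realizes the discretized OFT by sweeping the time grid $\{s\Delta t\}_{s=-S}^{S}$ with a symmetric product formula; since consecutive slices differ by a single Hamiltonian evolution of duration $\Delta t$, the whole sweep telescopes to a total Hamiltonian simulation time of $\calO(T)$, \emph{independent} of the grid size $S = T/\Delta t$. Hence the total cost is $\tfrac{t}{\delta t}\,\calO(T)$, and, crucially, $\Delta t$ enters only the error, not the cost.

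Finally I would optimize the parameters. The first error term forces $\delta t = \Theta(\epsilon/t)$. The Gaussian-tail truncation term forces $T = \Theta(\beta\sqrt{\log(\beta t/\epsilon)})$, the filter width being $\sim\beta$ through $\Delta_\mE = \sqrt{2}/\beta$. The discretization and product-formula terms are driven below $\calO(\epsilon/t)$ by taking $\Delta t$ small---on the order of $\min\{1/\|H\|_\infty,\ \epsilon/(t\sqrt{\beta})\}$ up to logarithmic factors---which is always achievable and, by the previous step, free in Hamiltonian simulation time. Substituting $\delta t$ and $T$ into $\tfrac{t}{\delta t}\calO(T)$ yields the upper bound $\calO\!\big(\tfrac{\beta t^2}{\epsilon}\sqrt{\log(\beta t/\epsilon)}\big)$; the matching lower bound that promotes the claim to $\Theta$ follows because both controlling choices are tight within this protocol---the first-order splitting error cannot beat $\Omega(\delta t)$ per unit time, forcing $\delta t = \calO(\epsilon/t)$, and the Gaussian tail cannot be truncated below $T = \Omega(\beta\sqrt{\log})$.

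I expect the cost-accounting step to be the main obstacle: rigorously establishing that the Hamiltonian simulation time of the dissipative block is $\calO(T)$ and independent of $S$ hinges on laying out the second-order product formula so that the time-slice evolutions telescope rather than being re-applied from scratch, and on checking that refining $\Delta t$ (needed to suppress the third and fourth error terms) never re-enters the cost. A secondary difficulty is the error combination: because a single randomized trajectory reproduces the averaged channel \eqref{eq:random_average} only to $\calO(\delta t^2)$, I would lean on the trajectory-concentration result (Theorem~7 of \cite{Chen2024random}) to bound the single-trajectory deviation in trace distance before summing the per-step contributions.
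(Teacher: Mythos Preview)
Your proposal is correct and follows essentially the same route as the paper: invoke the error decomposition of Prop.~\ref{prop:algorithmic_error}, set $\delta t=\Theta(\epsilon/t)$ and $T=\Theta(\beta\sqrt{\log(\beta t/\epsilon)})$, and observe that the per-step Hamiltonian simulation time is $\Theta(S\Delta t)=\Theta(T)$ because the time-slice evolutions telescope (the paper makes this explicit in the circuit layout of $V^a(\delta t)$, Eqs.~\eqref{eq:app_one_step_dillat}--\eqref{eq:app_many_step_dillat}). The only organizational difference is that the paper packages the choice of $\Delta t$ and $S$ into Lemma~\ref{lemma:discretization} (fixing both at specific scales to achieve regularization error $\epsilon'=\calO(\epsilon/t)$), whereas you fix $T$ first and then argue $\Delta t$ can be refined freely since only $T=S\Delta t$ enters the Hamiltonian-time cost; both arrive at the same runtime.
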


The theorem is proven in the remainder of this App.~\ref{app:single_gibbs}. See Eq.~\eqref{eq:runtime} for the final result.
A decomposition of the algorithmic error into individual contributions is given in Prop.~\ref{prop:algorithmic_error} at the end of this appendix.
In the following, we aim at implementing the evolution $\e^{t\calL}$ for a Lindbladian $\calL$ over a simulation time $t=M \delta t$.
The Lindbladian~\eqref{eq:lindbladian_general} with $G=H$ can be recast as
\begin{equation}\label{eq:lindbladian_app}
    \calL[\rho] = \sum_{a\in\bm{A}} p_a \left( - \im[H,\rho]
    + \gamma \Big(L^a\rho L^{a\dag} -\frac{1}{2}\{L^{a\dag}L^a,\rho\}\Big)\right),
\end{equation}
in terms of a discrete probability distribution over the jump operator indices $\set{p_a}_{a \in \bm{A}}$ and where the strength $\gamma$ of the dissipative terms has been renormalized accordingly.
For the purpose of implementation, we approximate the evolution under such Lindbladian through
\begin{align}
\label{eq:lindblad_random_app}
    \e^{M \delta t\calL}
    &\approx
    \left(\prod_{i=1}^{M}
    \e^{\delta t \gamma\calD^{a_i}}
    \circ \calU_{\delta t}\right)
 \text{where} \quad
    \calD^{a}[\rho]
    \coloneq
    L^a\rho L^{a\dag} - \frac{1}{2}\{L^{a\dag} L^a, \rho\},\quad 
    \text{and} \quad 
    \calU_{\delta t}[\rho] 
    \coloneq 
    \e^{-\im \delta t H}\rho\e^{\im \delta t H},
\end{align}
At each evolution step we randomly sample a Lindblad operator yielding a set $\set{L^{a_i}}_{i=1, \hdots, M}$ where each of the indices $a_i$ has been drawn with probability $p_{a_i}$.
The approximation in Eq.~\eqref{eq:lindblad_random_app} holds upon taking this average.
Using the Taylor expansion, one can see that, on average, such sampling incurs errors that scale with $\delta t^2$ as we have
\begin{align}
\label{eq:app_random_average}
\begin{split}
    \sum_a p_a\e^{\delta t\gamma\calD^{a}} \circ \calU_{\delta t}[\rho]
    &= \rho + \delta t\calL[\rho] + \calO(\delta t^2)
    = \e^{\delta t\calL}[\rho]  + \calO(\delta t^2).
\end{split}
\end{align}
While the second approximation holds upon taking the average over trajectories with randomly sampled jump operators, it is expected to hold even with a single trajectory if the number of steps $M$, at fixed evolution time $t$, is sufficiently large. The authors in~\cite{Chen2024random} have shown that the individual trajectory approximates the target dynamics with an error measured in weighted $l^2$-distance inversely proportional to $M$ (Theorem~7 in \cite{Chen2024random}).

\subsection{Regularization of the integral in the operator Fourier transforms}
\label{app:discretization}

Recall from Eq.~\eqref{eq:L_operators}, that the individual Lindblad operator $L^a$ can be expressed as an operator Fourier transform via the integral
\begin{align}
\label{eq:L_operators_app}
    L^a 
    = 
    \int_{-\infty}^{\infty}\diff t\,g(t)A^a(t),
    \quad \text{where}\quad 
    A^a(t)\coloneq \e^{\im Ht}A^a\e^{-\im Ht}.
\end{align}
To realize such an integral as a quantum circuit, we need to regularize it. This is achieved by restricting its domain to $[-T,T]$ and discretizing it using the trapezoidal rule. 
For a step size $\Delta t\coloneq T/S$, we obtain
\begin{align}
\label{eq:L_int_discrete}
\begin{split}
    \bar{L}^a
    &\coloneq
    \frac{\Delta t}{2}g_{-S}\e^{-\im HS\Delta t} A^a\e^{\im HS\Delta t}
    +\sum_{s=-S+1}^{S-1} \Delta t g_s\e^{\im Hs\Delta t} A^a\e^{-\im Hs\Delta t}
    +\frac{\Delta t}{2} g_S\e^{\im HS\Delta t} A^a\e^{-\im HS\Delta t}
    \\
    &=
    \sum_{s=-S}^{S} \Delta t_s g_s\e^{\im Hs\Delta t} A^a\e^{-\im Hs\Delta t},
\end{split}
\end{align} 
where we have defined $g_s\coloneq g(s\Delta t)$ and $\Delta t_s\coloneq \Delta t$ for $-S+1\le s\le S-1$ or $\Delta t_s\coloneq\Delta t/2$ for $s=\pm S$.

Accordingly, the Hermitian dilation operator $K^a$~\eqref{eq:dilation} corresponding to $L^a$ becomes
\begin{align}
\label{eq:discrete_K}
\begin{split}
    \bar{K}^a
    &
    \coloneq 
    \ket{1}\bra{0}_\ma\otimes \bar{L}^a + \ket{0}\bra{1}_\ma\otimes \bar{L}^{a\dag}
    =
    X_\ma\otimes \frac{\bar{L}^a+\bar{L}^{a\dag}}{2} - \im Y_\ma\otimes \frac{\bar{L}^a-\bar{L}^{a\dag}}{2}
    \\
    &=
    \sum_{s=-S}^{S}\Delta t_s
    (\mathrm{Re}[g_s]X_\ma +\mathrm{Im}[g_s]Y_\ma)  \otimes \e^{\im Hs\Delta t} A^a\e^{-\im Hs\Delta t}
    \\
    &\eqcolon
    \sum_{s=-S}^{S} \bar{K}^a_s,
\end{split}
\end{align}
where $A^a$ is assumed to be Hermitian.
Errors entailed by the discretization of the jump operators will be quantified in Sec.~\ref{app:integral_error}, but for now we proceed with the implementation of the Lindblad evolution.

\subsection{Second-order product formula for Lindblad evolution}

Having defined the discretized Lindblad $\bar{L}^a$ and dilation operators $\bar{K}^a$, we aim at implementing a step of dissipative evolution $\e^{\delta t \gamma\calD^{a}}$ appearing in Eq.~\eqref{eq:lindblad_random_app}. Recall from Eq.~\eqref{eq:evolve_dilation}
that this can be achieved, up to an error scaling as $(\delta t \gamma)^2$, through the unitary evolution $\e^{-\im\sqrt{\delta t \gamma}K^a}$ acting on the system together with a single additional qubit:
\begin{align}
\label{eq:app_evolve_dilation}
    &\Tr_\mathrm{anc}[\e^{-\im \sqrt{\delta t \gamma} K^a}(\ket{0}\bra{0}_\ma\otimes \rho)\e^{\im \sqrt{\delta t \gamma} K^a}]
    =
    \e^{\delta t\gamma\calD^a}[\rho] + \calO((\delta t \gamma)^2).
\end{align}
This evolution is implemented for the discretized $\bar{K}_a$, as a second-order product formula yielding
\begin{align}
\label{eq:second_order_app}
    \e^{-\im\sqrt{\delta t \gamma}\bar{K}^a} 
    = 
    \e^{-\im\sqrt{\delta t \gamma}\sum_{s=-S}^{S} \bar{K}^a_s} 
    \approx 
    \prod_{s}^{\rightarrow}\e^{-\im\frac{\sqrt{\delta t \gamma}}{2}\bar{K}^a_s}
    \prod_{s}^{\leftarrow}\e^{-\im\frac{\sqrt{\delta t \gamma}}{2}\bar{K}^a_s},
\end{align}
where we  defined the ordered products $\prod_s^{\leftarrow}O(s)=O(S)\cdots O(-S+1)O(-S)$ and $\prod_s^{\rightarrow}O(s)=O(-S)\cdots O(S-1)O(S)$.
The resulting Trotter error is given by
\begin{align}
    \prod_{s}^{\rightarrow}\e^{-\im\frac{\sqrt{\delta t \gamma}}{2}\bar{K}^a_s}
    \prod_{s}^{\leftarrow}\e^{-\im\frac{\sqrt{\delta t \gamma}}{2}\bar{K}^a_s}
    -
    \e^{-\im\sqrt{\delta t \gamma}\bar{K}^a}
    =
    (\delta t \gamma)^{3/2} \sum_{s_1,s_2,s_3}c_{s_1s_2s_3}\bar{K}^a_{s_1}\bar{K}^a_{s_2}\bar{K}^a_{s_3}
    +
    \calO((\delta t \gamma)^2),
\end{align}
with some coefficients $c_{s_1s_2s_3}$.
Furthermore, noting that $(\bra{0}_\ma\otimes I)\bar{K}^a_{s_1}\bar{K}^a_{s_2}\bar{K}^a_{s_3}(\ket{0}_\ma\otimes I)=0$, we find
\begin{align}
\label{eq:app_dilated_evolution}
\begin{split}
    &\Tr_\ma\left[
        \prod_{s}^{\rightarrow}\e^{-\im\frac{\sqrt{\delta t \gamma}}{2}\bar{K}^a_s}
        \prod_{s}^{\leftarrow}\e^{-\im\frac{\sqrt{\delta t \gamma}}{2}\bar{K}^a_s} 
        (\ket{0}\bra{0}_\ma\otimes \rho)
        \prod_{s}^{\rightarrow}\e^{\im\frac{\sqrt{\delta t \gamma}}{2}\bar{K}^a_s}
        \prod_{s}^{\leftarrow}\e^{\im\frac{\sqrt{\delta t \gamma}}{2}\bar{K}^a_s}  
    \right]
    \\
    &=
    \Tr_\ma[\e^{-\im \sqrt{\delta t \gamma} \bar{K}^a}(\ket{0}\bra{0}_\ma\otimes \rho)\e^{\im \sqrt{\delta t \gamma} \bar{K}^a}]
    + \calO((\delta t \gamma)^2).
\end{split}
\end{align}

We simplify the evolution operator and estimate the evolution time required for each step of Lindblad evolution. Recalling that the evolution under a single $\bar{K}^a_s$ (out of a total of $S$) takes the form,
\begin{align}\label{eq:app_one_step_dillat}
\begin{split}
    \e^{-\im\frac{\sqrt{\delta t \gamma}}{2}\bar{K}^a_s}
    &=
    \exp\left[
        -\im\frac{\sqrt{\delta t \gamma}}{2}
        \Delta t_s
    (\mathrm{Re}[g_s]X_\ma +\mathrm{Im}[g_s]Y_\ma) \otimes \e^{\im Hs\Delta t} A^a\e^{-\im Hs\Delta t}
    \right]
    \\
    &=
    (I_\ma\otimes \e^{\im Hs\Delta t})
    \underbrace{\e^{-\im\frac{\sqrt{\delta t \gamma}}{2} \Delta t_s(\mathrm{Re}[g_s]X_\ma +\mathrm{Im}[g_s]Y_\ma) \otimes A^a}}_{\eqcolon B^a_s}
    (I_\ma\otimes \e^{-\im Hs\Delta t}),
\end{split}
\end{align}
we can rewrite Eq.~\eqref{eq:second_order_app} as
\begin{align}\label{eq:app_many_step_dillat}
\begin{split}
    \prod_{s}^{\rightarrow}\e^{-\im\frac{\sqrt{\delta t \gamma}}{2}\bar{K}^a_s}
    \prod_{s}^{\leftarrow}\e^{-\im\frac{\sqrt{\delta t \gamma}}{2}\bar{K}^a_s}
    &=
    (I_\ma\otimes \e^{-\im HS\Delta t}) 
    \underbrace{\left(\prod_{s}^{\rightarrow} B^a_s (I_\ma\otimes \e^{\im H\Delta t})\right)
    \left(\prod_{s}^{\leftarrow} (I_\ma\otimes \e^{-\im H\Delta t}) B^a_s\right)}_{\eqcolon\,V^a(\delta t)}
    (I_\ma\otimes \e^{\im HS\Delta t}).
\end{split}
\end{align}
Therefore, a single step of Lindblad evolution~\eqref{eq:lindblad_random_app}, which consists in a unitary part $\calU_{\delta t}[\cdot]=\e^{-\im H\delta t}[\cdot]\e^{\im H\delta t}$ and a dissipative part~\eqref{eq:app_dilated_evolution}, is given by the superoperator 
\begin{align}
\label{eq:W_evolve}
    &\calW^a(\delta t)[\rho]
    \nonumber\\
    &\coloneq
    \Tr_\ma[(I_\ma\otimes \e^{-\im HS\Delta t}) 
    V^a(\delta t)
    (I_\ma\otimes \e^{\im HS\Delta t})
    (\ket{0}\bra{0}\otimes\calU_{\delta t}[\rho])
    (I_\ma\otimes \e^{-\im HS\Delta t}) 
    V^a(\delta t)^{\dag}
    (I_\ma\otimes \e^{\im HS\Delta t})]
    \nonumber\\
    &\,=
    \e^{-\im HS\Delta t}\Tr_\ma[ 
        V^a(\delta t)
        (\ket{0}\bra{0}\otimes \e^{\im HS\Delta t}\e^{-\im H \delta t}\rho \e^{\im H 
        \delta t} \e^{-\im HS\Delta t})
        V^a(\delta t)^\dag
    ]\e^{\im HS\Delta t}.
\end{align}
Since we sequentially apply $M$ steps of evolution $\calW^a(\delta t)$, the operators $\e^{-\im HS\Delta t}$ and $\e^{\im HS\Delta t}$ cancel out, except at the first and last steps.
Still, we can show that these two steps can also be discarded. For that, suppose we evolve an initial state $\rho(0)$ for evolution time $M\delta t$ that is larger than the mixing time $t_{\rm mix}$.
By definition of the mixing time, any initial state $\rho$ evolved for a time larger than $t_{\rm mix}$ approximates the Gibbs state $\sigma_\beta=\e^{-\beta H}/\Tr[\e^{-\beta H}]$. That is,
\begin{align}
    \left(\prod_{i=1}^{M}\e^{\delta t\calD^{a_i}}\circ \calU_{\delta t}\right)[\rho] 
    \overset{\text{Trotterization}}{\approx}
    \left(\prod_{i=1}^{M}\calW^{a_i}(\delta t)\right)[\rho]
    \approx
    \sigma_\beta.
\end{align}
Taking $\rho= \calU_{S \Delta t} [\rho(0)]$, and using the fact that $\sigma_\beta= \calU_{-S \Delta t} [\sigma_\beta]$ as $H$ and $\sigma_\beta$ commute, this implies
\begin{align}
\label{eq:w_to_wtilde}
    \calU_{-S \Delta t}\circ \left(\prod_{i=1}^{M}\calW^{a_i}(\delta t)\right)\circ \calU_{S \Delta t}[\rho(0)] = \left(\prod_{i=1}^{M}\widetilde{\calW}^{a_i}(\delta t)\right)[\rho(0)]
    \approx
    \sigma_\beta,
\end{align}
where we have defined
\begin{align}
\label{eq:step_W}
    \widetilde{\calW}^a(\delta t)[\rho]
    \coloneq \calU_{-S \Delta t}\circ \calW^{a}(\delta t) \circ \calU_{S \Delta t}[\rho] =
    \Tr_\ma[ 
        V^a(\delta t)
        (\ket{0}\bra{0}\otimes\e^{-\im H \delta t}\rho \e^{\im H \delta t})
        V^a(\delta t)^\dag
    ].
\end{align}
Therefore, the evolution by $\widetilde{\calW}^a[\rho]$, Eq.~\eqref{eq:step_W}, induces the same Lindblad evolution as the one by $\calW^a[\rho]$, Eq.~\eqref{eq:W_evolve}, but does not require any of the unitary $\e^{\pm \im HS\Delta t}$ evolution.
The resulting quantum circuit is sketched as follows:
\begin{align}\label{eq:circuit_app}
\begin{quantikz}[row sep =0.2cm]
    & \wireoverride{n}
    & \lstick{$\ket{0}_\ma$} \wireoverride{n} \gategroup[5, steps=3, style={dashed, inner xsep=5.5mm}]{Repeat for $i=1,\dots,M$}
    & \gate[5]{V^{a_i}(\delta t)}
    & \meter{}
    \\
    \lstick[4]{$\rho(0)$}
    &
    & \gate[4]{\e^{-\im H \delta t}}
    &
    &
    &
    & \rstick[4]{$\approx\rho(M\delta t)$}
    \\
    &
    &
    &
    &
    &
    &
    \\
    & \vdots \wireoverride{n}
    & \wireoverride{n}
    & \wireoverride{n}
    & \vdots \wireoverride{n}
    & \wireoverride{n}
    & \wireoverride{n}
    \\
    &
    &
    &
    &
    &
    &
\end{quantikz}
\end{align}
where $V^{a_i}(\delta t)$ is given by
\begin{align}
\label{eq:circ_U_D}
V^{a_i}(\delta t) =
\begin{quantikz}[row sep=0.2cm, column sep=0.2cm, transparent]
    \lstick{\text{anc}}
    & \gate[3, label style={yshift=1em}]{B_{-S}^{a_i}}
    &
    & \gate[3, label style={yshift=1em}]{B_{-S+1}^{a_i}}
    & \quad\dots\quad
    & \gate[3, label style={yshift=1em}]{B_S^{a_i}}
    & \gate[3, label style={yshift=1em}]{B_S^{a_i}}
    &
    & \quad\dots\quad
    & \gate[3, label style={yshift=1em}]{B_{-S}^{a_i}}
    &
    \\
    & \linethrough
    & \gate[4]{\e^{-\im H\Delta t}}
    & \linethrough
    & \quad\dots\quad
    & \linethrough
    & \linethrough
    & \gate[4]{\e^{\im H\Delta t}}
    & \quad\dots\quad
    & \linethrough
    &
    \\
    &
    &
    &
    & \quad\dots\quad
    &
    &
    &
    & \quad\dots\quad
    &
    &
    \\
    & \vdots \wireoverride{n}
    & \wireoverride{n}
    & \wireoverride{n}
    & \wireoverride{n}
    & \wireoverride{n}
    & \wireoverride{n}
    & \wireoverride{n}
    & \wireoverride{n}
    & \vdots \wireoverride{n}
    & \wireoverride{n}
    \\
    &
    &
    &
    & \quad\dots\quad
    &
    &
    &
    & \quad\dots\quad
    &
    &
\end{quantikz}
\end{align}

In the circuit depiction of Eq.~\eqref{eq:circ_U_D}, the jump operator $A^{a_i}$, used in the definition of any of the $B_s^{a_i}$ as per Eq.~\eqref{eq:app_one_step_dillat}, has been taken to be a one-local operator acting on the second system qubit from the top.
More generally, these jump operators could act on more than one qubit, but in the present work, we restrict them to be local, i.e. acting on $\mathcal{O}(1)$ qubits to be aligned with the ETH.
Hence, the cost of implementing any of the $V^{a_i}(\delta t)$ is dominated by the $2S$ steps of Hamiltonian simulation $\calU_{\Delta t} = \e^{\im H\Delta t}$.The Hamiltonian simulation of $\calU_{\Delta t}$ and $\calU_{\delta t}$ in Eq.~\eqref{eq:w_to_wtilde}, e.g. again via Trotterization, induces an error which can be controlled by choosing an appropriately fine evolution step. We discuss this in more detail below Eq.~\eqref{app:conditions_overall}, where we summarize all algorithmic errors.
In the following, we attempt to identify the choice of $S$.

\subsection{Truncation and discretization of integral}
\label{app:integral_error}

So far, the circuit implementation presented was general to the evolution under the Lindbladian of Eq.~\eqref{eq:lindbladian_app} together with the Lindblad operators obtained through the generic operator Fourier transform of Eq.~\eqref{eq:L_operators_app}. We now specialize to our protocol and specify the Lindblad operator~\eqref{eq:L_operators_app}.
Let us recall our definition of the filter function [Eq.~\eqref{eq:Gibbs_filter_appA}],
\begin{align}
\label{eq:Gibbs_filter}
    g(t)
    =
    \frac{1}{\pi^{3/4}\beta^{1/2}}
    \e^{\im t/\beta}\e^{-2t^2/\beta^2},
\end{align}
resulting in Lindblad operators of the form
\begin{align}
\label{eq:lindblad_OFT_gaussian}
    L^a 
    = 
    \int_{-\infty}^{\infty}\diff t\,g(t)A^a(t)
    = 
    \left(\frac{\beta^2}{4\pi}\right)^{1/4}
    \sum_{\nu\in \bohrH}\e^{-\frac{(\beta\nu+1)^2}{8}}
    A^a_\nu.
\end{align}
We prove the following lemma, which plays a key role in determining the complexity of the protocol.
\begin{lemma}
\label{lemma:discretization}
    Let $g(t)$ be the filter function defined in Eq.~\eqref{eq:Gibbs_filter}.
    Then, provided that
    \begin{align}
    \label{eq:S_and_delta_t_app}
        S
        = 
        \Omega\left(\beta\|H\|_\infty\log\frac{\beta}{\epsilon'}\right),
        \qquad
        \Delta t 
        = 
        \Theta\left(\frac{\beta}{\beta\|H\|_\infty+\sqrt{n+\log\frac{\beta}{\epsilon'}}}\right)
    \end{align}
    in \eqref{eq:L_int_discrete}, we have the regularization error
    \begin{align}
        \|L - \bar{L}\|_\infty = \calO(\epsilon').
    \end{align}
\end{lemma}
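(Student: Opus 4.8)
The plan is to control $\|L^a - \bar L^a\|_\infty$ by a triangle inequality that separates the error from restricting the integration domain to $[-T,T]$ (truncation) from the error of replacing the integral by a trapezoidal sum (discretization), treating each at the level where it is cleanest. I would introduce the full-line trapezoidal operator $\tilde L^a \coloneq \Delta t\sum_{s=-\infty}^{\infty} g(s\Delta t)\,A^a(s\Delta t)$ as an intermediary, so that
\[
    \|L^a - \bar L^a\|_\infty \le \|L^a - \tilde L^a\|_\infty + \|\tilde L^a - \bar L^a\|_\infty ,
\]
handling the first term (discretization) in frequency space and the second (truncation) directly at the operator level.

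For the truncation term I would use that conjugation by $\e^{\im Ht}$ is unitary, so $\|A^a(t)\|_\infty = \|A^a\|_\infty \le 1$ for the (Pauli) jump operators. Hence $\|\tilde L^a - \bar L^a\|_\infty$ is bounded by the tail of the discrete sum $\Delta t\sum_{|s|>S}|g(s\Delta t)|$ together with the trapezoidal endpoint correction, both controlled by the Gaussian tail $\int_{|t|>T}|g(t)|\,\diff t = \calO(\e^{-2T^2/\beta^2})$ using the explicit form of $g$ in Eq.~\eqref{eq:Gibbs_filter}. Requiring this to be $\calO(\epsilon')$ forces $T = \Omega(\beta\sqrt{\log(1/\epsilon')})$, which, together with the $\Delta t$ fixed below, yields the stated lower bound on $S = T/\Delta t$.

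For the discretization term I would pass to the Bohr-frequency decomposition $A^a(t) = \sum_{\nu\in\bohrH}\e^{\im\nu t}A^a_\nu$, giving $L^a = \sum_\nu \eta_\nu A^a_\nu$ and $\tilde L^a = \sum_\nu \tilde\eta_\nu A^a_\nu$ with $\tilde\eta_\nu = \Delta t\sum_s g(s\Delta t)\e^{\im\nu s\Delta t}$. The key operator-norm estimate is $\|A^a_\nu\|_\infty \le \|A^a\|_\infty \le 1$, which follows from writing $A^a_\nu$ as a time-average of $A^a(t)$; this reduces the operator bound to a scalar one,
\[
    \|L^a - \tilde L^a\|_\infty \le \sum_{\nu\in\bohrH}|\eta_\nu - \tilde\eta_\nu| .
\]
Poisson summation then gives $\tilde\eta_\nu - \eta_\nu = \sum_{k\neq 0}\hat g(2\pi k/\Delta t - \nu)$, and since $g$ is Gaussian its transform is $\hat g(\omega) \propto \sqrt\beta\,\e^{-(\beta\omega - 1)^2/8}$, cf.~Eq.~\eqref{eq:eta_nu}. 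The aliasing sum is dominated by $k = \pm 1$, and over the band $|\nu|\le 2\|H\|_\infty$ (since $\nu = E_i - E_j$ with $E_i \in [-\|H\|_\infty,\|H\|_\infty]$) the largest contribution is $\propto \sqrt\beta\,\e^{-\frac18(2\pi\beta/\Delta t - 2\beta\|H\|_\infty - 1)^2}$. Summing over the $|\bohrH|$ Bohr frequencies yields $\|L^a - \tilde L^a\|_\infty = \calO(\sqrt\beta\,|\bohrH|\,\e^{-\frac18(2\pi\beta/\Delta t - 2\beta\|H\|_\infty - 1)^2})$. Demanding this be $\calO(\epsilon')$ and using $\log|\bohrH| = \calO(n)$ (as $|\bohrH|\le 4^n$) pins $2\pi\beta/\Delta t \gtrsim 2\beta\|H\|_\infty + \sqrt{n + \log(\beta/\epsilon')}$, i.e.\ exactly the stated $\Delta t = \Theta(\beta/(\beta\|H\|_\infty + \sqrt{n+\log(\beta/\epsilon')}))$.

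The main obstacle I anticipate is the discretization estimate: making the Poisson-summation/aliasing argument rigorous for the operator-valued, oscillatory integrand, and in particular justifying the bound $\|A^a_\nu\|_\infty \le \|A^a\|_\infty$ and controlling the aliasing sum uniformly over all Bohr frequencies in the band $[-2\|H\|_\infty, 2\|H\|_\infty]$. The truncation bound is routine Gaussian-tail estimation, and reconciling the $S$ requirement from truncation with the $\Delta t$ requirement from discretization (so that $S = T/\Delta t$ reproduces $S = \Omega(\beta\|H\|_\infty\log(\beta/\epsilon'))$) is then bookkeeping of logarithmic factors.
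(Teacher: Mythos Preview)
Your proposal is correct and follows essentially the same route as the paper: introduce the infinite trapezoidal sum as an intermediary, bound the truncation term by the Gaussian tail of $|g|$, and bound the discretization term via Poisson summation combined with the Bohr-frequency decomposition and the estimate $\|A^a_\nu\|_\infty\le\|A^a\|_\infty$, picking up the $|\bohrH|$ factor from the triangle inequality over frequencies. The only cosmetic difference is that the paper applies Poisson summation directly to the operator-valued integrand and decomposes into Bohr frequencies afterwards, whereas you decompose first and reduce to scalar aliasing sums; the resulting bounds and the extracted conditions on $\Delta t$ and $S$ are identical.
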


\begin{proof}
    Introducing $\bar{L}_\infty\coloneq \sum_{s=-\infty}^{\infty} \Delta t_s g_s\e^{\im Hs\Delta t} A\e^{-\im Hs\Delta t}$, we separate the discretization and truncation errors,
    \begin{align}
    \label{eq:triangle}
        \|L - \bar{L}\|_\infty
        \le
        \|L - \bar{L}_\infty\|_\infty + \|\bar{L}_\infty - \bar{L}\|_\infty.
    \end{align}

    We start with bounding the truncation error $\|\bar{L}_\infty - \bar{L}\|_\infty$ as follows:
    \begin{align}
    \label{eq:triangle_2nd}
        \|\bar{L}_\infty - \bar{L}\|_\infty
        \le
        \frac{\|A\|_\infty}{\pi^{3/4}\beta^{1/2}}\Delta t\sum_{|s|>S}\e^{-2(s\Delta t/\beta)^2}
        <
        \frac{2\Delta t}{\pi^{3/4}\beta^{1/2}}\sum_{s=S}^{\infty}
        \e^{-2S(\Delta t/\beta)^2s}
        =
        \frac{2\Delta t}{\pi^{3/4}\beta^{1/2}}
        \frac{\e^{-2(S\Delta t/\beta)^2}}{1-\e^{-2S(\Delta t/\beta)^2}}
        =
        \epsilon',
    \end{align}
    for
    \begin{align}\label{eq:S_app}
        S
        =
        \Omega\left(
            \frac{\beta}{\Delta t}\sqrt{\log\Big(\frac{\Delta t}{\sqrt{\beta}\epsilon'}\Big)}
        \right),
    \end{align}
    where we used $\|A\|_\infty=\calO(1)$ assuming $A$ is $\calO(1)$-local.

    Next, we bound the discretization error,
    \begin{align}
    \label{eq:triangle_1st}
        \|L - \bar{L}_\infty\|_\infty
        \le
        \Big\| \int_{-\infty}^{\infty}\diff t\, g(t) A(t)
        - \Delta t\sum_{s=-\infty}^{\infty} g(s\Delta t)A(s\Delta t) \Big\|_\infty.
    \end{align}
    Using the Poisson summation formula,\footnote{
        For a function $h(\tau)$, the Poisson summation formula is given by,
        \begin{align}
        \label{eq:poisson_sum_footnote}
            \sum_{s=-\infty}^{\infty}h(s)
            =
            \sum_{k=-\infty}^{\infty}\int_{-\infty}^{\infty}\diff\tau\, \e^{-\im 2\pi k\tau}h(\tau).
        \end{align}
        Setting $h(s)=\Delta t\, g(s\Delta t)A(s\Delta t)$ in  Eq.~\eqref{eq:poisson_sum_footnote}, we arrive at Eq.~\eqref{eq:posson_sum}.
    }
    we have,
    \begin{align}
    \label{eq:posson_sum}
    \begin{split}
        \Delta t\sum_{s=-\infty}^{\infty}g(s\Delta t)A(s\Delta t)
        &=
        \int_{-\infty}^{\infty}\diff t\, g(t) A(t)
        +
        \sum_{k=\bZ\backslash\{0\}}\int_{-\infty}^{\infty}\diff t\, \e^{-\im 2\pi k t/\Delta t}g(t) A(t)
        \\
        &=
        \int_{-\infty}^{\infty}\diff t\, g(t) A(t)
        +
        \sum_{\nu\in \bohrH}\sum_{k=\bZ\backslash\{0\}}\int_{-\infty}^{\infty}\diff t\, \e^{-\im 2\pi k t/\Delta t}g(t) \e^{\im\nu t}A_\nu,
    \end{split}
    \end{align}
    leading to
    \begin{align}
    \label{eq:app_discretization_error}
    \begin{split}
        \|L - \bar{L}_\infty\|_\infty
        &\le
        \frac{1}{\pi^{3/4}\beta^{1/2}}
        \Big\|
            \sum_{\nu\in \bohrH} A_\nu\sum_{|k|>0}
            \int_{-\infty}^{\infty}\diff t\, \e^{-\im 2\pi k t/\Delta t} \e^{\im t/\beta}\e^{-2t^2/\beta^2}\e^{\im\nu t}
        \Big\|_\infty
        \\
        &=
        \frac{\beta^{1/2}}{2^{1/2}\pi^{1/4}}
        \Big\|
            \sum_{\nu\in \bohrH} A_\nu\sum_{|k|>0}
            \e^{-\frac{1}{8}\big(2\pi \frac{\beta}{\Delta t}k-\beta\nu-1\big)^2}
        \Big\|_\infty
        \\
        &\hspace{-1em}\overset{\nu\le 2\|H\|_\infty}{\le}
        \frac{\beta^{1/2}|\bohrH|\|A\|_\infty}{2^{1/2}\pi^{1/4}}
        \sum_{|k|>0}
        \e^{-\frac{1}{8}\big(2\pi \frac{\beta}{\Delta t}k-2\beta\|H\|_\infty-1\big)^2}
        \\
        &\le
        \frac{(2\beta)^{1/2}|\bohrH|\|A\|_\infty}{\pi^{1/4}}\sum_{k=1}^{\infty}
        \e^{-\frac{1}{8}\big(2\pi \frac{\beta}{\Delta t}-2\beta\|H\|_\infty-1\big)\big(2\pi \frac{\beta}{\Delta t}k-2\beta\|H\|_\infty-1\big)}
        \\
        &\le
        \frac{(2\beta)^{1/2}|\bohrH|\|A\|_\infty}{\pi^{1/4}}
        \frac{\e^{-\frac{1}{8}\big(2\pi \frac{\beta}{\Delta t}-2\beta\|H\|_\infty-1\big)^2}}
        {1 - \e^{-\frac{1}{8}\big(2\pi \frac{\beta}{\Delta t}-2\beta\|H\|_\infty-1\big)\big(2\pi \frac{\beta}{\Delta t}\big)}}.
    \end{split}
    \end{align}
    As the number of Bohr frequencies satisfies $|\bohrH|\le 4^n$, it suffices to choose
    \begin{align}\label{eq:delta_t_app}
        \Delta t 
        = 
        \frac{2\pi\beta}{\sqrt{8\log\big(
        \frac{(2\beta)^{1/2}|\bohrH|\|A\|_\infty}{\pi^{1/4}\epsilon'}
        \big)}+2\beta\|H\|_\infty+1}
        = 
        \Theta\left( \frac{\beta}{\beta\|H\|_\infty + \sqrt{n+\log
        \frac{\beta}{\epsilon'}
        }}
        \right),
    \end{align}
    to ensure $\|L - \bar{L}_\infty\|_\infty\le\epsilon'$.
    Combining Eqs.~\eqref{eq:S_app} and ~\eqref{eq:delta_t_app} we obtain Eq.~\eqref{eq:S_and_delta_t_app}.
\end{proof}

Denoting as $\rho_i\coloneq \rho(i\delta t)$ the state after $i$ steps of ideal evolution, with $\rho_0 = \rho(0)$, and by $\bar{\rho}_i\coloneq \ket{0}\bra{0}_\ma\otimes \rho_i$ and
using Lemma~\ref{lemma:discretization}, we bound the effect of the regularization error on the dissipative evolution~\eqref{eq:app_evolve_dilation},
\begin{align}\label{eq:app_trunc_and_discr}
\begin{split}
    &\big\| 
        \Tr_\ma[\e^{-\im \sqrt{\delta t \gamma} K^a}\bar{\rho}_i \e^{\im \sqrt{\delta t \gamma} K^a}]
        - 
        \Tr_\ma[\e^{-\im \sqrt{\delta t \gamma} \bar{K}^a} \bar{\rho}_i \e^{\im \sqrt{\delta t \gamma} \bar{K}^a}]
    \big\|_1
    \\
    &=
    \left\|\Tr_\ma\big[
        \e^{-\im \sqrt{\delta t \gamma} K^a} \big(
            \bar{\rho}_i
            - 
            \e^{\im \sqrt{\delta t \gamma} K^a}\e^{-\im \sqrt{\delta t \gamma}\bar{K}^a} \bar{\rho}_i \e^{\im \sqrt{\delta t \gamma}\bar{K}^a} \e^{-\im \sqrt{\delta t \gamma} K^a}
        \big) \e^{\im \sqrt{\delta t \gamma} K^a}
    \big]\right\|_1
    \\
    &\hspace{-.6em}\overset{\eqref{eq:int_com}}{=}
    \Big\| 
        \int_{0}^{\sqrt{\delta t \gamma}}\diff\tau\, \Tr_\ma\big[
        \e^{\im(\tau-\sqrt{\delta t \gamma}) K^a} [\bar{K}^a-K^a, \e^{-\im\tau\bar{K}^a}\bar{\rho}_i\e^{\im \tau\bar{K}^a}]  \e^{-\im(\tau-\sqrt{\delta t \gamma}) K^a} 
        \big]
    \Big\|_1
    \\
    &\leq
    \Big\| 
        \int_{0}^{\sqrt{\delta t \gamma}}\diff\tau\, \Tr_\ma\big[
        \e^{\im(\tau-\sqrt{\delta t \gamma}) K^a}
        [\bar{K}^a-K^a, \e^{-\im\tau\bar{K}^a}\bar{\rho}_i\e^{\im\tau\bar{K}^a} -\bar{\rho}_i]  
        \e^{-\im(\tau-\sqrt{\delta t \gamma}) K^a}
    \big]\Big\|_1
    \\
    &\quad+
    \Big\| 
        \int_{0}^{\sqrt{\delta t \gamma}}\diff\tau\, \Tr_\ma\big[
        \e^{\im(\tau-\sqrt{\delta t \gamma}) K^a}
        [\bar{K}^a-K^a, \bar{\rho}_i]  
        \e^{-\im(\tau-\sqrt{\delta t \gamma}) K^a}
    \big]\Big\|_1
    \\
    &\hspace{-.6em}\overset{\eqref{eq:b37}}{=}
    \calO(\|\bar{K}^a-K^a\|_\infty\delta t\gamma).
\end{split}
\end{align}
In the third line, we used the identity,
\begin{align}
\label{eq:int_com}
    \e^{\im tB}\e^{-\im tC}A\e^{\im tC}\e^{-\im tB} - A
    =
    \im\int_0^t\diff\tau\, \e^{\im\tau B}[B-C, \e^{-\im\tau C}A\e^{\im\tau C}] \e^{-\im\tau B},
\end{align}
for non-commuting operators $A$, $B$, and $C$.
In the last inequality, we used
\begin{align}\label{eq:b37}
\begin{split}
    \Big\| 
        \Tr_\ma\big[
        \e^{\im(\tau-\sqrt{\delta t \gamma}) K^a}
        [\bar{K}^a-K^a, \e^{-\im\tau\bar{K}^a}\bar{\rho}_i\e^{\im\tau\bar{K}^a} -\bar{\rho}_i]  
        \e^{-\im(\tau-\sqrt{\delta t \gamma}) K^a}
    \big]\Big\|_1
    &=
    \calO(\|\bar{K}^a-K^a\|_\infty\cdot\tau\|\bar{K}^a\|_\infty),
    \\
    \Big\| 
        \Tr_\ma\big[
        \e^{\im(\tau-\sqrt{\delta t \gamma}) K^a}
        [\bar{K}^a-K^a, \bar{\rho}_i]  
        \e^{-\im(\tau-\sqrt{\delta t \gamma}) K^a}
    \big]\Big\|_1
    &=
    \calO(\|\bar{K}^a-K^a\|_\infty\cdot(\tau-\sqrt{\delta t \gamma})\|K^a\|_\infty),
\end{split}
\end{align}
and $\|\bar{K}^a\|=\|K^a\|=\|A^a\|=\calO(1)$.
Finally, combining Eq.~\eqref{eq:discrete_K} and Lemma~\ref{lemma:discretization} together with Eq.~\eqref{eq:app_trunc_and_discr}, we find
\begin{align}
\label{eq:regularization_error}
\begin{split}
    &\big\| 
        \Tr_\ma[\e^{-\im \sqrt{\delta t \gamma} K^a}\bar{\rho}_i \e^{\im \sqrt{\delta t \gamma} K^a}]
        - 
        \Tr_\ma[\e^{-\im \sqrt{\delta t \gamma} \bar{K}^a} \bar{\rho}_i \e^{\im \sqrt{\delta t \gamma} \bar{K}^a}]
    \big\|_1
    \\
    &=
    \calO(\|\bar{K}^a-K^a\|_\infty\delta t\gamma)
    =
    \calO(\|\bar{L}^a-L^a\|_\infty\delta t\gamma)
    =
    \calO(\epsilon'\delta t\gamma).
\end{split}
\end{align}

\subsection{Algorithmic errors}
\label{app:alg_errors}

Combining all the previous results, we assess the resources needed for the Gibbs-state preparation protocol. 
We start by relating the overall error $\epsilon$ in the approximate Lindblad evolution algorithm to the total evolution time $t$, the number of performed steps $M$ (or equivalently the size $\delta t$ of these steps) and the error $\epsilon '$ resulting from the regularization of the integral appearing in App.~\ref{app:integral_error}. We then provide bounds on the runtime of the Gibbs state preparation algorithm in terms of total Hamiltonian simulation time, and also the required number of steps of Hamiltonian simulation $\e^{\im H\Delta t}$, as seen in Eq.~\eqref{eq:circ_U_D}.

\subsubsection{Lindblad simulation}
The approximate Lindblad evolution incurs an error in the evolved state that is now quantified.
The trace distance between the ideal $\rho(M \delta t)\coloneq\e^{M \delta t\calL}[\rho(0)]$ and the approximately evolved state $\bE_{\{a_i\}}\big[\prod_{i=1}^{M}\calW^{a_i}(\delta t)[\rho(0)]\big]$~\eqref{eq:W_evolve}, with the average $\bE_{\{a_i\}}[\cdot]$ taken over trajectories associated with the randomly sampled jump operators $\{A^{a_1},\dots,A^{a_M}\}$, is
\begin{align}\label{eq:dist_true_evolved}
    \Big\|
    \rho(M\delta t) - \bE_{\{a_i\}}\Big[\prod_{i=1}^{M}\calW^{a_i}(\delta t)[\rho_0]\Big]
    \Big\|_1.
\end{align}
Let us now start by recalling all the sources of errors contributing to Eq.~\eqref{eq:dist_true_evolved}
using $\rho_i\coloneq \rho(i\delta t)$, $\rho_0 = \rho(0)$, and $\bar{\rho}_i\coloneq \ket{0}\bra{0}_\ma\otimes \rho_i$ as before. 
We need to account for the following:
\begin{itemize}

\item Randomized applications of the jump operators.
\begin{align}
\label{eq:error_random_jump}
    \|\e^{\delta t\calL}[\rho_i] - \bE_a[\e^{\delta t\gamma\calD^a}\circ\calU_{\delta t}[\rho_i]]\|_1
    =
    \calO(\delta t^2),
\end{align}
according to Eq.~\eqref{eq:app_random_average}.

\item Dilation of the dissipative evolution.
\begin{align}
\label{eq:error_dilation}
    \big\| 
        \e^{\delta t\gamma\calD^a}
        - 
        \Tr_\ma[\e^{-\im \sqrt{\delta t \gamma}K^a}  \bar{\rho}_i \e^{\im \sqrt{\delta t \gamma} K^a}]
    \big\|_1
    = 
    \calO((\delta t \gamma)^2),
\end{align}
according to Eq.~\eqref{eq:app_evolve_dilation}, and noting that the unitary evolution $\calU_{\delta t}$ does not affect the errors.

\item Truncation and discretization of the operator Fourier transform. 
\begin{align}
\label{eq:error_regularization}
    \big\| 
        \Tr_\ma[\e^{-\im \sqrt{\delta t \gamma} K^a}\bar{\rho}_i \e^{\im \sqrt{\delta t \gamma} K^a}]
        - 
        \Tr_\ma[\e^{-\im \sqrt{\delta t \gamma} \bar{K}^a} \bar{\rho}_i \e^{\im \sqrt{\delta t \gamma} \bar{K}^a}]
    \big\|_1
    =
    \calO(\epsilon'\delta t\gamma)
\end{align}
according to Eq.~\eqref{eq:regularization_error}.

\item Trotterization.
\begin{align}
\label{eq:error_trotterization}
\begin{split}
    &\Big\|\Tr_\ma\Big[
        \Big(\prod_{s}^{\rightarrow}\e^{-\im\frac{\sqrt{\delta t \gamma}}{2}\bar{K}^a_s}
        \prod_{s}^{\leftarrow}\e^{-\im\frac{\sqrt{\delta t \gamma}}{2}\bar{K}^a_s}\Big)
        \bar{\rho}_i
        \Big(\prod_{s}^{\rightarrow}\e^{\im\frac{\sqrt{\delta t \gamma}}{2}\bar{K}^a_s}
        \prod_{s}^{\leftarrow}\e^{\im\frac{\sqrt{\delta t \gamma}}{2}\bar{K}^a_s} \Big)
    \Big]
    -
    \Tr_\ma[\e^{-\im \sqrt{\delta t \gamma} \bar{K}^a}\bar{\rho}_i\e^{\im \sqrt{\delta t \gamma} \bar{K}^a}]
    \Big\|_1
    \\
    &=
    \calO((\delta t \gamma)^2),
\end{split}
\end{align}
according to Eq.~\eqref{eq:app_dilated_evolution}.
\end{itemize}

We are now ready to combine all these errors. First, notice that at any step $i=1, \dots, M$ we have
\begin{align}
\begin{split}
    &\Big\|\rho_i - \bE_{\{a_j\}} \Big[\prod_{j=1}^{i}\calW^{a_j}(\delta t)[\rho_0]\Big]\Big\|_1
    \\
    &= \Big\|\rho_i 
    - \bE_{a_i} \Big[\calW^{a_i}(\delta t)[\rho_{i-1}]\Big] 
    + \bE_{a_i} \Big[
      \calW^{a_i}(\delta t)\Big[\rho_{i-1}
    - \bE_{\{a_j\}}\Big[\prod_{j=1}^{i-1}\calW^{a_j}(\delta t)[\rho_0]\Big]\Big]\Big\|_1
    \\
    &\le
     \Big\|\rho_i 
    - \bE_{a_i} \Big[\calW^{a_i}(\delta t)[\rho_{i-1}]\Big] \Big\|_1 
    + \Big\|\rho_{i-1} -\bE_{\{a_j\}}\Big[\prod_{j=1}^{i-1}\calW^{a_j}(\delta t)[\rho_0]\Big]\Big\|_1.
\end{split}
\end{align}
The last line is obtained through the triangle inequality and the fact that $\E_{a_i} [\calW^{a_i}(\delta t)]$ is a quantum channel together with the contractivity of the trace distance under quantum channels. After recursive use of this inequality, we can bound the total error through
\begin{align}
\label{eq:app_circuit_final_error}
    &\Big\|\rho(M\delta t) - \bE_{\{a_i\}}\Big[\prod_{i=1}^{M}\calW^{a_i}(\delta t)[\rho_0]\Big]\Big\|_1
    \nonumber\\
    &\le
    \sum_{i=1}^{M}\big\|\rho_i - \bE_{a_i}\big[\calW^{a_i}(\delta t)[\rho_{i-1}]\big]\big\|_1
    \nonumber\\
    &\le
    \sum_{i=1}^{M}\Big(
    \big\|\rho_i - \bE_{a_i}\big[\e^{\delta t\gamma\calD^a}\circ\calU_{\delta t}[\rho_{i-1}]\big]\big\|_1
    \nonumber\\
    &\quad+
    \bE_{a_i}\big\| 
        \e^{\delta t\gamma\calD^{a_i}}\circ \calU_{\delta t}[\rho_{i-1}] 
        - 
        \Tr_\ma[\e^{-\im \sqrt{\delta t \gamma} K^a}\bar{\rho}_{i-1} \e^{\im \sqrt{\delta t \gamma} K^a}]
    \big\|_1
    \nonumber\\
    &\quad+
    \bE_{a_i}\big\| 
        \Tr_\ma[\e^{-\im \sqrt{\delta t \gamma} K^a}\bar{\rho}_i \e^{\im \sqrt{\delta t \gamma} K^a}]
        - 
        \Tr_\ma[\e^{-\im \sqrt{\delta t \gamma} \bar{K}^a} \bar{\rho}_i \e^{\im \sqrt{\delta t \gamma} \bar{K}^a}]
    \big\|_1
    \nonumber\\
    &\quad+
    \bE_{a_i}\Big\|
        \Tr_\ma[\e^{-\im \sqrt{\delta t \gamma} \bar{K}^a}\bar{\rho}_i\e^{\im \sqrt{\delta t \gamma} \bar{K}^a}]
        -
        \Tr_\ma\Big[
        \Big(\prod_{s}^{\rightarrow}\e^{-\im\frac{\sqrt{\delta t \gamma}}{2}\bar{K}^a_s}
        \prod_{s}^{\leftarrow}\e^{-\im\frac{\sqrt{\delta t \gamma}}{2}\bar{K}^a_s}\Big)
        \bar{\rho}_i
        \Big(\prod_{s}^{\rightarrow}\e^{\im\frac{\sqrt{\delta t \gamma}}{2}\bar{K}^a_s}
        \prod_{s}^{\leftarrow}\e^{\im\frac{\sqrt{\delta t \gamma}}{2}\bar{K}^a_s} \Big)
    \Big]
    \Big\|_1
    \Big)
    \nonumber\\
    &\le
    \calO(M \delta t^2 + \epsilon' M \delta t).
\end{align}
Hence, we can guarantee an error $\calO(\epsilon)$ in the prepared state, after a total Lindblad evolution time $t=M\delta t$, provided that
\begin{align}\label{app:conditions_overall}
    \epsilon' = \calO\left(\frac{\epsilon}{t}\right),
    \qquad
    M = \calO\left(\frac{t^2}{\epsilon}\right).
\end{align}

So far, we have not specified how to implement the coherent evolutions ${\cal U}_{\delta t}$ and ${\cal U}_{\Delta t}$ in Eqs~\eqref{eq:random_dissipator_coh} and~\eqref{eq:w_to_wtilde}. If we employ the second-order product formula with $r_\delta$ steps to approximately implement ${\cal U}_{\delta t}$, the error is $\calO((\|H\|_\infty\delta t)^3/r_\delta^2)$. Similarly, applying the second-order product formula to ${\cal U}_{\Delta t}$ with $r_\Delta$ steps induces the error $\calO((\|H\|_\infty\Delta t)^3/r_\Delta^2)$.
Note that for each of the $M$ steps of size $\delta t$, ${\cal U}_{\Delta t}$ has to be applied $4S$ times in Eq.~\eqref{eq:w_to_wtilde}.
Thus, the error accumulated during $M$ steps of the Lindblad evolution is 
\begin{align}
\label{eq:error_trotter_coherent}
    \calO\left(
        M \frac{(\|H\|_\infty\delta t)^3}{r_\delta^2} + MS\frac{(\|H\|_\infty\Delta t)^3}{r_\Delta^2}
    \right).
\end{align}
Choosing $r_\delta = \calO(\sqrt{M(\|H\|_\infty\delta t)^3/\epsilon})$ and $r_\Delta = \calO(\sqrt{MS(\|H\|_\infty\Delta t)^3/\epsilon})$ suffices to bound the error by $\epsilon$.

\subsubsection{Gibbs state preparation}

Provided a quantum system described by $H$ and jump operators $\{A^a\}$ obey the ETH~\eqref{eq:app_ETH}, one can prepare the Gibbs state $\sigma_\beta$ by simulating the approximate Lindblad dynamics~\eqref{eq:lindblad_random_app} for a mixing time $t_{\rm mix}=M_{\rm mix}\delta t$ of the Lindbladian $\calL$~\eqref{eq:lindbladian_app}. 
Here, we provide the runtime of the quantum operation~\eqref{eq:lindblad_random_app} to simulate the corresponding Lindblad equation in terms of the total Hamiltonian simulation time
\begin{align}
\label{eq:runtime}
    M_{\rm mix} \times (\delta t + \Theta(S\Delta t))
    =
    \Theta\left(\frac{\beta t_\mix^2}{\epsilon}\sqrt{\log\frac{\beta t_{\rm mix}}{\epsilon}}\right),
\end{align}
to prepare an $\epsilon$-precise Gibbs state in trace distance. For the simulation time corresponding to the dissipative part, we used $S$ and $\Delta t$ satisfying Lemma~\ref{lemma:discretization} with $\epsilon'$ given by Eq.~\eqref{app:conditions_overall}.
The resulting circuit, from Eqs.~\eqref{eq:circuit_app} and ~\eqref{eq:circ_U_D}, uses a total of
\begin{align}
    M_{\rm mix} \times \Theta(S)
    =
    \Theta\left(\frac{\beta t_\mix^2\|H\|_\infty}{\epsilon}\log\frac{\beta t_{\rm mix}}{\epsilon}\right)
\end{align}
applications of $B_s (I_\ma\otimes \e^{\im H\Delta t})$ and $(I_\ma\otimes \e^{-\im H\Delta t}) B_s$.

\vspace{1em}

For the purpose of the error analysis conducted in Sec.~\ref{sec:noisy_simulation}, we collect all the sources of algorithmic error along the Lindblad simulation for time $t_{\rm mix}=M_{\rm mix}\delta t$.
While we stated the runtime without specifying a Hamiltonian simulation algorithm in Eq.~\eqref{eq:runtime}, we here adopt the second-order Trotter formula as discussed around Eq.~\eqref{eq:error_trotter_coherent} to be aligned with the numerical simulations presented in the main text.
Adding Eqs.~\eqref{eq:error_random_jump}, \eqref{eq:error_dilation}, 
\eqref{eq:error_regularization},\eqref{eq:error_trotterization}, and \eqref{eq:error_trotter_coherent}, we obtain the total algorithmic error, summarized below.

\begin{proposition}
\label{prop:algorithmic_error}
Let $\calL$ be the Lindbladian given by Eq.~\eqref{eq:lindbladian_app} with the Lindblad operators~\eqref{eq:lindblad_OFT_gaussian}, mixing time $t_\mathrm{mix}$ and inverse temperature $\beta$.
Let $|B_H|$ be the number of Bohr frequencies for the Hamiltonian $H$.
Moreover, let $\delta t$ be the evolution step size in Eq.~\eqref{eq:lindblad_random}, $\Delta t$ and $T$ be the OFT discretization step and integral range limit appearing in Eq~\eqref{eq:L_operator_discretized},
and $r_\delta$ and $r_\Delta$ as given in Eq.~\eqref{eq:error_trotter_coherent}.
The total algorithmic error of the algorithm in Thm.~\ref{thm:qa_complexity} is given by
\begin{align}
\label{eq:app_error_sources}
    t_{\rm mix}\times\calO\left(
        \delta t
        + \frac{\Delta t}{\sqrt{\beta}}\e^{-2(T/\beta)^2}
        + \sqrt{\beta}|B_H| \e^{-\frac{1}{8}\big(2\pi\frac{\beta}{\Delta t} - 2\beta\|H\|_\infty-1\big)^2}
        + \frac{\delta t^2}{r_\delta^2} 
        + \frac{T\Delta t^2}{\delta t\, r_\Delta^2}
    \right).
\end{align}
\end{proposition}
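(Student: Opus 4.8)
The plan is to assemble the bound in Eq.~\eqref{eq:app_error_sources} by collecting the per-step error contributions already established and summing them over the $M=t/\delta t$ evolution steps, then specializing to $t=t_{\rm mix}$. The natural starting point is the telescoping estimate Eq.~\eqref{eq:app_circuit_final_error}: since $\bE_{a_i}[\calW^{a_i}(\delta t)]$ is a quantum channel and the trace distance is contractive under CPTP maps, the single-step discrepancies accumulate linearly without amplification. This already folds the random-sampling error Eq.~\eqref{eq:error_random_jump}, the dilation error Eq.~\eqref{eq:error_dilation}, the regularization error Eq.~\eqref{eq:error_regularization}, and the dilation-Trotterization error Eq.~\eqref{eq:error_trotterization} into the compact bound $\calO(M\delta t^2+\epsilon' M\delta t)$, upon absorbing the dissipation strength $\gamma=\calO(1)$.

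Next I would add the coherent Trotterization error Eq.~\eqref{eq:error_trotter_coherent}, derived separately from the product-formula implementations of $\calU_{\delta t}$ and $\calU_{\Delta t}$. Substituting $M=t/\delta t$ and $S=T/\Delta t$ into $M(\|H\|_\infty\delta t)^3/r_\delta^2 + MS(\|H\|_\infty\Delta t)^3/r_\Delta^2$ produces the last two terms $t\,\delta t^2/r_\delta^2$ and $t\,T\Delta t^2/(\delta t\,r_\Delta^2)$ after absorbing the $\|H\|_\infty^3$ prefactors into the $\calO$.

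The third step is to make the regularization error explicit. Invoking Lemma~\ref{lemma:discretization}, I would split $\epsilon'=\|L-\bar L\|_\infty$ via the triangle inequality Eq.~\eqref{eq:triangle} into its truncation part, whose leading behavior is $\tfrac{\Delta t}{\sqrt{\beta}}\e^{-2(T/\beta)^2}$ from Eq.~\eqref{eq:triangle_2nd} with $S\Delta t=T$, and its discretization part, whose leading behavior is $\sqrt{\beta}\,|B_H|\,\e^{-\frac18(2\pi\beta/\Delta t-2\beta\|H\|_\infty-1)^2}$ from Eq.~\eqref{eq:app_discretization_error} using $\|A\|_\infty=\calO(1)$ and $|\bohrH|=|B_H|$. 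Then $M\delta t^2$ collapses to $t\,\delta t$ and $\epsilon' M\delta t$ to $t\,\epsilon'$, reproducing the first three terms of Eq.~\eqref{eq:app_error_sources}. Finally, setting $t=t_{\rm mix}$ and factoring it out yields the stated form.

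Since every individual ingredient is already in hand, the proof is in essence careful bookkeeping. I expect the only genuine subtlety to be tracking the powers of $M$, $S$, $\delta t$ and $\Delta t$ through the two substitutions so that, for instance, the $S=T/\Delta t$ replacement in the coherent-Trotter term reproduces the $T/\delta t$ scaling rather than an $S^2$ overcount, and verifying that the per-step regularization cost $\epsilon'\delta t$ sums to $t\,\epsilon'$ (with $\epsilon'$ the single bound of Lemma~\ref{lemma:discretization}) rather than an inflated quantity. A related point requiring care is that Eqs.~\eqref{eq:error_dilation} and~\eqref{eq:error_trotterization} share the scaling $\calO((\delta t\gamma)^2)$ and must be summed, not conflated, although both are dominated by the $\delta t$ term after summation.
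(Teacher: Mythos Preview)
Your proposal is correct and follows essentially the same approach as the paper: collect the per-step errors from Eqs.~\eqref{eq:error_random_jump}--\eqref{eq:error_trotterization} via the telescoping bound~\eqref{eq:app_circuit_final_error}, add the coherent Trotter error~\eqref{eq:error_trotter_coherent}, and unpack $\epsilon'$ into its truncation and discretization parts using Eqs.~\eqref{eq:triangle_2nd} and~\eqref{eq:app_discretization_error}. The bookkeeping with $M=t/\delta t$ and $S=T/\Delta t$ that you outline is exactly what is needed, and your identification of the subtleties (no double-counting of the $\calO(\delta t^2)$ terms, correct summation of the regularization cost) is on point.
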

For $\epsilon'$ in the truncation and discretization error~\eqref{eq:error_regularization}, we used Eqs.~\eqref{eq:triangle_2nd} and \eqref{eq:app_discretization_error}, resulting in the second and third terms in Eq~\eqref{eq:app_error_sources}.
In Eq.~\eqref{eq:error_sources} of the main text, we drop the second last term $\propto \delta t^2$ as this is subleading to the first term $\propto\delta t$, and set $r_\Delta=1$ to align with the setup adopted in the main text.

\section{Convergence of Lindblad dynamics under the ETH}
\label{app:ETH}
This appendix details the convergence analysis of the Lindblad dynamics towards the Gibbs state, assuming that the eigenstate thermalization hypothesis (ETH) holds.
For this, we follow the strategy of~\cite{Chen2021ETH} and compute the spectral gap of the ETH-averaged Lindbladian $\bE_R\calL$ and relate it, via a bound on the channel distance between $\bE_R\calL$ and the actual Lindbladian $\calL$, to the mixing time of $\calL$.
Throughout this appendix we set $\gamma_a=1/|{\bm A}|$ in Eqs.~\eqref{eq:lindbladian_general} and~\eqref{eq:lindbladian_dissipative}.

In Sec.~\ref{app:ETH_scales} we restate the ETH, and discuss the energy scales at play together with the assumptions made.
In Sec.~\ref{app:dos} we introduce the density of states and summarize the properties relevant for our subsequent derivations.
Following this, we compute the spectral gap of the averaged Lindbladian $\bE_R\calL$ and the distance between $\bE_R\calL$ and $\calL$ in Secs.~\ref{app:ssec:gap_EL} and~\ref{app:ssec:concentration_ETH_average}.
Based on these derivations, we then prove in Sec.~\ref{app:ssec:spectral_gap_and_mixing_time} our final mixing time bound for the actual Lindbladian $\calL$ [Eq.~\eqref{eq:mixing_time_bound}], and the closeness of the steady state of $\calL$ to the target Gibbs state [Eq.~\eqref{eq:upper_bound_trace_distance}],
which stated in the following theorem.
\begin{theorem}\label{thm:convergence}
    Suppose a Hamiltonian $H$ and a set of Hermitian operators $\{A^a\}_{a=1,\dots,|{\bf A}|}$ obey the ETH,
    $\bra{E_i}A^a\ket{E_j} = \calA(E_{i})\delta_{ij} + D(E_{ij})^{-\frac{1}{2}}f(E_{ij},\nu_{ij})R_{ij}$,
    with $E_{ij} \coloneq (E_i+E_j)/2$, such that $E_i = E_{ii}$, and $\nu_{ij} \coloneq E_i-E_j$.
    Furthermore suppose the function $f(E,\nu)$ satisfies Assumption~\ref{as:f_ETH} and the density of states $D(E_{ij})$ satisfies Assumptions~\ref{as:dos_ratio} and \ref{as:dos_gibbs}. Then, for an $n$-qubit Gibbs state $\sigma_\beta$ and $\epsilon>0$, there is a Lindbladian $\calL$ admitting $\SteadyState$ as a steady state such that,
    \begin{align}
        \label{eq:thm:convergence_accuracy}
        \|\SteadyState - \sigma_\beta\|_{1}
        \le
        \calO \left( \epsilon + \frac{n\beta^2}{\sqrt{|\bm{A}|}} \left( 
        \beta \|H\|_\infty + \log(1/\epsilon)
        \right) \right)
    \end{align}
    with high probability.
    Moreover, the dynamics $\e^{t\calL}$ prepares a state $\epsilon$-close to $\SteadyState$ in trace distance for any $t\ge t_{\rm mix}$ with
    \begin{align}
        \label{eq:thm:mixing_time}
        t_{\rm mix} \le \calO \left(n\beta^2\big(\beta\|H\|_\infty + \log(1/\epsilon)\big)\right).
    \end{align}
\end{theorem}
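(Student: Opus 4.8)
The plan is to treat the actual Lindbladian $\calL$ as a perturbation of its ETH average $\bE_R\calL$, which the main text already shows satisfies the (generalized) $\sigma_\beta$-detailed balance condition with $\sigma_\beta$ as its fixed point [Eq.~\eqref{eq:fix_average_D}]. Its dissipative part is then self-adjoint with respect to the KMS inner product $\langle X,Y\rangle_{\sigma_\beta}$, so the relevant spectrum is real and nonnegative and the associated Dirichlet form can be analyzed with classical tools. First I would make precise the reduction of $\bE_R\calL$ to a classical Markov chain on the spectrum $\Spec[H]$ (App.~\ref{app:ssec:gap_EL}); using the assumptions on the smooth function $f$ and on the density of states, together with a conductance (Cheeger-type) estimate for that chain, I would establish the inverse-polynomial gap $\Delta_{\bE_R\calL} = \Omega(\beta^3/n)$.

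Next I would quantify the fluctuation of a single realization around the average. Writing each jump operator in the ETH form and using independence of the random matrices $R^a$ across the $|\bm{A}|$ operators, a matrix concentration inequality (matrix Bernstein) gives $\|\calL - \bE_R\calL\| = \calO(\beta/\sqrt{|\bm{A}|})$ with high probability (App.~\ref{app:ssec:concentration_ETH_average}), where the $\beta$-dependence tracks the filter weights $\eta_\nu$ and the $1/\sqrt{|\bm{A}|}$ is the usual averaging scale. With these two ingredients I would transfer the gap to $\calL$ (App.~\ref{app:ssec:spectral_gap_and_mixing_time}): provided $|\bm{A}| = \Omega(n^2\beta^8)$, so that the fluctuation is small compared with $\Delta_{\bE_R\calL}$, a perturbation argument yields that $\calL$ retains an inverse-polynomial gap $\Delta_\calL = \Omega(1/\mathsf{poly}(n,\beta))$ with high probability. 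Inserting this together with $\log\|\SteadyState^{-1/2}\|_\infty = \calO(\beta\|H\|_\infty)$ into the gap--mixing-time bound Eq.~\eqref{eq:mixing_time_and_spectral_gap} produces Eq.~\eqref{eq:thm:mixing_time}.

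For the convergence accuracy I would run the dynamics for $t = t_{\rm mix}(\epsilon)$ and split
\begin{equation}
    \|\SteadyState - \sigma_\beta\|_1 \le \|\SteadyState - \e^{t\calL}[\sigma_\beta]\|_1 + \|\e^{t\calL}[\sigma_\beta] - \sigma_\beta\|_1 .
\end{equation}
The first term is at most $\epsilon$ by the definition of the mixing time. For the second, I would use $\bE_R\calL[\sigma_\beta]=0$ (hence $\e^{s\,\bE_R\calL}[\sigma_\beta]=\sigma_\beta$) and a Duhamel expansion to write it as $\int_0^{t}\e^{(t-s)\calL}\big[(\calL-\bE_R\calL)[\sigma_\beta]\big]\,\diff s$; bounding the integrand through the fluctuation estimate $\|(\calL-\bE_R\calL)[\sigma_\beta]\|_1 = \calO(\beta/\sqrt{|\bm{A}|})$ and the contractivity of $\e^{(t-s)\calL}$ gives a contribution of order $t_{\rm mix}\cdot\calO(\beta/\sqrt{|\bm{A}|})$. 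Inserting the mixing-time bound recovers the $\tfrac{n\beta^2}{\sqrt{|\bm{A}|}}(\beta\|H\|_\infty + \log(1/\epsilon))$ term and hence Eq.~\eqref{eq:thm:convergence_accuracy}; note that this route explains both the standalone $\epsilon$ (from the mixing-time cutoff) and the $\log(1/\epsilon)$ (from $t_{\rm mix}\propto\log(1/\epsilon)$).

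The main obstacle is the gap transfer. Because $\calL$ is non-normal and its true steady state $\SteadyState$ differs from $\sigma_\beta$, I cannot simply compare eigenvalues by subtracting operator norms; the perturbation must be controlled in the norm induced by the KMS inner product, in which $\bE_R\calL$ is self-adjoint, and then converted back to trace distance. It is precisely these norm conversions --- which introduce factors of $\|\sigma_\beta^{\pm 1/2}\|$ and hence additional powers of $\beta$ --- that explain the threshold $|\bm{A}| = \Omega(n^2\beta^8)$ rather than the naive value one would guess from comparing $\beta/\sqrt{|\bm{A}|}$ with $\beta^3/n$ directly.
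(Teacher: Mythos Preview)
Your outline matches the paper's proof: reduce $\bE_R\calL$ to a classical Markov chain on $\Spec[H]$, bound its gap via conductance and Cheeger's inequality, control the fluctuation $\|\calL-\bE_R\calL\|$, transfer the gap by perturbation, and derive both conclusions via a Duhamel-type argument. Two points need adjustment.

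First, ``matrix Bernstein'' is not the right tool for the concentration step: the ETH random variables $R^a_{ij}$ are unbounded (only unit-variance), and the norm that feeds into the gap perturbation is the weighted $\|\cdot\|_{\sigma_\beta^{-1},2\to2}$, not the spectral norm. The paper (following~\cite{Chen2021ETH}) instead partitions the energy axis into windows of width $\Delta_{\rm RMT}$, bounds each block of the vectorized $\calL-\bE_R\calL$ via $p$-th moment estimates for Gaussian random matrices, and sums the blocks; this is what produces $\calO(\beta/\sqrt{|\bm{A}|})$ in the correct norm with high probability. Second, your closing attribution of the threshold $|\bm{A}|=\Omega(n^2\beta^8)$ to factors of $\|\sigma_\beta^{\pm1/2}\|$ is wrong: those norms are exponential in $\beta\|H\|_\infty$, not polynomial in $\beta$, and had they entered multiplicatively the estimate would be useless. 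The extra $\beta$-powers instead come from the structural scales $\Delta_{\rm RMT}=\Theta(1/\beta)$, $\Delta_{\rm spec}=\calO(\sqrt{n})$, and the overlap integral $\Gamma$ of Eq.~\eqref{eq:app:overlap_integral_limits}, which together set both the gap lower bound~\eqref{eq:L_gap} and the condition~\eqref{eq:num_jump} on $|\bm{A}|$.

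For the accuracy bound, your trace-norm Duhamel (running $\e^{t\calL}$ on $\sigma_\beta$ and comparing to $\SteadyState$) is a slightly more direct variant of the paper's route, which works in the $\chi^2$-divergence $\|\cdot\|_{\sigma_\beta^{-1},2}$ throughout (Eq.~\eqref{eq:bound_chi_square} onward) and arrives at a self-consistent inequality with $\|\SteadyState-\sigma_\beta\|_{\sigma_\beta^{-1},2}$ on both sides before converting to trace norm. Both routes give the same final scaling.
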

We prove this theorem in the remainder of App.~\ref{app:ETH}. See Eqs.~\eqref{eq:app:mixing_time_bound_final} and~\eqref{eq:app:trace_distance_simplified} for the final results.

\subsection{Eigenstate thermalization hypothesis}
\label{app:ETH_scales}

The ETH, originally proposed by Srednicki~\cite{Srednicki1999}, hypothesizes that, for a Hamiltonian $H$, the matrix elements of a local operator $A$ in the eigenbasis $\{\ket{E_i}\}$ of $H$ is expressed as [Eq.~\eqref{eq:ETH}]
\begin{align}
\label{eq:ETH_app}
    A_{ij} 
    \coloneq
    \bra{E_i}A\ket{E_j} 
    =
    \calA(E_{i})\delta_{ij} 
    + \frac{f(E_{ij},\nu_{ij})}{\sqrt{D(E_{ij})}}R_{ij},
\end{align}
with $E_{ij} \coloneq (E_i+E_j)/2$, such that $E_i = E_{ii}$, and $\nu_{ij} \coloneq E_i-E_j$. We will also sometimes use $D_{ij}\coloneq D(E_{ij})$ and $f_{ij}\coloneq f(E_{ij}, \nu_{ij})$.
Both $\calA(E)$ and $f(E,\nu)$ are smooth functions of $E$ and $\nu$.
We defer the discussion of the properties of the density of states $D(E)$ to Sec.~\ref{app:dos}.
$R$ is a Hermitian matrix with entries $R_{ij}$ whose real and imaginary parts are independent random variables that satisfy $\bE_R[R_{ij}]=0$ and $\bE_R[|R_{ij}|^2]=1$. Furthermore, its diagonal entries are set to $R_{ii}=0$ for all $i$.

 Given the ETH ansatz~\eqref{eq:ETH_app} and properties of the random variables $R_{ij}$, one can verify several identities that will be useful later on when performing averages over entries of $R$. First, we have 
\begin{equation}\label{eq:average_formula_1}
    \bE_R[A_{ij}A_{kl}^*] = \delta_{ij} \delta_{kl} \calA (E_i) \calA(E_k) + \delta_{ik} \delta_{jl} \frac{f(E_{ij},\nu_{ij})}{\sqrt{D(E_{ij})}}.
\end{equation}
Furthermore, recalling that $A_\nu = \sum_{E_i - E_j = \nu} \Pi_j A \Pi_i$, with $\Pi_i$ the projector onto the Hamiltonian's eigenspace with energy $E_i$, one can verify that
\begin{equation}\label{eq:average_formula_2}
    \bE_R[A_{\nu_1} \rho A^{\dagger}_{\nu_2}] = \delta_{\nu_1, \nu_2} \bE_R [A_{\nu_1} \rho A^{\dagger}_{\nu_1}],\quad \textrm{and } \; \bE_R \left[\{A^{\dagger}_{\nu_1} A^{\dagger}_{\nu_1}, \rho\}\right] = \delta_{\nu_1, \nu_2} \bE_R \left[\{A^{\dagger}_{\nu_1} A^{\dagger}_{\nu_1}, \rho \} \right].
\end{equation}

To proceed with the concrete calculation of the mixing time, we make a couple of simplifying assumptions on the non-universal function $f(E,\nu)$.
\begin{assumption}\label{as:f_ETH}
The function $f(E,\nu)$~in Eq.~\eqref{eq:ETH_app} satisfies the following properties.
\begin{enumerate}[label=(\alph*)]
\item\label{as:f_ETH_a} $f(E,\nu)=f(\nu)$ is independent of $E$, and supported on $\nu\in[-\Delta_\mathrm{RMT},\Delta_\mathrm{RMT}]$, where it is flat.

\item\label{as:f_ETH_b} The width of the support scales with $\beta$ as $\Delta_{\rm RMT}=\Theta(1/\beta)$, and the function takes an $(n,\beta)$-independent constant value $f_0$ on the support. 
\end{enumerate}
\end{assumption}

To motivate the assumptions we note that an operator $A$, such that $\|A^2\|_\infty=\calO(1)$, obeys $\Tr[\sigma_\beta A^2]\le\|A^2\|_\infty=\calO(1)$. With the ETH, one can express the trace in the integral form,
\begin{align}
\label{eq:f_scale_derivation}
    \Tr[\sigma_\beta A^2]
    =
    \int_{-\infty}^{\infty} \diff\nu\, \e^{\beta \nu/2} |f(\nu)|^2
    =
    \calO(1).
\end{align} 
To ensure that the integral converges to a finite value, we need  $|f(\nu)|\xrightarrow{\nu\to\infty} o(\e^{-\beta\nu/4})$~\cite{Srednicki1999, Murthy2019}. 
Assumption~\ref{as:f_ETH}\ref{as:f_ETH_b} provides a simple restriction on $f(\nu)$ so that this constraint is met. While we made specific choices of $\Delta_{\rm RMT}$ and $|f(\nu)|$, any polynomial dependence of them on $n$ also leads to a polynomial upper bound on the mixing time. In the following, we shall provide computational resources with and without Assumption~\ref{as:f_ETH}\ref{as:f_ETH_b}.

\begin{figure}
    \centering
    \begin{tikzpicture}[domain=-4:4,samples=100]
        \newcommand*{\betaE}{1.3}
        \pgfmathsetmacro{\sigmaE}{sqrt(2)/\betaE}
        \pgfmathsetmacro{\meanE}{-\betaE*\sigmaE^2/2}
        \draw[->] (-4, 0) -- (4.5,0) node[right] {$\nu$};
        \draw[->] (0,0) -- (0,3);
        \draw[thick] plot (\x, {2/(1+exp(-5*(\x+3)))/(1+exp(5*(\x-3)))});
        \node[above] at (1, 2) {$|f(\nu)|^2$};
        \draw[thick] plot (\x, {2.5*exp(-(\x-\meanE)^2/(2*\sigmaE^2)))});
        \draw[|-|] ($({\meanE-\sigmaE},1.3)$) -- node[midway,below] {$\Delta_\text{E}=\Theta(\Delta_\text{RMT})$} ($({\meanE+\sigmaE},1.3)$);
        \node[above] at (\meanE,2.5) {$\eta_\nu^2$};
        \draw (\meanE,-.1) node[below]
        {$-\frac{\Delta_\mE}{\sqrt{2}}=\Theta(\Delta_\text{RMT})$} -- ++(0,
        .2);
        \draw[|-|] (-4, -0.9) -- node[midway,below] {$\Delta_\text{RMT} = \Theta(1/\beta)$} ++(4,0);
    \end{tikzpicture}
    \caption{\label{fig:scales} 
    Squared Gaussian filter function $\eta_\nu^2$~\eqref{eq:eta_nu} from the operator Fourier transform and the function $|f(\nu)|^2$ in the off-diagonal ETH~\eqref{eq:ETH}. The
    relationship between their supports $\Delta_\mE = \sqrt{2}/\beta =
    \Theta(\Delta_\text{RMT})$ guarantees significant overlap between the
    two functions.
    }
\end{figure}
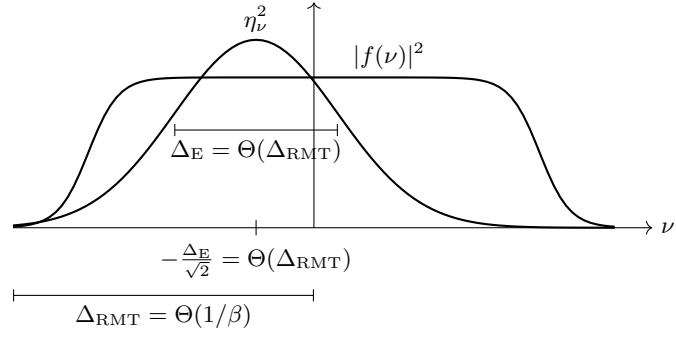

We illustrate the relevant scales under Assumption~\ref{as:f_ETH}\ref{as:f_ETH_b} in Fig.~\ref{fig:scales}.
Given the above condition on $|f(\nu)|$, we choose the width of our filter function $\eta_\nu$~\eqref{eq:eta_nu} to be $\Delta_\mE = \Theta(\Delta_\mathrm{RMT})=\Theta(1/\beta)$.
This, in turn, leads to the filter being peaked at $\nu=-\beta\Delta_\mE^2/2=-\Delta_\mE/\sqrt{2}$.
Figure~\ref{fig:scales} illustrates that there is a significant overlap between the supports of $\eta_\nu^2$ and $|f(\nu)|^2$, which guarantees efficient transition of states upon applications of the jump operators.
Quantitatively, we can derive upper and lower bounds on the overlap integrals,
\begin{align}
\label{eq:app:overlap_integral_limits}
    \Gamma\coloneq\int_0^{\Delta_\mathrm{RMT}} \diff\nu\, \eta_{-\nu}^2 |f(\nu)|^2
    \qquad \text{and} \qquad
    \Gamma_{\infty}\coloneq \int_{-\infty}^\infty \diff\nu\, \eta_\nu^2 |f(\nu)|^2,
\end{align}
which will show up frequently in this appendix. First, we can relate these two integrals through
\begin{equation}\label{eq:rel_gammas}
    \Gamma \leq \Gamma_\infty \leq 2 \Gamma,
\end{equation}
where we used the fact that $f$ is only supported on $\nu\in[-\Delta_\mathrm{RMT},\Delta_\mathrm{RMT}]$, from Assumption~\ref{as:f_ETH}\ref{as:f_ETH_a}, and the profile of $\eta_{\nu}$~\eqref{eq:eta_nu} that has most of its mass on the range $[-\Delta_\mathrm{RMT},0]$.
Hence it is sufficient to derive bounds for $\Gamma$.
For its upper bound, we insert Eq.~\eqref{eq:eta_nu} and obtain
\begin{align}
\label{eq:app:overlap_upper_bound}
\begin{split}
    \Gamma \le 
\frac{\beta}{\sqrt{4\pi}}\int_{-\Delta_\mathrm{RMT}}^{\Delta_\mathrm{RMT}} \diff\nu\,
    \e^{-((\beta\nu)^2+1)/4}\e^{\beta\nu/2} |f(\nu)|^2
    \leq
    \frac{\beta}{\sqrt{4\pi}} \int_{-\Delta_\mathrm{RMT}}^{\Delta_\mathrm{RMT}} \diff\nu\, \e^{\beta \nu/2} |f(\nu)|^2 
    \overset{\eqref{eq:f_scale_derivation}}{=} 
    \calO(\beta),
\end{split}
\end{align}
On the other hand, Assumption~\ref{as:f_ETH}\ref{as:f_ETH_b} allows one to lower bound the overlap,
\begin{align}
\label{eq:app:overlap_lower_bound}
\begin{split}
    \Gamma \ge
    \frac{\beta\e^{-(\beta\Delta_{\rm RMT})^2/4} \e^{-1/4} }{\sqrt{4\pi}}\int_0^{\Delta_\mathrm{RMT}} \diff\nu\,
    |f(\nu)|^2 =
    \frac{\e^{-(\beta\Delta_{\rm RMT})^2/4}\beta\e^{-1/4}}{\sqrt{4\pi}}\Delta_\mathrm{RMT}|f_0|^2
    = 
    \Theta(1) ,
\end{split}
\end{align}
using $\Delta_\mathrm{RMT} =\Theta(1/\beta)$ in the last equality. 

\subsection{Density of states}
\label{app:dos}

We introduce the density of states
\begin{align}
\label{eq:dos_integral}
    D(E)
    =
    \sum_{E_i\in{\rm spec}[H]} \tilde{\delta}(E-E_i).
\end{align}
Here $\tilde{\delta}(E-E_i)$ is a smeared delta function ensuring that
$D(E)$ is a smooth function of $E$.
The following assumption on the density of states guarantees that the ratios
appearing in the conductance calculations in Sec.~\ref{app:ssec:gap_EL} are of order~1.
\begin{assumption}[Bounded ratio of density of states]\label{as:dos_ratio}
    The ratio of densities of states is uniformly bounded such that for all $|E-E'|\leq \Delta_\mathrm{RMT}$
    \begin{equation}
        \frac{D(E)}{D(E')} \leq R_D = \Theta(1).
    \end{equation} 
\end{assumption}
We note that in \cite{Chen2021ETH}, the spectrum of the Hamiltonian is truncated to exclude pathological cases
that would invalidate this bound on the ratio close to the extremal values of the spectrum.

The normalized density of Gibbs states is defined by
\begin{align}\label{eq:norm_desity_gibbs}
    D_\beta(E) 
    \propto
    \e^{-\beta E}D(E)
\end{align}
with the normalization taken to satisfy $\int_{-\infty}^\infty \diff E D_\beta(E) = 1$. We require that the bulk of the density of Gibbs states is lower-bounded by a characteristic scale $1/\Delta_\text{spec}$ and that the tails decay sufficiently fast. This is captured by the following assumption. 
\begin{assumption}[Density of Gibbs states]
\label{as:dos_gibbs}
    The (normalized) density of Gibbs states $D_\beta(E)$, defined in Eq.~\eqref{eq:norm_desity_gibbs}, satisfies the following properties.
    \begin{enumerate}[label=(\alph*)]
        \item\label{as:dos_gibbs_bulk} There exists an interval $[E_\mL, E_\mR]$ that
        contains more than half the weight, such that $\int_{E_\mL}^{E_\mR}
        \diff E D_\beta(E) \geq 1/2$, and for all $E \in [E_\mL,
        E_\mR]$ we have
        \begin{equation}
            D_\beta(E) \ge D_\beta^{\rm min} = \Omega(\Delta_\mathrm{spec}^{-1}).
        \end{equation}
        
        \item\label{as:dos_gibbs_tail}  The right tail $[E_\mR, \infty)$ and left tail
        $(-\infty, E_\mL]$ decay such that for all $E \in [E_\mR,
        \infty)$ we have
        \begin{align}
            \int_E^\infty\diff E' D_\beta(E') 
            &= 
            \mathcal{O} \left( \Delta_\mathrm{spec} D_\beta(E) \right)
        \end{align}
        and for all $E \in (-\infty, E_\mL]$ we have
        \begin{align}
            \int_{-\infty}^E\diff E' D_\beta(E') 
            &= 
            \mathcal{O} \left( \Delta_\mathrm{spec} D_\beta(E) \right).
        \end{align}
    \end{enumerate}
\end{assumption}

To gain intuition about these assumptions, it is instructive to think of the density of states as a Gaussian
\begin{align}
    D(E) 
    \propto
    \e^{-\frac{(E-E_\infty)^2}{2\Delta_\mathrm{spec}^2}},
\end{align}
where $E_\infty\coloneq\Tr[H]/2^n$ is the energy at infinite temperature. Then the density
of Gibbs states is a Gaussian shifted by $-\beta\Delta_\text{spec}^2$
\begin{align}
\label{eq:gaussian_dos}
    D_\beta(E) 
    =
    \frac{1}{\sqrt{2\pi\Delta_\mathrm{spec}^2}}\e^{-\frac{(E-E_\infty+\beta\Delta_\mathrm{spec}^2)^2}{2\Delta_\mathrm{spec}^2}}.
\end{align}
Figure~\ref{fig:density} illustrates those densities.
It can be shown that those Gaussian densities fulfill Assumptions~\ref{as:dos_ratio} and~\ref{as:dos_gibbs}~\cite{Chen2021ETH}.
For instance, the density~\eqref{eq:gaussian_dos} indeed obeys Assumption~\ref{as:dos_gibbs}\ref{as:dos_gibbs_tail},
\begin{align}
    \begin{split}
    \int_E^\infty \diff E' D_\beta(E')
    &= 
    \frac{1}{\sqrt{2\pi\Delta_\text{spec}^2}} \int_{E}^\infty \diff E'\e^{-\frac{(E'-E_\infty + \beta\Delta_\text{spec}^2)^2}{2\Delta_\text{spec}}} 
    = 
    \frac{1}{\sqrt{\pi}} \int_{|E - E_\infty + \beta\Delta_\text{spec}^2|/\sqrt{2\Delta_\text{spec}^2}}^\infty \diff u\,\e^{-u^2}
    \\
    &\leq 
    \frac{1}{2\sqrt\pi w} \int_{w}^\infty \diff u\, 2u \e^{-u^2}
    \qquad 
    \left(w=|E - E_\infty + \beta\Delta_\text{spec}^2|/\sqrt{2\Delta_\text{spec}^2}\,\right)
    \\
    &= 
    \frac{1}{2\sqrt\pi w} \int_{w^2}^{\infty} \diff u\, \e^{-u}
    \\
    &= 
    \frac{\Delta_\text{spec}^2}{|E - E_\infty + \beta\Delta_\text{spec}^2|} D_\beta(E)
    \\
    &\leq
    \Delta_\text{spec} D_\beta(E).
\end{split}
\end{align}
The last line follows from $\Delta_\text{spec} \leq |E - E_\infty + \beta\Delta_\text{spec}^2|$ in the tail $E \geq E_\mR$.

\begin{figure}
    \centering
    \begin{tikzpicture}[domain=-4:4,samples=100]
        \newcommand*{\meanbeta}{-1.5}
        \draw[->] (-4, 0) -- (4,0) node[right] {$E$};
        \draw[->] (0,-.1) node[below] {$E_\infty$} -- (0,4);
        \draw (\meanbeta, -0.1) node[below] {$E_\infty -
        \beta\Delta_\text{spec}^2$} -- ++(0, .2);
        \node[above] at (0.5, 3) {$D(E)$};
        \node[above] at (\meanbeta, 3) {$D_\beta(E)$};
        \draw[thick] plot (\x, {3*exp(-0.5*(\x)^2)});
        \draw[thick,dashed] plot (\x, {3*exp(-0.5*(\x-\meanbeta)^2)});
        \draw[|-|] ($(\meanbeta, 1.6)+(-1,0)$) -- node[midway,above] {$\Delta_\text{spec}$}
            ++(2, 0);
    \end{tikzpicture}
    \caption{\label{fig:density} Example of the density of states $D(E)$ and density of the
     Gibbs state $D_\beta(E)$ that we consider.}
\end{figure}
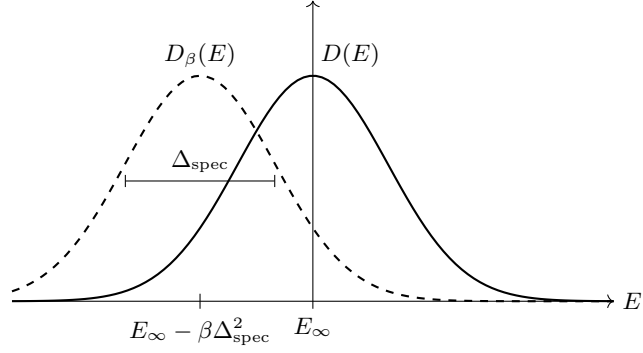

\subsection{Spectral gap of \texorpdfstring{$\bE_R\calL$}{}}
\label{app:ssec:gap_EL}

In this section, we derive a lower bound on the spectral gap of the Lindbladian $\bE_R\calL$ under the ETH average. After working out the action of the average Lindbladian onto eigenstates projectors, in Sec.~\ref{sec:app_action_average}, we proceed in two steps: First, in Sec.~\ref{sec:app:map_classical_mc}, we show that the average Lindbladian $\bE_R\calL$ can be mapped to a classical Markov chain admitting the Gibbs state as its exact steady state. 
In the second step, in Sec.~\ref{eq:app:cond_bounds} we work out bounds on the conductance of the Markov chain, which allows us to bounds its spectral gap through Cheeger's inequality.
Subsequently, in Sec.~\ref{app:ssec:concentration_ETH_average}, we derive an upper bound on the distance between the Lindbladian $\calL$ and the averaged one $\bE_R\calL$, which implies that the steady state $\SteadyState$ of $\calL$ is close to the Gibbs state $\sigma_\beta$.

\subsubsection{Action of the averaged Lindbladian}\label{sec:app_action_average}
We wish to evaluate the action of the averaged Lindbladian $\bE_R\calL$, with $\calL$ given in Eq.~\eqref{eq:lindbladian_app}, onto any of the operators $\ket{E_i}\bra{E_j}$.
Setting $\gamma_a=1/|{\bm A}|$ in Eq.~\eqref{eq:lindbladian_dissipative}, or equivalently $p_a=1/|{\bm A}|$ and $\gamma=1$ in Eq.~\eqref{eq:lindbladian_app}, together with our choice of jump operators in Eq.~\eqref{eq:lindblad_OFT_gaussian_appA}, we can write the dissipative part of our Lindbladian as
\begin{align}
    \label{eq:dissipator_app_convergence_under_ETH}
    \calD[\rho]
    =
    \frac{1}{|{\bm A}|}\sum_a
    \Big(L^a\rho L^{a\dag} - \frac{1}{2}\{L^{a\dag} L^a, \rho\}\Big)
    =
    \frac{1}{|{\bm A}|}\sum_a\sum_{\nu_1,\nu_2}\eta_{\nu_1}\eta_{\nu_2}
    \Big(
        A^a_{\nu_1}\rho A^{a\dag}_{\nu_2} - \frac{1}{2}\{A^{a\dag}_{\nu_1} A^a_{\nu_2}, \rho\}
    \Big).
\end{align}
Taking an average over the random matrices elements appearing in Eq.~\eqref{eq:ETH_app}, and using Eq.~\eqref{eq:average_formula_2}, leads to 
\begin{align}
\label{eq:davies_dissipator_ETH_average}
    \bE_R\calD[\rho]
    =
    \frac{1}{|{\bm A}|}\sum_a\sum_{\nu}\eta_{\nu}^2
    \bE_R\Big[
        A^a_{\nu}\rho A^{a\dag}_{\nu} - \frac{1}{2}\{A^{a\dag}_{\nu} A^a_{\nu}, \rho\}
    \Big],
\end{align}
that has the form of the Davies generator~\cite{Davies1974, Davies1976}.
Let us define $D_{ij} \coloneq D(E_{ij})$ together with $\eta_{ij}\coloneq\eta_{\nu_{ij}}$ and $f^{a}_{ij}\coloneq f^{a}(\nu_{ij})$. Recall that $\nu_{ij} = E_i - E_j$. Using Eq.~\eqref{eq:ETH_app}, we can see that
\begin{align}
\begin{split}
    \sum_\nu \eta_\nu^2 \bE_R[A^a_\nu \ket{E_i}\bra{E_j} A_\nu^{a\dagger}]
    =
    \eta_0^2\calA^{a}(E_i)\calA^{a}(E_j)\ket{E_i}\bra{E_j}
    + 
    \delta_{ij} \sum_{k\neq i}\frac{\eta_{ki}^2|f^{a}_{ki}|^2}{D_{ki}}\ket{E_k}\bra{E_k} .
\end{split}
\end{align}
Similarly, for the second term appearing in Eq.~\eqref{eq:davies_dissipator_ETH_average}, we obtain
\begin{align}
\begin{split}
    \sum_\nu \eta_\nu^2 \bE_R[\lbrace A_\nu^{a\dagger} A_\nu^a, \ket{E_i}\bra{E_j} \rbrace]
    =& 
    \left(\eta_0^2 \left( \calA^a(E_i)^2 + \calA^a(E_j)^2 \right)
     + 
    \sum_{\nu\neq 0} \eta_{\nu}^2 \left( \frac{|f^a_{i (i+\nu)}|^2}{D_{i(i+\nu)}} + \frac{|f^a_{j (j+\nu)}|^2}{D_{j(j+\nu)}} \right)\right) \ket{E_i}\bra{E_j} .
\end{split}
\end{align}
Finally, we note that $[H, \ket{E_i}\bra{E_j}] = (E_i-E_j) \ket{E_i}\bra{E_j}$. 

Putting everything together we can evaluate the action of $\bE_R \calL$ onto any of the $\ket{E_i}\bra{E_j}$ operators. For the case where $i=j$ we get:
\begin{align}
\label{eq:average_lindbladian_on_eigenstate}
    \bE_R\calL[\ket{E_i}\bra{E_i}] = \frac{1}{|{\bm A}|}\sum_a
    \sum_{j\neq i}\frac{\eta_{ji}^2|f^a_{ji}|^2}{D_{ji}}
    (\ket{E_j}\bra{E_j} - \ket{E_i}\bra{E_i}).
\end{align}
On the other hand for $i\neq j$, after some reordering, we obtain:
\begin{align}\label{eq:_average_lindbladian_on_off_diag}
    &\bE_R\calL[\ket{E_i}\bra{E_j}]
    =
    \left(
        -\im (E_i - E_j)
        - \frac{1}{2|{\bm A}|} \sum_a \left( \eta_0^2 \left( \calA^a(E_i) - \calA^a(E_j) \right)^2
        +  
        \sum_{m\neq i} \eta_{mi}^2 \frac{|f^a_{i m}|^2}{D_{im}} + \sum_{l\neq j} \eta_{lj}^2 \frac{|f^a_{jl}|^2}{D_{jl}}
    \right)\right) \ket{E_i}\bra{E_j}.
\end{align}

In the following, we will often need to bound the terms $\sum_{k\neq i}\frac{\eta_{ki}^2|f_{ki}|^2}{D_{ki}}$, for a fixed $i$, appearing in Eqs.~\eqref{eq:average_lindbladian_on_eigenstate} and \eqref{eq:_average_lindbladian_on_off_diag}. To do so, we first approximate the sums as follows:
\begin{align}
\label{eq:approx_sum_eta_f_square}
    \sum_{k\neq i}\frac{\eta_{ki}^2|f_{ki}|^2}{D_{ki}}
    = 
    \int_{-\infty}^{\infty}\diff E_k\,D(E_k) \eta_{ki}^2\frac{|f_{ki}|^2}{D\left(E_{ki}\right)}
    = \int_{E_i-\Delta_\mathrm{RMT}}^{E_i + \Delta_\mathrm{RMT}}\diff E_k\,D(E_k) \eta_{ki}^2\frac{|f_{ki}|^2}{D\left(E_{ki}\right)}.
\end{align}
We first replaced the sum by an integral
that holds for a smooth function $g(E)$.
Then, given that $f(\nu)$ is supported on the interval $[-\Delta_\mathrm{RMT}, \Delta_\mathrm{RMT}]$, we restricted the integral's interval.
Finally, using the fact that the ratio of densities is bounded through Assumption~\eqref{as:dos_ratio} and the definition of $\Gamma$ in Eq.~\eqref{eq:app:overlap_integral_limits}, we get
\begin{align}
\label{eq:approx_sum_eta_f_square_bound}
    \frac{\Gamma}{R_D} \leq \sum_{k\neq i}\frac{\eta_{ki}^2|f_{ki}|^2}{D_{ki}}
    \le
    R_D \Gamma.
\end{align}

\subsubsection{Mapping to a classical Markov chain}\label{sec:app:map_classical_mc}

Here, we show that the average Lindbladian $\bE_R\calL$ defines the generator of a classical Markov chain on the spectrum of $H$.
The transition rates $q_{i\to j}$ between eigenstate projectors $\ket{E_i}\bra{E_i}$ are readily obtained from Eq.~\eqref{eq:average_lindbladian_on_eigenstate} as
\begin{align}
\label{eq:transition_matrix}
\begin{split}
    q_{i\to j}
    &\coloneq
    \Tr\left[\ket{E_j}\bra{E_j}\bE_R\calL[\ket{E_i}\bra{E_i}]\right]
    = 
    \frac{\eta_{ji}^2|f_{ji}|^2}{D_{ji}}
    - \delta_{ij} \sum_{k\neq i} \frac{\eta_{ki}^2|f_{ki}|^2}{D_{ki}} \,,
\end{split}
\end{align}
where for the sake of simplicity we assumed that $f^a=f$ for any $a\in {\bm A}$. Moving to the off-diagonal elements $\ket{E_i}\bra{E_j}$, we saw from Eq.~\eqref{eq:_average_lindbladian_on_off_diag} that these were eigenvectors of $\bE_R \calL$.
From Eq.~\eqref{eq:approx_sum_eta_f_square_bound}, the minimum (in magnitude) of the real part of these eigenvalues is lower bounded through
\begin{align}\label{eq:bound_gap_simplified_offdiag}
\begin{split}
    \Delta^{\rm off} &\coloneq \min_{ij} \frac{1}{2|{\bm A}|} \sum_a \left( \eta_0^2 \left( \calA^a(E_i) - \calA^a(E_j) \right)^2
        +  
        \sum_{m\neq i} \eta_{mi}^2 \frac{|f^a_{i m}|^2}{D_{im}} + \sum_{l\neq j} \eta_{lj}^2 \frac{|f^a_{jl}|^2}{D_{jl}}
    \right)
     \geq \frac{\Gamma}{R_D}. 
\end{split}
\end{align}
Under the time evolution with $\e^{t\bE_R \calL}$, the imaginary part of the eigenvalues results in a phase, while the real part leads to a decay of the amplitude on the off-diagonal operators $\ket{E_i}\bra{E_j}$. 
Later on, we will discuss the timescales of such decay, but for now focus on
the classical Markov chain defined by Eq.~\eqref{eq:transition_matrix}, which is just the restriction of $\bE_R\calL$ to eigenstate projectors. This converges to the Gibbs state $\sigma_\beta$, since $\bE_R\calL$ is $\sigma_\beta$-DB, as discussed in Sec.~\ref{sec:convergence_under_ETH},
and the mixing time of the chain is controlled by the transition rates~\eqref{eq:transition_matrix}.

We shift and rescale the transition rates $q_{i\to j}$ to obtain a stochastic transition matrix of the classical Markov chain:
\begin{align}
\label{eq:spectral_gap_of_average_L_rshift}
    P_{i\to j}
    \coloneq
    \frac{q_{i\to j} + r\delta_{ij}}{r},\quad
\text{with}\;
    r 
    \coloneq
    \max_i \sum_{k\neq i} \frac{\eta_{ki}^2|f_{ki}|^2}{D_{ki}}.
\end{align}
To see that with such choice $P_{i\to j}$ is indeed stochastic\footnote{
    We say $P_{i\to j}$ is stochastic if it satisfies $\sum_j P_{i\to j} = 1$ for all $i$ and any $P_{i\to j}\geq 0$.
}
we first note that the transition rates~\eqref{eq:transition_matrix} satisfy $\sum_j q_{i\to j} = 0$. Hence, $\sum_j P_{i\to j} = 1$ for any choice of $r>0$. It remains to show that the chosen $r$ ensures that any $P_{i \to j}\geq 0$.
This can be seen by noting that $r$ is equal to $- \max_i q_{i \to i}$ and that the $q_{i \to i}$ are the only negative transition rates. Finally, according to Eq.~\eqref{eq:approx_sum_eta_f_square_bound}, we readily see that $r$ is bounded through
\begin{align}
\label{eq:spectral_gap_of_average_L_rshift_upper}
    \frac{\Gamma}{R_D} \leq r 
    \leq
    R_D \Gamma,
\end{align}

Given the spectral gap $\Delta_P$ of the Markov chain with transition matrix $P$, the gap of the diagonal part of $\bE_R \calL$ is $\Delta^{\rm diag} = r\Delta_P$, and 
we now wish to bound $\Delta_P$.
To this end, we relate the gap of the Markov chain with its conductance $\Phi$ defined as
\begin{align}
    \Phi(S) \coloneq \frac{Q(S, \bar{S})}{\pi_S},
    \qquad
    Q(S_1,S_2) \coloneq \sum_{i\in S_1, j\in S_2}\pi_i P_{i\to j}.
\end{align}
Here, $S \subset \Spec(H)$ refers to a subset of the spectrum of $H$ and $\bar{S}$ to its complement. We further defined the probability $\pi_S\coloneq\sum_{i\in S}\pi_i$
for the stationary distribution $\pi_i$ of the Markov chain. This stationary distribution $\pi_i$ satisfies the detailed balance condition given by $\pi_i P_{i\to j} = \pi_j P_{j\to i}$ for any $i, j\in \Spec(H)$.
Defining the minimum conductance (also called bottleneck ratio) as
\begin{align}
\label{eq:conductance}
    \phi \coloneq \min_{\pi_S\le 1/2}\frac{Q(S,\bar{S})}{\pi_S},
\end{align}
the spectral gap $\Delta_{P}$ is bounded via Cheeger's inequality as
\begin{align}
\label{eq:Cheeger}
    \frac{\phi^2}{2} \le \Delta_{P} \le 2\phi.
\end{align}

\subsubsection{Conductance calculation}\label{eq:app:cond_bounds}
In the following we aim at bounding the conductance~\eqref{eq:conductance}. For that, we introduce a set of non-overlapping energy intervals $\{S_i\}_i$, each of which has size $\Delta_\mathrm{RMT}$ as sketched in Fig.~\ref{fig:interval}. The intervals $S_i$ are labeled in increasing order of energies, i.e., $E_i < E_{i+1}$ for any $E_i\in S_i$ and $E_{i+1}\in S_{i+1}$.
Furthermore, we assume for now that the subset $S$ appearing in the definition of the conductance has the form $S_\ell\cup S_{\ell+1}\cup\dots\cup S_{r}$ supported on the energy interval $[E_\ell,E_r]$, such that $\pi_S\le 1/2$.
That is, we assume that $S$ is contiguous over the energy spectrum. When $S$ is not contiguous, we can decompose it into contiguous subsets and combine their conductance bounds, as discussed later.
We divide the analysis of the conductance into four cases~\cite{Chen2021ETH} that cover all the possible locations of $E_l$ and $E_r$ with respect to the interval $[E_L, E_R]$ that was defined in  Assumption~\ref{as:dos_gibbs}.

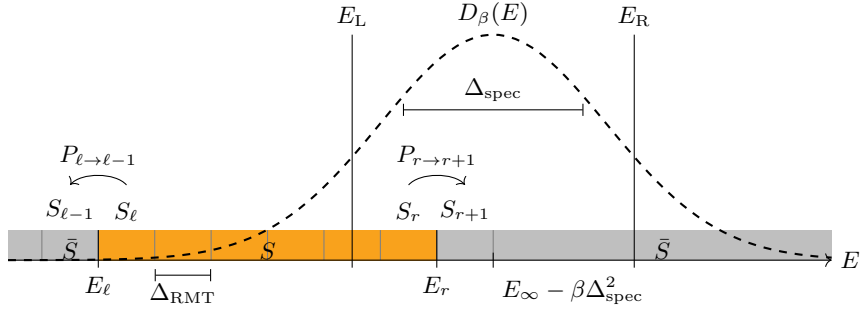
\begin{figure}
    \centering
    \begin{tikzpicture}[domain=-6.3:1,samples=100,xscale=1.5]
        \newcommand*{\meanbeta}{-2}
        \newcommand*{\EL}{-5.5}
        \newcommand*{\ER}{-2.5}
        \newcommand*{\tEL}{-3.25}
        \newcommand*{\Sheight}{0.4}
        \newcommand*{\ymax}{3}
        \newcommand*{\DeltaRMT}{0.5}
        \fill[YellowOrange] (\EL,0) rectangle (\ER,\Sheight);
        \fill[lightgray] (-6.3, 0) rectangle (\EL, \Sheight);
        \fill[lightgray] (\ER, 0) rectangle (1, \Sheight);
        \pgfmathsetmacro{\tmpL}{\EL-\DeltaRMT}
        \pgfmathsetmacro{\tmpR}{\ER+\DeltaRMT}
        \pgfmathsetmacro{\tER}{\tEL+2.5}
        \foreach \x in {\tmpL, \EL, ..., \tmpR} {
            \draw[gray] (\x, 0) -- (\x, \Sheight);
        }
        \node[above] at ($({\EL+(\ER-\EL)/2}, -.1)$) {$S$};
        \node[above] at ($({\EL-\DeltaRMT/2}, -.1)$) {$\bar S$};
        \node[above] at ($({\ER+4*\DeltaRMT}, -.1)$) {$\bar S$};
        \draw (\EL, -.1) node[below] {$E_\ell$} -- (\EL, \Sheight);
        \draw (\ER, -.1) node[below] {$E_r$} -- (\ER, \Sheight);
        \draw (\tEL, -.1) -- (\tEL, \ymax) node[above] {$E_\mL$};
        \draw (\tER, -.1) -- (\tER, \ymax) node[above] {$E_\mR$};
        \node[above] at ($({\EL-\DeltaRMT/2}, \Sheight)$) {$S_{\ell-1}$};
        \node[above] at ($({\EL+\DeltaRMT/2}, \Sheight)$) {$S_\ell$};
        \path[->] ($({\EL+\DeltaRMT/2}, {2.5*\Sheight})$) edge[bend right=60]
        node[midway,above] {$P_{\ell \rightarrow \ell-1}$} ($({\EL-\DeltaRMT/2},
        {2.5*\Sheight})$);
        \path[->] ($({\ER-\DeltaRMT/2}, {2.5*\Sheight})$) edge[bend left=60]
        node[midway,above] {$P_{r \rightarrow r+1}$} ($({\ER+\DeltaRMT/2}, {2.5*\Sheight})$);
        \node[above] at ($({\ER-\DeltaRMT/2}, \Sheight)$) {$S_r$};
        \node[above] at ($({\ER+\DeltaRMT/2}, \Sheight)$) {$S_{r+1}$};
        \draw[->] (-6.3, 0) -- (1,0) node[right] {$E$};
        \draw (\meanbeta, -0.1) node[below right] {$E_\infty - \beta\Delta_\text{spec}^2$} -- ++(0, .2);
        \node[above] at (\meanbeta, \ymax) {$D_\beta(E)$};
        \draw[|-|] ($({\EL+\DeltaRMT}, -0.2)$) -- node[midway,below]
        {$\Delta_\text{RMT}$} ++($(\DeltaRMT, 0)$);
        \draw[thick,dashed] plot (\x, {\ymax*exp(-0.5*(\x-\meanbeta)^2)});
        \draw[|-|] ($(\meanbeta, \ymax/1.5)+(-0.8,0)$) -- node[midway,above] {$\Delta_\text{spec}$}
            ++(1.6, 0);
    \end{tikzpicture}
    \caption{\label{fig:interval}
    Sketch of energy intervals, transitions, and density of states that contribute to the conductance from $S$ to $\bar{S}$.}
\end{figure}

\paragraph{Case \texorpdfstring{$E_r \in [E_\mL+\Delta_{\rm RMT}, E_\mR-\Delta_{\rm RMT}]$}{}.}
The conductance is lower-bounded with Assumptions~\ref{as:dos_ratio} and \ref{as:dos_gibbs} as
\begin{align}
\label{eq:Phi_increase_energy}
    \Phi(S) 
    &\geq 
    2Q(S, \bar S) 
    \geq 
    2(\pi_{S_r} P_{r\to r+1} + \pi_{S_\ell} P_{\ell\to \ell - 1}) 
    \geq 
    2\pi_{S_r} P_{r\to r+1}
    \nonumber\\
    &= 
    \frac{2}{r} \sum_{E_r\in S_r, E_{r+1}\in S_{r+1}} \pi_{E_r}
    \eta_{r,r+1}^2
    \frac{|f_{r,r+1}|^2}{D_{r, r+1}}
    \nonumber\\
    &= 
    \frac{2}{r}
    \int_{E_{r-1}}^{E_r} \diff E\,D_\beta(E) 
    \int_{E_r}^{E_{r+1}} \diff E'\, D(E') \eta_{E'-E}^2 \frac{|f(E - E')|^2}{D\left( \frac{E+E'}{2} \right)}
    \nonumber\\
    &\hspace{-.7em} \overset{\text{Ass.~\ref{as:dos_ratio}}}{\geq}
    \frac{2}{r R_D} \int_{E_{r-1}}^{E_r} \diff E\, D_\beta(E) \int_{E_r}^{E_{r+1}} \diff E'\, \eta_{E'-E}^2 |f(E - E')|^2
    \nonumber\\
    &\hspace{-1.2em} \overset{\text{Ass.~\ref{as:dos_gibbs}\ref{as:dos_gibbs_bulk}}}{\geq}
    \frac{2D_\beta^{\rm min}}{r R_D} 
    \int_{E_{r-1}}^{E_r} \diff E \int_{E_r}^{E_{r+1}} \diff E'\, \eta_{E'-E}^2 |f(E - E')|^2
    \nonumber\\
    &\hspace{-1.2em} \overset{\nu=E'-E}{=}
    \frac{2 D_\beta^{\rm min}}{r R_D} \int_{E_{r-1}}^{E_r} \diff E\, \int_{E_r-E}^{E_{r+1}-E} \diff \nu\, \eta_{\nu}^2 |f_{\nu}|^2
    \nonumber\\
    &\hspace{-1.2em} \overset{\mu=E_r-E}{=}
    \frac{2 D_\beta^{\rm min}}{r R_D} \int_{0}^{\Delta_\text{RMT}} \diff \mu \int_{\mu}^{\Delta_\text{RMT}+\mu} \diff\nu\, \eta_{\nu}^2 |f_{\nu}|^2
    \nonumber\\
    &\ge
    \frac{2 D_\beta^{\rm min}}{r R_D}
    \int_{0}^{\Delta_\text{RMT}} \diff \mu \int_0^{\Delta_\text{RMT}} \diff \nu\, \eta_{\nu}^2 |f_{\nu}|^2
    \nonumber\\
    &\ge  
    \frac{2 D_\beta^{\rm min}\e^{-\beta\Delta_\mathrm{RMT}}\Delta_\mathrm{RMT}}{r R_D}
    \int_0^{\Delta_\text{RMT}} \diff \nu\, \eta_{-\nu}^2 |f_{\nu}|^2
    \nonumber\\
    &\hspace{-.7em} \overset{\eqref{eq:spectral_gap_of_average_L_rshift_upper}}{=}
    \Omega\left( 
    \frac{\e^{-\beta\Delta_\mathrm{RMT}}\Delta_\mathrm{RMT}}{\Delta_\mathrm{spec}}
    \right).
\end{align}
The derivations of such bounds only account for energy-increasing transitions, as seen in the first line of the derivations.

\paragraph{Case \texorpdfstring{$E_\ell \in [E_\mL+\Delta_{\rm RMT}, E_\mR-\Delta_{\rm RMT}]$}{}.}

Following similar steps, the conductance is lower-bounded as
\begin{align}
\label{eq:Phi_decrease_energy}
    \Phi(S)
    &\ge
    2\pi_{S_\ell}P_{\ell\to \ell-1}
    =
    \frac{2}{r}
    \sum_{E_\ell\in S_\ell, E_{\ell-1}\in S_{\ell-1}}
    \pi_{E_\ell}
    \eta_{\ell-1,\ell}^2
    \frac{|f_{\ell,\ell-1}|^2}{D\big( \frac{E_\ell+E_{\ell-1}}{2} \big)}
    \ge
    \Omega \left( 
    \frac{\Delta_\text{RMT}}{\Delta_\text{spec}}
    \right).
\end{align}
Note that the energy-increasing conductance bound \eqref{eq:Phi_increase_energy} is smaller than the energy-decreasing one \eqref{eq:Phi_decrease_energy}.

\paragraph{Case \texorpdfstring{$E_r \in (-\infty, E_\mL]$}{}.}

Under Assumption~\ref{as:dos_gibbs}, $Q(S, \bar{S})$ is lower-bounded as,
\begin{align}
    Q(S, \bar{S})
    &\ge
    \pi_{S_r}P_{r\to r+1}
    \ge
    \frac{1}{r}\min_{E_r\in S_r}\Delta_\mathrm{RMT}D_\beta(E_r)
    \e^{-\beta\Delta_\mathrm{RMT}}
    \int_0^{\Delta_\mathrm{RMT}}\diff\nu \eta_{-\nu}^2|f(\nu)|^2.
\end{align}
and
\begin{align}
    \pi_S 
    \le 
    \int_{-\infty}^{E_r} \diff E' D_\beta(E') 
    =
    {\cal O} \big(\Delta_{\rm spec}D_\beta(E_r)\big).
\end{align}
Thus, the conductance is bounded as
\begin{align}
    \Phi(S) 
    =
    \frac{Q(S, \bar{S})}{\pi_S}
    \ge
    \Omega\left(\frac{\e^{-\beta\Delta_\mathrm{RMT}}\Delta_\mathrm{RMT}}{\Delta_\mathrm{spec}}
    \right).
\end{align}

\paragraph{Case \texorpdfstring{$E_\ell \in [E_\mR, \infty )$}{}.}
Following similar steps as for the previous case leads to the same lower bound~\eqref{eq:Phi_decrease_energy}.

\vspace{1em}

To deal with unions of contiguous energy sets, let us first consider the case with two such sets, $S_1$ and $S_2$, that are not contiguous to each other (and thus do not overlap). We wish to relate the conductance $\Phi(S_{12} \coloneq S_1 \cup S_2)$ to the individual conductances $\Phi(S_1)$ and $\Phi(S_2)$. To do so, we note that given Assumption~\ref{as:f_ETH}\ref{as:f_ETH_a}, transitions $P_{i \to j}$ between any $i \in S_1$ to any $j \in S_2$ are suppressed such that $Q(S_{12}, \bar{S}_{12}) = Q(S_{1}, \bar{S}_{1}) + Q(S_{2}, \bar{S}_{2})$. Hence
\begin{equation}
    \Phi(S_{12}) 
    = 
    \frac{Q(S_{1}, \bar{S}_{1}) + Q(S_{2}, \bar{S}_{2})}{\pi_{S_1} + \pi_{S_2}}
    \geq 
    \min \set{\Phi(S_{1}), \Phi(S_{2})}.
\end{equation}
This can extended to unions of more than two non-contiguous sets, and shows that taking such a union cannot decrease the conductance.
Overall, accounting for all the cases studied, we get that the conductance is lower-bounded as
\begin{align}
    \Phi(S)
    \ge
    \Omega\left(\frac{\e^{-\beta\Delta_\mathrm{RMT}}\Delta_\mathrm{RMT}}{\Delta_\mathrm{spec}}
    \right),
\end{align}
which thus gives the minimum conductance, i.e., 
\begin{align}
    \phi
    \ge
    \Omega\left(\frac{\e^{-\beta\Delta_\mathrm{RMT}}\Delta_\mathrm{RMT}}{\Delta_\mathrm{spec}}
    \right).
\end{align}
Therefore, using Cheeger's inequality~\eqref{eq:Cheeger} together with the relationship between the classical Markov chain and the diagonal part of the averaged Lindbladian~\eqref{eq:spectral_gap_of_average_L_rshift}, we obtain the lower-bound for the spectral gap
\begin{align}
\label{eq:bound_gap}
    \Delta^{\rm diag}
    \ge
    \frac{r\phi^2}{2}
    =
    \Omega\left(\frac{\e^{-2\beta\Delta_\mathrm{RMT}}\Delta_\mathrm{RMT}^2\Gamma}{\Delta_\mathrm{spec}^2}
    \right).
\end{align}
Using $\Delta_\mathrm{RMT}=\Theta(1/\beta)$ (Assumption~\ref{as:f_ETH}\ref{as:f_ETH_b})
and the lower-bound on the overlap integrals~\eqref{eq:app:overlap_lower_bound},
we get
\begin{align}
\label{eq:bound_gap_simplified_diag}
    \Delta^{\rm diag}
    =
    \Omega\left( \frac{1}{n \beta^2} \right) ,
\end{align}
Here, we further used a standard scaling of the spectral width of a local Hamiltonian, $\Delta_\mathrm{spec} =\calO(\sqrt{n})$~\cite{Dymarsky2019,Banuls2020,Lu2021}.
Finally, comparing the scaling of Eq.~\eqref{eq:bound_gap_simplified_diag} to one of the eigenvalues of the off-diagonal part of the averaged Lindbladian~\eqref{eq:bound_gap_simplified_offdiag}, we conclude that $\Delta^{\rm diag}$ provides the lower bound of the spectral gap of the average Lindbladian,
\begin{align}
\label{eq:bound_gap_simplified}
    \Delta_{\bE_R \calL}
    =
    \Omega\left(\frac{\e^{-2\beta\Delta_\mathrm{RMT}}\Delta_\mathrm{RMT}^2\Gamma}{\Delta_\mathrm{spec}^2}
    \right)
    \overset{\text{Ass.}\ref{as:f_ETH}\ref{as:f_ETH_b}}{=}
    \Omega\left( \frac{1}{n \beta^2} \right).
\end{align}
Recall that this spectral gap bound is derived under the assumption of ETH~\eqref{eq:ETH_app} and Assumptions~\ref{as:f_ETH}, \ref{as:dos_ratio}, and \ref{as:dos_gibbs}.

\subsection{Concentration: upper bound of \texorpdfstring{$\Vert \calL - \bE_R\calL\Vert_{\sigma_\beta^{-1},2\to2}$}{}}
\label{app:ssec:concentration_ETH_average}

We derive an upper bound on the operator distance between the Lindbladian $\calL$ and its ETH average $\bE_R\calL$.
A convenient distance measure for our purposes is the weighted induced quantum channel norm $\| \cdot \|_{\sigma_\beta^{-1},2\to2}$~\cite{Temme2010}, which is defined by
\begin{align}
\label{eq:def_weighted_channel_norm}
    \|{\cal C}\|_{\sigma_\beta^{-1},2\to2}
    \coloneq
    \max_{X}\frac{\|{\cal C}[X]\|_{\sigma_\beta^{-1},2}}{\|X\|_{\sigma_\beta^{-1},2}},
    \quad \text{with} \quad 
    \| X \|_{\sigma_\beta^{-1},2} \coloneq \sqrt{\expval{X, X}_{\sigma_\beta^{-1}}} \overset{\eqref{eq:DB_inner_product_general}}{=} \sqrt{\Tr[ X^\dagger \sigma_\beta^{-1/2} X \sigma_\beta^{-1/2} ]},
\end{align}
for a quantum channel ${\cal C}$ and the maximization is performed over Hermitian operators $X$.
Later, this norm will allow us to bound the trace distance between the steady states $\SteadyState$  and $\sigma_\beta$.

Note that the ETH average does not affect the system Hamiltonian, such that we can simplify
\begin{align}
\label{eq:app:L_dist_to_D_dist}
    \|{\calL} - \bE_R{\calL}\|_{\sigma_\beta^{-1},2\to2} = \|{\cal D} - \bE_R{\cal D}\|_{\sigma_\beta^{-1},2\to2} .
\end{align}
where $\calD$ is the dissipative part of $\calL$, which form we recall here:
\begin{align}
\label{eq:lindblad_dissipative_app}
    \calD[\rho]
    =
    \frac{1}{|{\bm A}|}\sum_a
    \Big(L^a\rho L^{a\dag} - \frac{1}{2}\{L^{a\dag} L^a, \rho\}\Big)
    =
    \frac{1}{|{\bm A}|}\sum_a\sum_{\nu_1,\nu_2}\eta_{\nu_1}\eta_{\nu_2}
    \Big(
        A^a_{\nu_1}\rho A^{a\dag}_{\nu_2} - \frac{1}{2}\{A^{a\dag}_{\nu_1} A^a_{\nu_2}, \rho\}
    \Big).
\end{align}
Defining ${\cal K}[\cdot]\coloneq\sigma_\beta^{-1/4}{\cal D}[\sigma_\beta^{1/4}\cdot \sigma_\beta^{1/4}]\sigma_\beta^{-1/4}$, and $K$ its vectorization, we can verify that\footnote{
Let $\delta{\cal D}\coloneq {\cal D} - \bE_R{\cal D}$ and $\delta{\cal K}\coloneq {\cal K} - \bE_R{\cal K}$. The first equality in~\eqref{eq:relate_norms} follows from
\begin{align}
    \|\delta {\cal K}\|_{2\to 2}
    =
    \max_{X} \frac{\|\delta{\cal K}[X]\|_2}{\|X\|_2}
    =
    \max_X \frac{\|\sigma_\beta^{-1/4}\delta{\cal D}[\sigma_\beta^{1/4}X\sigma_\beta^{1/4}]\sigma_\beta^{-1/4}\|_2}{\|X\|_2}
    =
    \max_{\tilde{X}} \frac{\|\sigma_\beta^{-1/4}\delta{\cal D}[\tilde{X}]\sigma_\beta^{-1/4}\|_2}{\|\sigma_\beta^{-1/4}\tilde{X}\sigma_\beta^{-1/4}\|_2}
    =
    \max_{\tilde{X}} \frac{\|\delta{\cal D}[\tilde{X}]\|_{\sigma_\beta^{-1},2}}{\|\tilde{X}\|_{\sigma_\beta^{-1},2}}
\end{align}
with $\tilde{X}\coloneq\sigma_\beta^{1/4}X\sigma_\beta^{1/4}$.
In the present work, the maximization of $X$ ($\tilde{X}$) to define the induced norm is restricted to Hermitian operators.
}
\begin{align}
\label{eq:relate_norms}
    \| \mathcal{D} - \bE_R\mathcal{D}\|_{\sigma_\beta^{-1},2\to2}
    =
    \|\mathcal{K} - \bE_R\mathcal{K}\|_{2\to2}
    =
    \|K - \bE_R K\|_{\infty}.
\end{align}
We note that, for any $\sigma_\beta$-DB dissipative channel ${\cal D}$, the channel ${\cal K}$ is self-adjoint with respect to the Hilbert-Schmidt inner product for all bounded operators.
Furthermore $K$, the vectorization of ${\cal K}$, is given by\footnote{
    The vectorization of ${\cal D}$ is
    \begin{align}
    \label{eq:app:vectorization_D}
        D
        =
        \frac{1}{|{\bm A}|}
        \sum_a\sum_{\mu, \nu}\eta_\mu\eta_\nu\Big(
            (A^a_\nu)^* \otimes A^a_\mu
            -\frac{1}{2}I\otimes (A^{a}_\mu)^\dag A^a_\nu
            -\frac{1}{2}(A^a_\nu)^\intercal (A^a_\mu)^* \otimes I
        \Big).
    \end{align}
    Then, the vectorization of ${\cal K}$ is $K=((\sigma_\beta^{-1/4})^\intercal\otimes\sigma_\beta^{-1/4})D((\sigma_\beta^{1/4})^\intercal\otimes\sigma_\beta^{1/4})$.
}
\begin{align}
\label{eq:app:vectorization_K}
    K
    =
    \frac{1}{|{\bm A}|}
    \sum_a\sum_{\mu, \nu}\eta_\mu\eta_\nu\Big(
        \e^{\frac{\beta}{4}(\mu+\nu)} (A^a_\nu)^* \otimes A^a_\mu
        -\frac{\e^{-\frac{\beta}{4}(\mu-\nu)} }{2}I\otimes (A^a_\mu)^\dag A^a_\nu
        -\frac{\e^{\frac{\beta}{4}(\mu-\nu)} }{2}(A^a_\nu)^\intercal (A^a_\mu)^* \otimes I
    \Big),
\end{align}
where $A^\intercal$ is the transpose of $A$.

Our strategy to bound Eq.~\eqref{eq:relate_norms} is to partition the energy spectrum into a set of non-overlapping energy intervals $\{S_i\}_i$ (Fig.~\ref{fig:interval}) using the projectors
\begin{align}
\label{eq:energy_projector}
    \Pi_{S_i} \coloneq \sum_{E_i\in[i\Delta_{\rm RMT},(i+1)\Delta_{\rm RMT}]}\ket{E_i}\bra{E_i}.
\end{align}
We start with the terms involving $A^*\otimes A$. For $\Delta K_{A\otimes A}\coloneq\sum_a\sum_{\mu,\nu}\eta_\mu\eta_\nu\e^{\frac{\beta}{4}(\mu+\nu)} ((A^a_\nu)^*\otimes A^a_\mu - \bE_R[(A^a_\nu)^*\otimes A^a_\mu])$ and an even integer $p$,
\begin{align}
    &\left(\bE_R\big\|(\Pi_{S_k}\otimes\Pi_{S_l})\Delta K_{A\otimes A}(\Pi_{S_i}\otimes\Pi_{S_j})\big\|_\infty\right)^p
    \nonumber\\
    &\le
    \bE_R\big\|(\Pi_{S_k}\otimes\Pi_{S_l})\Delta K_{A\otimes A}(\Pi_{S_i}\otimes\Pi_{S_j})\big\|_\infty^p
    \le
    \bE_R\big\|(\Pi_{S_k}\otimes\Pi_{S_l})\Delta K_{A\otimes A}(\Pi_{\bar{E}_1}\otimes\Pi_{S_j})\big\|_p^p
    \nonumber\\
    &\overset{\eqref{eq:sum_gaussian}}{\le}
    \bE_R \left\|(\Pi_{S_k}\otimes\Pi_{S_l})
        \frac{1}{|{\bm A}|}\sum_a\sum_{\mu,\nu}\e^{\frac{\beta}{4}(\mu+\nu)} \eta_\mu\eta_\nu
        ((A^a_\nu)^*\otimes A^a_\mu - (A'^a_\nu)^*\otimes A'^a_\mu)
    (\Pi_{S_i}\otimes\Pi_{S_j})\right\|_p^p
    \nonumber\\
    &\overset{\eqref{eq:B_gaussian}}{=}
    \bE_R \left\|(\Pi_{S_k}\otimes\Pi_{S_l})
        \frac{2}{|{\bm A}|}\sum_a\sum_{\mu,\nu}\e^{\frac{\beta}{4}(\mu+\nu)} \eta_\mu\eta_\nu
        (B^a_\nu)^*\otimes B'^a_\mu
    (\Pi_{S_i}\otimes\Pi_{S_j})\right\|_p^p
    \nonumber\\
    &\overset{\eqref{eq:G_variance}}{\le}
    \bE_R \left\|
        \frac{2}{|{\bm A}|}\sum_a G^a_{S_k,S_i}\otimes G'^a_{S_l,S_j}
    \right\|_p^p
    \nonumber\\
    &\overset{\eqref{eq:decouple_gaussian}}{\le}
    \left(\frac{2}{\sqrt{|{\bm A}|}}\right)^p
    \bE_R\|G_{S_k,S_i}\|_p^p
    \,
    \bE_R\|G_{S_l,S_j}\|_p^p
    \nonumber\\
    &\overset{\eqref{eq:bound_p_norm}}{\le}
    \left(\frac{2}{\sqrt{|{\bm A}|}}\right)^p
    \left(
        \max\{\Tr[\Pi_{S_k}],\Tr[\Pi_{S_i}]\}
        \bE_R[|G_{S_k,S_i}|^2]
        \max\{\Tr[\Pi_{S_l}],\Tr[\Pi_{S_j}]\}
        \bE_R[|G_{S_l,S_j}|^2]
    \right)^{p/2}.
\end{align}
In the third line, we used that, for the zero-mean random matrix $X\coloneq (A^{a}_\nu)^*\otimes A^a_\mu - \bE[(A^{a}_\nu)^*\otimes A^a_\mu]$
\begin{align}
\label{eq:sum_gaussian}
    \bE_R\|X\|_p^p \le \bE_R\|X-X'\|_p^p,
\end{align}
where $X$ and $X'$ are i.i.d. random matrices (all the diagonal entries are zero).
In the fourth line, we wrote $A=(B+B')/\sqrt{2}$ and $A'=(B-B')/\sqrt{2}$ with zero-mean random matrices $B$ and $B'$. Then, it follows
\begin{align}
\label{eq:B_gaussian}
    A^*\otimes A - A'^*\otimes A'
    = 
    B^*\otimes B' + B'^*\otimes B.
\end{align}
In the fifth line, we defined the zero-mean random matrices $G_{\bar{F},\bar{E}}$ whose entries all have the same variance,
\begin{align}
\label{eq:G_variance}
    \bE_R[|(G_{S',S}^a)_{ij}|^2]
    =
    \max_{F\in S', E\in S} (\e^{\frac{\beta}{4}(F-E)}\eta_{F-E})^2 \bE_R[|(B^a_{F-E})_{ij}|^2]
    =
    \max_{F\in S', E\in S} (\e^{\frac{\beta}{4}(F-E)}\eta_{F-E})^2  \frac{|f(F-E)|^2}{D\big(\frac{F+E}{2}\big)}.
\end{align}
In the sixth line, the Gaussian matrices are decoupled as
\begin{align}
\label{eq:decouple_gaussian}
\begin{split}
    &\bE_R \Big\|\sum_a c_a G^a\otimes G'^a\Big\|_p^p
    \\
    &=
    \sum_{a_1,\dots,a_p} c_{a_1}c_{a_2}^*\dots c_{a_{p-1}}c_{a_p}^*
    \bE_R\Tr[G^{a_1} G^{a_2\dag}\dots G^{a_{p-1}} G^{a_p\dag}]
    \bE_R\Tr[G'^{a_1} G'^{a_2\dag}\dots G'^{a_{p-1}} G'^{a_p\dag}]
    \\
    &\le
    \big(\sum_a|c_a|^2\big)^{p/2}
    \sum^{\text{pairs}}_{a_1,\dots,a_p}
    \bE_R\Tr[G^{a_1} G^{a_2\dag}\dots G^{a_{p-1}} G^{a_p\dag}]
    \bE_R\Tr[G'^{a_1} G'^{a_2\dag}\dots G'^{a_{p-1}} G'^{a_p\dag}]
    \\
    &\le
    \big(\sum_a|c_a|^2\big)^{p/2}
    \sum^{\text{pairs}}_{a_1,\dots,a_p}
    \bE_R\Tr[G^{a_1} G^{a_2\dag}\dots G^{a_{p-1}} G^{a_p\dag}]
    \sum^{\text{pairs}}_{b_1,\dots,b_p}
    \bE_R\Tr[G'^{b_1} G'^{b_2\dag}\dots G'^{b_{p-1}} G'^{b_p\dag}]
    \\
    &\le
    \big(\sum_a|c_a|^2\big)^{p/2}
    \bE_R\|G\|_p^p\, \bE_R\|G'\|_p^p.
\end{split}
\end{align}
The quantity $\bE_R\Tr[G'^{b_1}\dots G'^{b_p\dag}]$ is non-vanishing upon averaging over independent random matrices $R$ or equivalently $G^a$ only when all the indices make pairs. Accordingly, the sum $\sum^\text{pairs}_{a_1,\dots,a_p}$ only runs over such indices.
The last inequality holds because $\bE_R\Tr[G^{a_1} G^{a_2\dag}\dots G^{a_{p-1}} G^{a_p\dag}]\ge 0$ holds when the indices are contracted.
The seventh line follows from Fact~D.1 in \cite{Chen2021ETH}. For a $d_1\times d_2$ matrix~$G$ of the maximum variance $\Var_G$,
\begin{align}
    \bE_R \|G\|_p^p \le \min\{d_1,d_2\}\Var_G^{p/2}\big(\big(c_1\sqrt{\max\{d_1,d_2\}}\big)^p + (c_2\sqrt{p})^p\big),
\end{align}
for constants $c_1, c_2$.
Setting $p$ such that $\min\{d_1,d_2\}^{1/p}\le 1/c_1$, we have
\begin{align}
\label{eq:bound_p_norm}
    (\bE_R \|G\|_p^p)^{1/p} 
    \le 
    \sqrt{\Var_G\cdot\max\{d_1,d_2\}}.
\end{align}

Therefore, we can upper-bound the fluctuation in each eigensector,
\begin{align}
\label{eq:bound_local_AxA}
    &\bE_R \big\|(\Pi_{S_k}\otimes\Pi_{S_l})\Delta K_{A\otimes A}(\Pi_{S_i}\otimes\Pi_{S_j})\big\|_\infty
    \nonumber\\
    &\le
    \frac{\Delta_{\rm RMT} R_D}{\sqrt{|{\bm A}|}}
    \sqrt{
        \max_{E\in S_k\cup S_i} D(E)
        \max_{F\in S_l\cup S_j} D(F)
    }
    \nonumber\\
    &\quad\times
    \max_{E_k\in S_k, E_i\in S_i}
    \frac{\e^{\frac{\beta}{4}(E_k-E_i)} \eta_{E_k-E_i}|f(E_k-E_i)|}{\sqrt{D\big(\frac{E_k+E_i}{2}\big)}}
    \max_{E_l\in S_l, E_j\in S_j}
    \frac{\e^{\frac{\beta}{4}(E_l-E_j)} \eta_{E_l-E_j} |f(E_l-E_j)|}{\sqrt{D\big(\frac{E_l+E_j}{2}\big)}}
    \nonumber\\
    &\leq
    \frac{\Delta_{\rm RMT} R_D^2}{\sqrt{|{\bm A}|}}
    \e^{\frac{\beta}{4}(\hat E_k-\hat E_i)} \eta_{\hat E_k-\hat E_i} |f(\hat E_k-\hat E_i)|
    \cdot
    \e^{\frac{\beta}{4}(\hat E_l-\hat E_j)} \eta_{\hat E_l-\hat E_j} |f(\hat E_l-\hat E_j)|,
\end{align}
where $\hat E_k, \hat E_i$ is the pair of energies solving the first maximization of the third line, and $\hat E_l, \hat E_j$ solves the second maximization of the third line.
Recalling Assumption~\ref{as:dos_ratio}, for the second line we used that the number of states in each energy interval can be expressed as
\begin{align}
\label{eq:dos_entropy_projectors}
    \Tr[\Pi_{S_i}] \le D(E_i)R_D\Delta_{\rm RMT}, \quad\text{for all } E_i\in S_i.
\end{align}
The last line in Eq.~\eqref{eq:bound_local_AxA} holds only if $\hat E_i\in S_i$ and $\hat E_k\in S_k$ are close, which is true because $f(\hat E_k-\hat E_i)$ vanishes for $|\hat E_k - \hat E_i|> \Delta_\mathrm{RMT}$ (Assumption~\ref{as:f_ETH}\ref{as:f_ETH_a}). Same for $\hat E_j$ and $\hat E_l$.

Now, we add all the energy eigensectors.
\begin{align}
    &\bE_R\|\Delta K_{A\otimes A}\|_\infty
    \nonumber\\
    &=
    \bE_R\left\|\sum_{i,j,k,l}(\Pi_{S_k}\otimes\Pi_{S_l})\Delta K_{A\otimes A}(\Pi_{S_i}\otimes\Pi_{S_j})\right\|_\infty
    \nonumber\\
    &\le
    \sum_{m,n}\bE_R\left\|\sum_{i,j}
    (\Pi_{S_{i+m}}\otimes\Pi_{S_{j+n}})\Delta K_{A\otimes A}(\Pi_{S_i}\otimes\Pi_{S_j})\right\|_\infty
    \nonumber\\
    &\le
    \sum_{m,n}\max_{i,j}\bE_R\left\|
    (\Pi_{S_{i+m}}\otimes\Pi_{S_{j+n}})\Delta K_{A\otimes A}(\Pi_{S_i}\otimes\Pi_{S_j})\right\|_\infty
    \nonumber\\
    &\hspace{-.6em}\overset{\eqref{eq:bound_local_AxA}}{\le}
    \frac{\Delta_{\rm RMT} R_D^2}{\sqrt{|{\bm A}|}} \Big(
        \sum_{m}\max_{i}
        \e^{\frac{\beta}{4}(\hat E_{i+m}-\hat E_i)} \eta_{\hat E_{i+m}-\hat E_{i}} |f(\hat E_{i+m}-\hat E_i)|
    \Big)^2
    \nonumber\\
    &\le
    \frac{\Delta_{\rm RMT} R_D^2}{\sqrt{|{\bm A}|}} \Big(
        \sum_{m}
        \e^{\frac{\beta}{4}(\hat E_{\hat\imath+m}-\hat E_{\hat\imath})} \eta_{\hat E_{\hat\imath+m}-\hat E_{\hat\imath}} |f(\hat E_{\hat\imath+m}-\hat E_{\hat\imath})|
    \Big)^2
    \label{eq:Delta_K_AA_i_maximization}
    \\
    &\le
    \frac{\Delta_{\rm RMT} R_D^2}{\sqrt{|{\bm A}|}}
    \Big(\sum_{m}
    \e^{\beta\hat{\nu}_m/4}\eta_{\hat{\nu}_m}|f(\hat{\nu}_m)|\Big)^2
    \label{eq:Delta_K_AA_nu_maximization}
    \\
    &\hspace{-.5em}\overset{\text{Ass.}\ref{as:dos_ratio}}=
    \calO\left(
        \frac{1}{\Delta_{\rm RMT}\sqrt{|{\bm A}|}}
        \left(\int_{-\infty}^{\infty}\diff\nu\, \e^{\beta\nu/4}\eta_{\nu}|f(\nu)|\right)^2
    \right)
    \qquad
    \text{where we used }\Delta_{\rm RMT}\sum_{m}\approx \int\diff\nu
    \nonumber\\
    &\hspace{-.6em}\overset{\eqref{eq:CSineq_integral}}{=}
    \calO\left(
        \frac{1}{\Delta_{\rm RMT}\sqrt{|{\bm A}|}}
    \right).
    \label{eq:global_AxA}
\end{align}
In the inequality \eqref{eq:Delta_K_AA_i_maximization}, we defined $\hat\imath(m)\coloneq\underset{i}{\rm argmax}\,\e^{\frac{\beta}{4}(\hat E_{i+m}-\hat E_i)} \eta_{\hat E_{i+m}-\hat E_{i}} |f(\hat E_{i+m}-\hat E_i)|$, and suppressed its $m$-dependence for the notational simplicity.
In the inequality~\eqref{eq:Delta_K_AA_nu_maximization}, $\hat\nu_m$ is given by $\nu_m\in\bigcup_i\{E_{i+m}-E_i| E_i\in S_i, E_{i+m}\in S_{i+m}\}$ that maximizes $\e^{\beta \nu_m/4}\eta_{\nu_m}|f(\nu_m)|$.
The last equality follows from the Cauchy-Schwarz inequality,
\begin{align}
\label{eq:CSineq_integral}
    \left(
    \int_{-\infty}^{\infty}\diff\nu\,\e^{\beta\nu/4}\eta_{\nu}|f(\nu)|\right)^2
    \le
    \underbrace{\int_{-\infty}^{\infty}\diff\nu\, \eta_\nu^2}_{=1}
    \int_{-\infty}^{\infty}\diff\nu'\, \e^{\beta\nu'/2}|f(\nu')|^2
    \overset{\eqref{eq:f_scale_derivation}}{=} 
    \calO(1).
\end{align}

Next, we bound the norm of $\Delta K_{I\otimes AA}\coloneq\sum_a\sum_{\mu,\nu}\eta_\mu\eta_\nu I\otimes\big((A^a_\mu)^\dag A^a_\nu - \bE_R[(A^a_\mu)^\dag A^a_\nu]\big)$.
\begin{align}
    &\left(\bE_R\big\|(\Pi_{S_k}\otimes\Pi_{S_l})\Delta K_{I\otimes AA}(\Pi_{S_i}\otimes\Pi_{S_j})\big\|_\infty\right)^p
    \nonumber\\
    &\le
    \bE_R\big\|(\Pi_{S_k}\otimes\Pi_{S_l})\Delta K_{I\otimes AA}(\Pi_{S_i}\otimes\Pi_{S_j})\big\|_p^p
    \nonumber\\
    &\le
    \bE_R \left\|\Pi_{S_l}
        \frac{1}{|{\bm A}|}\sum_a\sum_{\mu,\nu}\e^{-\frac{\beta}{4}(\mu-\nu)} \eta_\mu\eta_\nu
        \big((A^a_\mu)^\dag A^a_\nu - (A'^a_\mu)^\dag A'^{a}_\nu\big)
    \Pi_{S_j}\right\|_p^p
    \nonumber\\
    &=
    \bE_R \left\|
        \frac{2}{|{\bm A}|}\sum_a
            \sum_{\mu,\nu}\e^{-\frac{\beta}{4}(\mu-\nu)} \eta_\mu\eta_\nu 
            \Pi_{S_l} (B^a_\mu)^\dag B'^a_\nu\Pi_{S_j}
    \right\|_p^p
    \nonumber\\
    &\le
    \bE_R \left\|
        \frac{2}{|{\bm A}|}\sum_a\sum_{h}
        G_{S_l,S_h} G_{S_h,S_j}
    \right\|_p^p
    \nonumber\\
    &\overset{\eqref{eq:decouple_gaussian_IxAA}}{\le}
    \left(\frac{2}{\sqrt{|{\bm A}|}}\right)^{p} 
    \bE_R\left\|\sum_{h}G_{S_l,S_h}G_{S_h,S_j}\right\|_p^p
    \nonumber\\
    &\overset{\eqref{eq:p_to_infty_norm}}{\le}
    \left(\frac{2}{\sqrt{|{\bm A}|}}\right)^{p}
    \Big(
        \sum_{h}
        \max\{\Tr[\Pi_{S_l}],\Tr[\Pi_{S_h}]\}
        \bE_R[|G_{S_l,S_h}|^2]
        \max\{\Tr[\Pi_{S_h}],\Tr[\Pi_{S_j}]\}
        \bE_R[|G_{S_h,S_j}|^2]
    \Big)^{p/2}.
\end{align}
In the second last inequality, the Gaussian matrices are decoupled as
\begin{align}
\label{eq:decouple_gaussian_IxAA}
\begin{split}
    &\bE_R \Big\|\sum_a c_a G^a G'^a\Big\|_p^p
    \\
    &=
    \sum_{a_1,\dots,a_p} c_{a_1}c_{a_2}^*\dots c_{a_{p-1}}c_{a_p}^*
    \bE_R\Tr[G^{a_1}G'^{a_1} G^{a_2\dag}G'^{a_2\dag}\dots G^{a_{p-1}}G'^{a_{p-1}} G^{a_p\dag}G'^{a_p\dag}]
    \\
    &\le
    \big(\sum_a|c_a|^2\big)^{p/2}
    \sum^{\text{pairs}}_{a_1,\dots,a_p}
    \bE_R\Tr[G^{a_1}G'^{a_1} G^{a_2\dag}G'^{a_2\dag}\dots G^{a_{p-1}}G'^{a_{p-1}} G^{a_p\dag}G'^{a_p\dag}]
    \\
    &=
    \big(\sum_a|c_a|^2\big)^{p/2}
    \big(\bE_R\|G^1 G'^1\|_p^p\big)^2.
\end{split}
\end{align}
The sum $\sum^\text{pairs}_{a_1,\dots,a_p}$ only runs over the indices that consist only of pairs.
The last inequality holds because $\bE_R\Tr[G^{a_1} G^{a_2\dag}\dots G^{a_{p-1}} G^{a_p\dag}]\ge 0$ holds when all the indices are paired.
The last inequality follows from,
\begin{align}
\label{eq:p_to_infty_norm}
    \bE_R\big\|G G'\|_p^p 
    \le 
    d_2\bE_R\big\|G\|_\infty^p \cdot \bE_R\big\|G'\|_\infty^p
    \overset{\eqref{eq:bound_p_norm}}{\le} 
    \sqrt{\Var_G\Var_{G'}\cdot\max\{d_1,d_2\}\max\{d_2,d_3\}},
\end{align}
for a $d_1\times d_2$ matrix $G$ and a $d_2\times d_3$ matrix $G'$. The even integer $p$ is chosen so that $d_2^{1/p}\le1$.

Therefore, we can upper-bound the fluctuation in each eigensector,
\begin{align}
\label{eq:bound_local_IxAA}
    &\bE_R\big\|(\Pi_{\bar{F}_1}\otimes\Pi_{S_l})\Delta K_{I\otimes AA}(\Pi_{\bar{E}_1}\otimes\Pi_{S_j})\big\|_\infty
    \nonumber\\
    &\le
    \frac{\Delta_{\rm RMT} R_D}{\sqrt{|{\bm A}|}}
    \sum_{S_h}
    \sqrt{
        \max_{E\in S_l\cup S_h}D(E)
        \max_{F\in S_h\cup S_j}D(F)
    }
    \nonumber\\
    &\times
    \max_{E_l\in S_l, E_h\in S_h}
    \frac{\e^{\frac{\beta}{4}(E_l-E_h)}\eta_{E_l-E_h}\big|f\big(\frac{E_l+E_h}{2}, E_l-E_h\big)\big|}{\sqrt{D\big(\frac{E_l+E_h}{2}\big)}}
    \max_{E_h\in S_h, E_j\in S_j}
    \frac{\e^{\frac{\beta}{4}(E_h-E_j)}\eta_{E_h-E_j}\big|f\big(\frac{E_h+E_j}{2}, E_h-E_j\big)\big|}{\sqrt{D\big(\frac{E_h+E_j}{2}\big)}}
    \nonumber\\
    &=
    \frac{\Delta_{\rm RMT}R_D^2}{\sqrt{|{\bm A}|}}\sum_{h}
    \e^{\frac{\beta}{4}(\hat{E}_l-\hat{E}_h)}\eta_{\hat{E_l}-\hat{E}_h}|f(\hat{E}_l-\hat{E}_h)|
    \ 
    \e^{\frac{\beta}{4}(\hat{E}_h-\hat{E}_j)}\eta_{\hat{E}_h-\hat{E}_j}|f(\hat{E}_h-\hat{E}_j)|,
\end{align}
where $\hat E_h, \hat E_j, \hat E_l$ are the energies solving the maximization of the third line.

Summing the contributions from all the energy eigenspaces, we find
\begin{align}
    &\bE_R\|\Delta K_{I\otimes AA}\|_\infty
    \nonumber\\
    &=
    \bE_R\left\|\sum_{j,l}(I\otimes\Pi_{S_l})\Delta K_{I\otimes AA}(I\otimes\Pi_{S_j})\right\|_\infty
    \nonumber\\
    &\le
    \sum_{m}\bE_R\left\|\sum_{j}(I\otimes\Pi_{S_{j+m}})\Delta K_{I\otimes AA}(I\otimes\Pi_{S_j})\right\|_\infty
    \nonumber\\
    &\le
    \sum_{m}\max_{j}\bE_R\left\|
    (I\otimes\Pi_{S_{j+m}})\Delta K_{I\otimes AA}(I\otimes\Pi_{S_j})\right\|_\infty
    \nonumber\\
    &\hspace{-.6em}\overset{\eqref{eq:bound_local_IxAA}}{\le}
    \frac{\Delta_{\rm RMT} R_D^2}{\sqrt{|{\bm A}|}}
    \sum_{m, h}
    \e^{\frac{\beta}{4}(\hat E_{\hat\jmath+m}-\hat E_h)}\eta_{\hat E_{\hat\jmath+m}-\hat E_h}|f(\hat E_{\hat\jmath+m}-\hat E_h)|
    \e^{\frac{\beta}{4}(\hat E_h-\hat E_{\hat\jmath})}\eta_{\hat E_h-\hat E_{\hat\jmath}}|f(\hat E_h-\hat E_{\hat\jmath})|
    \label{eq:Delta_K_IxAA_j_maximization}
    \\
    &=    
    \frac{\Delta_{\rm RMT} R_D^2}{\sqrt{|{\bm A}|}}
    \sum_{m, h}
    \e^{\frac{\beta}{4}(\hat E_{\hat\jmath+h+m}-\hat E_{\hat\jmath+h})}\eta_{\hat E_{\hat\jmath+h+m}-\hat E_{\hat\jmath+h}}|f(\hat E_{\hat\jmath+h+m}-\hat E_{\hat\jmath+h})|
    \e^{\frac{\beta}{4}(\hat E_{\hat\jmath+h}-\hat E_{\hat\jmath})}\eta_{\hat E_{\hat\jmath+h}-\hat E_{\hat\jmath}}|f(\hat E_{\hat\jmath+h}-\hat E_{\hat\jmath})|
    \nonumber\\
    &\le
    \frac{\Delta_{\rm RMT} R_D^2}{\sqrt{|{\bm A}|}}
    \sum_{h}
    \e^{\beta \hat\nu_m/4}
    \eta_{\hat\nu_m}|f(\hat\nu_m)|
    \sum_{m}
    \e^{\beta \hat\nu_h/4}
    \eta_{\hat\nu_h}|f(\hat\nu_h)|
    \label{eq:Delta_K_IxAA_nu_maximization}
    \\
    &\hspace{-.5em}\overset{\text{Ass.}\ref{as:dos_ratio}}=
    \calO\left(
    \frac{1}{\Delta_{\rm RMT} \sqrt{|{\bm A}|}}
    \left(\int_{-\infty}^{\infty}\diff\nu\,
    \e^{\beta\nu/4} \eta_{\nu}|f(\nu)|\right)^2
    \right)
    \qquad
    \text{where we used }\Delta_{\rm RMT}\sum_{h}\approx \int\diff\nu
    \nonumber\\
    &\hspace{-.6em}\overset{\eqref{eq:CSineq_integral}}{=}
    \calO\left(\frac{1}{\Delta_{\rm RMT} \sqrt{|{\bm A}|}}
    \right).
\end{align}
In the inequality \eqref{eq:Delta_K_IxAA_j_maximization}, $\hat\jmath(m)$ solves the maximization, $\hat\jmath(m)\coloneq\underset{j}{\rm argmax}\,\sum_{h}\e^{\frac{\beta}{4}(\hat E_{j+m}-\hat E_h)}\eta_{\hat E_{j+m}-\hat E_h}|f(\hat E_{j+m}-\hat E_h)|\e^{\frac{\beta}{4}(\hat E_h-\hat E_j)}\eta_{\hat E_h-\hat E_j}|f(\hat E_h-\hat E_j)|$, and its $m$-dependence is suppressed for the notational simplicity.
In the inequality~\eqref{eq:Delta_K_IxAA_nu_maximization}, $\hat\nu_x$ for an integer $x$ is given by $\nu_x\in\bigcup_i\{E_{i+x}-E_i| E_i\in S_i, E_{i+x}\in S_{i+x}\}$ that maximizes $\e^{\beta \nu_x/4}\eta_{\nu_x}|f(\nu_x)|$.

Similarly, one can obtain the bound
\begin{align}
    \bE_R\|\Delta K_{AA\otimes I}\|_\infty
    &\le
    \calO\left(\frac{1}{\Delta_{\rm RMT}\sqrt{|{\bm A}|}}\right).
\end{align}

Overall, we have
\begin{align}
\begin{split}
    \bE_R\|{\cal D} - \bE_R{\cal D}\|_{\sigma_\beta^{-1},2\to2}
    &=
    \bE_R\|\Delta K_{A\otimes A}
    + \Delta K_{I\otimes AA}
    + \Delta K_{AA\otimes I} \|_{\infty}
    \le
    \calO\left(\frac{1}{\Delta_{\rm RMT}\sqrt{|{\bm A}|}}\right)
\end{split}
\end{align}
Furthermore, a careful concentration analysis in~\cite{Chen2021ETH} together with Eq.~\eqref{eq:app:L_dist_to_D_dist} shows that
\begin{align}
\label{eq:L_fluctuation}
\begin{split}
    \|{\calL} - \bE_R{\calL}\|_{\sigma_\beta^{-1},2\to2}
    =
    \|{\calD} - \bE_R{\calD}\|_{\sigma_\beta^{-1},2\to2}
    \le
    \calO\left(\frac{1}{\Delta_{\rm RMT}\sqrt{|{\bm A}|}}
    \right)
\end{split}
\end{align}
holds with the probability exponentially close to 1 with respect to $n$.
Using $\Delta_{\rm RMT} = \Theta(1/\beta)$, we can simplify Eq.~\eqref{eq:L_fluctuation} to 
\begin{equation}
\label{eq:L_fluctuation_simplified}
\| \calL - \bE_R \calL\|_{\sigma_\beta^{-1},2\to2}
=
\calO \left(
\frac{\beta}{\sqrt{|\bm{A}|}}
\right).
\end{equation}

\subsection{Bounding the convergence to the Gibbs state}
\label{app:ssec:spectral_gap_and_mixing_time}

In the previous sections~\ref{app:ssec:gap_EL} and \ref{app:ssec:concentration_ETH_average} we showed that the gap of the average Lindbladian $\bE_R\calL$ is lower-bounded by a polynomial in $\beta$ and the system size $n$, whereas the channel distance between the actual Lindbladian $\calL$ and its average $\bE_R\calL$ is upper-bounded by the inverse square root of the number of jump operators $|\bm{A}|$. In this section we will use both results to polynomially bound the spectral gap of the true Lindbladian $\calL$ and to show a high convergence accuracy of the Lindblad evolution, in the sense that its exact steady state $\SteadyState$ is close to the target Gibbs state $\sigma_\beta$, which is the steady state of $\bE_R\calL$.

Let us start with the spectral gap.
From the gap $\Delta_{\bE_R\calL}$ of the ETH-averaged Lindbladian~\eqref{eq:bound_gap_simplified} and the channel distance~\eqref{eq:L_fluctuation_simplified}, we obtain a lower bound on the gap $\Delta_{\cal L}$ of the Lindbladian ${\cal L}=\bE_R{\cal L} + ({\cal L}-\bE_R{\cal L})$ via
\begin{align}
\label{eq:app:L_gap_lower_bound_deriv}
    \Delta_\calL
    =
    \Omega\left(\frac{\e^{-2\beta\Delta_\mathrm{RMT}}\Delta_\mathrm{RMT}^2\Gamma}{\Delta_\mathrm{spec}^2}
    \right)
    -
    \calO\left(\frac{1}{\Delta_{\rm RMT} \sqrt{|{\bm A}|}}
    \right).
\end{align}
Hence, there exists a number of jump operators,
\begin{align}
\label{eq:num_jump}
    |{\bm A}|
    =
    \Omega\left(\frac{\e^{4\beta\Delta_\mathrm{RMT}}\Delta_\mathrm{spec}^4}{\Delta_\mathrm{RMT}^6\Gamma^2}
    \right),
\end{align}
for which the spectral gap~\eqref{eq:app:L_gap_lower_bound_deriv} is lower bounded by
\begin{align}
\label{eq:L_gap}
    \Delta_\calL
    =
    \Omega\left(\frac{\e^{-2\beta\Delta_\mathrm{RMT}}\Delta_\mathrm{RMT}^2\Gamma}{\Delta_\mathrm{spec}^2}
    \right),
\end{align}
with high probability.
Finally, using Eq.~\eqref{eq:mixing_time_and_spectral_gap} and $\log(\|\sigma_\beta^{-1}\|_{\infty}) \approx \beta\|H\|_\infty$, the lower bound of the spectral gap is turned into an upper bound of the mixing time,
\begin{align}
\label{eq:app:mixing_time_bound}
    t_\mathrm{mix}
    \le
    \calO\left(\Big(\beta\|H\|_\infty + \log\frac{1}{\epsilon}\Big) \frac{\e^{2\beta\Delta_\mathrm{RMT}}\Delta_\mathrm{spec}^2}{\Delta_\mathrm{RMT}^2\Gamma}
    \right).
\end{align}
With Assumption~\ref{as:f_ETH}\ref{as:f_ETH_b}, $\Delta_{\rm spec}\propto1/\sqrt{n}$, spectral gap~\eqref{eq:L_gap}, the sufficient number of jump operators~\eqref{eq:num_jump}, the mixing time~\eqref{eq:app:mixing_time_bound}, and the bounds on the overlap integral $\Gamma$~\eqref{eq:app:overlap_lower_bound} and \eqref{eq:app:overlap_upper_bound}, we find
\begin{align}
\label{eq:app:mixing_time_bound_final}
    \Delta_\calL
    =
    \Omega\left( \frac{1}{n \beta^2} \right),
    \qquad
    |{\bm A}| = \Omega(n^2 \beta^6),
    \qquad
    t_\mathrm{mix}
    = \calO \left(
    n\beta^2\big(\beta\|H\|_\infty + \log(1/\epsilon)\big)
    \right).
\end{align}
As a reminder, we used the assumption of ETH~\eqref{eq:ETH_app} and Assumptions~\ref{as:f_ETH}, \ref{as:dos_ratio}, and \ref{as:dos_gibbs} to derive the spectral gap.
This proves Eq.~\eqref{eq:thm:mixing_time} of Thm.~\ref{thm:convergence}.

Let us now come to the distance between Gibbs state and steady state of $\calL$.
The inequality~\eqref{eq:L_fluctuation} allows us to bound the $\chi^2$ divergence, $\|\sigma_\beta - \SteadyState\|_{\sigma_\beta^{-1},2}$, between Gibbs state and the steady state $\SteadyState$ of $\calL$, which in turn, bounds the trace distance, $\|\sigma_\beta - \SteadyState\|_1\le \|\sigma_\beta - \SteadyState\|_{\sigma_\beta^{-1},2}$~\cite{Temme2010}.
Let us simplify notation, $\calL' = \bE_R\calL$, and let $\sigma$ and $\sigma'$ be states such that $\calL[\sigma]=\calL'[\sigma']=0$.
Then
\begin{align}
\label{eq:bound_chi_square}
    \|\sigma - \sigma'\|_{\sigma^{-1},2}
    =
    \|\e^{t\calL}[\sigma] - \e^{t\calL'}[\sigma']\|_{\sigma^{-1},2}
    \le
    \|\e^{t\calL}[\sigma - \sigma']\|_{\sigma^{-1},2}
    +
    \|(\e^{t\calL} - \e^{t\calL'})[\sigma']\|_{\sigma^{-1},2}.
\end{align}
To bound the second term, we note
\begin{align}
\begin{split}
    \|\e^{t\calL} - \e^{t\calL'}\|_{\sigma^{-1},2\to2}
    &\le
    t\int_{0}^{1}\diff s \|\e^{st\calL}(\calL - \calL')\e^{(1-s)t\calL'}\|_{\sigma^{-1},2\to2}
    \\
    &\le
    t\int_{0}^{1}\diff s \|\e^{st\calL}\|_{\sigma^{-1},2\to2}\cdot \|\calL - \calL'\|_{\sigma^{-1},2\to2} \cdot \|\e^{(1-s)t\calL}\|_{\sigma^{-1},2\to2}
    \\
    &\le
    t\|\calL - \calL'\|_{\sigma^{-1},2\to2},
\end{split}
\end{align}
where in the last inequality we used
\begin{align}
    \|\e^{x\calL}[\rho]\|_{\sigma^{-1},2}
    =
    \|\e^{x\calL}[\rho - \sigma]\|_{\sigma^{-1},2} + 1
    \le
    \|\rho - \sigma\|_{\sigma^{-1},2} + 1
    =
    \|\rho\|_{\sigma^{-1},2}
\end{align}
for $x\ge0$, leading to
\begin{align}
    \|\e^{x\calL}[\rho]\|_{\sigma^{-1},2\to2} 
    =
    \max_\rho \frac{\|\e^{x\calL}[\rho]\|_{\sigma^{-1},2}}{\|\rho\|_{\sigma^{-1},2}}
    \le 1.
\end{align}
The same inequality holds for $\calL'$.
Thus, the second term in Eq.~\eqref{eq:bound_chi_square} is bounded by
\begin{align}
    \|(\e^{t\calL} - \e^{t\calL'})[\sigma']\|_{\sigma^{-1},2}
    \le
    t\|\calL - \calL'\|_{\sigma^{-1},2\to 2} \cdot
    \|\sigma'\|_{\sigma^{-1},2}.
\end{align}
Note that $\|\sigma'\|_{\sigma^{-1},2}$ can be evaluated as follows:
\begin{align}
    \|\sigma'\|_{\sigma^{-1},2} 
    \le
    \|\sigma\|_{\sigma^{-1},2} + \|\sigma' - \sigma\|_{\sigma^{-1},2}
    =
    1 + \|\sigma'-\sigma\|_{\sigma^{-1},2}.
\end{align}
To bound the first term in \eqref{eq:bound_chi_square}, we set $t=\Delta_\calL^{-1}\log(\|\sigma^{-1}\|_{\infty}/\epsilon)$ for the spectral gap $\Delta_\calL$ of $\calL$~\eqref{eq:L_gap} so that the following inequality holds,
\begin{align}
    \|\e^{t\calL}[\sigma'-\sigma]\|_{\sigma^{-1},2}
    \le
    \e^{-t\Delta_\calL}\|\sigma'-\sigma\|_{\sigma^{-1},2}
    \le
    \epsilon
\end{align}
Putting these together, we have the bound on~\eqref{eq:bound_chi_square},
\begin{align}
\begin{split}
    \|\sigma - \sigma'\|_{\sigma^{-1},2}
    &\le
    \epsilon
    +
    t\|\calL - \calL'\|_{\sigma^{-1},2\to 2}
    (1 + \|\sigma-\sigma'\|_{\sigma^{-1},2})
    \\
    &\le
    \frac{\epsilon
    +
    \Delta_\calL^{-1}\log(\|\sigma^{-1}\|_{\infty}/\epsilon)\|\calL - \calL'\|_{\sigma^{-1},2\to 2}}
    {1 - \Delta_\calL^{-1}\log(\|\sigma^{-1}\|_{\infty}/\epsilon)\|\calL - \calL'\|_{\sigma^{-1},2\to 2}},
\end{split}
\end{align}
which also upper bounds the trace distance due to $\|\sigma - \sigma'\|_{1}\le\|\sigma - \sigma'\|_{\sigma^{-1},2}$.

Setting $\sigma'=\SteadyState$, $\sigma=\sigma_\beta$, and using Eqs.~\eqref{eq:L_fluctuation_simplified} and~\eqref{eq:app:mixing_time_bound_final},
we can bound
\begin{align}
\label{eq:app:trace_distance_deriv}
    \Delta_\calL^{-1}\log(\|\sigma^{-1}\|_{\infty}/\epsilon)\|\calL - \calL'\|_{\sigma^{-1},2\to 2}
    =
    \calO \left( \frac{n\beta^3}{\sqrt{|\bm{A}|}} \left( 
    \beta \|H\|_\infty + \log(1/\epsilon)
    \right) \right),
\end{align}
where we used $\log(\|\sigma_\beta^{-1}\|_{\infty}) \approx \beta\|H\|_\infty$.
With this, we obtain the following upper bound on the trace distance,
\begin{align}
\begin{split}
\label{eq:app:trace_distance_simplified}
    \|\SteadyState - \sigma_\beta\|_{1}
    &\le
    \|\SteadyState - \sigma_\beta\|_{\sigma_\beta^{-1},2}
    \overset{\mathrm{w.h.p}}\le
    \epsilon
    +
    \big(\Delta_\calL^{-1}\log(\|\sigma_\beta^{-1}\|_{\infty}/\epsilon)\|\calL - \calL'\|_{\sigma^{-1},2\to 2}
    \\
    &=
    \epsilon + \calO \left(\frac{n\beta^3}{\sqrt{|\bm{A}|}} \left( 
    \beta \|H\|_\infty + \log(1/\epsilon)
    \right) \right)
    =
    2\epsilon.
\end{split}
\end{align}
This proves Eq.~\eqref{eq:thm:convergence_accuracy} in Thm.~\ref{thm:convergence}.
The second inequality holds with a high probability.
In the last equality, we set the number of jump operators as
\begin{align}
\label{eq:num_jump_2}
    |{\bm A}|
    = 
    \Omega\left(\frac{n^2\beta^8\|H\|_\infty^2}{\epsilon}\right),
\end{align}
so that $\rho_\infty$ is $2\epsilon$-close to $\sigma_\beta$ in trace distance. Choosing the number of jump operators as Eq.~\eqref{eq:num_jump_2} automatically guarantees the second equation in Eq.~\eqref{eq:app:mixing_time_bound_final}.

\section{Details on numerical experiments and further results}
\label{app:numerics}

We discuss further numerical results for the mixing time and convergence accuracy of the Lindblad dynamics~\eqref{eq:lindblad_eq}. We expand the analysis of Sec.~\ref{sec:numerics} by investigating the impact of the locality $k$ of the jump operators, and of the initial state in App.~\ref{app:numerics_mixing_time}.
Furthermore, we extend our analysis to more parameter points, exploring further regular limits and the edge of chaos of the model~\eqref{eq:ising_model}.
We start this section by giving a more detailed overview of fractal dimensions and eigenstate delocalization, from which we identify the various dynamical regimes of our mixed-field Ising model~\eqref{eq:ising_model}.
In App.~\ref{app:numerics_details}, we give further details on our Lindbladian simulation scheme.
Throughout this section we fix $\beta=(2J)^{-1}$, as in the main text.

\subsection{Eigenstate delocalization}
\label{app:numerics_spectral_statistics}

\begin{figure*}
    \centering
    \includegraphics[width=\textwidth]{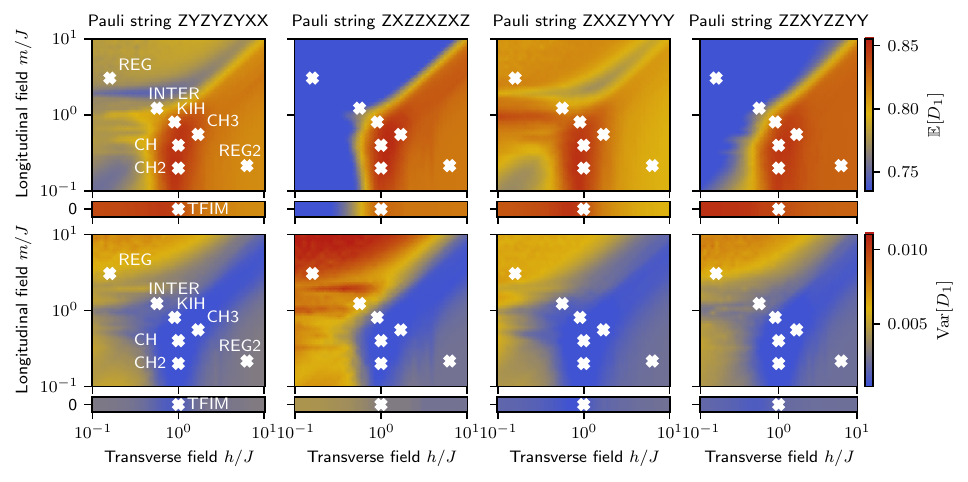}
    \caption{
    Fractal dimension analysis of the mixed-field Ising model~\eqref{eq:ising_model} for $n=8$ qubits. The upper and lower rows show, respectively, mean $\mathbb{E}[D_1]$ and variance $\text{Var}[D_1]$ of the first fractal dimension, Eq.~\eqref{eq:fractal_dimensions}, evaluated over the inner $80\%$ of the spectrum, as a function of transverse and longitudinal field strengths.
    The four columns correspond to four different Pauli strings (given in each column's title) which define the basis in which the fractal dimension is evaluated.
    Large mean values $\mathbb{E}[D_1]$
    in conjunction with small $\text{Var}[D_1]$ are a signature of quantum chaos.
    We identify a clear region of uniform eigenstate delocalization below the center ($m = h = 1$) of the phase diagrams, which is present for all considered Pauli bases.
    Within the phase diagram, we identify eight parameter choices (white crosses), with broadly varying eigenstate characteristics, of interest, which we will consider in more depth in the following convergence analysis. They are summarized in Tab.~\ref{tab:hamiltonian_parameters}.
    } \label{fig:app_spectral_statistics_n_qubits8}
\end{figure*}

Eigenstate thermalization is expected to accurately describe the behavior of quantum chaotic systems~\cite{Srednicki1999}.
Typical signatures of quantum chaos comprise, for example, spectral statistics~\cite{oganesyan_localization_2007,atas_distribution_2013} or the delocalization behavior of energy eigenstates~\cite{kolovsky_quantum_2004,atas_multifractality_2012,beugeling_global_2015,atas_quantum_2017,beugeling_statistical_2018,pausch_chaos_2021,brunner_many-body_2023}.
We identify sets of parameters $(h,m)$ of the mixed-field Ising model~\eqref{eq:ising_model} with distinct quantum chaotic properties by following the approach advocated in \cite{pausch_chaos_2021,brunner_many-body_2023}, based on delocalization of energy eigenstates.
This approach directly assesses the crucial properties underlying the ETH, while simultaneously being independent of a specific choice of observables, i.e. of the Lindbladian jump operators in our case.

\subsubsection{Eigenstate delocalization and fractal dimension}
Quantum chaos is signaled by a strong and uniform delocalization of a substantial part of the energy eigenstates in almost any basis of the Hilbert space (excluding pathological cases such as the eigenbasis itself) \cite{kolovsky_quantum_2004,atas_multifractality_2012,beugeling_global_2015,atas_quantum_2017,beugeling_statistical_2018,pausch_chaos_2021,brunner_many-body_2023}.
A way to quantify eigenstate delocalization in a given Hilbert space basis $\ket{\zeta}$ is via so-called fractal dimensions.
For an eigenstate $\ket{E_i} = \sum_{\zeta = 1}^{\dim\calH} \psi_\zeta^i \ket{\zeta}$ of~\eqref{eq:ising_model} the finite-size generalized fractal dimensions are defined by
\begin{equation}
\label{eq:fractal_dimensions}
    D_q^{(i)} = \frac{1}{1-q} \log_{\dim \calH} R_q^{(i)} \,, \quad R_q^{(i)} = \sum_{\zeta=1}^{\dim \calH} |\psi^i_\zeta|^{2q}\,, \quad q \in \mathbb{R} \,,
\end{equation}
where $\dim \calH$ is the dimension of the Hilbert space.
In the following we will omit the superscript $(i)$ of $D_q$ if it is not necessary.
For $q=1$ the definition is given by taking the limit, $D_1 = \lim_{q \rightarrow 1} D_q$, which coincides with the normalized Shannon entropy of the probability distribution $\{|\psi_\zeta|^2\}_\zeta$.
In the large-system limit one can identify three different regimes: localized (the eigenstate is supported only on a vanishingly small portion of basis states, with $\lim_{\dim\calH \rightarrow \infty} D_q = 0$ for all $q \geq 1$), multifractal or extended non-ergodic ($0 \leq \lim_{\dim\calH \rightarrow \infty} D_q \leq 1$ dependent on $q$) and ergodic ($\lim_{\dim\calH \rightarrow \infty} D_q = 1$ for all $q$, which holds for the limit of the uniform distribution $|\psi^i_\zeta|^2 = 1/\dim\calH$ for all $\zeta$).

For uniform eigenstate delocalization in a given expansion basis, two statistical signatures are crucial: Firstly, a large average fractal dimension $\mathbb{E}[D_q]$, where the average $\bE$ is taken over the eigenstates $\{\ket{E_i}\}_i$, indicates that almost all eigenstates are strongly delocalized. Secondly, a small variance $\Var[D_q]$ over the spectrum signals that this delocalization happens uniformly over all eigenstates. In the following we will focus on the first fractal dimension, $q=1$, which is sufficient for our purposes.

To gain a comprehensive picture of the dynamical landscape of the mixed-field Ising model~\eqref{eq:ising_model}, we analyze $D_1$ in the full parameter space of our Hamiltonian.
To filter out potentially untypical eigenstates at the spectral edges, we focus on the bulk of the spectrum, and compute mean $\mathbb{E}[D_1]$ and variance $\text{Var}[D_1]$ over the inner $80\%$ (in terms of energy) of the eigenstates.
To consolidate our analysis in Fig.~\ref{fig:spectral_statistics_n_qubits8}, where we consider the fractal dimension in the $n$-qubit $Z$-basis, we evaluate $D_1$ in four additional random $n$-qubit Pauli bases. The results are shown in Fig.~\ref{fig:app_spectral_statistics_n_qubits8} for $n=8$.
The four columns correspond to the four chosen Pauli bases, given in the column titles.
For each column, we show the mean value $\mathbb{E}[D_1]$ (top row) and variance $\text{Var}[D_1]$ (bottom row), for the Ising Hamiltonian~\eqref{eq:ising_model} in the parameter range $m \in \{0\} \cup [10^{-1}, 10^1]$, $h \in [10^{-1}, 10^1]$.
As before, we also include the value $m=0$, which corresponding to the (integrable) transverse-field Ising model.

As in Fig.~\ref{fig:spectral_statistics_n_qubits8}, the parameter domain $0.7 \lesssim h/J \lesssim 2$, $0.2 \lesssim m/J \lesssim 0.9$ shows strong uniform eigenstate delocalization for all considered Pauli bases.
In this regime we observe a consistent maximization of $\bE[D_1]$, which is accompanied by a drop of variance $\Var[D_1]$, signaling uniform properties of the eigenstates over the bulk of the spectrum.
The fact that this happens independently of the chosen basis, confirms that this region is indeed characterized by strong quantum chaotic behavior.
Consequently, we expect our model to align well with the ETH predictions in this parameter regime.
Note that the remaining parameter regions show different signatures for each chosen basis. This is expected. The fractal dimensions are, by construction, basis-dependent and we expect traces of this basis-dependence to be visible within the non-chaotic regimes of the model.
Physically, we can identify three \textit{regular limits} of the Ising model~\eqref{eq:ising_model}; $m/J \rightarrow \infty$,  $h/J \rightarrow \infty$ or $m, h\rightarrow 0$.
In these limits the Hamiltonian is effectively given by only a single term.
In general, we expect ETH to be not applicable in the regular limits.
In the limit $m/J \rightarrow 0$, the mixed-field Ising model approaches the transverse field Ising model, which can be mapped to free fermions via the Jordan-Wigner transformation. This is an integrable (i.e. non-chaotic) model---however, it can exhibit critical behavior.

\begin{table}
\centering
\begin{tabular}{c|c|ccccccc}
    Key & Hamiltonian parameters & Step size
    &$n=3$ &$n=4$ &$n=5$ &$n=6$ &$n=7$ &$n=8$ \\ \hline
    \texttt{TFIM} &
    $h/J = 1.0$, $m/J = 0.0$ &
    &0.25 &0.25 &0.125 &0.125 &0.125 &0.125 \\
    \texttt{CH2} &
    $h/J = 1.0$, $m/J = 0.2$ &
    &0.25 &0.25 &0.125 &0.125 &0.125 &0.125 \\
    \texttt{CH} &
    $h/J = 1.0$, $m/J = 0.4$ &
    &0.25 &0.25 &0.125 &0.125 &0.125 &0.125 \\
    \texttt{KIH} &
    $h/J = 0.9045$, $m/J = 0.8090$ &
    $J\Delta t = $
    &0.25 &0.125 &0.125 &0.125 &0.125 &0.0625 \\
    \texttt{REG} &
    $h/J = 0.1585$, $m/J = 3.062$ &
    &0.125 &0.0625 &0.0625 &0.0625 &0.0625 &0.03125 \\
    \texttt{INTER} &
    $h/J = 0.5623$, $m/J = 1.230$ &
    &0.25 &0.125 &0.125 &0.125 &0.0625 &0.0625 \\
    \texttt{CH3} &
    $h/J = 1.698$, $m/J = 0.5551$ &
    &0.125 &0.125 &0.125 &0.0625 &0.0625 &0.0625 \\
    \texttt{REG2} &
    $h/J = 6.310$, $m/J = 0.2158$ &
    &0.0625 &0.03125 &0.03125 &0.03125 &0.03125 &0.015625 \\
\end{tabular}
\caption{
Summary of the Hamiltonian parameter configurations considered in our numerical experiments.
For each configuration, we give the label, the corresponding coordinates $h/J$ and $m/J$, and the time step sizes for the considered values of $n$.
The step sizes are determined by our numerical scheme as described in App.~\ref{app:numerics_details}.}
\label{tab:hamiltonian_parameters}
\end{table}

\subsubsection{Hamiltonian parameter points for the numerical investigation}
To investigate the dynamical regimes of the mixed-field Ising model, we consider several parameter configurations, covering different dynamical regimes of the Hamiltonian~\eqref{eq:ising_model}, for our simulations.
These configurations are shown in Fig.~\ref{fig:app_spectral_statistics_n_qubits8} as white crosses with corresponding labels.
We consider points along an anti-diagonal through the center of the phase diagram ($m = h = 1$) (\texttt{REG}, \texttt{INTER}, \texttt{CH3}, \texttt{REG2}).
The points \texttt{REG} and \texttt{REG2} are exemplary for the two regular model limits, where either the longitudinal or the transverse field dominates the dynamics, cf. Eq.~\eqref{eq:ising_model}.
The point \texttt{CH3} is located within the chaotic domain of the model.
Additionally, we investigate the point \texttt{KIH}, considered in
\cite{kim_ballistic_2013} as a ``robustly non-integrable'' parameter point based on the distribution of level spacing ratios.
From the standpoint of eigenstate delocalization, this parameter point seems to be located close to the edge of chaos.
We further explore the transition towards the transverse-field Ising model ($m \rightarrow 0$) with the three points \texttt{TFIM}, \texttt{CH2} and \texttt{CH}, located on the vertical line $h=1$ for three different values of $m$.
All eight parameter points are summarized in Tab.~\ref{tab:hamiltonian_parameters}.
According to our fractal dimension analysis, we expect that the configurations \texttt{CH}, \texttt{CH2} and \texttt{CH3}, show the clearest signatures of quantum chaos and, therefore, that ETH is applicable at these points.

\subsubsection{Distribution of level spacing ratios}
\label{sec:app_mean_level_spacing_ratio}

\begin{figure*}
    \centering
    \includegraphics[width=0.52\textwidth]{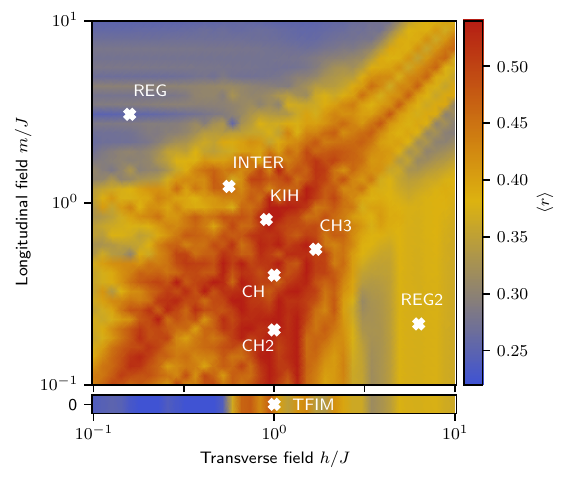}
    \caption{
    Mean level spacing ratio of the even-parity symmetry sector of the mixed-field Ising model~\eqref{eq:ising_model} for $n=8$ qubits. The level spacing ratio distriution is computed in the even parity symmetry sector of the model.
    In the quantum chaotic parameter region, its mean value $\expval{r}$ approaches the prediction from the Gaussian orthogonal ensemble (GOE), given by $\expval{r}_\mathrm{GOE} = 0.5307$ \cite{pausch_chaos_2021,atas_distribution_2013}.
    We observe that the region identified as quantum chaotic by the mean level spacing ratio is well-aligned with the regime identified from the eigenstate delocalization analysis shown in Fig.~\ref{fig:app_spectral_statistics_n_qubits8}.
    The considered parameter points used for our numerical investigation (white crosses) are shown again for comparison. They are summarized in Tab.~\ref{tab:hamiltonian_parameters}.
    } \label{fig:app_spectral_statistics_level_spacing_ratio_n_qubits8}
\end{figure*}

For comparison, we provide numerical data for another frequently employed signature of quantum chaos, the level spacing ratio distribution $r$~\cite{oganesyan_localization_2007,atas_distribution_2013}.
Let $E_i$ denote the ordered energies (in ascending order) of the Hamiltonian, in our case the mixed-field Ising model~\eqref{eq:ising_model}, and $s_i = E_{i+1} - E_i$ the $i$-th level spacing.
The level spacing ratios are defined by $r_i = \mathrm{min}(s_{i+1} / s_i, s_i / s_{i+1})$.
In quantum chaotic systems, the distribution of level spacing ratios follows the corresponding distribution of an appropriate ensemble of random matrices.
In our case of the mixed-field Ising model~\eqref{eq:ising_model}, which is time-reversal symmetric, the appropriate ensemble is the Gaussian orthogonal ensemble (GOE).
Typically, one simply considers the mean value of level spacing ratios $\expval{r}$ as quantum chaos signature, which is predicted to be $\expval{r}_\mathrm{GOE} = 0.5307$ by GOE.

Figure~\ref{fig:app_spectral_statistics_level_spacing_ratio_n_qubits8} shows $\expval{r}$ for the mixed-field Ising model in the parameter range $m \in \{0\} \cup [10^{-1}, 10^1]$, $h \in [10^{-1}, 10^1]$, as considered in Fig.~\ref{fig:app_spectral_statistics_n_qubits8}.
Note that symmetries of the model will strongly alter the distribution of energy levels, hence one first has to remove those symmetries, respectively work in the individual symmetry sectors.
Our considered mixed-field Ising model with open boundary conditions~\eqref{eq:ising_model} is symmetric with respect to reflection at the center of the chain. We remove this symmetry by projecting the Hamiltonian to the even parity symmetry sector.
The observed region of quantum chaos, $\expval{r} \approx \expval{r}_\mathrm{GOE}$, aligns well with the parameter region identified in Fig.~\ref{fig:app_spectral_statistics_n_qubits8} by the transition to uniformly delocalized energy eigenstate.

\subsection{Mixing time and convergence accuracy}
\label{app:numerics_mixing_time}

\begin{figure*}
    \centering
    \includegraphics[width=\textwidth]{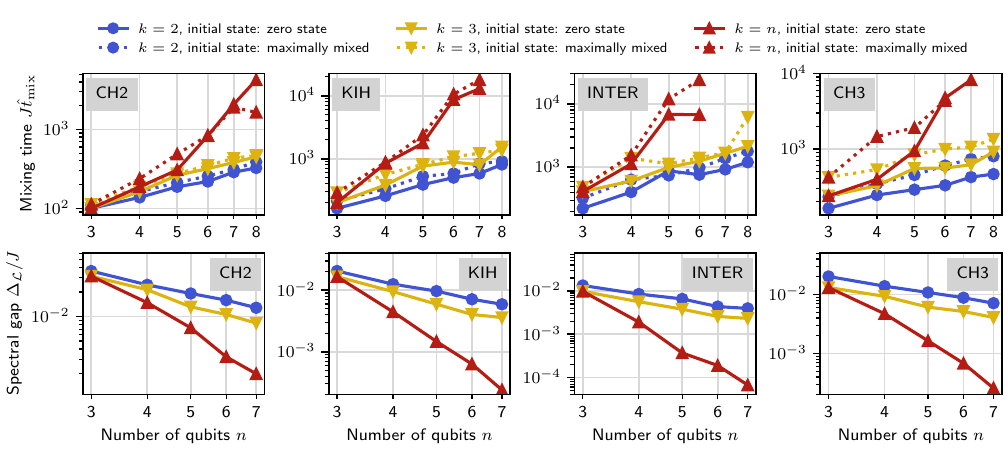}
    \caption{
    Scaling of mixing time (top row) and spectral gap (bottom row) for varying $k = 2,3,n$, indicated by color, and two different initial states, the maximally mixed state $I/2^n$ and the zero state $\ket{0}^{\otimes n}$.
    We observe a general increase of the mixing time, and a corresponding decrease of $\Delta_\calL$, with the locality $k$ of the jump operators. Especially non-local jump operators, $k=n$, show a significantly faster increase (decrease) of the mixing time (spectral gap), as compared to local operators.
    In most cases, the zero state converges faster to the steady state.
    This effect is most prominent for the chaotic Hamiltonian parameter point \texttt{CH3}, see Tab.~\ref{tab:hamiltonian_parameters}.
    }
    \label{fig:app:mixing_time_appendix_random_k_body_Pauli}
\end{figure*}

We extend our analysis of mixing time and convergence accuracy in Secs.~\ref{sec:numerics:mixing_time} and~\ref{sec:numerics:convergence_accuracy}.
Especially, we take a closer look at the influence of the locality $k$ of our jump operators and of the initial state on the Lindblad evolution
in App.~\ref{app:non-local_jump_operators}.
In App.~\ref{app:bohr_frequencies} we discuss the influence of the Bohr frequencies and the initial state's energy distribution on convergence accuracy and mixing time.
We also investigate the influence of the coherent term $-\im[H, \rho]$ in the Lindbladian on the mixing dynamics
in App.~\ref{app:coherent_term}.

\subsubsection{Non-local jump operators and initial state}
\label{app:non-local_jump_operators}

\begin{figure*}
    \centering
    \includegraphics[width=0.99\textwidth]{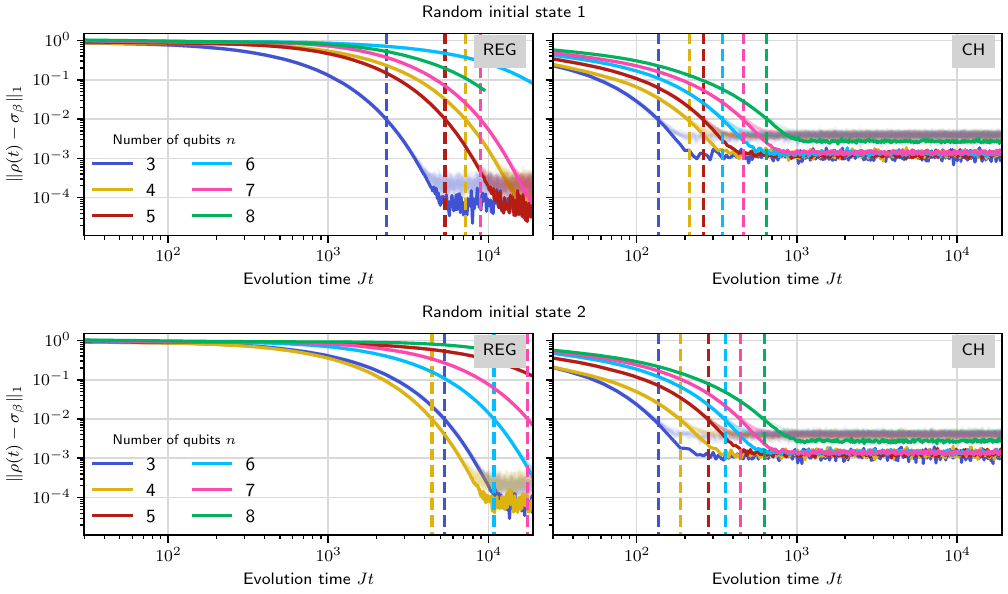}
    \caption{
    Convergence of the Lindblad dynamics for two random pure initial states (rows).
    We plot trace distance versus evolution time (in units of $1/J$) for Hamiltonian parameters \texttt{REG} (left) and \texttt{CH} (right), as in Fig.~\ref{fig:time_evolution_random_k_body_Pauli_k_order2}.
    For \texttt{CH}, the mixing time is comparable for both states (as also for the maximally mixed initial state considered in Fig.~\ref{fig:time_evolution_random_k_body_Pauli_k_order2}), and increases monotonically with system size $n$.
    For the point \texttt{REG}, the mixing time strongly fluctuates with varying $n$.
    The mechanism behind this is explained in more depth in App.~\ref{app:bohr_frequencies}.
    }
    \label{fig:app:random_initial_states}
\end{figure*}

The upper row of Fig.~\ref{fig:app:mixing_time_appendix_random_k_body_Pauli} shows the scaling of the mixing time estimate $\hat{t}_\mathrm{mix}$~\eqref{eq:mixing_time} with $n$, for two initial states and three values of $k=2,3,n$.
As Hamiltonian parameter points we consider \texttt{CH2}, \texttt{KIH}, \texttt{INTER} and \texttt{CH3}, cf. Tab.~\ref{tab:hamiltonian_parameters}, i.e. the parameter sets that are not considered in Sec.~\ref{sec:numerics}.
Note that the point \texttt{REG2}, which is dominated by the transverse field, fails to converge below convergence accuracy $\Vert \sigma_\beta - \rho(t) \Vert_1= 10^{-2}$ within the maximal time horizon considered in our simulations for all cases, see App.~\ref{app:numerics_details} for details.
Therefore, we do not have mixing time estimates for this point.
We fix $|\bm{A}| = 50$ random $k$-local Pauli jump operators, cf. Sec.~\ref{sec:numerics}, in Fig.~\ref{fig:app:mixing_time_appendix_random_k_body_Pauli}.
As an initial state, we compare the maximally mixed state $I/2^n$ (solid lines), with $I$ the $2^n\times 2^n$ identity, to the zero state $\ket{0}^{\otimes n}$ (dotted lines). The different values of $k$ are indicated by color.
Firstly, we observe that the chaotic parameter points \texttt{CH2} and \texttt{CH3} show faster convergence than the other two points, i.e. shorter mixing time $\hat{t}_\mathrm{mix}$. The point \texttt{KIH}, according to our analysis in Fig.~\ref{fig:app_spectral_statistics_n_qubits8} located at the edge of chaos, still shows a faster mixing than the non-chaotic point \texttt{INTER}.
These observations are in line with the results discussed in Fig.~\ref{fig:mixing_time_spectral_gap_random_k_body_Pauli_k_order2}.
The locality of the jump operators has a huge influence on the mixing time.
Note that the ETH ansatz~\eqref{eq:ETH} assumes local operators with $k \ll n$.
Indeed, for local jump operators, $k=2,3$, $\hat{t}_\mathrm{mix}$ scales polynomially in $n$ with approximately the same small exponent for both initial states and both values of $k$.
This confirms our observations Fig.~\ref{fig:mixing_time_spectral_gap_random_k_body_Pauli_k_order2}, showing that our analytical bounds are applicable in this case.
In turn, non-local jump operators, $k = n$, show a a much faster growth of the mixing time with $n$ (for both initial states), potentially even with a super-polynomial dependency on $n$.
To validate this observation, simulations of larger system sizes would be necessary.
Regarding the influence of the initial state, for the chaotic point \texttt{CH3}, we observe that choosing the zero state as the initial state gives consistently shorter mixing times as compared to starting from the maximally mixed state. For the remaining parameter points, this effect is less pronounced, but we still observe a weak increase in mixing time for the maximally mixed state in most cases.

In the lower row of Fig.~\ref{fig:app:mixing_time_appendix_random_k_body_Pauli} we show the spectral gap $\Delta_\calL$ as a function of $n$ for the same Hamiltonian parameter points.
For $k=2,3$ we observe, similar to Fig.~\ref{fig:mixing_time_spectral_gap_random_k_body_Pauli_k_order2}, a polynomial decrease of the gap with $n$, in line with our theoretically predicted bound~\eqref{eq:L_gap}.
For non-local jump operators, $k=n$, we observe a much faster decay of $\Delta_\calL$ with $n$.
Whether this decay is polynomially or super-polynomially, we cannot reliably assess based on the limited number of data points.
As expected, the gap mirrors the behavior of the mixing time; larger mixing times correspond to smaller gaps, and vice verse.
This supports the theoretical bound~\eqref{eq:mixing_time_and_spectral_gap}, stating that the mixing time is mainly controlled by the spectral gap of the Lindbladian generator.

Figure~\ref{fig:app:mixing_time_appendix_random_k_body_Pauli} shows that for the considered Hamiltonian parameters, the initial state does not have a strong influence on the mixing time. Moreover, mixing time increases, as expected, monotonically with system size.
Note, however, that the four parameter combinations are all sufficiently far away from the regular limits of our model, and the above described behavior can strongly change in these limits.
For example, the left panel of Fig.~\ref{fig:time_evolution_random_k_body_Pauli_k_order2} shows that for the regular point \texttt{REG}, mixing time does not monotonically increase with system size, e.g. for $n=5$ the dynamics converges faster to the steady state than for $n=3$.

To underline this effect, we compare in Fig.~\ref{fig:app:random_initial_states} the convergence dynamics for parameter points \texttt{REG}, \texttt{CH} and for two pure initial states sampled from the Haar distribution.
While for the chaotic point \texttt{CH}, the mixing time of all initial states shows roughly the same dependence on system size $n$, for \texttt{REG} the mixing time strongly fluctuates as $n$ is varied.
For example, $n=8$ for random state 1 shows comparably fast mixing, while $n=5$ for random state 2 mixes extremely slowly.
We observed similar effects for various other random initial states.
The reason behind this is a strong sensitivity of the dynamics to the initial state's energy distribution in the regular limits.
We describe this phenomenon in more detail in App.~\ref{app:bohr_frequencies} by connecting the initial states' energy distribution with the Bohr frequency spectrum of the Hamiltonian.

\subsubsection{Bohr frequencies and convergence}
\label{app:bohr_frequencies}

\begin{figure*}
    \centering
    \includegraphics[width=\textwidth]{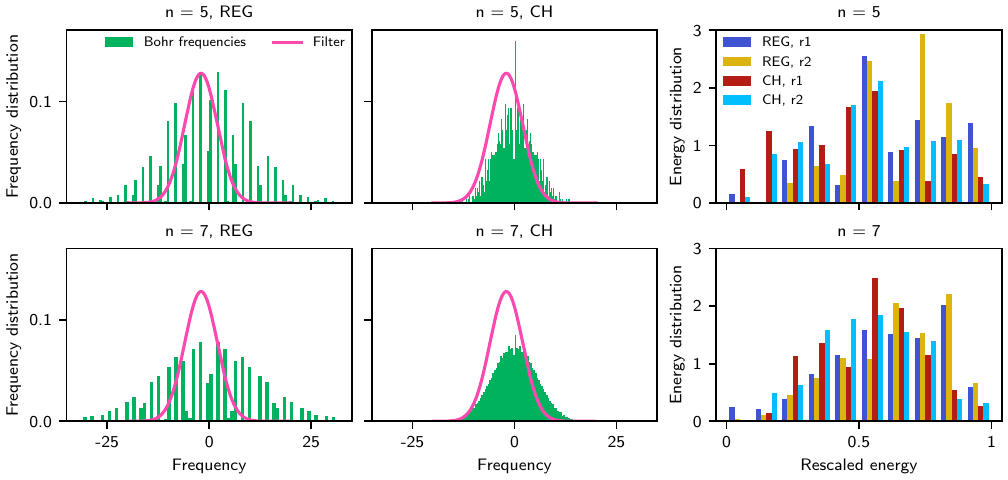}
    \caption{
    Bohr frequencies and initial state energy distribution.
    The left and middle column show the distribution of Bohr frequencies (as normalized histograms) of the mixed-field Ising model~\eqref{eq:ising_model} for parameter points \texttt{REG} and \texttt{CH} and $n=5, 7$ (rows). The pink curves schematically depict the Gaussian filter function~\eqref{eq:eta_nu} of our Lindblad operators~\eqref{eq:L_operators} in the frequency domain.
    The regular limit \texttt{REG} shows a discontinuous distribution due to many energy levels that are almost degenerate. In comparison, the Bohr frequency distribution of the chaotic point \texttt{CH} is smooth with a Gaussian-like shape.
    The right column shows the energy distribution (as normalized histograms) of two exemplary random pure initial states (state 1 and state 2 in Fig.~\ref{fig:app:random_initial_states}) for the two parameter points.
    The interplay between initial energy distribution, Bohr frequencies and filter function determines the mixing properties of the dynamics.
    In the regular point \texttt{REG} the initial state has a much stronger effect on the dynamics due to the discreteness of the Bohr frequencies, as can be seen in Fig.~\ref{fig:app:random_initial_states}.
    }
    \label{fig:app_bohr_frequencies}
\end{figure*}

\paragraph{Mixing time}
The left and middle column of Fig.~\ref{fig:app_bohr_frequencies} show the distribution of Bohr frequencies for Hamiltonian parameters \texttt{REG} and \texttt{CH} and $n=5, 7$.
A schematic representation of the filter function $\eta_\nu$, see Eqs.~\eqref{eq:g_gaussian} and~\eqref{eq:eta_nu}, in the frequency domain is added as pink curve.
For \texttt{REG} there are many energy levels, and correspondingly Bohr frequencies, that are almost degenerate, resulting in a strongly discretized and broadly spaced distribution, with large regions that are not covered by Bohr frequencies of the model.
In comparison, for the chaotic point \texttt{CH} there are no degeneracies and the Bohr frequency distribution follows an almost continuous and densely packed Gaussian-like shape.

The mixing time of the system, i.e., the ability of the system to transform an initial energy distribution to the Gibbs distribution, is controlled by those Bohr frequencies that are selected by the filter function $\eta_\nu$.
For \texttt{REG}, due to the discreteness of the distribution, there are only a few selected Bohr frequencies that participate effectively in the dynamics.
Thus, the mixing time strongly depends on how these few Bohr frequencies can transform the initial states' energy distribution to the Gibbs distribution.
On the other hand, the broadband of Bohr frequencies that are active in the chaotic point \texttt{CH} ensures that the mixing time is to relatively robust to the energy distribution of the initial state.
To exemplify this, we show in the right column of Fig.~\ref{fig:app_bohr_frequencies} the energy distribution of random initial state 1 (\texttt{r1}) and random initial state 2 (\texttt{r2}) used in Fig.~\ref{fig:app:random_initial_states} (identified in the legend), for \texttt{REG} and \text{CH}.
To compare different system sizes and parameter points, we rescale energy to the interval $[0, 1]$ and sample $10^4$ energies from the energy distribution of each initial state. The resulting distributions are shown as histograms with $10$ bins on the rescaled energy axis.
We observe that in the regular point \texttt{REG}, the state \texttt{r2} (yellow bars) shows a relatively strong concentration of the distribution for $n=5$ in the sixth and eighth bin, while the corresponding distribution for $n=7$ is broader.
This statistical feature is accompanied by a larger mixing time for $n=5$ than for $n=7$, as visible from the red and pink curves in the left column of Fig.~\ref{fig:app:random_initial_states}.
In general, the precise in interplay of energy distribution, Bohr frequencies and the filter function is complex, so concrete mixing time predictions will be difficult to make for points such as \texttt{REG}, with many almost degenerate levels.

\begin{figure*}
    \centering
    \includegraphics[width=\textwidth]{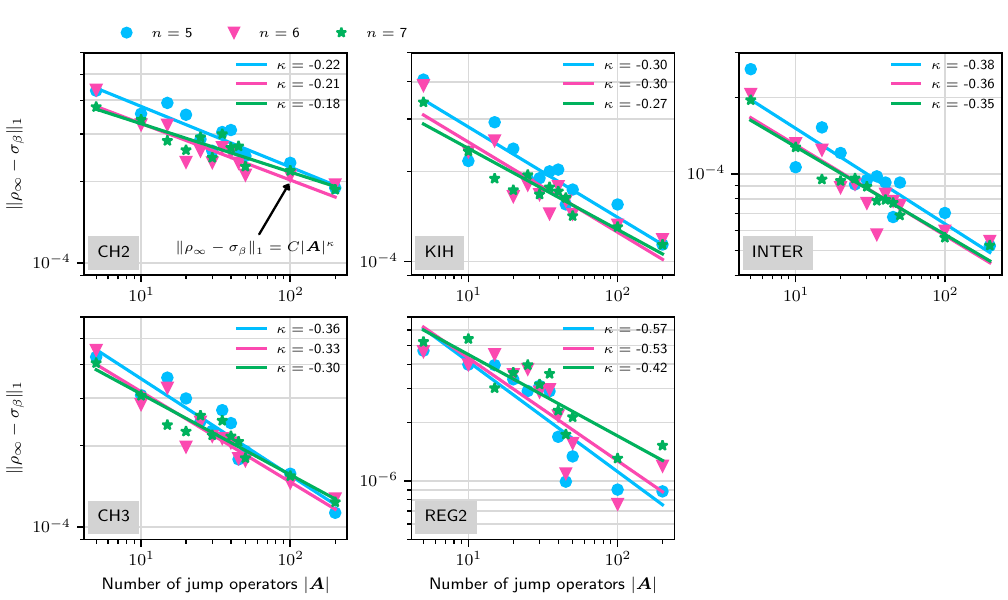}
    \caption{
    Scaling of the convergence accuracy $\Vert \rho_\infty - \sigma_\beta \Vert_1$ with the number of jump operators $|\bm{A}|$, for the parameter configurations not considered in Fig.~\ref{fig:distance_steady_gibbs_random_k_body_Pauli_k_order2}, and $n=5,6,7$ (denoted by color).
    We consider $(k=2)$-local jump operators.
    As in Fig.~\ref{fig:distance_steady_gibbs_random_k_body_Pauli_k_order2}, we observe a polynomial decay with $|\bm{A}|$.
    The leading order polynomial exponents $\kappa$ (obtained from a fit) are given in the legend.
    We observe $-0.3 < \kappa$ for the close-to-chaotic configurations \texttt{CH2} and \texttt{KIH}, and slightly larger (in modulus) $\kappa$ for the less chaotic points \texttt{INTER} and \texttt{REG2}.
    In turn, these points, generally, show lower distances $\Vert \rho_\infty - \sigma_\beta \Vert_1$, i.e. a higher accuracy.
    For \texttt{REG2}, the distance is around two orders of magnitude smaller than in the chaotic regime. The same effect was observed for the other regular limit \texttt{REG} of our model~\eqref{eq:ising_model}, cf. Fig.~\ref{fig:distance_steady_gibbs_random_k_body_Pauli_k_order2}.
    We give an explanation of this effect in App.~\ref{app:numerics_mixing_time}.
    }
    \label{fig:app_distance_steady_gibbs_random_k_body_Pauli_k_order2}
\end{figure*}

\paragraph{Convergence accuracy}
Let us turn to the convergence accuracy of the Lindblad evolution, i.e. the trace distance $\Vert \sigma_\beta - \rho_\infty \Vert_1$ between Gibbs state $\sigma_\beta$, Eq.~\eqref{eq:gibbs_state}, and the steady state $\rho_\infty$ of $\calL$, Eq.~\eqref{eq:lindbladian_general}.
In Fig.~\ref{fig:app_distance_steady_gibbs_random_k_body_Pauli_k_order2} we plot this distance as a function of $|\bm{A}|$ for varying $n = 5,6,7$, indicated by color. We show results for the five Hamiltonian parameter points not discussed in the main text in Sec.~\ref{sec:numerics}.
We set the locality to $k=2$. Other values of $k$ exhibit a similar behavior.
In all cases we observe a polynomial decrease of $\Vert \sigma_\beta - \rho_\infty \Vert_1$ with the number of jump operators $|\bm{A}|$, which is qualitatively in line with our analytical prediction in Eq.~\eqref{eq:upper_bound_trace_distance}.
The approximate leading order exponent $\kappa$ of the decay is obtained from a linear fit in the double-logarithmic scale and is given for each curve in the legend.
We generally observe, as in Fig.~\ref{fig:distance_steady_gibbs_random_k_body_Pauli_k_order2}, slightly smaller exponents (in absolute value) between $\kappa \approx -0.2$ and $\kappa \approx - 0.4$ for all parameter points except the point \texttt{REG2}.
In comparison, our upper bound~\eqref{eq:upper_bound_trace_distance} predicts an exponent of $\kappa = -1/2$.
The parameter point \texttt{REG2} shows the predicted scaling with good accuracy. However, this point resides in a regular limit of our model~\eqref{eq:ising_model}, where the Hamiltonian is mainly controlled by the transverse field only, and we cannot expect ETH to be a faithful description for the system at this point.

Similar to Fig.~\ref{fig:distance_steady_gibbs_random_k_body_Pauli_k_order2}, we observe a trace distance $\Vert \sigma_\beta - \rho_\infty \Vert_1$ of order $10^{-4}$.
Only the regular limit \texttt{REG2}, similar to the other regular limit \texttt{REG} in Fig.~\ref{fig:distance_steady_gibbs_random_k_body_Pauli_k_order2}, shows a significantly lower trace distance of order $10^{-6}$.
First, let us consider \texttt{REG} by setting $(h/J,m/J)=(0,3)$ to understand its qualitative convergence behavior. The corresponding Hamiltonian has the Bohr frequencies, $B_H=\{\pm 2\ell_J J\pm 2\ell_m m\}_{\ell_J,\ell_m\in\{0,\pm1,\pm2,\dots\}}$. The Lindblad operators are given by Eq.~\eqref{eq:Gibbs_filter_appA},
\begin{align}
    L^a 
    = 
    \sum_{\nu\in B_H}\eta_\nu A_\nu^a
    \approx
    \eta_{-2J}A_{-2J}^a.
\end{align}
We used that the jump operator with $\nu=-2J$ dominantly contributes because the others are suppressed by $\eta_\nu\propto\e^{-(\beta\nu+1)^2/8}$ [Eq.~\eqref{eq:eta_nu}] for $\beta=(2J)^{-1}$.
With this Lindblad operator, the dissipative part of the Lindbladian~\eqref{eq:lindbladian_dissipative} approximately takes the Davies form,
\begin{align}
    \calD[\rho] 
    \approx
    \sum_{a\in\bm{A}}\gamma_a 
    \eta_{-2J}^2\Big(A_{-2J}^a\rho A_{-2J}^{a\dag} 
    -\frac{1}{2}\{A_{-2J}^{a\dag}A_{-2J}^a,\rho\}\Big).
\end{align}
One can readily show that the approximated generator (right-hand side) is exactly $\sigma_\beta$-DB (i.e. $\sigma_\beta$ being the exact steady state).
Thus, the steady state of $\calL$ is comparably close to the Gibbs state.
Similarly, we can qualitatively understand the convergence at \texttt{REG2} by considering $(h/J,m/J)=(6,0)$. At the leading order in $h/J$, where the energy eigenstates are those of $h\sum_iX_i$, the Bohr frequencies are $B_H=\{\pm 2\ell_h h\}_{\ell_h\in\mathbb{Z}}$. The Lindblad operators are given by Eq.~\eqref{eq:Gibbs_filter_appA}, $L_a \approx \eta_{-2h}A_{-2h}^a$ with $\eta_{-2h}\propto\e^{-(-6\beta J+1)^2/8}$. This again explains why the convergence accuracy is high at \texttt{REG2}. Moreover, since even the most dominant transition rate is strongly suppressed by $\eta_{-2h}$, the applications of the jump process rarely happen, and hence, the mixing time is large.

\subsubsection{Influence of the coherent term}
\label{app:coherent_term}

\begin{figure*}
    \centering
    \includegraphics[width=0.99\textwidth]{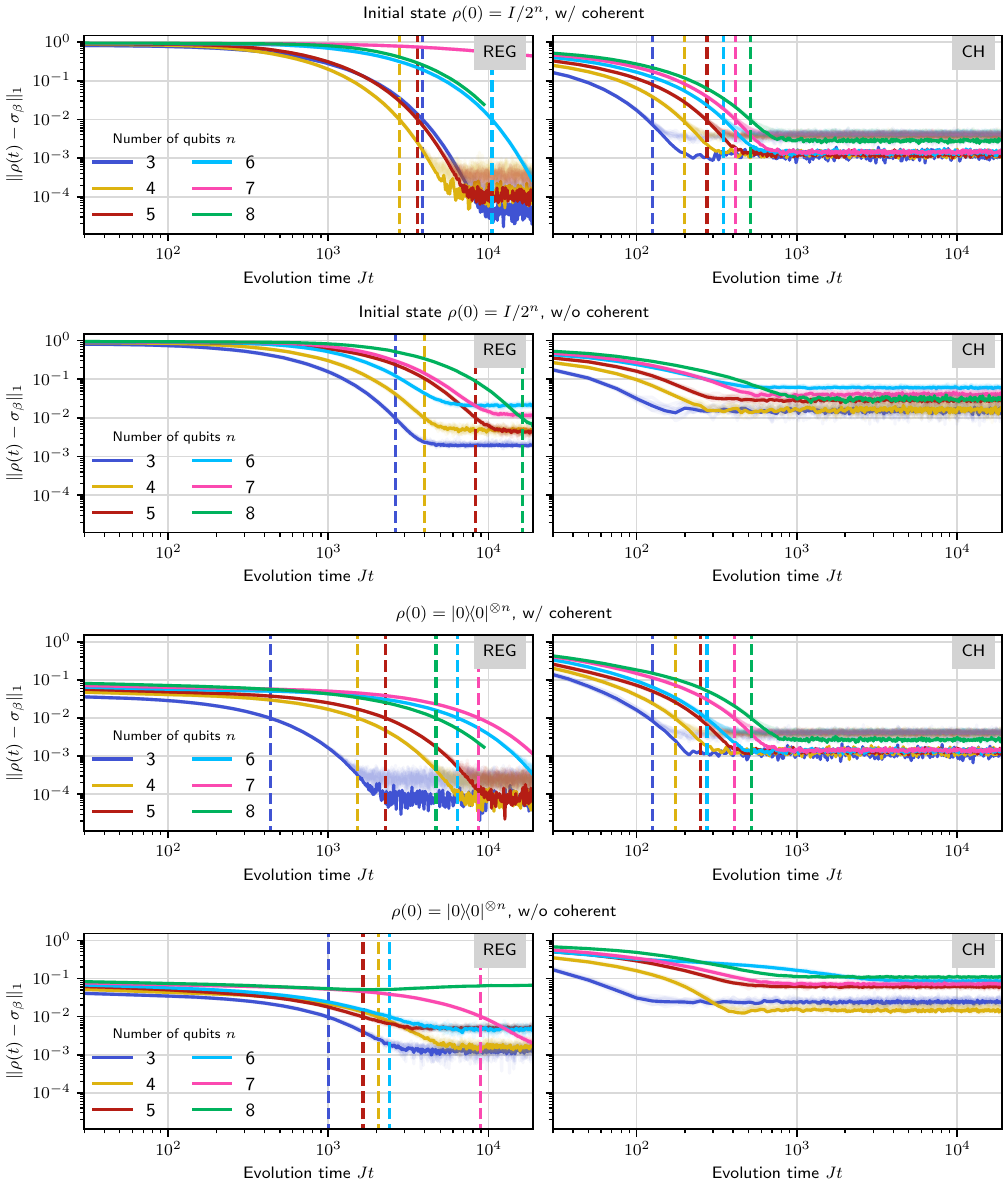}
    \caption{
    Influence of the coherent term on the convergence of the Lindblad dynamics. We plot trace distance versus evolution time (in units of $1/J$) with parameters as in Fig.~\ref{fig:time_evolution_random_k_body_Pauli_k_order2}. The four rows show results for: (i) $\rho(0) = I/2^n$, with coherent term; (ii) $\rho(0) = I/2^n$, without coherent term; (iii) $\rho(0) = \op{0}^{\otimes n}$, with coherent term; (iv) $\rho(0) = \op{0}^{\otimes n}$, without coherent term (with $I$ being the $2^n\times 2^n$ identity operator).
    For both initial states and both parameter configurations, inclusion of the coherent term significantly improves the convergence towards the target Gibbs state $\sigma_\beta$.
    The speed of convergence, as discernible from the early time behavior of the curves, appears to be less influenced by the coherent term.
    }
    \label{fig:app:time_evolution_influence_coherent_term}
\end{figure*}

Note that the coherent term $-\im[H, \rho]$ of the Lindbladian $\calL$ does not enter explicitly our derivation of the spectral gap bound and the fluctuations around the ETH average in Secs.~\ref{app:ssec:gap_EL} and~\ref{app:ssec:concentration_ETH_average}.
This is because the coherent term only contributes a complex phase in the calculations and, thus, does not influence the decay behavior of the classical Markov chain, cf. Sec.~\ref{sec:app:map_classical_mc}, capturing the action of the ETH-averaged Lindbladian $\bE_R\calL$.
Moreover, in Sec.~\ref{app:ssec:concentration_ETH_average}, the influence of the coherent term directly drops out in the difference between $\calL$ and the ETH-averaged Lindbladian
$\bE_R\calL$.
Nevertheless, the coherent term can have a significant influence on the convergence speed of the Lindblad dynamics in practice, see for example~\cite{Ding2023,fang_mixing_2024}.

In the following we numerically study the influence of the coherent term on the evolution.
Figure~\ref{fig:app:time_evolution_influence_coherent_term} compares the dynamics for two exemplary parameter configurations, \texttt{REG} and \texttt{CH}, with and without the coherent term for the Lindbladian~\eqref{eq:lindbladian_general}.
In the four rows we plot the time evolution of the trace distance $\Vert \rho(t) - \sigma_\beta \Vert_1$ for two initial states (the maximally mixed state $\rho(0) = I /2^n$ and the zero state $\ket{0}^{\otimes n}$) and, each case, with and without the coherent term in $\calL$.
In each row we show the two parameter configurations \texttt{REG} and \texttt{CH} and consider $n=3, \dots, 8$.
The first row replicates Fig.~\ref{fig:time_evolution_random_k_body_Pauli_k_order2}.

We observe that, in all cases, the coherent term leads to a more accurate convergence towards the Gibbs state $\sigma_\beta$ in the long time limit.
Without the coherent term, the simulation does not converge to accuracies below $10^{-2}$.
In terms of convergence speed, which can be estimated from the slope of the curves for early and intermediate times, the influence of the coherent term is less clear.
One can potentially discern a slightly faster convergence for \texttt{CH}, $n \geq 5$ and using $\ket{0}^{\otimes n}$ as initial state, if the coherent term is present. For the other cases, there is no significant difference in convergence speed with and without coherent term.

\subsection{Details of the numerical scheme}
\label{app:numerics_details}

In this section we describe our adaptive numerical scheme for simulating the Lindblad dynamics.
Key to the procedure is a Hermitianity test during the time evolution, which allows us to monitor the stability of the numerics and to stop the simulation in case the temporal step size is chosen too large.
As can be seen from Tab.~\ref{tab:hamiltonian_parameters}, the optimal step size for simulation can vary substantially for different Hamiltonian parameter configurations, such that this ability is crucial for our numerical studies.
In Sec.~\ref{sec:app:adaptive_rk_scheme} we introduce our adaptive scheme based on a fourth-order Runge-Kutta step.
Our method, which we denote by dmRK4 for brevity, is based on simulating the dynamics of the full density matrix, in combination with randomization over jump operators, similar to our quantum algorithm described in Sec.~\ref{sec:single_ancilla_protocol}.
In Sec.~\ref{sec:app:adaptive_rk_scheme:step_size_dependence} we provide numerical support for our proposed step size update criterion.
Note that, for large systems, the processing of the full density matrix of size $2^n \times 2^n$ becomes a limiting factor, such that so-called Monte Carlo wave function (MCWF) techniques (also called quantum jump approach)~\cite{dalibard_wave-function_1992,dum_monte_1992,molmer_monte_1993,plenio_quantum-jump_1998} need to be considered.
Therefore, in Sec.~\ref{sec:app:adaptive_rk_scheme:comparison_WVMC}, we compare the accuracy and runtime of our method and a standard MCWF solver.
Our numerical evidence shows that, for intermediate-sized systems, the proposed dmRK4 exhibits for our setting a substantial performance advantage over the MCWV method, which typically requires a much larger number of trajectories to obtain accurate results.
Independently from the dmRK4 solver, Sec.~\ref{app:numerics_details_Lindbladian_gap} gives a brief discussion on how the spectral gap of the Lindbladian $\calL$ is computed in our work.

\subsubsection{Adaptive Runge-Kutta scheme}
\label{sec:app:adaptive_rk_scheme}

Our adaptive dmRK4 scheme proceeds in two steps. Firstly, the density matrix of the next time step is computed based on a Runge Kutta 4 update step. Secondly, we test whether the predicted density matrix is still Hermitian. If this is not the case, this indicates that the simulation is unstable and that the time step size $\delta t_\mathrm{RK}$ needs to be reduced. The simulation is started again from the beginning, or from the previous time step, with a reduced step size.
In Sec.~\ref{sec:app:adaptive_rk_scheme:step_size_dependence} we show numerical evidence that in case of a too large step size, the simulation stops early and this procedure yields an efficient way to determine a suitable step size $\delta t_\mathrm{RK}$.

As jump operators, we consider a set of random $k$-local Pauli jump operators $A^a = P_1 \otimes \dots \otimes P_n$ with $P_i \in \lbrace I_{\rm 1Q}, X, Y, Z \rbrace$ and $| \lbrace i \,:\, P_i \neq I_{\rm 1Q} \rbrace | = k$, indexed by $a \in \bm{A}$.
Here, $I_{\rm 1Q}$ is the $2\times2$ identity matrix.
A random $k$-local Pauli is specified by: 1) $k$-tuple $(i_1,\dots, i_k)$ of distinct positions (in the 1D spin chain) from the set $\lbrace 1, \dots, n \rbrace$ picked uniformly at random from the set of all such $k$-tuples; 2) a $k$-tuple picked uniformly at random from the set $\lbrace I_{\rm 1Q}, X, Y, Z \rbrace^k$.
For our numerical studies, we vary the size of $\bm{A}$ and the locality $k$ of the jump operators, cf. Secs.~\ref{sec:numerics} and~\ref{app:numerics_mixing_time}.
To advance the system state for a single time step, we use a randomized approach similar to our quantum algorithm described in Sec.~\ref{sec:single_ancilla_protocol}.
Let $\rho^{(i)}(j\delta t_\mathrm{RK})$ describe the time-evolved state of a single trajectory $i = 1,\dots, N_\mathrm{tra}$ at time step~$j$.
To advance each trajectory in time, we do not apply the full Lindbladian~\eqref{eq:lindbladian_general}, but, instead, choose $N_\mathrm{tra}$ random jump operators $A^{a_i}$ with $a_i \in \bm{A}$, $i=1,\dots, N_\mathrm{tra}$ with probability $p_{a_i} = 1/|\bm{A}|$, and perform a fourth-order Runge-Kutta step with the Lindbladian generated by this single jump, $\calL^{a_i} = -\im[H, \cdot] + \calD^{a_i}$ [with $\calD^{a_i}$ as defined in Eq.~\eqref{eq:random_dissipator}] for each $i$.
This procedure approximates the dynamics generated by~\eqref{eq:lindbladian_general} for $\gamma_a = \gamma p_a$ with overall dissipation strength parameter $\gamma = 1$.
In contrast to our quantum protocol, we do not factorize the coherent and dissipative parts of the one-step time evolution as in Eq.~\eqref{eq:lindblad_random}, but instead apply the Lindbladian $\calL^{a_i}$ for each $i$ according to the scheme
\begin{align}
    \begin{split}
    \label{eq:app:runge_kutte4}
        \rho^{(i)}((j+1)\delta t_\mathrm{RK}) &=
        \rho^{(i)}(j\delta t_\mathrm{RK}) + \frac{\delta t_\mathrm{RK}}{6} \left( k^{a_i}_1 + 2k^{a_i}_2 + 2k^{a_i}_3 + k^{a_i}_4 \right) \\
        k^{a_i}_1 &= \calL^{a_i} [\rho^{(i)}(j\delta t_\mathrm{RK})] \\
        k^{a_i}_2 &= \calL^{a_i} [\rho^{(i)}(j\delta t_\mathrm{RK}) + \delta t_\mathrm{RK} k^{a_i}_1 / 2] \\
        k^{a_i}_3 &= \calL^{a_i} [\rho^{(i)}(j\delta t_\mathrm{RK}) + \delta t_\mathrm{RK} k^{a_i}_2 / 2] \\
        k^{a_i}_4 &= \calL^{a_i} [\rho^{(i)}(j\delta t_\mathrm{RK}) + \delta t_\mathrm{RK} k^{a_i}_3] .
    \end{split}
\end{align}
This advances each trajectory $i$ one step forward in time.
The average state at time point $(j+1)$ is given by $\rho((j+1)\delta t_\mathrm{RK}) = \sum_{i=1}^{N_\mathrm{tra}} \rho^{(i)}((j+1)\delta t_\mathrm{RK}) / N_\mathrm{tra}$.
To obtain an approximate solution to Eq.~\eqref{eq:lindblad_eq}, we have to average over a suitable large number of trajectories.
In practice, we observe that only a very small number of trajectories, of the order of ten, suffices for our purposes, as can be seen for example from the mixing time estimates (vertical dashed lines) in Fig.~\ref{fig:time_evolution_random_k_body_Pauli_k_order2}.
For all our simulations, we each use $N_\mathrm{tra} = 10$ trajectories for system sizes $n = 3,\dots, 7$, and $N_\mathrm{tra} = 3$ trajectories for $n=8$.

To obtain a suitable step size $\delta t_\mathrm{RK}$, we start with a fixed starting step size $\delta t_\mathrm{RK}^0$ and evolve the system as described above under Eq.~\eqref{eq:app:runge_kutte4} until the average state $\rho(j \delta t_\mathrm{RK})$ becomes non-Hermitian.
More precisely, at step $j$ we first check if $\rho(j \delta t_\mathrm{RK})$ is Hermitian (up to a given accuracy) and, if so, normalize the state, $\tr[\rho(j \delta t_\mathrm{RK})] = 1$, and project it to the Hermitian subspace via $\rho(j \delta t_\mathrm{RK}) = (\rho(j \delta t_\mathrm{RK}) + \rho(j \delta t_\mathrm{RK})^\dagger)/2$ to remove a potential small overlap with the non-Hermitian sector.
If $\rho(j \delta t_\mathrm{RK})$ is non-Hermitian, the simulation is stopped and we start again, either from the previous step or from the initial state $\rho_0$ (resetting the simulation completely) with reduced step size $\delta t_\mathrm{RK}^1 = \delta t_\mathrm{RK}^0 / 2$.
This procedure is repeated until $\rho(j \delta t_\mathrm{RK})$ stays Hermitian for all time steps $j$.
We numerically observe that, if $\delta t_\mathrm{RK}$ is too large, a small number of steps~\eqref{eq:app:runge_kutte4} evolves $\rho(j \delta t_\mathrm{RK})$ into a state $\rho( (j+1) \delta t_\mathrm{RK})$ which is far away from the Hermitian subspace, cf. Sec.~\ref{sec:app:adaptive_rk_scheme:step_size_dependence}.
On the other hand, if the step size is sufficiently small, the state stays Hermitian for all times.

Note that we can make the scheme exact by performing Eq.~\eqref{eq:app:runge_kutte4} with the full Lindbladian $\calL$, instead of the randomly selected $\calL^{a_i}$.
More precisely, for the exact scheme we apply Eq.~\eqref{eq:app:runge_kutte4} with the full $\calL = -\im[H, \cdot] + \calD = -\im[H, \cdot] + \sum_{a\in \bm{A}} \calD^{a} / |\bm{A}|$.
The full dissipator $\calD$ is given in \eqref{eq:lindbladian_dissipative} with $\gamma_a = 1 / |\bm{A}|$ (see also Eq.~\eqref{eq:dissipator_app_convergence_under_ETH} and the discussion before Eq.~\eqref{eq:lindblad_random}, with $\gamma_a = \gamma p_a$, $\gamma = 1$ and $p_a = 1/|\bm{A}|$).
In this case we do not have to average over trajectories, and $\rho((j+1)\delta t_\mathrm{RK})$ denotes the state obtained by this procedure at iteration step $j+1$.
In Secs.~\ref{sec:app:adaptive_rk_scheme:step_size_dependence},~\ref{sec:app:adaptive_rk_scheme:comparison_WVMC} we verify that this scheme is consistent, i.e. that the exact scheme converges up to machine precision to the exact steady state of $\calL$, and that the randomized scheme converges to the exact result for increasing number $N_\mathrm{tra}$ of trajectories.

\subsubsection{Step size dependence and Hermitianity check}
\label{sec:app:adaptive_rk_scheme:step_size_dependence}

\begin{figure*}
    \centering
    \includegraphics[width=\textwidth]{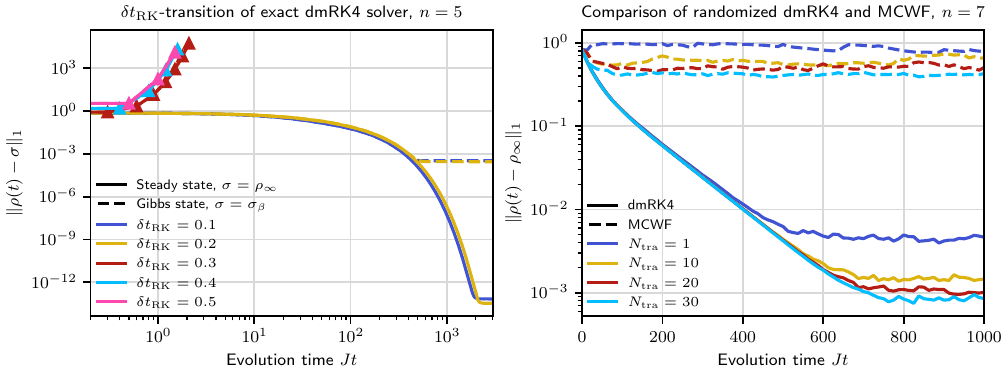}
    \caption{
    Convergence in trace distance of the dmRK4 and MCWF methods towards the exact steady state of $\calL$.
    The left plot shows the trace distance to the exact steady state (solid) and the target Gibbs state (dashed) as a function of time $Jt$ for the exact dmRK4 solver (cf. Sec.~\ref{sec:app:adaptive_rk_scheme}, for varied step size $\delta t_\mathrm{RK}$ and $n=5$. For $\delta t_\mathrm{RK} \geq 0.3$, the simulation stops after a small number of time steps, due to the violation of our proposed Hermitianity check. For $\delta t_\mathrm{RK} \leq 0.2$ the state stays Hermitian for all times and the simulation converges up to machine precision to the exact steady state of $\calL$. The distance towards the Gibbs state plateaus at a fixed value around $10^{-3}$ which is approximately the distance between exact steady state and the Gibbs state.
    The right plot compares the time dependence of the trace distance for the MCWF and our randomized dmRK4 scheme (cf. Sec.~\ref{sec:app:adaptive_rk_scheme} for $n=7$.
    We simulate the dynamics for varied number of trajectories $N_\mathrm{tra}$ (indicated by color).
    The dmRK4 scheme converges already for a single trajectory to precision $10^{-2}$. The accuracy of the MCWF method is orders of magnitude worse, even for a larger number $N_\mathrm{tra} = 30$ of trajectories. For a more complete analysis of the convergence dependence on $N_\mathrm{tra}$ see Fig.~\ref{fig:app_convergence_vs_timing}.
    }
    \label{fig:app_time_evolution_comparison_and_step_size_dependence}
\end{figure*}

To justify the proposed Hermitianity criterion and step size update, we analyze in Fig.~\ref{fig:app_time_evolution_comparison_and_step_size_dependence} (left panel) the convergence of the exact scheme for varied step sizes $\delta t_\mathrm{RK} = 0.1, 0.2, \dots, 0.5$.
We consider the parameter point \texttt{CH}.
We plot the trace distance between time-evolved state and the exact steady state (solid lines) versus the evolution time (in units of $1/J$).
We observe a clear transition in the convergence behavior for $\delta t_\mathrm{RK}$ between $0.2$ and $0.3$.
For $\delta t_\mathrm{RK} \geq 0.3$, the trace distance grows quickly for a couple of time steps after which the simulation stops because our Hermitianity criterion is violated. For $\delta t_\mathrm{RK} \leq 0.2$, the simulation continues and converges up to machine precision to the exact steady state of the model.
We also show the distance between time-evovled state and the Gibbs state (which is not identical to the exact steady state, since $\calL$ is not exacty DB) as dashed lines, which plateau at a value of order $10^{-3}$.

The optimal step size $\delta t_\mathrm{RK}$ for each simulation needs to be chosen to be below, but close, to the transition between divergence and convergence. We approximate this value by initializing the step size with a given value, and halving this value when the simulation fails and restarts, cf Sec.~\ref{sec:app:adaptive_rk_scheme}.
Table~\ref{tab:hamiltonian_parameters} shows the in this way obtained step sizes $\delta t_\mathrm{RK}$ for the considered Hamiltonian parameter configurations and system sizes.

\subsubsection{Comparison to the Monte Carlo wave function method}
\label{sec:app:adaptive_rk_scheme:comparison_WVMC}

\begin{figure*}
    \centering
    \includegraphics[width=\textwidth]{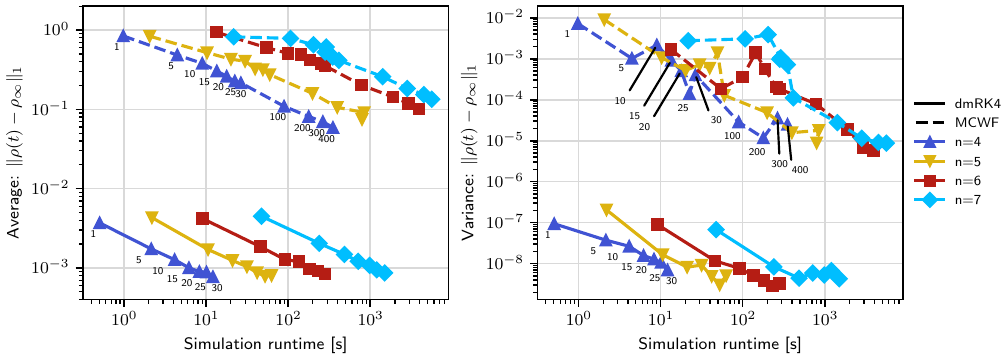}
    \caption{
    Comparison of convergence accuracy and fluctuations versus simulation runtime comparison between dmRK4 and MCWF.
    The left plot shows the convergence accuracy, i.e. the average trace distance within the plateau in the right plot of Fig.~\ref{fig:app_time_evolution_comparison_and_step_size_dependence} for $Jt \geq 700$. The left plot shows the corresponding variance around the mean value.
    Solid (dashed) lines show the results for dmRK4 (MCWF). The points along the lines correspond to different values of $N_\mathrm{tra}$, ordered from small (upper left) to large (lower right), and indicated as small numbers for the dark blue lines. Colors indicate the system size.
    For all $n$, our dmRK4 method achieves higher accuracies and smaller fluctuations than MCWF by orders of magnitude, even if much larger $N_\mathrm{tra}$ are used for MCWF.
    A straight-forward extrapolation of the dashed lines suggests that much larger simulation runtimes (around $> 10^{6}$ seconds) would be required to achieve comparable accuracies to our dmRK4 method.
    }
    \label{fig:app_convergence_vs_timing}
\end{figure*}

The MCWF method is based on simulating the time evolution of state vectors (i.e. wave functions) using an effective Hamiltonian and applying the jump operators (quantum jumps) in a similar randomized fashion as in our dmRK4 scheme.
The full (mixed-state) Lindblad time evolution is then realized by taking appropriate averages over many realizations of these pure state trajectories.
A review of the method and applications is given for example in \cite{plenio_quantum-jump_1998}.
The advantage of the method lies in the fact that only pure states are evolved and need to be stored, i.e. state vectors of size $2^n$. In comparison, our dmRK4 scheme requires storing and manipulating the full density matrix of size $2^n \times 2^n$, which becomes the computational bottleneck for large $n$.
While our ansatz has this limitation, below we discuss numerical evidence showing that for intermediate system sizes the dmRK4 scheme can be significantly more performant than MCWF.
For our simulations we use the QuTiP implementation of MCWF~\cite{johansson_qutip_2013,lambert_qutip_2024}.

The right panel of Fig.~\ref{fig:app_time_evolution_comparison_and_step_size_dependence} shows the trace distance between the steady state and the time-evolved state for the randomized dmRK4 solver (solid), cf. Sec.~\ref{sec:app:adaptive_rk_scheme}, and the MCWF simulation (dashed).
As before, we consider the parameter point \texttt{CH}.
We show exemplary results for varied numbers of trajectories $N_\mathrm{tra}$ between $1$ and $30$, and consider a system of $n=7$ qubits.
The dmRK4 method shows significantly better convergence than MCWF.
Already a single trajectory, $N_\mathrm{tra} = 1$, suffices for dmRK4 to converge below accuracy $10^{-2}$, while larger numbers of trajectories lead to accuracies below $10^{-3}$.
On the other hand, for all shown $N_\mathrm{tra}$, MCWF does not converge below $0.4$.

A more complete picture is given in Fig.~\ref{fig:app_convergence_vs_timing}, where we compare convergence and runtime of both methods for varied $n$ and $N_\mathrm{tra}$ (again for \texttt{CH}).
To quantify the convergence accuracy and fluctuations we compute the mean and variance of the time signals, shown in the right panel of Fig.~\ref{fig:app_time_evolution_comparison_and_step_size_dependence}, in the region of large evolution times $Jt > 700$ (where the signals are all converged to their respective long-time value).
We compute these quantifiers for $n=4, 5, 6, 7$ and $N_\mathrm{traj} = 1, 5, 10, 15, 20, 25, 30$ for dmRK4 and for $N_\mathrm{traj} = 1, 5, 10, 15, 20, 25, 30, 100, 200, 300, 400$ for MCWF.
For each case we also measure the runtime of the simulation (on a standard laptop).
Figure~\ref{fig:app_convergence_vs_timing} shows average (left panel) and variance (right panel) of the trace distance versus the runtime.
For each $n$ (indicated by color), the points along the lines correspond to increasing values of $N_\mathrm{tra}$, ordered from small (upper left) to large (lower right).
For the dark blue lines we indicate the $N_\mathrm{tra}$ values as small numbers next to the points. Note that we are using a larger $N_\mathrm{tra}$-range for MCWF (dashed lines).
The data shows a clear separation between dmRK4 and MCWF, both, in terms of average trace distance and variance.
Generally, the average trace distances are around two orders of magnitude smaller for the dmRK4 solver, and this even though much larger $N_\mathrm{tra}$ are used for MCWF.
A rough extrapolation of the dashed lines shows that, to obtain similar accuracies as dmRK4, the MCWF method will likely require runtimes of around $> 10^{6}$ seconds.
A similar picture is given for the variance around the average trace distance in the right plot.

\subsubsection{Comments and improved classical simulation techniques}
\label{sec:app:improved_classical_simulation_techniques}

Let us add a few comments.
Interestingly, in the left panel of Fig.~\ref{fig:app_convergence_vs_timing} we see that the accuracy of the simulation does not degrade for larger $n$ for dmRK4, while there is a weak degradation effect for MCWF.
At the same time, for dmRK4 the variance is slightly reduced for increasing $n$ (for fixed $N_\mathrm{tra}$), which hints at a concentration effect that would be interesting to explore further in the future.
In addition, note that the Hermitianity check cannot be applied to MCWF methods, which only evolve and store the state vector. Hence, the corresponding density matrix estimate, given by averaging pure states over trajectories, is Hermitian by construction.

An interesting future avenue is to explore the combination of tensor network methods with our dmRK4 scheme, similar as proposed by
\cite{sander_large-scale_2025,zhan2025rapid}.
The requirements for such a scheme are similar to \cite{sander_large-scale_2025}, namely the ability to encode Hamiltonian and Lindblad operators as low-bond-order tensor network operators, and a limited bond order increase during time evolution for the tensor network representation of $\rho(j\delta t_\mathrm{RK})$.
This poses unique challenges for tensor network methods. Specifically,
for quantum chaotic Hamiltonians we expect the Lindblad operators, given in form of operator Fourier transforms~\eqref{eq:L_operators}, to be strongly entangling. In this case, one would require large bond orders to represent the Lindblad operators as tensor network operators.
For example~\cite{zhan2025rapid} develop a tensor network protocol for simulating dissipative dynamics under the OFT Lindblad operators~\eqref{eq:L_operators}.
One step of discretized Lindblad dynamics entails a computational cost of $\calO(D_B^5)$ (cf. Sec.~V in~\cite{zhan2025rapid}), where $D_B$ is the bond order of the matrix product operator representation of the Lindblad operators.
Even if $D_B$ stays sub-exponentially during the evolution---a questionable assumption in the case of chaotic Hamiltonians as discussed above---this constitutes a large polynomial overhead cost in terms of bond order, and further improvements of such schemes would be desirable for future applications.

\subsubsection{Spectral gap of the Lindbladian}
\label{app:numerics_details_Lindbladian_gap}

To determine the spectral gap of the Lindbladian $\calL$, Eq.~\eqref{eq:lindbladian_general}, we construct the vectorization $L$ of $\calL$, cf. Eqs.~\eqref{eq:app:vectorization_D} and~\eqref{eq:app:vectorization_K}, as
\begin{align}
\label{eq:app:vectorization_L}
    L
    =
    - \im \left(I\otimes H - H^\intercal\otimes I \right)
    + 
    \sum_a\sum_{\mu, \nu}\eta_\mu\eta_\nu\Big(
        (A^a_\nu)^* \otimes A^a_\mu
        -\frac{1}{2}I\otimes (A^{a}_\mu)^\dag A^a_\nu
        -\frac{1}{2}(A^a_\nu)^\intercal (A^a_\mu)^* \otimes I
    \Big) .
\end{align}
In general, $L$ is not Hermitian and, thus, has complex eigenvalues $\ell_i \in \mathbb{C}$, $i = 1,\dots, 2^{2n}$.
If $L$ has a unique steady state (as we always assume here), there is a single eigenvalue which is zero, and all other eigenvalues have negative real parts.
This is, indeed, the case for all settings considered here.
Without loss of generality, let $\ell_1 = 0$.
Then we compute the spectral gap of $L$ as the distance between zero and the second largest real part of the spectrum,
\begin{equation}
    \Delta_\calL = \min \lbrace |\Re [\ell_i]| \,\mid \, i = 2,\dots, 2^{2n} \rbrace .
\end{equation}

\section{Noise study}
\label{app:noise_study}

In this appendix, we provide additional details and data supporting the noise study presented in  Sec.~\ref{sec:noise}. 
After recalling the setup of interest (App.~\ref{app:noise_setup}), we derive and discuss bounds for the convergence accuracy of a state prepared through Lindbladian evolution subject to global depolarization noise (App.~\ref{app:noise_conv}).
These bounds are generalized to stochastic noise (App.~\ref{app:gen_noise_comp}).
We summarize the main result here as a theorem:
\begin{theorem}\label{thm:noise_bound}
    For a Hamiltonian $H$ and inverse temperature $\beta$, let $\Gamma^M:=\e^{M\delta t\calL}$ be a Lindblad dynamics that admits the Gibbs state $\sigma_\beta$ as a steady state. Assume the state converges to the Gibbs state as $\|\Gamma^M[\rho_0]-\sigma_\beta\|_1\le B\e^{-\alpha M}$ for an arbitrary initial state $\rho_0$ and $n$-dependent variables $\alpha$ and $B$. Then the modified dynamics $(\Lambda_\lambda\circ\Gamma)^M$ subject to a stochastic noise channel $\Lambda_\lambda$~\eqref{eq:proba model}, with probability of no error incurring $\lambda$, admits the steady state $\tilde{\rho}_\infty$ that satisfies $\|\tilde{\rho}_\infty-\sigma_\beta\|_1 \le \frac{B\lambda}{1-(1-\lambda)\e^{-\alpha}}$.
\end{theorem}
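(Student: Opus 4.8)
The plan is to prove the bound by a \emph{last-error} decomposition of the noisy channel $\widetilde{\Gamma}\coloneq\Lambda_\lambda\circ\Gamma$, which isolates in each term a stretch of purely noiseless Lindblad evolution to which the uniform convergence hypothesis can be applied. First I would treat $\lambda\in(0,1]$ (the case $\lambda=0$ is trivial, since then $\tilde\rho_\infty=\sigma_\beta$ and the right-hand side vanishes) and write the stochastic channel~\eqref{eq:main_global_depo} as a convex combination $\Lambda_\lambda=(1-\lambda)\,\calI+\lambda\,\mathcal{N}$, where $\mathcal{N}[X]\coloneq\lambda^{-1}\sum_l\lambda_l\,U_lXU_l^\dagger$ is itself a CPTP map. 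Substituting $\widetilde{\Gamma}=(1-\lambda)\Gamma+\lambda\,\mathcal{N}\circ\Gamma$ into the $M$-fold product and grouping the expanded terms according to the index $j$ of the most recent step at which the $\mathcal{N}$ branch was taken (with a separate ``no error'' term in which every step takes the $\calI$ branch), I would obtain
\begin{equation}
\label{eq:noise_proof_decomp}
\widetilde{\Gamma}^M=(1-\lambda)^M\Gamma^M+\sum_{j=1}^{M}\lambda(1-\lambda)^{M-j}\,\Gamma^{M-j}\circ\mathcal{N}\circ\Gamma\circ\widetilde{\Gamma}^{\,j-1}.
\end{equation}

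Next I would apply \eqref{eq:noise_proof_decomp} to an arbitrary initial state $\rho_0$ and exploit the key structural point: every summand ends on the left with a block $\Gamma^{M-j}$ of purely noiseless evolution acting on the perfectly valid density matrix $\omega_j\coloneq\mathcal{N}\circ\Gamma\circ\widetilde{\Gamma}^{\,j-1}[\rho_0]$. Because the convergence hypothesis $\|\Gamma^{m}[\rho]-\sigma_\beta\|_1\le B\e^{-\alpha m}$ holds for an \emph{arbitrary} input $\rho$, it applies verbatim to $\omega_j$, giving $\|\Gamma^{M-j}[\omega_j]-\sigma_\beta\|_1\le B\e^{-\alpha(M-j)}$; this is exactly the step that forces $B$ to be potentially exponentially large, as noted below Eq.~\eqref{eq:main_bound_td}. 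Observing that the coefficients $(1-\lambda)^M$ and $\{\lambda(1-\lambda)^{M-j}\}_{j}$ form a probability distribution summing to $1$, I would subtract the identically-weighted convex combination of $\sigma_\beta$, apply the triangle inequality term by term, and insert the two exponential bounds. With $u_0\coloneq(1-\lambda)\e^{-\alpha}$ this collapses to a finite geometric series,
\begin{equation}
\label{eq:noise_proof_geom}
\|\widetilde{\Gamma}^M[\rho_0]-\sigma_\beta\|_1\le B\,u_0^M+B\lambda\sum_{k=0}^{M-1}u_0^{\,k}=B\Big[u_0^M\Big(1-\tfrac{\lambda}{1-u_0}\Big)+\tfrac{\lambda}{1-u_0}\Big]=\widetilde{B}_M,
\end{equation}
reproducing Eq.~\eqref{eq:main_dk}.

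Finally, to pass to the steady state, I would invoke that the CPTP map $\widetilde{\Gamma}$ on the finite-dimensional state space has at least one fixed point $\tilde\rho_\infty$, so that $\widetilde{\Gamma}^M[\tilde\rho_\infty]=\tilde\rho_\infty$ for every $M$. Applying Eq.~\eqref{eq:noise_proof_geom} with $\rho_0=\tilde\rho_\infty$ then gives $\|\tilde\rho_\infty-\sigma_\beta\|_1\le\widetilde{B}_M$ for all $M$; since $\lambda>0$ (or $\alpha>0$) guarantees $u_0<1$, letting $M\to\infty$ annihilates the $u_0^M$ term and yields $\|\tilde\rho_\infty-\sigma_\beta\|_1\le B\lambda/(1-u_0)=B\lambda/\big(1-(1-\lambda)\e^{-\alpha}\big)$, as claimed. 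The only genuine obstacle is the combinatorial bookkeeping of \eqref{eq:noise_proof_decomp} and checking that its weights constitute a convex combination; everything downstream is a triangle inequality and a geometric sum. A minor subtlety to dispatch is the well-definedness of $\tilde\rho_\infty$: evaluating the per-$M$ bound directly on a fixed point, rather than on the limit of $\widetilde{\Gamma}^M[\rho_0]$, sidesteps any need to argue convergence or uniqueness of the noisy steady state.
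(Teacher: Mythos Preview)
Your proposal is correct and follows essentially the same route as the paper. The paper's proof (App.~\ref{app:gen_noise_comp}) performs exactly your last-error decomposition: writing $\Lambda_\lambda=(1-\lambda)\calI+\lambda\mathcal{N}$ and iterating the recursion $\tilde\rho_m=(1-\lambda)\Gamma[\tilde\rho_{m-1}]+\lambda\chi_{m,0}$ with $\chi_{m,0}=\mathcal{N}\circ\Gamma[\tilde\rho_{m-1}]$ (your $\omega_m$), they obtain $\tilde\rho_M=(1-\lambda)^M\rho_M+\lambda\sum_{K=0}^{M-1}(1-\lambda)^K\chi_{M-K,K}$, apply the uniform convergence bound to each $\chi_{m,K}=\Gamma^K[\chi_{m,0}]$, and sum the geometric series---identical to your Eqs.~\eqref{eq:noise_proof_decomp}--\eqref{eq:noise_proof_geom} up to the change of index $K=M-j$. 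Your treatment of the steady state (evaluating the per-$M$ bound directly at a fixed point of $\widetilde\Gamma$) is slightly more careful than the paper's, which simply lets $M\to\infty$ in $\widetilde B_M$.
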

We then provide details of the scaling study (App.~\ref{app:noise_scaling}).
We present bounds obtained through a more generic noise analysis and based on what would be obtained for unitary circuits  (App.~\ref{app:noise_comp} and~\ref{app:noise_comp_unit}, respectively). These are used for the Fig.~\ref{fig:noise_study} of the main text.

\subsection{Setup}
\label{app:noise_setup}

The noiseless dynamics corresponding to $M$ steps of Lindblad evolution, each of duration $\delta t$, is given by $\Gamma_M =(\e^{\delta t\lindblad})^M$, with $\lindblad$ the Lindbladian admitting the target Gibbs state $\sigma_\beta$ as a steady state. We also denote $\Gamma:=\Gamma_1 = \e^{\delta t\lindblad}$. 
The noisy dynamics is defined as $\widetilde{\Gamma}_M = (\Lambda_{\lambda}  \circ \Gamma)^M$, and accordingly $\widetilde{\Gamma} = \Lambda_{\lambda}  \circ \Gamma$, where each step of ideal evolution is followed by a noise channel $\Lambda$.
We start by considering a global depolarization channel
\begin{equation}
\label{eq:global_depo}
    \Lambda_{\lambda}[X] \coloneq (1-\lambda) X + \lambda \Tr[X]\frac{I}{2^n}, 
\end{equation}
although the following derivations apply (with minor adjustments) to generic stochastic noise, as detailed in App.~\ref{app:gen_noise_comp}.
Under this simplified noise model, the probability $\lambda\in [0,1]$ appearing in Eq.~\eqref{eq:global_depo} can be related to the probability of a random error happening per step of evolution. 
With $N_{\rm g}$ the number of noisy quantum operations (typically $2$-qubit gates) involved in the circuit simulating such evolution, and denoting as $\lambda_{\rm g}$ the probability of error per gate, the probability $\lambda$ satisfies
\begin{equation}\label{eq:app_proba_err}
    (1-\lambda) = (1-\lambda_{\rm g})^{N_{\rm g}}.
\end{equation}

Let $\rho_M\coloneq\Gamma_M[\rho(0)]$ be the state obtained under noiseless evolution starting with the maximally mixed state $\rho(0)=I/2^n$.
As $M$ increases, $\rho_M$ converges towards $\sigma_\beta$, and we assume such convergence to be captured in trace distance by:
\begin{equation}\label{eq:app_bound_td}
    \|\rho_M - \sigma_\beta \|_1 \leq B \e^{-\alpha M},\end{equation}
where both the convergence rate $\alpha$ and prefactor $B$ can depend on the system size $n$.
We highlight that the convergence rate $\alpha$ depends implicitly on the choice of evolution step $\delta t$, through the definition of $\rho_M$. Denoting this dependence explicitly, we have $\alpha (a \delta t)=a \alpha(\delta t)$ for any $a>0$. For now, we assume that $\delta t$ is fixed and drop it in our notations.
Let us denote $M^*(\varepsilon)$ the smallest number of steps such that $\|\rho_{M^*} - \sigma_\beta \|_1\leq \epsilon$. From Eq.~\eqref{eq:app_bound_td} it is obtained as:
\begin{equation}\label{eq:mixing_steps}
    M^\ast(\epsilon)=\frac{\ln(B/\epsilon)}{\alpha}.
\end{equation}

Finally, let $\tilde{\rho}_M=\widetilde{\Gamma}_M[\rho(0)]$ be the state obtained after $M$ steps of noisy evolution starting again from $\rho(0)$ and let us define $\tilde{\rho}_\infty$ to be the steady state of the noisy evolution such that $\widetilde{\Gamma}_M[\tilde{\rho}_\infty]= \tilde{\rho}_\infty$.

\subsection{Convergence of the noisy states towards the target Gibbs state}
\label{app:noise_conv}

Our aim is to understand how the noise affects the convergence of the noisy dynamics. In particular, we want to bound the distance $\norm{\tilde{\rho}_M - \sigma_\beta}_1$ between the noisy state and the target Gibbs state.
To do so, we can relate the states prepared through noisy and noiseless evolutions. For instance, after one step of noisy evolution, we get:
\begin{equation*}
    \rho_0 \xrightarrow[]{\Gamma:=\e^{\delta t\lindblad}} \rho_1 \xrightarrow[]{\Lambda_{\lambda}} \tilde{\rho}_1 
    = 
    \lambda \rho_0 + (1-\lambda) \rho_1.
\end{equation*}
Iterating such step, we get that after $M$ steps of noisy evolution, the resulting state has the form
\begin{align}
\begin{split}\label{eq:noisy_state_dynamics}
    \tilde{\rho}_M 
    & = 
    (1-\lambda)^M \rho_M + \lambda(1-\lambda)^{M-1} \rho_{M-1} +  \dots + \lambda (1-\lambda) \rho_1 + \lambda \rho_0 
    \\
    &= 
    (1-\lambda)^M \rho_M + \lambda \sum_{m=0}^{M-1} (1-\lambda)^m \rho_m.
\end{split}
\end{align}
Making use of Eq.~\eqref{eq:app_bound_td} and the triangle inequality, we obtain an upper bound for $\|\tilde{\rho}_M  - \sigma_\beta\|_1$ at arbitrary $M$:
\begin{align}
\label{eq:main_depo_distance}
\begin{split}
    \|\tilde{\rho}_M  - \sigma_\beta\|_1 \leq \widetilde{B}_M 
    &\coloneq 
    (1-\lambda)^M \norm{\rho_M - \sigma_\beta}_1 + \lambda \sum_{m=0}^{M-1} (1-\lambda)^m \norm{\rho_m - \sigma_\beta}_1  
    \\
    &= 
    B \left[(1-\lambda)^M \e^{-\alpha M} + \lambda \sum_{m=0}^{M-1} (1-\lambda)^m \e^{-\alpha m}\right]  
    \\
    &= 
    B \left[u_0^M + \lambda \frac{1-u_0^M}{1-u_0} \right]
    \quad\quad\quad\quad\quad\quad\quad\quad  \text{with   } 
    u_0 \coloneq (1-\lambda)\e^{- \alpha} \in [0,1) 
    \\ 
    &= 
    B \left[u_0^M \left(1-\frac{\lambda}{1-u_0}\right) + \frac{\lambda}{1-u_0} \right]
    \quad\quad  \text{where   } 
    \frac{\lambda}{1-u_0}\in [0,1).
\end{split}
\end{align}

Let us now comment on the convergence displayed in Eq.~\eqref{eq:main_depo_distance}. Given that $0\le u_0<1$, the bound on the distance decreases monotonically towards the value
\begin{equation}
\label{eq:asympt_distance}
    \widetilde{B}_{\infty}\coloneq B \frac{\lambda}{1-u_0},
\end{equation}
that corresponds to a bound on the distance $\norm{\tilde{\rho}_\infty  - \sigma_\beta}_1$ between noiseless and noisy steady states.
Note that this bound also holds for the case of general stochastic noise channels, as shown in App.~\ref{app:gen_noise_comp}.
Such distance depends both on the error rate $\lambda$ and the convergence rate $\alpha$. 
As would be expected the smaller the errors are, the closer the fixed state of the noisy dynamics is compared to the Gibbs state. For $\lambda=0$, they coincide. 

Additionally, we see in Eq.~\eqref{eq:main_depo_distance} that at fixed noise level $p$, the term $\widetilde{B}_{\infty}$ decreases as the decay rate $\alpha$ increases. Further inspection reveals features of the mixing time of the noisy dynamics. In particular, we see that the distance between $\tilde{\rho}_M$ and $\tilde{\rho}_\infty$ decays as $u_0^M$, that adopts a scaling $\e^{-\tilde{\alpha}}$ analogous to Eq.~\eqref{eq:app_bound_td} but now with a convergence rate
\begin{equation}
\label{eq:app_noisy_conv_rate}
    \tilde{\alpha} = \alpha + N_{\rm g} C
    \quad \text{  where  }
    C = -\ln(1-\lambda_{\rm g}) >0.
\end{equation} 
Notably, this convergence rate is always larger than the noiseless one. That is, while the noise is detrimental in that it perturbs the steady state of the dynamical evolution, reaching the noisy steady state is never slowed.

\subsection{Generalization of  the convergence analysis to probabilistic noise models}
\label{app:gen_noise_comp}

While presented for a global depolarization noise model, the convergence analysis of App.~\ref{app:noise_conv} can readily be ported to more general models.
Let us consider the setup of App.~\ref{app:noise_setup}, but now with stochastic noise of the form
\begin{equation}
\label{eq:proba model}
    \Lambda_{\lambda}[X] \coloneq (1-\lambda) X + \sum_{l} \lambda_l U_l X U_l^{\dagger}
\end{equation}
where each of the $U_i$ is unitary and with $\lambda_l \in [0,1]$ together with $\lambda = \sum_{l} \lambda_l \leq 1$. Such a model can be understood as the probabilistic application of a unitary $U_l$ (or the identity) occurring with a probability $\lambda_l$ (or $1- \lambda$). This encompasses global or local depolarization, any Pauli noise model and many more.

As before, let us denote as $\rho(0)$ the initial state, as $\rho_M = \Gamma_M[\rho(0)]$ the state obtained after $M$ steps of ideal evolution, and as $\tilde{\rho}_M = \tilde{\Gamma}_M[\rho(0)]$ the state obtained after $M$ steps of noisy evolution. We define $\chi_{m,0}\coloneq (1/\lambda)\sum_l \lambda_l U_l \Gamma[\tilde{\rho}_{m-1}] U_l^{\dagger}$, which is a valid state ensuring that one step one noisy evolution yields
\begin{equation}
\label{eq:mixture_error_state}
    (\Lambda_{\lambda}\circ \Gamma)[\tilde{\rho}_{m-1}] \coloneq (1-\lambda) \Gamma[\tilde{\rho}_{m-1}] + \lambda \chi_{m,0},
\end{equation}
and further define $\chi_{m, K} \coloneq \Gamma_K[\chi_{m,0}]$, which is also a state. 
With these notations, the state obtained after $M$ steps of noisy evolution, akin to Eq.~\eqref{eq:noisy_state_dynamics} for the global depolarization case, can now be written as
\begin{align}
\begin{split}
    \tilde{\rho}_M 
    & = 
    (1-\lambda)^M \rho_M + \lambda(1-\lambda)^{M-1} \chi_{1,M-1} +  \dots + \lambda (1-\lambda) \chi_{M-1,1} + \lambda \chi_{M,0} 
    \\
    &= 
    (1-\lambda)^M \rho_M + \lambda \sum_{K=0}^{M-1} (1-\lambda)^K \chi_{M-K, K}.
\end{split}
\end{align}
Assuming that the convergence dynamics of Eq.~\eqref{eq:app_bound_td} holds for any initial state $\chi_{m, 0}$ (rather than the single $\rho(0)=I/2^n$ that was needed in the previous case), such that
\begin{equation}\label{eq:app_bound_td_arbitary}
    \|\chi_{m, K} - \sigma_\beta \|_1 \leq B \e^{-\alpha K},
\end{equation}
we recover exactly the same results as before in terms of bounds for the trace distance between the target Gibbs state compared to the state prepared along noisy evolution, as per Eq.~\eqref{eq:main_depo_distance}, or compared to the steady state of the noisy dynamics, as per Eq.~\eqref{eq:asympt_distance}.

\subsection{Scaling study}
\label{app:noise_scaling}  

Our aim is to study the impact of the errors due to noise as the system sizes increase. For that, we wish to evaluate the bounds of Eq.~\eqref{eq:asympt_distance} for different $n$. This requires specifying values for the convergence rates $\alpha$, the prefactors $B$, and the probabilities of error $\lambda$. For the two first quantities, we will resort to data obtained through the numerical simulations for fixed initial states and small system sizes that are then \emph{extrapolated} to larger ones.

In Fig.~\ref{fig:noise_app_fit}~(left panel), we show the fits of $\alpha$ and $B$ for the numerical simulations performed for $n=3, \dots, 8$ qubits for the Hamiltonian's configuration \texttt{CH} given in Tab.~\ref{tab:hamiltonian_parameters}, for the initial state $\rho(0)=I/2^n$,
and for $\beta=(2J)^{-1}$.
We recall that this configuration corresponds to the chaotic regime. As can be seen the dynamics of Eq.~\eqref{eq:app_bound_td} closely matches the numerical data. Then in Fig.~\ref{fig:noise_app_fit}~(right panel) we report geometric fits for these two quantities as a function of $n$.
In both cases, albeit performed on a small number of data points, we see reasonably good fits, in particular for $\alpha$.
For the probability of error $\lambda$, we resort to Eq.~\eqref{eq:app_proba_err} and take a number of gates per unit of time ($\delta t=1$) scaling linearly in $n$, namely $N_{\rm g}=50 n$.
With these, and fixing the value of the probability of error $p_g$ per noisy gate, we can evaluate Eq.~\eqref{eq:asympt_distance} for arbitrary $n$.
These are the data points used for  Fig.~\ref{fig:noise_study} in the main text (solid lines).

\subsection{Comparison to generic noise bounds}
\label{app:noise_comp}

We wish to compare Eq.~\eqref{eq:main_dk} to bounds obtained for generic noise models.
In the following, we first port the bounds of Ref.~\cite{Chen2023thermal} (Lemma II.1) to the discrete settings considered here and then evaluate those for the here considered noise models.

Let us denote as $\mathcal{A}$ and as $\widetilde{\mathcal{A}}= \Lambda_{\lambda} \circ \mathcal{A}$ a step of noiseless and noisy dissipative evolution, respectively.
Accordingly, let $\rho_\infty$ and $\tilde{\rho}_\infty$ be their steady states that satisfy $\mathcal{A}[\rho_\infty]=\rho_\infty$ and $\widetilde{\mathcal{A}}[\tilde{\rho}_\infty]=\tilde{\rho}_\infty$. The mixing time of the noiseless dynamics in the discrete case is characterized by the number of steps $M_{\mathrm{mix}}(\epsilon)$ ensuring that for any state $\rho$ we have $\|\mathcal{A}^{M_\mathrm{mix}(\epsilon)}[\rho] - \rho_\infty\|_1 \leq \epsilon$. For now let us drop the dependency in $\epsilon$. To bound $\|\tilde{\rho}_\infty - \rho_{\infty}\|_1$ we proceed as follow 
\begin{align}
\label{eq:step_1_b}
\begin{split}
    \| \tilde{\rho}_\infty - \rho_{\infty} \|_1 
    &= 
    \big\| \widetilde{\mathcal{A}}^{M_{\mathrm{mix}}}[\tilde{\rho}_\infty] - \rho_{\infty} \big\|_1 
    \\
    &\leq 
    \big\| \widetilde{\mathcal{A}}^{M_{\mathrm{mix}}}[\tilde{\rho}_\infty] - \mathcal{A}^{M_{\mathrm{mix}}}[\tilde{\rho}_\infty] \big\|_1 
    + \big\| \mathcal{A}^{M_{\mathrm{mix}}}[\tilde{\rho}_\infty] - \rho_{\infty} \big\|_1  
    \\
    &\leq 
    \big\| (\widetilde{\mathcal{A}}^{M_{\mathrm{mix}}} - \mathcal{A}^{M_{\mathrm{mix}}})[\tilde{\rho}_\infty] \big\|_1 
    + \epsilon,
\end{split}
\end{align}
where we made use of the triangle inequality and of the definition of $M_{\mathrm{mix}}$.
We further bound 
\begin{align}
\label{eq:step_2_b}
\begin{split}
    \big\| \widetilde{\mathcal{A}}^{M_{\mathrm{mix}}} - \mathcal{A}^{M_{\mathrm{mix}}} \big\|_{1\rightarrow 1} 
    &\leq  
    \big\| \mathcal{A}\circ( \widetilde{\mathcal{A}}^{M_{\mathrm{mix}}-1} - \mathcal{A}^{M_{\mathrm{mix}}-1}) \big\|_{1\rightarrow 1} 
    + \big\| ( \widetilde{\mathcal{A}} - \mathcal{A} )\circ \widetilde{\mathcal{A}}^{M_{\mathrm{mix}}-1} \big\|_{1\rightarrow 1}  
    \\
    &\leq  
    \big\| \widetilde{\mathcal{A}}^{M_{\mathrm{mix}}-1} - \mathcal{A}^{M_{\mathrm{mix}}-1}\big\|_{1\rightarrow 1} 
    + \big\| \widetilde{\mathcal{A}} - \mathcal{A} \big\|_{1\rightarrow 1}
    \\
    &\leq 
    M_{\mathrm{mix}} \big\| \widetilde{\mathcal{A}} - \mathcal{A} \big\|_{1\rightarrow 1}.
\end{split}
\end{align}
To get the second line, we used the triangle inequality first, and then sub-multiplicativity of the induced norm $\norm{\cdot}_{1 \rightarrow 1}$ together with $\| \mathcal{A} \|_{1 \rightarrow 1}\leq 1$ and $\|\widetilde{\mathcal{A}}\|_{1 \rightarrow 1}\leq{1}$ as both $\mathcal{A}$ and $\widetilde{\mathcal{A}}$ are quantum channels. The third line follows through recursion. Overall we get:
\begin{align}
\label{eq:asympt_distance_generic}
\begin{split}
    \| \widetilde{\rho}_\infty - \rho_{\infty}\|_1  
    \leq 
    M_\mathrm{mix}(\epsilon) \big\|\widetilde{\mathcal{A}} - \mathcal{A} \big\|_{1\rightarrow 1} 
    + \epsilon.
\end{split}
\end{align}

Due to the repeated use of the triangle inequality in Eq.~\eqref{eq:step_2_b}, each evolution step incurs a contribution $\|\widetilde{\mathcal{A}} - \mathcal{A} \|_{1\rightarrow 1}$ to the resulting bound. This may significantly over-estimate the effect of the noise, as it corresponds to a worst-case scenario where each step incurs the maximum deviation possible and noise contributions between the steps never average out. However, it allows us to deal with arbitrary discrepancies between $\mathcal{A}$ and $\widetilde{\mathcal{A}}$.
This is in contrast to the bounds obtained in Eq.~\eqref{eq:main_depo_distance}, where part of the deviations induced by the noise, especially the ones occurring early on, are mitigated during the evolution. 

Let us assume a bounded channel distance $\|\widetilde{\mathcal{A}} - \mathcal{A} \|_{1\rightarrow 1} \leq C$. The constant $C$ has to be computed for a given noise model.
With this we obtain the following bound on the distance between noisy and noiseless state:
\begin{align}
\label{eq:step_4_b}
\begin{split}
    \norm{ \tilde{\rho}_\infty - \sigma_\beta}_1 
    &\leq 
    C M_\mathrm{mix}(\epsilon) + \epsilon.
\end{split}
\end{align}
Substituting the expression $M^\ast(\epsilon)$ for $M_\mathrm{mix}(\epsilon)$ from Eq.~\eqref{eq:mixing_steps},
and minimizing the bound over the choice of $\epsilon$ we obtain 
\begin{align}
\begin{split}
    \widetilde{B}^{'}_{\infty} 
    \coloneq  
    \min \left\{B,\, \frac{C}{\alpha} \Big(\ln\Big(\frac{B\alpha}{C}\Big) + 1\Big)\right\}
\end{split}
\end{align}
as a bound for $\|\tilde{\rho}_\infty - \sigma_\beta\|_1$, with the minimum achieved for $\varepsilon'= \min \{C / \alpha, B \}$.

Let us specialize to the stochastic noise channel~\eqref{eq:proba model}.
From the definition we see that
\begin{align}
\label{eq:dist_1_1}
\begin{split}
    \big\|(\widetilde{\mathcal{A}} - \mathcal{A})[X] \big\|_{1} 
    &= 
    \big\| \lambda Y - \sum_l \lambda_l U_l Y U_l^\dag \big\|_{1} 
    \leq 
    \lambda \norm{Y}_1 + \sum_l \lambda_l \norm{Y}_1  \leq 2 \lambda \norm{X}_1,
\end{split}
\end{align}
where we have introduced $Y=\mathcal{A}[X]$, used $\Tr[Y] \leq \norm{Y}_1 \leq \norm{X}_1$ together with the unitary invariance of the trace norm.
Overall, we get that $\|\widetilde{\mathcal{A}} - \mathcal{A} \|_{1\rightarrow 1} \leq 2\lambda$.
Thus, for this model can set $C = 2\lambda$, which yields
\begin{align}
\label{eq:app_bounds_generic}
\begin{split}
    \widetilde{B}^{'}_{\infty} 
    \coloneq  
    \min \left\{B,\, \frac{2\lambda}{\alpha} \Big(\ln\Big(\frac{B\alpha}{2\lambda}\Big) + 1\Big)\right\}
\end{split}
\end{align}
for the distance $\|\tilde{\rho}_\infty - \sigma_\beta\|_1$.

Let us compare the bound~\eqref{eq:asympt_distance}, explicitly taking into account damping of early time errors through the dissipative Lindblad channel, to Eq.~\eqref{eq:app_bounds_generic}.
In particular, the latter will always be tighter for values of $B$ large enough (at fixed $\lambda$ and $\alpha$). However for small enough $B$, where the bounds are informative, the former may be significantly tighter. This is what is seen in our numerics:
The bounds~\eqref{eq:asympt_distance} and~\eqref{eq:app_bounds_generic} (both for the case of a global depolarizing channel) are reported as solid and dotted lines, respectively, in Fig.~\ref{fig:noise_study}~(right panel) in the main text.
Values used for their evaluations were discussed in Sec.~\ref{app:noise_scaling}. As can be seen, for this setting, the bound~\eqref{eq:app_bounds_generic} significantly overestimate the impact of the noise.

\begin{figure}
\centering
\includegraphics[width=1\textwidth]{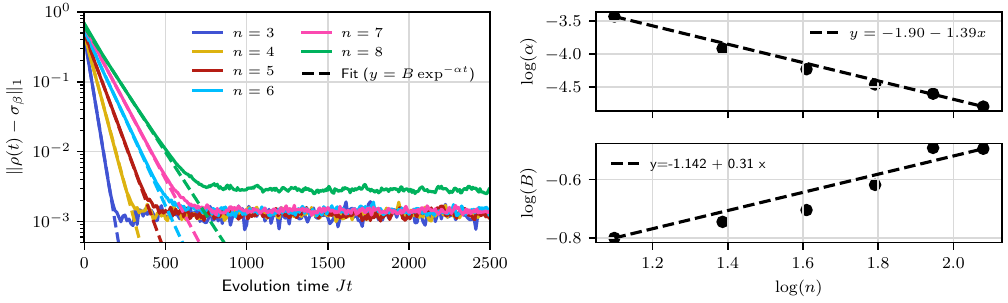}
\caption{
(Left panel) From the numerical data of App.~\eqref{app:numerics} for the setup \texttt{CH} (see details in Tab.~\ref{tab:hamiltonian_parameters}), we plot the trace distance of the prepared state along the Lindblad evolution, compared to the target Gibbs state (solid lines). For each of the system sizes probed (colors in legend), we fit the parameters $\alpha$ and $B$ from Eq.~\eqref{eq:app_bound_td} and display the resulting fit (dashed lines). 
(Right panel) Values of the parameters obtained for $\alpha$ (upper panel) and $B$ (lower panel) are plotted and fitted as a function of $n$. 
}
\label{fig:noise_app_fit}
\end{figure}

\subsection{Comparison to unitary circuits}
\label{app:noise_comp_unit}

We wish to compare deviations induced by noise for the Lindbladian protocols to deviations that would occur in a unitary circuit.
We note the limits of such a comparison, as preparation of Gibbs states would require some amount of non-unitarity in the dynamics in the first place. Nonetheless, for the sake of comparison, we will consider unitary circuits that are assumed to have the same complexity (i.e., the same number of noisy gates) and to produce the same outputs as our protocols. 
Recall that in the noiseless case, the number of steps $M_\mathrm{mix}(\epsilon)$ required to prepare a state $\sigma_{\epsilon}$ that differs by  $\epsilon$ in trace distance from $\sigma_\beta$ was provided in Eq.~\eqref{eq:mixing_steps}. Given that each evolution step requires $N_{\rm g}$ noisy gates, this incurs a total of $N_{\rm tot}=M_\mathrm{mix}(\epsilon) N_{\rm g}$ noisy gates. Upon the global depolarization model~\eqref{eq:global_depo}, and for a unitary circuit having $N_{\rm tot}$ gates, the state $\sigma_{\epsilon}$ prepared by the noiseless circuit would become $\tilde{\sigma}_{\epsilon} = \lambda_{\rm tot} I/2^n + (1-\lambda_{\rm tot})\sigma_{\epsilon}$ with
\begin{equation}
    (1- \lambda_{\rm tot}) = (1-\lambda_{\rm g})^{M_\mathrm{mix}(\epsilon)N_g}.
\end{equation}
 Compared to the Gibbs state $\sigma_\beta$, this state would deviate in trace distance by 
\begin{align}\label{eq:app_comp_noisy_unit}
\begin{split}
 \norm{ \tilde{\sigma}_{\epsilon} - \sigma_\beta}_1 &= \norm{\lambda_{\rm tot} \Big(\frac{I}{2^n} -\sigma_\beta\Big) + (1-\lambda_{\rm tot})(\sigma_{\epsilon} -\sigma_\beta)}_1
\leq \lambda_{\rm tot} B + (1-\lambda_{\rm tot}) \epsilon.
\end{split}
\end{align}
In contrast to the noisy Lindblad dynamics.~\ref{eq:main_depo_distance}, as we increase the number of steps $M_\mathrm{mix}(\epsilon)$ we see competitive effects between a decrease of the convergence accuracy $\epsilon$ and an increase of the errors due to the noise.
To be conservative in our study, we minimize the right-hand side of Eq.~\eqref{eq:app_comp_noisy_unit} over $\epsilon$, or equivalently $M_\mathrm{mix}(\epsilon)$, to make this bound as tight as possible. Results of the bounds obtained through this minimization for different system sizes are reported as dashed lines in Fig.~\ref{fig:noise_study}~(right panel) in the main text. 
When compared to Eq.~\eqref{eq:asympt_distance}, we see a noticeable increase in the distances of the prepared state towards the targeted one for the unitary case. That is, we probed the added resilience of the Lindblad dynamics compared to unitary circuits.

\section{Details on algorithmic errors in the quantum circuit simulation}
\label{app:circ_sim_alg_errors}

We analyze the circuit error dependence on the main algorithmic parameters, evolution step $\delta t$, and OFT discretization step $\Delta t$ for our randomized single-ancilla Lindblad simulation protocol discussed in the main text. As discussed in Sec.~\ref{sec:alg_errors}, we use the mixed-field Ising model~\eqref{eq:ising_model},
set $\beta=(2J)^{-1}$,
fix the integration domain of the OFT to $JT = 1.6$, and simulate the circuit evolution up to a time point $Jt = 500$.
The circuit error is quantified by the distance $\| \rho^\mathrm{circ}_\infty - \sigma_\beta \|_1$ between the target Gibbs state $\sigma_\beta$ and the steady state of the circuit $\rho^\mathrm{circ}_\infty$.
We compute $\epsilon_{ij} := \| \rho^\mathrm{circ}_\infty(\delta t_i, \Delta t_j) - \sigma_\beta \|_1$ for various evolution steps $\delta t_i$ and OFT discretization steps $\Delta t_j$ in the range $10^{-2} \leq J\delta t \leq 10^1$ and $0.06 \leq J\Delta t\leq 2$.
More precisely, we consider $J\delta t_i = 10^{\ell^{(\delta)}_i}$ and $J\Delta t_j = 10^{\ell^{(\Delta)}_j}$ with $\{\ell^{(\delta)}_i\}_{i=1,\dots 10}$ and $\{\ell^{(\Delta)}_j\}_{j = 1,\dots, 30}$ evenly discretizing the intervals $[-2, 1]$ and $[\log_{10} 0.06, \log_{10} 2]$, respectively.

\subsection{Error fit}
\label{app:circ_sim_alg_errors_fit}

To test the applicability of our theoretical bound~\eqref{eq:error_sources2}, we fit the function~\eqref{eq:algorithmic_errors_fit} to the data. As we are mainly interested in the logarithmic dependency of the error on $\delta t$ and $\Delta t$, we minimize the following loss function
\begin{equation}
\label{eq:app:circuit_error_loss_function}
    (\alpha^\mathrm{opt}_1, \alpha^\mathrm{opt}_2, \alpha^\mathrm{opt}_3, \alpha^\mathrm{opt}_4) 
    = 
    \underset{\alpha_1, \alpha_2, \alpha_3, \alpha_4}{\argmin} \sqrt{\sum_{i,j} \big| \log_{10} f_{\alpha_1, \alpha_2, \alpha_3, \alpha_4} (\delta t_i, \Delta t_j) - \log_{10} \epsilon_{ij} \big|^2} .
\end{equation}
Note that, since the logarithm is strictly monotonically increasing, closeness of $\log_{10} f_{\alpha_1, \alpha_2, \alpha_3, \alpha_4} (\delta t_i, \Delta t_j)$ and  $\log_{10} \epsilon_{ij}$ also implies that $f_{\alpha_1, \alpha_2, \alpha_3, \alpha_4} (\delta t_i, \Delta t_j)$ and $\epsilon_{ij}$ are close.
Introducing the logarithm gives a stronger relative weight to the regime where $\delta t$ and $\Delta t$ are small, in which the values of $f_{\alpha_1, \alpha_2, \alpha_3, \alpha4}(\delta t_i, \Delta t_j)$ and $\epsilon_{ij}$ are orders of magnitude smaller than for large $\delta t$ and $\Delta t$.
An alternative would be to consider a relative error loss function, which results in a very similar fit.
Note that the bound on the discretization error (the fourth term in Eq.~\eqref{eq:algorithmic_errors_fit}) is only valid for small $\Delta t$, satisfying $2\pi\frac{\beta}{\Delta t} - 2\beta \|H\| - 1 > 0$, which corresponds to $J\Delta t \lesssim 0.37$ for our parameter choice. When approaching this limit, the error bound diverges.
Hence, we limit the domain of the fit and take into account only $\delta t_i$ and $\Delta t_j$ with $J\delta t_i \leq 0.3$ and $J\Delta t_j \leq 0.37$ for this.
With these restrictions, we obtain optimal parameters $\alpha^{\rm opt}_1 = 2.6\times10^{-3}$, $\alpha^{\rm opt}_2 = 1.8\times10^{-2}$, $\alpha^{\rm opt}_3 = 4.1\times10^{-4}$, $\alpha^{\rm opt}_4 = 1.5\times10^{-4}$ from Eq.~\eqref{eq:app:circuit_error_loss_function}.

\subsection{OFT discretization error decomposition}
\label{app:circ_sim_alg_errors_decomposition}

\begin{figure*}[t]
    \centering
    \includegraphics[width=\textwidth]{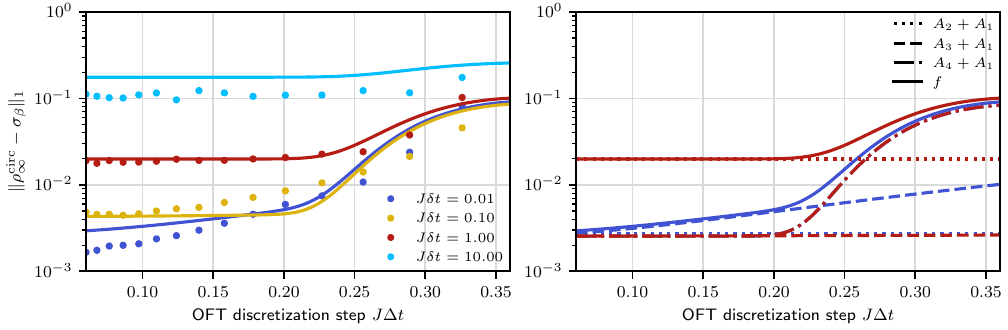}
    \caption{
    Operator Fourier transform discretization error dependence of the randomized single-ancilla protocol for $n=5$ qubits.
    The left plot shows the same data as Fig.~\ref{fig:algorithmic_error_scaling_OFT_discretization_and_trotter_step}, on a limited domain $J\Delta t \leq 0.37$.
    The right plot decomposes the total error~\eqref{eq:algorithmic_errors_fit} (solid curve), for two values $J\delta t = 0.01$ and $J\delta t = 1$, into individual terms as described below Eq.~\eqref{eq:app:algorithmic_errors_fit_decomposition} (shown as dotted, dashed and dashed-dotted curves, respectively).
    }
    \label{fig:app:algorithmic_error_scaling_OFT_discretization_and_trotter_step}
\end{figure*}

In Fig.~\ref{fig:app:algorithmic_error_scaling_OFT_discretization_and_trotter_step} we take a closer look at the individual error terms in Eq.~\eqref{eq:algorithmic_errors_fit}.
We focus on the error dependence on the OFT discretization step $\Delta t$.
The left plot shows the same data as in the right plot of Fig.~\ref{fig:algorithmic_error_scaling_OFT_discretization_and_trotter_step}, where we limit the domain to $J\Delta t \leq 0.37$.
The right plot decomposes the fit (solid lines in the left plot) into its individual terms
\begin{align}
\label{eq:app:algorithmic_errors_fit_decomposition}
    \begin{split}
        &f_{\alpha_1, \alpha_2, \alpha_3, \alpha_4} (\delta t, \Delta t) = A_1 + A_2 + A_3 + A_4, \\
        &A_1 = \alpha_1, \quad
        A_2 = \alpha_2 \delta t, \quad
        A_3 = \alpha_3 T\frac{\Delta t^2}{\delta t}, \quad
        A_4 = \alpha_4 \sqrt{\beta} |B_H|\e^{-\frac{1}{8}\big(2\pi\frac{\beta} {\Delta t}-2\beta\|H\| - 1 \big)^2} .
    \end{split}
\end{align}
We consider the cases $J\delta t = 0.01$ and $1$ (blue and red lines) as examples and plot $A_1+A_2$ (dotted), $A_1+A_3$ (dashed) and $A_1+A_4$ (dashed-dotted line) in the right panel.
For each curve we include the shift due to the constant term $A_1$ to make the individual curves comparable to the total error shown as solid curves.
For small $J\delta t = 0.01$, the error due to the Trotterization, randomization and dilation, quantified by $A_2$ (blue dotted), is negligibly small. Therefore, the shape of the blue curve is controlled by the discretization error $A_4$ (blue dotted-dashed) and the polynomial $\Delta t$ dependence of $A_3$ (blue dashed), which is the dominant contribution for $J\Delta t \lesssim 0.25$.
The term $A_3$ comes from the coherent evolution under the system Hamiltonian for time step $\Delta t$ to compute the discretized OFT~\eqref{eq:L_operator_discretized} [cf. discussion below Eq.~\eqref{eq:error_sources2}].
This weak polynomial decrease of the error for $J\Delta t \lesssim 0.25$ is clearly visible in the dark blue data points in the left plot (although with a slightly larger slope as predicted by the fit).
The other exemplary case is for larger $J\delta t = 1$, shown in red.
In this case, the Trotter error $A_2$ (red dotted) is larger than $A_3$ (red dashed), such that for $J\Delta t \lesssim 0.25$ the error becomes independent from $\Delta t$.
This behavior is clearly shown by the red data points in the left plot.
For larger $\Delta t$, the discretization error $A_4$ becomes the dominant contribution.

\bibliographystyle{quantum}
\bibliography{main}

\end{document}